\newtheorem{theorem}{Theorem}
\newtheorem{definition}{Definition}
\newtheorem{lemma}{Lemma}
\newtheorem{proposition}{Proposition}
\newtheorem{corollary}{Corollary}
\newtheorem{remark}{Remark}
\newcommand{\eq}[1]{Eq.~(\ref{eq:#1})}
\newcommand{\thm}[1]{\hyperref[thm:#1]{Theorem~\ref*{thm:#1}}}
\newcommand{\cor}[1]{\hyperref[cor:#1]{Corollary~\ref*{cor:#1}}}
\newcommand{\defn}[1]{\hyperref[defn:#1]{Definition~\ref*{defn:#1}}}
\newcommand{\lem}[1]{\hyperref[lem:#1]{Lemma~\ref*{lem:#1}}}
\newcommand{\prop}[1]{\hyperref[prop:#1]{Proposition~\ref*{prop:#1}}}
\newcommand{\fig}[1]{\hyperref[fig:#1]{Figure~\ref*{fig:#1}}}
\newcommand{\tab}[1]{\hyperref[tab:#1]{Table~\ref*{tab:#1}}}
\newcommand{\algo}[1]{\hyperref[algo:#1]{Algorithm~\ref*{algo:#1}}}
\renewcommand{\sec}[1]{\hyperref[sec:#1]{Section~\ref*{sec:#1}}}
\newcommand{\append}[1]{\hyperref[append:#1]{Appendix~\ref*{append:#1}}}
\newcommand{\fac}[1]{\hyperref[fac:#1]{Fact~\ref*{fac:#1}}}
\newcommand{\lin}[1]{\hyperref[lin:#1]{Line~\ref*{lin:#1}}}
\def\>{\rangle}
\def\<{\langle}
\renewcommand{\ket}[1]{|#1\rangle}
\newcommand{\E}{\mathbb{E}}
\DeclareMathOperator{\poly}{poly}
\DeclareMathOperator{\spn}{Span}
\renewcommand{\emptyset}{\varnothing}
\def\Tr{\operatorname{Tr}}\def\:{\hbox{\bf:}}
\newcommand{\specialcell}[2][c]{%
  \begin{tabular}[#1]{@{}c@{}}#2\end{tabular}}
\newcommand{\doubletilde}[1]{{%
  \mathpalette\double@tilde{#1}%
}}
\newcommand{\double@tilde}[2]{%
  \sbox\z@{$\m@th#1\tilde{#2}$}%
  \ht\z@=.9\ht\z@
  \tilde{\box\z@}%
}
\let\oldnl\nl
\newcommand{\nonl}{\renewcommand{\nl}{\let\nl\oldnl}}
\begin{document}

\title{A Quantum Algorithm Framework for Discrete Probability Distributions with Applications to R{\'e}nyi Entropy Estimation}

\author{Xinzhao Wang, Shengyu Zhang, Tongyang Li
\thanks{Tongyang Li and Xinzhao Wang are with Center on Frontiers of Computing Studies, Peking University, and School of Computer Science, Peking University. Shengyu Zhang is with Tencent Quantum Laboratory.}
}

\maketitle

\begin{abstract}
Estimating statistical properties is fundamental in statistics and computer science. In this paper, we propose a unified quantum algorithm framework for estimating properties of discrete probability distributions, with estimating R{\'e}nyi entropies as specific examples. In particular, given a quantum oracle that prepares an $n$-dimensional quantum state $\sum_{i=1}^{n}\sqrt{p_{i}}|i\>$, for $\alpha>1$ and $0<\alpha<1$, our algorithm framework estimates $\alpha$-R{\'e}nyi entropy $H_{\alpha}(p)$ to within additive error $\epsilon$ with probability at least $2/3$ using $\widetilde{\mathcal{O}}(n^{1-\frac{1}{2\alpha}}/\epsilon + \sqrt{n}/\epsilon^{1+\frac{1}{2\alpha}})$ and $\widetilde{\mathcal{O}}(n^{\frac{1}{2\alpha}}/\epsilon^{1+\frac{1}{2\alpha}})$ queries, respectively. This improves the best known dependence in $\epsilon$ as well as the joint dependence between $n$ and $1/\epsilon$. Technically, our quantum algorithms combine quantum singular value transformation, quantum annealing, and variable-time amplitude estimation. We believe that our algorithm framework is of general interest and has wide applications.
\end{abstract}

\begin{IEEEkeywords}
Entropy estimation, R{\'e}nyi entropy, quantum algorithms, quantum query complexity.
\end{IEEEkeywords}



\section{Introduction}
\noindent
\textbf{Motivations.}
For many problems, quantum algorithms can dramatically outperform their classical counterparts. Among those, an important category is quantum algorithms for linear algebraic problems. Recently, Gily{\'e}n, Low, Su, and Wiebe~\cite{gilyen2019singular} proposed a powerful framework for quantum matrix arithmetics, namely \emph{quantum singular value transformation (QSVT)}. QSVT encompasses quantum algorithms for various problems (see also~\cite{martyn2021grand}), and can recover the best-known or even optimal quantum algorithms for fixed-point amplitude amplification~\cite{grover2005fixed,tulsi2006new,aaronson2012quantum,yoder2014fixed}, solving linear systems~\cite{harrow2009quantum,ambainis2012variable,childs2017,chakraborty2019linear}, Hamiltonian simulation~\cite{low2017optimal,low2019hamiltonian}, etc.

In this paper, we study a fundamental problem in statistics, theoretical computer science, and machine learning: \emph{estimating statistical properties}, which aims to estimate properties of probability distributions using the least number of independent samples. On the one hand, statistical properties such as entropies, divergences, etc., characterize some key measures of randomness. On the other hand, relevant theoretical tools are rapidly developing in topics such as property testing~\cite{ron2010algorithmic}, statistical learning~\cite{valiant2011testing}, etc. Among statistical properties, the most basic one is the Shannon entropy~\cite{ShannonEntropy}. For a discrete distribution $\mathbf{p}=(p_{i})_{i=1}^{n}$ supported on $[n]$, it is defined as
\begin{align}
    H(\mathbf{p}):=-\sum_{i=1}^{n}p_{i}\log p_{i}.
\end{align}
A natural generalization of the Shannon entropy is the family of R{\'e}nyi entropies~\cite{renyi1961measures}. Specifically, the $\alpha$-R{\'e}nyi entropy is defined as
\begin{align}\label{eq:Renyi}
    H_\alpha(\mathbf{p}) := \frac{1}{1-\alpha} \log \sum_{i=1}^n p_i^\alpha.
\end{align}
For our convenience, the power sum in the logarithm is denoted by $P_\alpha(\mathbf{p})$, i.e., $P_\alpha(\mathbf{p}) := \sum_{i=1}^n p_i^\alpha$. When $\alpha\to 1$, $\lim_{\alpha\to 1} H_{\alpha}(\mathbf{p})=H(\mathbf{p})$. Classically, references~\cite{jiao2015minimax,wu2016minimax} proved the tight classical sample complexity bound
\begin{align}\label{eq:Shannon-bound-classical}
\Theta\left(\frac{\log^{2}n}{\epsilon^{2}}+\frac{n}{\epsilon \log n}\right)
\end{align}
for estimating Shannon entropy within precision $\epsilon$ with success probability at least $2/3$. For $\alpha$-R{\'e}nyi entropy estimation, reference~\cite{acharya2016estimating} proved that when $\alpha>1$ and $0<\alpha<1$ respectively, it takes $\mathcal{O}(n/\log n)$ and $\mathcal{O}(n^{1/\alpha}/\log n)$ independent samples from $\mathbf{p}$ respectively to estimate $H_{\alpha}(\mathbf{p})$ within constant additive error with probability at least $2/3$. In addition, for any constant $\eta>0$, the paper also established sample complexity lower bounds $\Omega(n^{1-\eta})$ and $\Omega(n^{1/\alpha-\eta})$ when $\alpha>1$ and $0<\alpha<1$, respectively.

There has also been literature on quantum algorithms for entropy estimation (see the paragraph on related works for more details). Among those, the state-of-the-art result on estimating Shannon entropy was given by Gily{\'e}n and Li~\cite{gilyen2019distributional}, which applies QSVT to estimate the Shannon entropy within additive error $\epsilon$ with success probability at least $2/3$ using $\widetilde{\mathcal{O}}(\sqrt{n}/\epsilon^{1.5})$ quantum queries. For $\alpha$-R{\'e}nyi entropy, Li and Wu~\cite{li2019entropy} gave algorithms with quantum query complexities $\widetilde{\mathcal{O}}(n^{1-1/2\alpha}/\epsilon^{2})$ and $\widetilde{\mathcal{O}}(n^{1/\alpha-1/2}/\epsilon^{2})$ when $\alpha>1$ and $0<\alpha<1$, respectively. Both papers used a common model proposed by Bravyi et al.~\cite{bravyi2011quantum} which encodes $\mathbf{p}$ as frequencies of $n$ symbols in a given input string and quantum algorithms can access the input string in superposition (\defn{discrete-quantum-query}), whereas~\cite{gilyen2019distributional} also adopted oracles preparing a superposed quantum state whose amplitude in the $i^{\text{th}}$ term is $p_{i}$ (\defn{pure-state-preparation} and \defn{purified-query-access}).

Nevertheless, it can be observed that although quantum algorithms for learning statistical properties have applied advanced algorithmic tools including quantum singular value transformation~\cite{gilyen2019singular}, and have achieved speedup in the cardinality $n$ and precision $\epsilon$ separately, the combined dependence on $n$ and $\epsilon$ is not yet as well understood as the classical counterparts, for instance the sample complexity of Shannon entropy in \eq{Shannon-bound-classical}. From a high-level perspective, even though quantum algorithms for linear algebraic problems have been systematically developed, we shall still endeavor to quantum algorithms with optimal or near-optimal dependence on \emph{all parameters}. In this paper, we shed light on this question for estimating statistical properties.


\vspace{3mm}
\noindent
\textbf{Contributions.}
In this paper, we introduce a unified quantum algorithm framework for estimating properties of discrete distributions. Our algorithm is stemmed from quantum singular value transformation~\cite{gilyen2019singular}, but we enhance the framework with quantum annealing and variable-time amplitude amplification and estimation. Specifically, we propose algorithms for estimating R{\'e}nyi entropies of discrete probability distributions with refined dependence on $n$ and $\epsilon$, assuming access to quantum oracle $U_{\mathrm{pure}}$ which maps $\ket{0}$ to $\sum_{i=1}^n \sqrt{p_i} \ket{i}$ (see the later ``related work'' paragraph for more discussions and comparisons of different oracles).\footnote{In fact, our quantum algorithm also applies to the purified quantum query-access in \defn{purified-query-access}. Please see \sec{main-algorithm} and \sec{applications} for more details.}

\begin{theorem}[Main theorem]\label{thm:main}
There are quantum algorithms that approximate the R{\'e}nyi entropy $H_\alpha(\mathbf{p})$ in \eq{Renyi} within an additive error $\epsilon>0$ with success probability at least $2/3$ using
\begin{itemize}[leftmargin=15pt]
    \item $\widetilde{\mathcal{O}}\left(\frac{n^{1-\frac{1}{2\alpha}}}{\epsilon} + \frac{\sqrt{n}}{\epsilon^{1+\frac{1}{2\alpha}}}\right)$
 quantum queries to $U_{\mathrm{pure}}$ and $U_{\mathrm{pure}}^\dagger$ in \defn{pure-state-preparation} when $\alpha>1$ (\thm{main-alpha-large}), and
    \item $\widetilde{\mathcal{O}}\left(\frac{n^{\frac{1}{2\alpha}}}{\epsilon^{\frac{1}{2\alpha}+1}}\right)$
 quantum queries to $U_{\mathrm{pure}}$ and $U_{\mathrm{pure}}^\dagger$ in \defn{pure-state-preparation} when $0<\alpha<1$ (\thm{main-alpha-small}).
\end{itemize}
\end{theorem}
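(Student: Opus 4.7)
The plan is to reduce R\'enyi entropy estimation to estimating the power sum $P_\alpha(\mathbf{p})=\sum_i p_i^\alpha$, design a QSVT-based subroutine whose success probability encodes $P_\alpha$, and then combine quantum annealing with variable-time amplitude estimation to keep the amplitude-estimation cost under control when $P_\alpha$ is small. Since $H_\alpha=\frac{1}{1-\alpha}\log P_\alpha$, additive error $\epsilon$ on $H_\alpha$ translates into relative error $\Theta(|1-\alpha|\epsilon)$ on $P_\alpha$. The elementary range bounds $P_\alpha\in[n^{1-\alpha},1]$ for $\alpha>1$ and $P_\alpha\in[1,n^{1-\alpha}]$ for $0<\alpha<1$ govern how hard the amplitude estimation step is and thereby shape the final complexity.

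For the core subroutine, I would first use $U_{\mathrm{pure}}$ and $U_{\mathrm{pure}}^\dagger$ to construct a constant-cost block encoding of the diagonal operator $\mathrm{diag}(\sqrt{p_1},\ldots,\sqrt{p_n})$, a standard manipulation from a state-preparation oracle. Then I would invoke QSVT with a polynomial approximation of $x\mapsto x^{\alpha}$ on a suitable subinterval $[\delta,1]$, of degree only polylogarithmic when $\alpha\ge 1$ (and governed by the singular derivative at $0$ when $\alpha<1$), to obtain an approximate block encoding of $\mathrm{diag}(p_1^{\alpha/2},\ldots,p_n^{\alpha/2})$. Applying this to the uniform superposition $\frac{1}{\sqrt n}\sum_i\ket{i}$ and projecting on the block-encoding ancilla yields acceptance probability $P_\alpha(\mathbf p)/n$, so any procedure for estimating this probability to the right relative accuracy delivers $P_\alpha$ and hence $H_\alpha$.

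A direct application of amplitude estimation would incur a factor $\sqrt{n/P_\alpha}$, which is as bad as $n^{\alpha/2}$ in the worst case and far exceeds the target bound. To avoid this, I would use a quantum annealing schedule $\alpha_0=1,\alpha_1,\ldots,\alpha_T=\alpha$ (decreasing for $\alpha<1$), chosen so that consecutive ratios $P_{\alpha_{k+1}}/P_{\alpha_k}$ are bounded by a constant and $T=\widetilde{\mathcal{O}}(1)$. Starting from the easy state for $\alpha_0=1$ (where $P_1=1$), each annealing step only requires amplifying by a constant ratio; variable-time amplitude estimation is used to avoid paying the worst-stage cost at every stage, and to combine a sequence of polynomial-degree QSVT transformations whose per-step costs are non-uniform into a single estimate with the stated total complexity.

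The main obstacle is a careful joint balancing of three interacting parameters: the QSVT polynomial degree (which depends on the cutoff $\delta$ and on the target precision at each stage), the number and placement of annealing stages, and the target precision of the variable-time amplitude estimator. The two-term form $\widetilde{\mathcal{O}}(n^{1-1/(2\alpha)}/\epsilon+\sqrt{n}/\epsilon^{1+1/(2\alpha)})$ for $\alpha>1$ emerges from two sub-regimes, one in which the algorithm is amplitude-estimation-limited and another in which the QSVT degree dominates; showing that a single tuned schedule attains the stated minimum in both regimes requires tracking error propagation through all annealing stages. For $0<\alpha<1$ the bottleneck shifts to approximating $x^\alpha$ near $x=0$, which I would handle by truncating the small-probability tail (bounding its contribution by $\widetilde{\mathcal{O}}(\epsilon)$ using $\sum_{i:p_i<\delta}p_i^\alpha\le n\delta^\alpha$) and applying polynomial approximation on the restricted interval, yielding the claimed $\widetilde{\mathcal{O}}(n^{1/(2\alpha)}/\epsilon^{1+1/(2\alpha)})$ bound.
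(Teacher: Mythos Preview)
Your high-level plan (QSVT + annealing + VTAE) matches the paper's, but two of the central design choices are off in ways that would prevent you from reaching the stated bounds.

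First, the input state. You apply the QSVT-transformed unitary to the \emph{uniform} superposition, getting acceptance probability $P_\alpha/n$. The paper instead applies it to $\ket{\psi_{\mathbf p}}=\sum_i\sqrt{p_i}\ket{i}$, so that the acceptance probability is $\sum_i p_i\,S(\sqrt{p_i})^2$. With $S(x)\approx c\cdot x^{\alpha-1}$ this encodes $c^2 P_\alpha$ directly, and the scaling constant $c$ can be chosen (once annealing supplies a rough bound $p^*\approx P_\alpha^{1/\alpha}$ on $\max_i p_i$) so that the success probability is $\Theta(p^*)$ rather than $\Theta(P_\alpha/n)$. In the worst case these differ by a factor of $n^{1-1/\alpha}$, and your annealing sketch (``each step amplifies by a constant ratio'') does not recover this: annealing is used to \emph{learn} the right scaling constant for the QSVT polynomial, not to amplify the state through stages. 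With the uniform-state encoding you propose, the amplitude-estimation factor alone is $\sqrt{n/P_\alpha}\ge n^{\alpha/2}$, and no $\widetilde{\mathcal O}(1)$-stage schedule brings this down to $n^{1-1/(2\alpha)}$.

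Second, the claim that the polynomial approximation of $x\mapsto x^\alpha$ has ``degree only polylogarithmic when $\alpha\ge 1$'' is false for non-integer $\alpha>1$: the function has a branch-type singularity at $0$, and any bounded polynomial approximating it on $[\nu,1]$ needs degree $\widetilde{\mathcal O}(1/\nu)$ (the paper's Lemma~\ref{lem:scaled_poly_approx_with_small_value_at_zero} makes this explicit). This $1/\nu$ degree is precisely what drives the complexity and is the reason VTAE is needed: the paper decomposes the singular-value range $[\varphi_m,\beta)$ into dyadic intervals and uses a \emph{different} polynomial $S_j$ of degree $\widetilde{\mathcal O}(2^j/\beta)$ on each, so that VTAE pays only the $\ell_2$-averaged degree over the branches weighted by $p_i$, yielding the $\sqrt{n/p^*}$ term. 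Your proposal places VTAE ``across annealing stages,'' which is not where the variable stopping times arise; the variable time is across singular values within a single estimation.

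For $0<\alpha<1$ your truncation idea ($\sum_{p_i<\delta}p_i^\alpha\le n\delta^\alpha$ with $\delta\approx(\epsilon/n)^{1/\alpha}$) is the right starting point and matches the paper, but again the paper applies the resulting transformation to $\ket{\psi_{\mathbf p}}$ (not the uniform state) and uses VTAE across singular-value intervals; no annealing is needed here since $P_\alpha\ge 1$ already.
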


Compared to the state-of-the-art result for estimating R{\'e}nyi entropies by Li and Wu~\cite{li2019entropy} which uses $\widetilde{\mathcal{O}}(n^{1/\alpha-1/2}/\epsilon^{2})$ quantum queries when $0<\alpha<1$ and $\widetilde{\mathcal{O}}(n^{1-1/2\alpha}/\epsilon^{2})$ quantum queries when $\alpha>1$ and $\alpha$ is not an integer, our result achieves a systematic improvement in both $n$ and $\epsilon$. This can be illustrated by \fig{main}.\footnote{\label{ft-low}The integral $\alpha$ cases are excluded in the figure because computing $H_\alpha(p)$ for integral $\alpha$ seems fundamentally easier. Classically, the best-known upper bound for integral $\alpha>2$ is $\Theta(n^{1-\frac{1}{\alpha}})$, smaller than that of $\Omega(n^{1-o(1)})$ for non-integral cases \cite{acharya2016estimating}. For quantum algorithms, Li and Wu~\cite{li2019entropy} made special designs for integer $\alpha$ cases, with query cost better than their non-integral $\alpha$ 
cases (and also ours), albeit using a stronger input oracle (\defn{discrete-quantum-query}).} 

\textcolor{black}{The $\epsilon$ dependence of our algorithm seems to be worse than that of Li and Wu \cite{li2019entropy} when $\alpha \in (0,\frac{1}{2})$. We suspect this is due to an error of the analysis of their Theorem 9 and we have fixed it in  \sec{eps-dependence}. The analysis of Theorem 14 in the arXiv version of \cite{acharya2016estimating} also seems to have an error, which analyzed the classical sample complexity of estimating R{\'e}nyi entropy for $\alpha \in (0,1)$. We note that Jiao et at.~\cite{jiao2015minimax} gave a R{\'e}nyi entropy estimation algorithm with different classical sample complexity for $\alpha \in (0,1)$, so we only compare our algorithms with that of Jiao et at.~\cite{jiao2015minimax}. We discuss these points also in \sec{eps-dependence}.}

\begin{figure*}[htbp]
\centering
\subcaptionbox{$\alpha > 1$, $\alpha \not\in \mathbb{N}$, $\epsilon = n^{-0.5}$}{
\includegraphics[width=5cm]{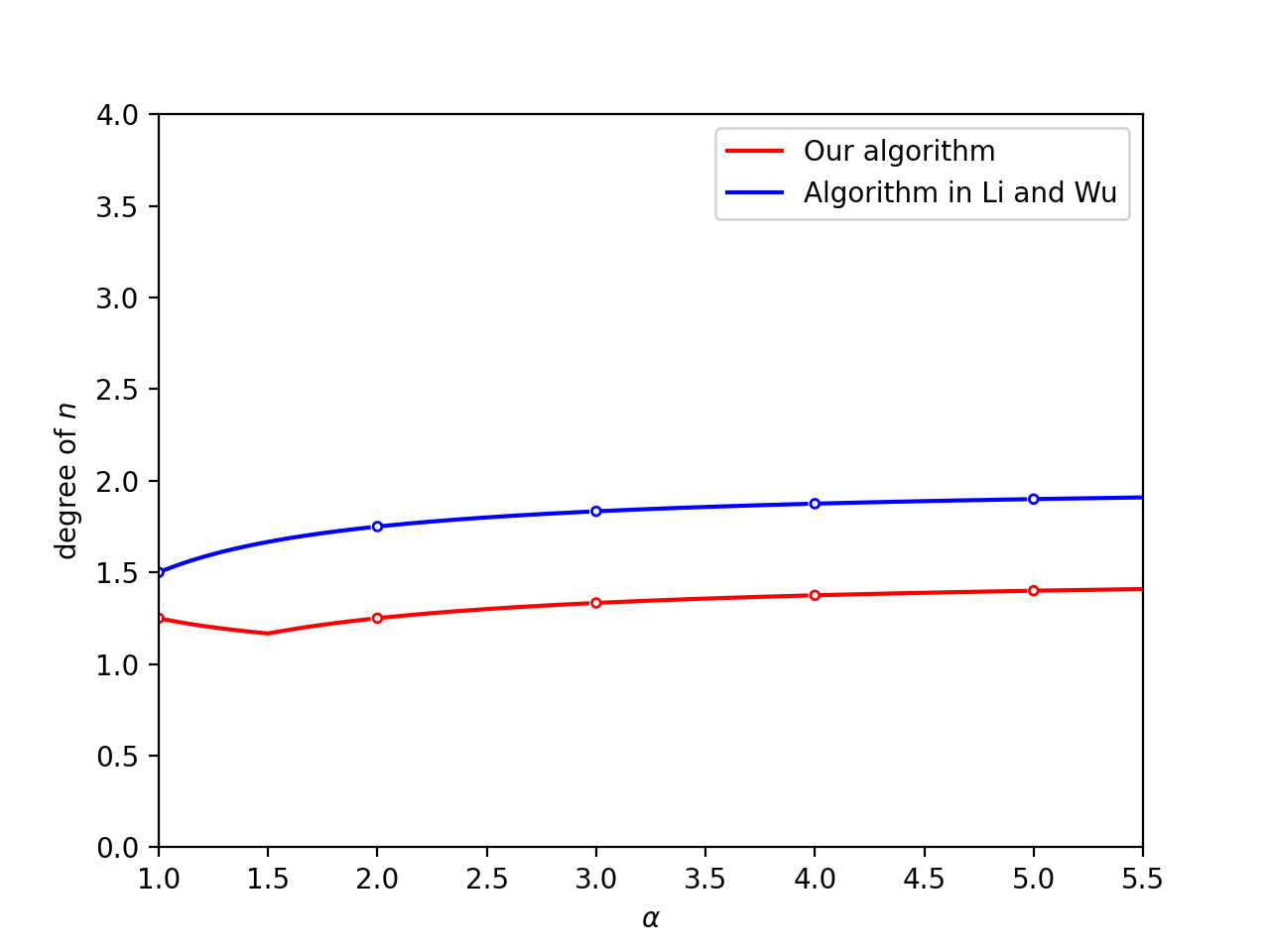}
}
\subcaptionbox{$\alpha > 1$, $\alpha \not\in \mathbb{N}$, $\epsilon = n^{-0.25}$}{
\includegraphics[width=5cm]{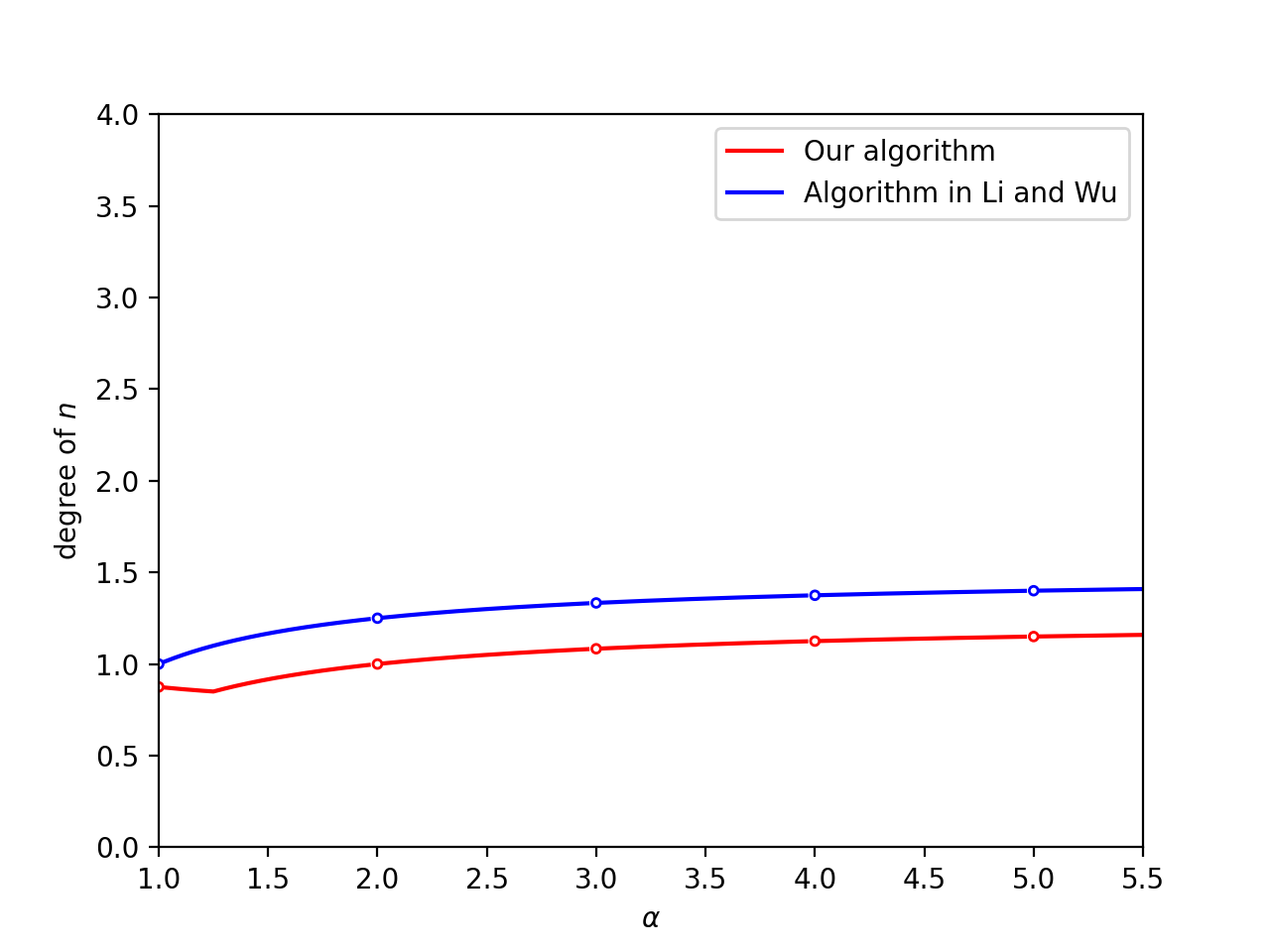}
}
\subcaptionbox{$0<\alpha<1$, $\epsilon = \Theta(1)$}{
\includegraphics[width=5cm]{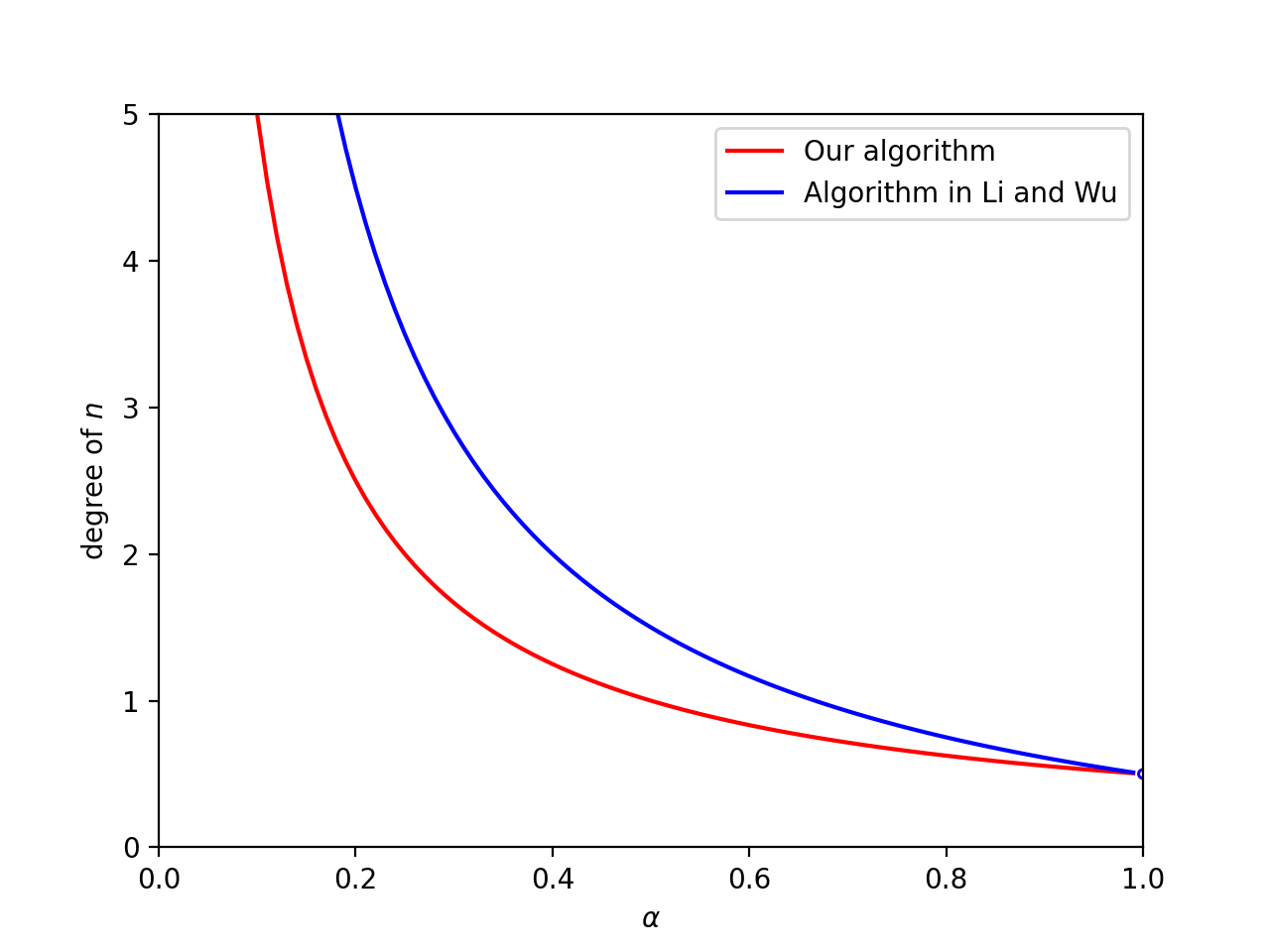}}
\caption{Comparison between our algorithms and the algorithm in Li and Wu~\cite{li2019entropy}.}
\label{fig:main}
\end{figure*}

Here we list current query complexity lower bounds to estimate R{\'e}nyi entropy with $U_{\mathrm{pure}}$ and $U_{\mathrm{pure}}^\dagger$. 
\begin{itemize}[leftmargin=15pt]
    \item \textcolor{black}{For $\alpha \in(0,1)$, we proved that $\Omega\left(\frac{n^{1/2\alpha-1/2}}{\epsilon^{1/2\alpha}}\right)$ queries to $U_{\mathrm{pure}}$ and $U_{\mathrm{pure}}^\dagger$ are necessary to estimate $H_{\alpha}(\mathbf{p})$ to error $\epsilon$ in \thm{lower}. This almost matches our upper bound when $\alpha$ tends to $0$.} 
    \item \textcolor{black}{For $\alpha \in [\frac{3}{7},3]$ and $\epsilon \in [\frac{1}{n},1]$, Li and Wu \cite{li2019entropy} proved that $\Omega\bigl(n^{\frac{1}{3}}/\epsilon^{\frac{1}{6}}\bigr)$ queries are necessary. For $\alpha =1$, Bun et al. \cite{bun2018polynomial} improved the lower bound to $\tilde{\Omega}(\sqrt{n})$.}
    \item \textcolor{black}{For $\alpha \in[3,\infty)$, Li and Wu \cite{li2019entropy} proved that $\Omega\Bigl(\frac{n^{\frac{1}{2 }-\frac{1}{2 \alpha}}}{\epsilon}\Bigr)$ queries are necessary, so our upper bound has an $\widetilde{\mathcal{O}}\bigl(n^{\frac{1}{2}-\frac{1}{2\alpha}}+(\frac{n}{\epsilon})^{\frac{1}{2\alpha}}\bigr)$ overhead. However, as mentioned in footnote \ref{ft-low}, estimating R{\'e}nyi entropy for integral and non-integral $\alpha$ have fundamental differences in the classical case, and the lower bound in \cite{li2019entropy} holds for all $\alpha \ge 3$, suggesting that it may not be tight for $\alpha \not \in \mathbb{N}$. }
\end{itemize}

\textcolor{black}{We also applied our algorithms to sparse or low-rank distributions. If a classical probability distribution $\mathbf{p}$ has at most $r$ elements $i$ such that $p_i > 0$ and we know the value of $r$ in advance, we give an algorithm using  $\widetilde{\mathcal{O}}\left(\frac{r^{1-\frac{1}{2\alpha}}}{\epsilon} + \frac{\sqrt{r}}{\epsilon^{1+\frac{1}{2\alpha}}}\right)$ calls to $U_{\mathrm{pure}}$ and $U_{\mathrm{pure}}^{\dagger}$ to estimate $H_{\alpha}(\mathbf{p})$ to within additive error $\epsilon$ when $\alpha >  1$, and an algorithm using $\widetilde{\mathcal{O}}\left(\frac{r^{\frac{1}{2\alpha}}}{\epsilon^{\frac{1}{2\alpha}+1}}\right)$ calls to $U_{\mathrm{pure}}$ and $U_{\mathrm{pure}}^{\dagger}$ to estimate $H_{\alpha}(\mathbf{p})$ to within additive error $\epsilon$ when $0<\alpha<1$. In addition, we also give a quantum algorithm in \cor{large-alpha-not-r} for $\alpha > 1$ when we do not know the value of $r$.}

\color{black}
Our quantum algorithms can be applied to estimate the R{\'e}nyi entropy
\begin{align}
\label{eq:quantum-renyi}
    H_{\alpha}(\rho) = \frac{1}{1-\alpha}\log(\Tr(\rho^{\alpha})).
\end{align}
of a quantum density matrix $\rho$.
\color{black}

\begin{corollary}
\label{cor:renyi-density}
There are quantum algorithms that approximate the R{\'e}nyi entropy of a density operator $H_\alpha(\rho)$ in \eq{quantum-renyi} within an additive error $\epsilon>0$ with success probability at least $2/3$ using
\begin{itemize}[leftmargin=15pt]
    \item $\widetilde{\mathcal{O}}\Bigl(\min\Bigl(\frac{n^{\frac{3}{2}-\frac{1}{2\alpha}}}{\epsilon} + \frac{n}{\epsilon^{1+\frac{1}{2\alpha}}}, \frac{n}{\epsilon^{\frac{1}{\alpha}+1}}\Bigr)\Bigr)$
 quantum queries to $U_{\rho}$ and $U_{\rho}^\dagger$ in \defn{purified-query-access} when $\alpha>1$ (\cor{main-alpha-large-purified}), and
    \item $\widetilde{\mathcal{O}}\left(\frac{n^{\frac{1}{2\alpha}+\frac{1}{2}}}{\epsilon^{\frac{1}{2\alpha}+1}}\right)$
 quantum queries to $U_{\rho}$ and $U_{\rho}^\dagger$ in \defn{purified-query-access} when $0<\alpha<1$ (\cor{main-alpha-small-purified}).
\end{itemize}
\end{corollary}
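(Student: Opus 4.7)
The starting observation is the spectral identity $H_\alpha(\rho) = H_\alpha(\boldsymbol{\lambda})$ for $\boldsymbol{\lambda}=(\lambda_i)$ the eigenvalue distribution of $\rho$: the R\'enyi entropy of a density operator equals the R\'enyi entropy of its spectrum, so the task reduces to estimating $P_\alpha(\boldsymbol{\lambda}) = \Tr(\rho^\alpha)$. My plan is to re-run the algorithmic pipeline of \thm{main} after replacing the pure-state oracle $U_{\mathrm{pure}}$ by a block-encoding of $\rho$, which a standard purification construction builds from $O(1)$ queries to $U_\rho$ and $U_\rho^\dagger$. QSVT on top of this block-encoding gives access to $f(\rho)$ for any polynomial $f$, so every subroutine of \thm{main} that manipulates the spectrum has a direct analogue in the density-matrix setting.

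For the first expression in the $\min$ when $\alpha>1$ and the full bound when $0<\alpha<1$, I would substitute this block-encoding of $\rho$ into every internal subroutine of \thm{main}: the QSVT for $x^{\alpha-1}$, the variable-time amplitude estimation, and the quantum annealing over magnitude strata. Each subroutine depends only on the spectrum and not on the basis in which it is expressed, so each remains correct after the substitution. The one place where the density-matrix setting is more expensive than the pure-state setting is amplitude amplification: whereas $U_{\mathrm{pure}}$ acts on a preparation ancilla whose ``useful'' subspace is one-dimensional, the block-encoding of $\rho$ reflects against an $n$-dimensional ancilla, inflating every amplification cost by a factor of $\widetilde{\mathcal{O}}(\sqrt n)$. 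Propagating this overhead through both summands of the \thm{main} bound yields $\widetilde{\mathcal{O}}(n^{3/2-1/(2\alpha)}/\epsilon + n/\epsilon^{1+1/(2\alpha)})$ for $\alpha>1$ and $\widetilde{\mathcal{O}}(n^{1/(2\alpha)+1/2}/\epsilon^{1/(2\alpha)+1})$ for $0<\alpha<1$, as claimed.

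The alternative bound $\widetilde{\mathcal{O}}(n/\epsilon^{1/\alpha+1})$ for $\alpha>1$ I would prove via a more direct argument that bypasses the annealing step. The identity $\Tr(\rho^\alpha) = \|(\rho^{(\alpha-1)/2}\otimes I)U_\rho\ket{0}\ket{0}\|^2$ reduces the task to applying the block-encoding of a polynomial approximation of $x^{(\alpha-1)/2}$ on $[\delta,1]$ to one register of the purification, then running amplitude estimation on the resulting norm. Choosing $\delta \sim \epsilon^{1/\alpha}/n$ so that eigenvalues below $\delta$ change $\Tr(\rho^\alpha)$ by at most $\epsilon$ relative to its worst-case value $n^{1-\alpha}$, the polynomial has degree $\widetilde{\mathcal{O}}(n/\epsilon^{1/\alpha})$, and amplitude estimation to relative precision $\epsilon$ multiplies this by $\widetilde{\mathcal{O}}(1/\epsilon)$. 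Taking the minimum of the two strategies completes the $\alpha>1$ case of \cor{renyi-density}.

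The main obstacle I anticipate is the small-eigenvalue regime, especially for $0<\alpha<1$ where $x^{\alpha-1}$ diverges at the origin: the polynomial truncation threshold, the annealing schedule, and the variable-time amplitude-estimation precision are all coupled, and the verification that the reduction preserves the internal guarantees of \thm{main} amounts to a careful error analysis on block-encoded operators. Because the annealing schedule of \thm{main} consults only the values of the spectrum and not the eigenbasis, I expect it to transport verbatim to the density-matrix setting after inserting the $\sqrt n$ amplification overhead sketched above.
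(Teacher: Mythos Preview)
Your first approach lands on the right numbers but mislocates the source of the $\sqrt{n}$ overhead. The paper does \emph{not} plug a block-encoding of $\rho$ into \thm{main}; it uses the projected unitary encoding of \eq{block-3}, whose singular values are $\sqrt{p_i/n}$ rather than $\sqrt{p_i}$. The overhead then enters in two independent places: the QSVT polynomials $S_j$ must be replaced by rescaled versions $S_j'$ with $\nu' = \nu/\sqrt n$, raising every degree by $\sqrt n$; and the amplitude to be estimated becomes $\sum_i (p_i/n)S_0'(\sqrt{p_i/n})^2 = \Theta(\tfrac{1}{n}\sum_i p_i S_0(\sqrt{p_i})^2)$, raising the amplitude-estimation cost by another $\sqrt n$. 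Attributing the whole factor to ``reflecting against an $n$-dimensional ancilla'' in amplitude amplification is not the mechanism at work, and if you actually used a block-encoding of $\rho$ itself (singular values $p_i$) the subroutines of \thm{main} would not transplant as written.

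Your second approach has a genuine gap. You assert that amplitude estimation to relative precision $\epsilon$ costs $\widetilde{\mathcal{O}}(1/\epsilon)$, but the cost is $\widetilde{\mathcal{O}}\bigl(1/(\epsilon\sqrt{p_{\mathrm{succ}}})\bigr)$ with $p_{\mathrm{succ}} = \Tr(\rho^\alpha)$. In the worst case $\Tr(\rho^\alpha) = n^{1-\alpha}$, so your scheme gives
\[
\widetilde{\mathcal{O}}\!\left(\frac{n}{\epsilon^{1/\alpha}}\cdot \frac{n^{(\alpha-1)/2}}{\epsilon}\right)
= \widetilde{\mathcal{O}}\!\left(\frac{n^{(\alpha+1)/2}}{\epsilon^{1+1/\alpha}}\right),
\]
which is strictly worse than the claimed $\widetilde{\mathcal{O}}(n/\epsilon^{1+1/\alpha})$ for every $\alpha>1$. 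The paper obtains the correct bound by invoking \cor{low-rank-large-alpha} with $r=n$, i.e.\ by running the full annealing-plus-VTAE pipeline on the block-encoding of $\rho$ from \eq{block-4} with polynomials reparametrized so that $S_j'(p_i)$ tracks $S_j(\sqrt{p_i})$. The annealing is precisely what supplies the rough estimate $p^*$ needed to rescale the polynomial and keep $p_{\mathrm{succ}}$ bounded away from zero; you cannot bypass it and still reach $\widetilde{\mathcal{O}}(n/\epsilon^{1+1/\alpha})$.
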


\textcolor{black}{Wang et al.~\cite{wang2022new} studied estimating R{\'e}nyi entropy of density operators in low-rank cases, we show in \sec{applications} that our techniques can improve their algorithms. Subramanian and Hsieh \cite{subramanian2021quantum} consider the same task as in \cor{renyi-density}. Their algorithm used sampling methods instead of Amplitude Estimation, so it has worse asymptotic query complexity bound but requires less stringent quantum resources.}

In addition, our quantum algorithms can also be extended to estimate quantum R{\'e}nyi divergence of density matrices. Please find details in \sec{applications}

\vspace{3mm}
\noindent
\textbf{Techniques.}
Our quantum algorithm follows the one in Gily{\'e}n and Li~\cite{gilyen2019distributional} for Shannon entropy estimation. In \sec{main_algo_standard}, we first construct a unitary operator, which has a matrix block encoding of the square root of the probability distribution we want to study. We then use quantum singular value transformation~\cite{gilyen2019singular} to compute a polynomial approximation of the function that we want to estimate, which is then encoded into an amplitude. Finally, we apply  amplitude estimation to obtain the estimate as a classical output.

However, this algorithm is sub-optimal in many cases because of the following two reasons. 
\begin{itemize}[leftmargin=15pt]
    \item If we need an estimate to within a given multiplicative error, the query complexity of the amplitude estimation process is inversely proportional to the square root of the quantity that we want to estimate, so it has poor dependence if the quantity is too small.
    
    \item Quantum singular value transformation leverages the computation of the same function of all singular values in parallel, but this brings restrictions that the polynomial should well-approximate the function within the entire interval $[0,1]$, where the singular values may range over.
\end{itemize}
For the first issue, we design an annealing process in \sec{annealing} to obtain an estimate of the quantity to within constant multiplicative error in advance. With this rough estimate, we can amplify the quantity with smaller overhead in each step. For the second issue, we exploit variable-time amplitude estimation in \sec{main_algo_2} by designing a variable-stopping-time algorithm which applies different transformation polynomials to singular values in different intervals. This give us more flexibility to construct polynomials for different singular values and the final complexity is related to the average degree of all polynomials. 

As a technical contribution, we also improve the bounded polynomial approximation of $x^\alpha$ for $\alpha > 1$ (\lem{scaled_poly_approx_with_small_value_at_zero}), which may be of independent interest. Our approximation polynomial is bounded by $2x^{\alpha}$ when $x$ is smaller than a threshold while the bounded polynomial approximation constructed in \cite{gilyen2019singular} is only guaranteed to be bounded by 1. 

We summarize and compare the techniques in previous literature on quantum algorithms for estimating entropies of discrete probability distributions in \tab{main}.

\begin{table*}[ht]
  \centering
  \resizebox{1.6\columnwidth}{!}{\begin{tabular}{ccccc}
    \hline
    Reference                       & Oracle               & Quantum SVT  & Annealing         & VTAE    \\ \hline\hline
    \cite{bravyi2011quantum}  & Discrete query-access (\defn{discrete-quantum-query}) & \XSolidBrush & \XSolidBrush & \XSolidBrush  \\ \hline
    \cite{li2019entropy} &  Discrete query-access (\defn{discrete-quantum-query})  & \XSolidBrush & \Checkmark   & \XSolidBrush \\ \hline
    \cite{gilyen2019distributional} & Purified query-access (\defn{purified-query-access}) & \Checkmark & \XSolidBrush & \XSolidBrush \\ \hline
    \cite{gur2021sublinear} & Purified query-access (\defn{purified-query-access}) & \Checkmark & \XSolidBrush & \XSolidBrush \\ \hline\hline
    \textbf{this paper}  & \specialcell{Pure-state preparation (\defn{pure-state-preparation})\\Purified query-access (\defn{purified-query-access})} & \Checkmark & \Checkmark & \Checkmark   \\ \hline
  \end{tabular}}
  \caption{Summary of quantum algorithms for estimating entropies of classical discrete distributions.}
  \label{tab:main}
\end{table*}

We give a brief explanation of the comparisons in the table here. Compared with quantum algorithms without using QSVT to estimate entropy, they first sample $i$ according to $p_i$, then estimate $p_i$ using amplitude estimation and compute the entropy accordingly, while using QSVT we can directly compute any polynomial function value of $p_i$ for all $i$ simultaneously in the block encoding. On the other hand, VTAE is an accelerated version of amplitude estimation that takes problem instances into account. In our problem, VTAE allows us to apply QSVT with different polynomials to $p_i$ in different intervals, which makes our quantum algorithm more flexible. Moreover, annealing is applied to handle the issue that that the estimated quantity is too small and makes the amplitude estimation costly. By using annealing, we can obtain a rough estimate of the estimated quantity, which enlarges the estimated quantity when using QSVT.

\vspace{3mm}
\noindent
\textbf{Related work.}
Previous literature investigated quantum algorithms for estimating statistical properties using different input models (see also the survey paper~\cite{montanaro2016survey}). First, if we want to utilize quantum algorithms to accelerate the solving of problems related to classical distributions, we need coherent access to classical distributions via quantum oracle. It is thus natural to consider a unitary oracle which can prepare a pure state encoding a classical distribution as follows:

\begin{definition}[Pure-state preparation access to classical distribution]\label{defn:pure-state-preparation}
    A classical distribution $\mathbf{p} = (p_i)_{i=1}^{n}$ is accessible via pure-state preparation access if we have access to a unitary oracle $U_{\mathrm{pure}}$ and its inverse, which satisfies
    \begin{align}\label{eq:pure-state-preparation}
        U_{\mathrm{pure}}|\mathbf{0}\rangle = \sum_{i=1}^n \sqrt{p_i} |i\rangle.
    \end{align}
\end{definition}

\textcolor{black}{This oracle can be traced back to the \textit{quantum example oracle} proposed by \cite{bshouty1995learning}.}

\textcolor{black}{Another common model, originally proposed by Bravyi et al.~\cite{bravyi2011quantum}, encodes the classical probability distribution as frequencies of $n$ symbols in a given input string, and quantum algorithms can query the input string in superposition. Note that amplitude estimation in \cite{brassard2002quantum} can be regarded as estimating the mean of a random variable encoded in this way.}


\begin{definition}[Discrete quantum query-access to classical distribution]\label{defn:discrete-quantum-query}
    A classical distribution $\mathbf{p} = (p_i)_{i=1}^{n}$ is accessible via discrete quantum query-access if we have quantum access to a function $f\colon S\to [n]$ such that for all $i\in[n]$, $p_i = |\{s \in S\mid f(s) = i\}|/|S|$, which means we have access to a unitary oracle $O$ and its inverse acting on $\mathbb{C}^{|S|}\otimes \mathbb{C}^{n}$ such that
    \begin{align}\label{eq:discrete-quantum-query}
        O|s\rangle|\mathbf{0}\rangle = |s\rangle|f(s)\rangle \text{ for all } s\in S.
    \end{align}
\end{definition}

In this model, Bravyi et al.~\cite{bravyi2011quantum} gave a quantum algorithm to estimate the $\ell_{1}$-norm distance of two distributions $p$ and $q$ with support cardinality $n$ and with constant precision using $\mathcal{O}(\sqrt{n})$ queries, and gave quantum algorithms for testing uniformity and orthogonality with query complexity $\mathcal{O}(n^{1/3})$. This was later generalized to identity testing, i.e., testing whether a distribution is identical or $\epsilon$-far in $\ell_{1}$-norm from a given distribution, in $\widetilde{\mathcal{O}}(n^{1/3})$ queries by~\cite{chakraborty2010testing}. Li and Wu~\cite{li2019entropy} gave a quantum algorithm for estimating the Shannon entropy within additive error $\epsilon$ with high success probability using $\widetilde{\mathcal{O}}(n^{1/2}/\epsilon^{2})$ queries, and this paper also studied the query complexity of R{\'e}nyi entropy estimation (see the paragraph of ``contributions"). To complement the algorithm results, Bun et al.~\cite{bun2018polynomial} proved that Shannon entropy estimation with a certain constant $\epsilon$ requires $\widetilde{\Omega}(\sqrt{n})$ quantum queries to the oracle in \eq{discrete-quantum-query}.

Beyond classical distributions, it is natural to extend to statistical problems of genuine quantum systems. The quantum counterpart of a classical discrete distribution is a \emph{density matrix}. Density matrices can be regarded as the (possibly random) outcome of some physical process, and if we can access this physical process by calling it as a black box, we can generate quantum samples ourselves. If the physical process is reversible, which is common in a quantum scenario, we can also access the inverse process of it. For example, if a quantum computer produces the state $\rho$ without measurements, we can easily reverse this process. We can define the following input model to characterize the situations mentioned above.
\begin{definition}[Purified quantum query-access]
\label{defn:purified-query-access}
    A density operators $\rho\in \mathbb{C}^{n\times n}$ has purified quantum query-access if we have access to a unitary oracle $U_{\rho}$ and its inverse, which satisfies
    \begin{align}\label{eq:purified-query-access}
        U_{\rho}|\mathbf{0}\rangle_A|\mathbf{0}\rangle_B =|\psi_{\rho}\rangle= \sum_{i=1}^n\sqrt{p_i} |\phi_i\rangle_A|\psi_i\rangle_B
    \end{align}
    such that $\Tr_A(|\psi_{\rho}\rangle\langle \psi_{\rho}|) = \rho$, where $\langle\phi_i|\phi_j\rangle = \langle \psi_i|\psi_j\rangle = \delta_{ij}$. If $|\psi_i\rangle = |i\rangle$, $\rho = \sum_{i=1}^n p_i |i\rangle\langle i|$ is a diagonal density operator which can be seen as a classical distribution $\mathbf{p} = (p_i)_{i=1}^n$, and we write $U_p$ in this case instead of $U_\rho$.
\end{definition}

We note that for encoding classical distributions, \defn{purified-query-access} is weaker than \defn{discrete-quantum-query} since we can apply $O$ to a uniform superposition over $S$ in \eq{discrete-quantum-query}, and this is equivalent to applying a purified quantum query-access encoding a classical distribution to $|\mathbf{0}\rangle$. (Furthermore, \defn{discrete-quantum-query} essentially assumes that all probabilities $p_{i}$ are rational, whereas \defn{purified-query-access} does not have this requirement.) 
In addition, \defn{purified-query-access} is also weaker than~\defn{pure-state-preparation} since we can use one query to $U_{\mathrm{pure}}$ to prepare $\sum_{i=1}^n \sqrt{p_i}|i\rangle$, and then apply CNOT gates to produce the state $\sum_{i=1}^n \sqrt{p_i}|i\rangle|i\rangle$, which satisfies the condition in \defn{purified-query-access}. Our results are established with \defn{purified-query-access} being the input oracle.

For classical distributions encoded by \defn{purified-query-access}, Gily{\'e}n and Li~\cite{gilyen2019distributional} systematically studied different oracle access of distributional property testing, and proved that it takes $\widetilde{\mathcal{O}}(n^{1/2}/\epsilon^{1.5})$ queries to the purified query access for estimating Shannon entropy to within additive error $\epsilon$ with high success probability. This work also studied closeness testing, where we are given purified query access to distributions $\mathbf{p}$ and $\mathbf{q}$ and the goal is to distinguish between $\mathbf{p}=\mathbf{q}$ and $\|\mathbf{p}-\mathbf{q}\|\geq\epsilon$. For $\ell_{1}$-norm and $\ell_{2}$-norm distances,~\cite{gilyen2019distributional} proved that the quantum query complexities are $\widetilde{\mathcal{O}}(\sqrt{n}/\epsilon)$ and $\widetilde{\Theta}(1/\epsilon)$, respectively.  Belovs~\cite{belovs2019quantum} proved that distinguishing between $\mathbf{p}$ and $\mathbf{q}$ takes $\Theta(1/d_{\textrm{H}}(\mathbf{p},\mathbf{q}))$ queries (see also \sec{small-alpha-lower}), where $d_{\textrm{H}}(\mathbf{p},\mathbf{q})$ is the Hellinger distance between $\mathbf{p}$ and $\mathbf{q}$, and this tight bound applies to all oracles in \defn{pure-state-preparation}, \defn{discrete-quantum-query}, and  \defn{purified-query-access}.

\textcolor{black}{For quantum density matrix, Watrous \cite{watrous2002limits} used this oracle to access a mixed state implicitly. }\defn{purified-query-access} is also widely used among quantum algorithms for estimating properties of quantum density operators. The results in~\cite{gilyen2019distributional} about Shannon entropy estimation and $\ell_{1}$-norm and $\ell_{2}$-norm closeness testing  can be generalized to those of quantum density matrices with purification with an overhead of $\sqrt{n}$. \textcolor{black}{Chowdhury et al.~\cite{chowdhury2020variational} estimates the von Neumann entropy of quantum density matrices to within an additive error}. Gur et al.~\cite{gur2021sublinear} estimates the von Neumann entropy of quantum density matrices to within a certain multiplicative error, and under appropriate choices of parameters the query complexity to the purified query access can be sublinear in $n$. Regarding the estimation of quantum R{\'e}nyi entropy in general, Subramanian and Hsieh~\cite{subramanian2021quantum} used $\widetilde{\mathcal{O}}(\kappa n^{\max\{2\alpha,2\}}/\epsilon^{2})$ queries to estimate the $\alpha$-R{\'e}nyi entropy of a density matrix $\rho$ satisfying $I/\kappa\preceq\rho\preceq I$ to within additive error $\epsilon$. When $\rho$ has rank at most $r$, Wang et al.~\cite{wang2022new} gave quantum algorithms taking $\poly(r,1/\epsilon)$ queries for estimating von Neumann entropy, quantum R{\'e}nyi entropy, and trace distance and fidelity between two density matrices. \textcolor{black}{Fidelity estimation \cite{wang2022fidelity,gilyen2022improved}, trace distance estimation \cite{wang2023fast}, and quantum state tomography \cite{van2023quantum} using \defn{purified-query-access} are also studied.}

Finally, since classical algorithms for estimating distribution properties takes independent samples, it is natural to consider quantum samples of density operators defined as follows.




\begin{definition}[Quantum sampling]
    \label{defn:q-sample-def}
    A quantum distribution $\rho\in\mathbb{C}^{n\times n}$ is accessible via quantum sampling if we can request independent copies of the state $\rho$.
\end{definition}

\textcolor{black}{Childs et al.~\cite{childs2007weak} studied sample complexity of the quantum collision problem in this model and proved weak Fourier-Schur sampling fails to identify the hidden subgroup in HSP problem.} A series of papers by O'Donnell and Wright~\cite{OW15,OW16,OW17} (see also their survey paper~\cite{odonnell2017guest}) studied the sample complexity of various problems, including quantum state tomography, maximally mixedness testing, rankness testing, spectrum estimation, learning eigenvalues, learning top-$k$ eigenvalues, and learning optimal rank-$k$ approximation. Subsequently, B{\u{a}}descu, O'Donnell, and Wright~\cite{badescu2019quantum} studied the sample complexity of testing whether $\rho$ is equal to some known density matrix or $\epsilon$-far from it, which is $O(n/\epsilon)$ with respect to fidelity and $O(n/\epsilon^2)$ with respect to trace distance; both results are optimal up to constant factors. Regarding von Neumann and quantum R{\'e}nyi entropies, Acharya et al.~\cite{acharya2019measuring} proved that estimation with additive error $\epsilon$ of von Neumann entropy, quantum R{\'e}nyi entropy with $\alpha>1$, and quantum R{\'e}nyi entropy with $0<\alpha<1$ have sample complexity bounds $\mathcal{O}(n^{2}/\epsilon^{2})$ and $\Omega(n^{2}/\epsilon)$, $\mathcal{O}(n^{2}/\epsilon^{2})$ and $\Omega(n^{2}/\epsilon)$, and $\mathcal{O}(n^{2/\alpha}/\epsilon^{2/\alpha})$ and $\Omega(n^{1+1/\alpha}/\epsilon^{1/\alpha})$, respectively. Given an additional assumption that all nonzero eigenvalues of $\rho$ are at least $1/\kappa$, Wang et al.~\cite{wang2022quantum} gave a quantum algorithm for estimating its von Neumann entropy using $\widetilde{\mathcal{O}}(\kappa^{2}/\epsilon^{5})$ samples, and bounds under the same assumption were also proved for estimating quantum R{\'e}nyi entropy.

\vspace{3mm}
\noindent
\textbf{Open questions.}
Our work raises several natural questions for future investigation:
\begin{itemize}[leftmargin=*]
\item When $\alpha>1$, can we achieve quadratic quantum speedup in $n$ compared to the classical algorithm in~\cite{acharya2016estimating} for estimating $\alpha$-R{\'e}nyi entropy with $\mathcal{O}(n/\log n)$ queries? A natural goal is to give a quantum algorithm with query complexity $\widetilde{\mathcal{O}}(\sqrt{n})$ for constant $\epsilon$, but our current bound in \thm{main-alpha-large} has complexity $\widetilde{\mathcal{O}}(n^{1-\frac{1}{2\alpha}})$.
This may be related to our estimation paradigm. A classical analogy to our algorithm is to draw samples independently from the probability distribution $\mathbf{p}$ on $[n]$, estimate $(p_i)^{\alpha-1}$ for each sample $i$, and output the mean value of all estimates. Such algorithms are called empirical estimators, but they can be sub-optimal classically. 


\item Can we apply our quantum algorithm framework to other statistical problems? One possibility is the estimation of partition functions -- it is another prominent type of statistical properties, and many previous quantum algorithms including~\cite{wocjan2009quantum,montanaro2015quantum,chakrabarti2023quantum,harrow2020adaptive,arunachalam2021simpler} had applied annealing on the system's temperature to estimating partition functions. It would be of general interest to achieve further quantum speedup by our algorithm framework.

\item For other quantum linear algebraic problems, can we elaborate on the dependence on all parameters? Decent efforts had been conducted for Hamiltonian simulation~\cite{low2017optimal,low2019hamiltonian,gilyen2019singular} and linear system solving~\cite{harrow2009quantum,ambainis2012variable,childs2017,chakraborty2019linear}, and this work investigates the estimation of statistical properties. It would be natural to leverage refined analyses for more problems, for instance the applications in quantum machine learning.
\end{itemize}

\vspace{3mm}
\noindent
\textbf{Organization.}
The rest of the paper is organized as follows. We review necessary background in \sec{prelim}. We introduce our main technical contribution, our quantum algorithm framework, in \sec{main-algorithm}. We prove our results about the quantum query complexity of $\alpha$-R{\'e}nyi entropy estimation with $\alpha>1$ and $0<\alpha<1$ in \sec{alpha-large} and \sec{alpha-small}, respectively. In \sec{applications}, we describe further applications of our quantum algorithm framework in estimating statistical properties.

\vspace{3mm}
\noindent
\textbf{Notation.}
Throughout the paper, $\widetilde{\mathcal{O}}$ omits poly-logarithmic factors in the big-$\mathcal{O}$ notation, i.e., $\widetilde{\mathcal{O}}(g)=g\poly(\log g)$. Unless otherwise stated, all vector norms $\|\cdot \|$ in this paper are $\ell_2$-norm. We use $\log$ to represent $\log_2$ and $\ln$ to represent $\log_e$. We use $\Delta^n$ to represent the set of all probability distributions on $[n]$. For a set $A$, we use $|A|$ to represent the size of $A$. In description of quantum algorithms, the corresponding Hilbert space of a quantum register $X$ is denoted by $\mathcal{H}_X$. We write operator $A$ acting on Hilbert space $\mathcal{H}_X$ as $A_X$. We use $I$ to represent the identity oprator and $|\mathbf{0}\rangle$ to represent the all-0 state. 


\section{Preliminaries}\label{sec:prelim}
We summarize necessary tools used in our quantum algorithm framework as follows.

\subsection{Amplitude amplification and estimation}
\noindent
\textbf{Fixed-point amplitude amplification.}
Classically, for a Bernoulli random variable $X$ with $\E[X] = p$, we need $\Theta(1/p)$ i.i.d.~samples in expectation to observe the first 1. In the quantum case, this can be improved by {\it amplitude amplification}~\cite{brassard2002quantum}, a quantum algorithm in which the number of iterations depends on $p$.  This was later strengthened to a fixed-point version, where the algorithm only needs to know a lower bound of $p$. There are a number of implementations~\cite{hoyer2000arbitrary,grover2005fixed,tulsi2006new,aaronson2012quantum,yoder2014fixed}, and here we use a version given in \cite{gilyen2019singular}.
Let $|\mathbf{0}\rangle$ denote the all-0 initial state. 
Consider a unitary $U$ such that
\begin{equation}
    \label{eq:quantum_bernoulli}
    U|0\rangle|\mathbf{0}\rangle = \sqrt{p}|1\rangle|\phi\rangle + \sqrt{1-p}|0\rangle|\psi\rangle.
\end{equation}

The following theorem says that we can obtain an approximation of $|\phi\rangle$ using $\Theta(\frac{1}{\sqrt{p}})$ calls to $U$ and $U^{\dagger}$, achieving a quadratic quantum speedup over its classical counterpart. 

\begin{theorem}[Fixed-point amplitude amplification {\cite[Theorem 27]{gilyen2019singular}}]
    \label{thm:amplitude_amplification}
  Let $\mathcal{A}$ be a quantum algorithm on space $\mathcal{H}_{\mathcal{A}} = \mathcal{H}_F\otimes \mathcal{H}_W$ such that
    \begin{align}
    \mathcal{A}|\mathbf{0}\rangle_{\mathcal{H}_{\mathcal{A}}}=&\sqrt{p_{\mathrm{succ}}}|1\rangle_{\mathcal{H}_F}|\phi\rangle_{\mathcal{H}_W}\nonumber\\&+\sqrt{1-p_{\mathrm{succ}}}|0\rangle_{\mathcal{H}_F}|\psi\rangle_{\mathcal{H}_W},
    \end{align}
    where $\||\phi\rangle\| = 1$.

    For any $0 < \delta < 1,0 < \epsilon < 1$, there is a quantum algorithm $\mathcal{A}'$ using a single ancilla qubit and
    $\mathcal{O}(\frac{\log(1/\epsilon)}{\delta})$ calls to $\mathcal{A}$ and $\mathcal{A}^{\dagger}$, such that $\|\mathcal{A}'|0\rangle_{\mathcal{H}_{\mathcal{A}}}-|1\rangle_{\mathcal{H}_F}|\phi\rangle_{\mathcal{H}_W}\|\le \epsilon$ as long as $\sqrt{p_{\mathrm{succ}}} > \delta$. 
\end{theorem}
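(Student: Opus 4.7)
The plan is to realize $\mathcal{A}'$ as a quantum singular value transformation of $\mathcal{A}$, following the Yoder--Low--Chuang strategy. First, I would recognize that $\mathcal{A}|\mathbf{0}\rangle$ lives in the two-dimensional subspace spanned by $|1\rangle_F|\phi\rangle_W$ and $|0\rangle_F|\psi\rangle_W$, with the ``good'' amplitude $\sqrt{p_{\mathrm{succ}}}$ playing the role of a singular value of the projected isometry $\Pi_F \mathcal{A} \Pi_{\mathbf{0}}$, where $\Pi_F = |1\rangle\langle 1|_F \otimes I_W$ and $\Pi_{\mathbf{0}} = |\mathbf{0}\rangle\langle\mathbf{0}|$. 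Thus $\mathcal{A}$ is a block encoding (in the projector sense used in \cite{gilyen2019singular}) of the scalar $\sqrt{p_{\mathrm{succ}}} \in (\delta, 1]$.

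Next I would invoke a polynomial approximation lemma: there exists an odd real polynomial $P$ of degree $d = \mathcal{O}(\log(1/\epsilon)/\delta)$ with $|P(x)| \le 1$ for all $x \in [-1,1]$ and $|P(x) - 1| \le \epsilon$ for all $x \in [\delta, 1]$. Such a polynomial is constructed explicitly (e.g.\ via a shifted and rescaled Chebyshev polynomial, as in \cite{yoder2014fixed}) by composing the Chebyshev polynomial $\mathcal{T}_d$ with a linear map sending $[\delta,1]$ into $[1-\Theta(1),1]$; the exponential growth of $\mathcal{T}_d$ outside $[-1,1]$ gives the $1/\delta$ factor and the $\log(1/\epsilon)$ factor. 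Applying the QSVT theorem to this $P$ with the block encoding above produces a quantum circuit $\widetilde{\mathcal{A}}$ that uses $d$ alternating calls to $\mathcal{A}$ and $\mathcal{A}^\dagger$, one ancilla qubit for the phase rotations, and satisfies $\Pi_F \widetilde{\mathcal{A}} \Pi_{\mathbf{0}} = P(\sqrt{p_{\mathrm{succ}}}) \,|1\rangle_F|\phi\rangle_W\langle\mathbf{0}|$ on the relevant subspace (up to the standard QSVT form).

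Finally I would bound the error: on $|\mathbf{0}\rangle$, the output state has overlap $P(\sqrt{p_{\mathrm{succ}}}) \ge 1 - \epsilon$ with $|1\rangle_F|\phi\rangle_W$, so a direct computation of $\||1\rangle_F|\phi\rangle_W - \widetilde{\mathcal{A}}|\mathbf{0}\rangle\|$ gives $\mathcal{O}(\sqrt{\epsilon})$; rescaling $\epsilon \mapsto \epsilon^2$ inside the polynomial construction (which only affects the log factor) yields the advertised $\epsilon$-closeness in the theorem statement. The main obstacle I expect is the polynomial construction step, namely verifying that the Chebyshev-based polynomial simultaneously satisfies the boundedness constraint on $[-1,1]$, the approximation constraint on $[\delta,1]$, odd parity, and the claimed degree; this is the only non-routine ingredient, and once it is in hand the QSVT machinery of \cite{gilyen2019singular} converts it mechanically into the desired circuit with the query count $\mathcal{O}(\log(1/\epsilon)/\delta)$.
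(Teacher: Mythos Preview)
The paper does not prove this statement: it is quoted verbatim as a preliminary from \cite[Theorem~27]{gilyen2019singular} and used as a black box. Your outline is essentially the proof that appears in that reference---block-encode the scalar $\sqrt{p_{\mathrm{succ}}}$ via the projectors $\Pi_F,\Pi_{\mathbf 0}$, apply QSVT with an odd sign-polynomial built from Chebyshev polynomials that is $\epsilon$-close to $1$ on $[\delta,1]$ and bounded on $[-1,1]$, and read off the degree bound $\mathcal{O}(\log(1/\epsilon)/\delta)$---so there is nothing to compare against in the present paper, and your approach is the standard one.
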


\vspace{3mm}
\noindent
\textbf{Amplitude estimation.}
Classically, if we like to estimate the expectation of the Bernoulli random variable $X$ to within additive error $\epsilon$, we need $\Theta(1/\epsilon^2)$ i.i.d. samples of $X$. Given access to $U$ in \eq{quantum_bernoulli}, we can also estimate $p$ with a quadratic quantum speedup:
\begin{theorem} [Amplitude estimation {\cite[Theorem 12]{brassard2002quantum}}]
  \label{thm:amplitude_estimation}
    Let $\mathcal{A}$ be a quantum algorithm on space $\mathcal{H}_{\mathcal{A}} = \mathcal{H}_F\otimes \mathcal{H}_W$ such that
    \begin{align}
    \mathcal{A}|\mathbf{0}\rangle_{\mathcal{H}_{\mathcal{A}}}=&\sqrt{p_{\mathrm{succ}}}|1\rangle_{\mathcal{H}_F}|\phi\rangle_{\mathcal{H}_W}\nonumber\\&+\sqrt{1-p_{\mathrm{succ}}}|0\rangle_{\mathcal{H}_F}|\psi\rangle_{\mathcal{H}_W},
    \end{align}
    where $\||\phi\rangle \|=1$, the amplitude estimation algorithm outputs a $\tilde{p}_{\mathrm{succ}} \in[0,1]$ satisfying
  \begin{align}
    |\tilde{p}_{\mathrm{succ}}-p_{\mathrm{succ}}| \leq \frac{2 \pi \sqrt{p_{\mathrm{succ}}(1-p_{\mathrm{succ}})}}{M}+\frac{\pi^{2}}{M^{2}}
  \end{align}
  with success probability at least $8 / \pi^{2}$, using $M$ calls to $\mathcal{A}$ and $\mathcal{A}^{\dagger}$.
\end{theorem}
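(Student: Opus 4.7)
The plan is to reduce amplitude estimation to quantum phase estimation of a Grover-type operator, essentially retracing the argument of Brassard et al. First, I would observe that the state $\mathcal{A}|\mathbf{0}\rangle_{\mathcal{H}_{\mathcal{A}}}$ lives in the two-dimensional subspace spanned by the ``good'' vector $|g\rangle := |1\rangle_{\mathcal{H}_F}|\phi\rangle_{\mathcal{H}_W}$ and the ``bad'' vector $|b\rangle := |0\rangle_{\mathcal{H}_F}|\psi\rangle_{\mathcal{H}_W}$, and parametrize it as $\sin(\theta)|g\rangle + \cos(\theta)|b\rangle$ with $\theta \in [0,\pi/2]$ and $\sin^2(\theta) = p_{\mathrm{succ}}$. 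Estimating $p_{\mathrm{succ}}$ then reduces to estimating $\theta$.

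Next, I would introduce the Grover-like operator $Q := -\mathcal{A}\,S_{\mathbf{0}}\,\mathcal{A}^{-1}\,S_g$, where $S_{\mathbf{0}} = I - 2|\mathbf{0}\rangle\langle\mathbf{0}|$ and $S_g = I - 2(|1\rangle\langle 1|_{\mathcal{H}_F}\otimes I_{\mathcal{H}_W})$ reflect about the initial state and the good subspace respectively. A standard direct computation shows that $Q$ preserves the 2D subspace $\mathrm{span}\{|g\rangle,|b\rangle\}$ and acts there as a rotation by $2\theta$, so it has eigenvalues $e^{\pm 2i\theta}$ on this subspace, and $\mathcal{A}|\mathbf{0}\rangle$ decomposes as an equal-magnitude superposition of the two eigenvectors. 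Each application of $Q$ uses one call each to $\mathcal{A}$ and $\mathcal{A}^{\dagger}$.

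I would then apply the standard quantum phase estimation routine with $M$ controlled applications of $Q$ to $\mathcal{A}|\mathbf{0}\rangle$, reading out an estimate $\tilde{\theta}$ of $\theta$. The textbook analysis of phase estimation yields, with probability at least $8/\pi^{2}$, an estimate satisfying $|\tilde{\theta}-\theta| \leq \pi/M$. Setting $\tilde{p}_{\mathrm{succ}} := \sin^2(\tilde{\theta})$ and using the product-to-sum identity
\begin{align}
|\sin^2(\tilde\theta)-\sin^2(\theta)| = |\sin(\tilde\theta+\theta)\sin(\tilde\theta-\theta)|,
\end{align}
together with $|\sin(\tilde\theta+\theta)| \leq |\sin(2\theta)| + |\sin(\tilde\theta-\theta)| = 2\sqrt{p_{\mathrm{succ}}(1-p_{\mathrm{succ}})} + |\sin(\tilde\theta-\theta)|$ and $|\sin(\tilde\theta-\theta)| \leq |\tilde\theta-\theta| \leq \pi/M$, would yield the claimed bound $\tfrac{2\pi\sqrt{p_{\mathrm{succ}}(1-p_{\mathrm{succ}})}}{M} + \tfrac{\pi^{2}}{M^{2}}$.

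The main obstacle is the error analysis at the conversion step: the phase-estimation guarantee is naturally a bound on $|\tilde\theta-\theta|$, but the statement requires a bound on the additive error of $\sin^2(\tilde\theta)$ with the specific ``$\sqrt{p_{\mathrm{succ}}(1-p_{\mathrm{succ}})}$-in-the-leading-term'' shape. Getting this shape, rather than a cruder $\mathcal{O}(1/M)$ bound, requires the trigonometric identity above rather than a direct Taylor expansion, and requires separately handling the regime where $\theta$ is close to $0$ or $\pi/2$ so that the $1/M^2$ remainder dominates. The rest of the proof is routine once this trigonometric bound is in place.
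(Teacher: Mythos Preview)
The paper does not prove this statement: it is quoted as a preliminary result from \cite{brassard2002quantum} and used as a black box. Your proposal is a correct sketch of the standard Brassard--H{\o}yer--Mosca--Tapp argument, so there is nothing to compare against in the paper itself.
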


In application, we often need to estimate $p_{\text{succ}}$ to within multiplicative error $\epsilon$. Then we can set
\begin{align}
\label{eq:M_mul}
M = \frac{3\pi}{\epsilon\sqrt{p_{\text{succ}}}}
\end{align}
in \thm{amplitude_estimation} such that
\begin{equation}
  |\tilde{p}_{\text{succ}}-p_{\text{succ}}| \le \frac{2}{3}\epsilon p_{\mathrm{succ}} \sqrt{1-p_{\mathrm{succ}}} + \frac{1}{9}\epsilon^2 p_{\mathrm{succ}} \le \epsilon p_{\mathrm{succ}} .
\end{equation}


\subsection{Projected unitary encoding}
To manipulate general matrices $A$ by quantum circuits, we need a tool called \textit{projected unitary encoding} introduced by \cite{gilyen2019singular}. We say that a unitary $U$ and two orthogonal projections $\Pi, \widetilde{\Pi}$ form a projected unitary encoding of a matrix $A$ if $A=\widetilde{\Pi}U\Pi$.

An important special projected unitary encoding is the \textit{block-encoding} where $\widetilde{\Pi} = \Pi = |
0^k\rangle \langle 0^k 
|\otimes I$. In this case, {all nonzero entries of} $A$ only appears in the $2^k\times 2^k$ top-left corner of $U$. Sometimes the convention also refers to this corner as $A$, and call a unitary $U$ a block-encoding of $A$ if 
\begin{align}
&U=\left[\begin{array}{cc}
A  & \cdot \\
\cdot & \cdot
\end{array}\right] \nonumber ,
\end{align}
denoted by $A=(\langle \mathbf{0}| \otimes I) U(|\mathbf{0}\rangle \otimes I)$.

Here we list some useful projected unitary encoding and block-encoding from previous work and used in ours.

\begin{itemize}
    \item For $U_{\mathrm{pure}}$ in \defn{pure-state-preparation}, take $\widetilde{\Pi }=\sum_{i=1}^{n} |i\rangle\langle i| \otimes|i\rangle\langle i|$, $\Pi=|\mathbf{0}\rangle\langle \mathbf{0}|\otimes I$, and $U = U_{\mathrm{pure}} \otimes I$, then we have 
    \begin{align}
      \label{eq:block-1}
        \widetilde{\Pi}U\Pi =\sum_{i=1}^{n}\sqrt{p_{i}}|i\rangle \langle \mathbf{0}|\otimes |i\rangle \langle i|.
    \end{align}
    \item For $U_p$ in \defn{purified-query-access}, take $\widetilde{\Pi}=\sum_{i=1}^{n} I \otimes|i\rangle\langle i|\otimes| i\rangle\langle i|$, $\Pi=|\mathbf{0}\rangle\langle \mathbf{0}|\otimes| \mathbf{0}\rangle\langle \mathbf{0}| \otimes I$, and $U=U_{p} \otimes I$, then we have
    \begin{align}
      \label{eq:block-2}
     \widetilde{\Pi} U \Pi=\sum_{i=1}^{n} \sqrt{p_{i}}\left|\phi_{i}\right\rangle\langle \mathbf{0}|\otimes| i\rangle\langle \mathbf{0}|\otimes| i\rangle\langle i|.
    \end{align}
\item Let $U_\rho$ be the oracle in \defn{purified-query-access} which satisfies $U_{\rho}|\mathbf{0}\rangle_A|\mathbf{0}\rangle_B = \sum_{i=1}^n\sqrt{p_i} |\phi_i\rangle_A|\psi_i\rangle_B$.
    Let $W$ be a unitary that maps $|\mathbf{0}\rangle|\mathbf{0}\rangle$ to $\sum_{j=1}^n \frac{|j\rangle|j\rangle}{\sqrt{n}}$ and $|\phi_{j}^*\rangle $ be the conjugate of $|\phi_j\rangle$. Take $\widetilde{\Pi}=I \otimes|\mathbf{0}\rangle\langle \mathbf{0}|\otimes| \mathbf{0}\rangle\langle \mathbf{0}|$, $\Pi=|\mathbf{0}\rangle\langle \mathbf{0}|\otimes| \mathbf{0}\rangle\langle \mathbf{0}| \otimes I$, and $U=\left(I \otimes U_{\rho}^{\dagger}\right)\left(W\otimes I\right)$, then we have
    \begin{align}
      \label{eq:block-3}
      \widetilde{\Pi}U\Pi=\sum_{i=1}^n \sqrt{\frac{p_i}{n}}\left|\phi_i^{*}\right\rangle\langle \mathbf{0}|\otimes| \mathbf{0}\rangle\langle \mathbf{0}|\otimes| \mathbf{0}\rangle\left\langle\psi_i\right|.
    \end{align} 
    \item Let $A,B,C$ be three $\lceil \log n\rceil$-qubit registers. For $U_\rho$ in \defn{purified-query-access}, let $S$ be the swap operator, and $U= (U_{\rho}^{\dagger}\otimes I_C)(I_A\otimes S_{B,C})(U_{\rho}\otimes I_C)$, then we have
    \begin{align}
      \label{eq:block-4}
      (\langle\mathbf{0}|_{A,B}\otimes I_C)U(|\mathbf{0}\rangle_{A,B}\otimes I_C) &= \sum_{i=1}^n p_i|\psi_i\rangle\langle \psi_i|_C \nonumber\\&= \rho.
    \end{align}
\end{itemize}

The first three projected unitary encodings are proposed by \cite{gilyen2019distributional} and the last one is proposed by \cite{low2019hamiltonian} in its Lemma 7.


\subsection{Quantum singular value transformation}

In \cite{gilyen2019singular}, a general quantum algorithm framework called \textit{quantum singular value transformation (QSVT)} is proposed, which is useful in many computational tasks including property estimation. Before introducing this framework, we first give the definition of singular value transformation.
\begin{definition}[Singular value transformation {\cite[Definition 16]{gilyen2019singular}}]
   Let $f\colon \mathbb{R} \rightarrow \mathbb{C}$ be an even or odd function. Suppose that $A \in \mathbb{C}^{\tilde{d} \times d}$ has the following singular value decomposition
  \begin{align}
    A=\sum_{i=1}^{d_{\min }} \sigma_{i}|\tilde{\psi}_{i}\rangle\left\langle\psi_{i}\right|,
  \end{align}
  where $d_{\min }:=\min (d, \tilde{d})$. For the function $f$ we define the singular value transform of $A$ as
  \begin{align}f^{(S V)}(A):= \begin{cases}\sum_{i=1}^{d_{\min }} f\left(\sigma_{i}\right)|\tilde{\psi}_{i}\rangle\left\langle\psi_{i}\right| & \text { if } f \text { is odd, and } \\ \sum_{i=1}^{d} f\left(\sigma_{i}\right)\left|\psi_{i}\right\rangle\left\langle\psi_{i}\right| & \text { if } f \text { is even}, \end{cases}\end{align}
  where for $i \in[d] \backslash\left[d_{\min }\right]$ we define $\sigma_{i}:=0$.
\end{definition}

Given a matrix $A$ block-encoded in a unitary, polynomial singular value transformation of $A$ can be efficiently implemented as follows:
\begin{theorem}[{\cite[Corollary 18]{gilyen2019singular}}] Let $\mathcal{H}_{U}$ be a finite-dimensional Hilbert space and let $U, \Pi, \widetilde{\Pi} \in \operatorname{End}\left(\mathcal{H}_{U}\right)$ be linear operators on $\mathcal{H}_{U}$ such that $U$ is a unitary, and $\widetilde{\Pi}, \Pi$ are orthogonal projectors. Suppose that $P=\sum_{k=0}^{n} a_{k} x^{k} \in \mathbb{R}[x]$ is a degree-n polynomial such that
  \begin{align}
  \label{eq:qsvt_condition}
    &a_{k} \neq 0 \text{ only if }k \equiv n \bmod 2\text{, and }\nonumber \\&\text{for all }x \in[-1,1]:|P(x)| \leq 1. 
  \end{align}
  Then there exists a vector $\Phi  = (\phi_1, \phi_2, \ldots, \phi_n)  \in \mathbb{R}^{n}$, such that
  \begin{align}
    &P^{(S V)}\big(\widetilde{\Pi} U \Pi\big) \nonumber \\
    = & \begin{cases}
    \big(\langle+| \otimes \widetilde{\Pi}\big) U^{(SV)}_P \big(|+\rangle \otimes \Pi\big) & \text { if } n \text { is odd,} \\
    \big(\langle+| \otimes \Pi\big) U^{(SV)}_P \big(|+\rangle \otimes \Pi\big) & \text { if } n \text { is even, }\end{cases}
  \end{align}
  where $U^{(SV)}_P := |0\rangle\langle 0|\otimes U_{\Phi}+| 1\rangle\langle 1| \otimes U_{-\Phi}$ with
  \begin{equation}
  U_{\Phi}:=\begin{cases}
    e^{i \phi_{1}(2 \widetilde{\Pi}-I)} U \prod_{j=1}^{(n-1) / 2}(e^{i \phi_{2 j}(2 \Pi-I)} U^{\dagger}\\ \cdot e^{i \phi_{2 j+1}(2 \widetilde{\Pi}-I)} U) \qquad \text {if } n \text { is odd,} \\
    \prod_{j=1}^{n / 2}(e^{i \phi_{2 j-1}(2 \Pi-I)} U^{\dagger} \\\cdot e^{i \phi_{2 j}(2 \widetilde{\Pi}-I)} U) \qquad \  \text { if } n \text { is even. }
    \end{cases}
    \end{equation} 
  \label{thm:qsvt}
\end{theorem}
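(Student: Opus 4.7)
The plan is to prove the QSVT theorem by combining a structural decomposition of the ambient Hilbert space with single-qubit quantum signal processing (QSP). First, I would invoke a Jordan/Szegedy-style lemma to show that $U$, $\Pi$, and $\widetilde{\Pi}$ preserve a direct-sum decomposition of $\mathcal{H}_U$ into at most two-dimensional invariant subspaces indexed by the singular values of $A := \widetilde{\Pi} U \Pi$. Concretely, on the 2D block associated with singular value $\sigma_i$, one can pick an orthonormal basis $\{|\psi_i\rangle, |\psi_i^{\perp}\rangle\}$ for $\mathrm{Im}\,\Pi$ and $\{|\widetilde{\psi}_i\rangle, |\widetilde{\psi}_i^{\perp}\rangle\}$ for $\mathrm{Im}\,\widetilde{\Pi}$ so that $U$ acts as the $2\times 2$ rotation with entries $\sigma_i$ and $\sqrt{1-\sigma_i^2}$. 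Leftover kernel blocks (where $\Pi$ or $\widetilde{\Pi}$ acts trivially) can be absorbed into padded 2D blocks using the same formulas.

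Next, I would examine the action of the reflection-driven unitary $U_{\Phi}$ on a single such 2D block. The operators $e^{i\phi(2\Pi - I)}$ and $e^{i\phi(2\widetilde{\Pi} - I)}$ restrict to single-qubit $Z$-type phase rotations in the appropriate bases, while $U$ restricted to the block is a fixed $X$-type rotation parametrized by $\sigma_i = \cos\theta_i$. Thus $U_{\Phi}$ becomes an alternating product of phase rotations and fixed signal rotations, which is exactly the QSP circuit of Low, Chuang, and Haah. The classical QSP theorem then guarantees that for any $P \in \mathbb{R}[x]$ of degree $n$ obeying the parity condition and $|P(x)| \le 1$ on $[-1,1]$, there exists a phase sequence $\Phi \in \mathbb{R}^{n}$ such that the relevant off-diagonal (odd $n$) or diagonal (even $n$) matrix element of $U_{\Phi}$ equals $P(\sigma_i) + i Q(\sigma_i)$ for some companion polynomial $Q$.

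The complex imaginary part $Q$ is then canceled using a Linear Combination of Unitaries: since $U_{-\Phi}$ produces $P(\sigma_i) - i Q(\sigma_i)$, the ancilla prepared in $|+\rangle$ and the block-diagonal operator $U_P^{(SV)} = |0\rangle\langle 0|\otimes U_\Phi + |1\rangle\langle 1|\otimes U_{-\Phi}$, sandwiched between $\langle +|$ and $|+\rangle$, returns precisely $P(\sigma_i)$ on each block. Assembling these block-wise identities back via the SVD yields $P^{(SV)}(A)$ on the $\widetilde{\Pi}$-to-$\Pi$ (odd case) or $\Pi$-to-$\Pi$ (even case) part of $U_P^{(SV)}$, matching the statement of the theorem. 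The parity requirement $a_k \neq 0$ only when $k \equiv n \pmod 2$ arises naturally from the fact that each application of $U$ toggles between $\mathrm{Im}\,\Pi$ and $\mathrm{Im}\,\widetilde{\Pi}$, so the resulting matrix element has fixed parity in $\sigma_i$.

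The main obstacle is the QSP existence statement itself: showing that for any admissible real polynomial $P$, the required phase sequence $\Phi$ exists. I would handle this by citing the inductive construction of Gily{\'e}n et al.\ (itself adapted from Low--Chuang), where one peels off one phase at a time using a factorization of the block matrix polynomial, verifying at each step that the reduced polynomial still satisfies the parity and boundedness constraints. Reproving this decomposition lemma is technical but standard, so I would defer to the existing proof. The remaining routine checks---that the block decomposition is genuinely invariant under both $U$ and the reflections, and that the various kernel/cokernel subspaces do not spoil the SVD formula---follow from a direct computation that I would verify in each case by parity of $n$.
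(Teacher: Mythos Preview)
The paper does not prove this theorem at all: it is quoted verbatim as a preliminary result from \cite[Corollary~18]{gilyen2019singular} and used as a black box throughout. There is no ``paper's own proof'' to compare against.

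That said, your sketch is an accurate summary of how the result is actually established in \cite{gilyen2019singular}: the Jordan/CS decomposition into one- and two-dimensional $U$-invariant blocks (their Lemma~7/Theorem~11), the reduction on each block to single-qubit QSP with the reflections acting as $Z$-rotations and $U$ as a fixed $X$-rotation, the phase-sequence existence theorem (their Theorem~3/4), and the LCU with $U_{-\Phi}$ to kill the imaginary companion polynomial $Q$. So your proposal is correct and matches the original source, but it is not something the present paper attempts or needs to reproduce.
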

Note that $P^{(S V)}\big(\widetilde{\Pi} U \Pi\big)$ acts on the same space as $\widetilde{\Pi} U \Pi$, while $U_P^{(SV)}$ acts on an enlarged space with one ancillary qubit added. This theorem tells us that for a polynomial $P$ of degree $d$ which satisfies \eq{qsvt_condition} in \thm{qsvt}, we can implement $P^{(SV)}(\widetilde{\Pi}U{\Pi})$ with $\Theta(d)$ uses of $U, U^{\dagger}$ and controlled reflections $I-2\Pi, I-2\widetilde{\Pi}$.

\subsection{Variable-stopping-time algorithms}
\noindent
\textbf{Variable-stopping-time quantum algorithm.}
\label{sec:vtaa} 
In \cite{ambainis2012variable}, \textit{variable-stopping-time quantum algorithms} are proposed to characterize those having different branches of computations stopping at different time. We follow the definition in \cite{childs2017} and \cite{chakraborty2019linear}.
\begin{definition}[Variable-stopping-time quantum algorithm {\cite[Definition 13]{chakraborty2019linear}}]
    We say that $\mathcal{A}=\mathcal{A}_{m} \cdots \mathcal{A}_{1}$ is a variable-stopping-time quantum algorithm if $\mathcal{A}$ acts on $\mathcal{H}=\mathcal{H}_{C} \otimes \mathcal{H}_{\mathcal{A}}$, where $\mathcal{H}_{C}=\otimes_{i=1}^{m} \mathcal{H}_{C_{i}}$ with $\mathcal{H}_{C_{i}}=\operatorname{Span}(|0\rangle,|1\rangle)$, and each unitary $\mathcal{A}_{j}$ acts on $\mathcal{H}_{C_{j}} \otimes \mathcal{H}_{\mathcal{A}}$ controlled on the first $j-1$ qubits being $|0\rangle^{\otimes (j-1)} \in \otimes_{i=1}^{j-1} \mathcal{H}_{C_{i}}$.
\end{definition}
The algorithm $\mathcal{A}$ is divided into $m$ stages $\mathcal{A}_{1},\ldots  ,\mathcal{A}_{m}$ according to the $m$ possible stopping times $t_1,\ldots,t_m$. In property estimation problem, we focus more on query complexity, so the $t_j$ we refer to is the query complexity of $\mathcal{A}_j \cdots \mathcal{A}_{1}$.

In any stage $j$, the unitary $\mathcal{A}_j$ can set the state in $\mathcal{H}_{C_j}$ to $|1\rangle$. This indicates that the computation has stopped on this branch, since any $\mathcal{A}_k,k>j$ is controlled on the state in $\otimes_{i=1}^{k-1} \mathcal{H}_{C_{i}}$ being all-0 state and does not alter the state on this branch since the state in $\mathcal{H}_{C_j}$ is set to $|1\rangle$.

In order to analyze $\mathcal{A}$, we give the definition of the probability of the algorithm stopping by time $t_j$ as follows:
\begin{definition}[Probability of stopping by time $t_j$ {\cite[Definition 14]{chakraborty2019linear}}]
    We define the orthogonal projector
    \begin{align}
    \Pi_{\mathrm{stop} \leq t}:=\sum_{j: t_{j} \leq t}|1 \rangle\langle 1 |_{C_{j}} \otimes I_{\mathcal{H}_{\mathcal{A}}},
    \end{align}
    where by $|1\rangle\langle 1|_{C_{j}}$ we denote the orthogonal projector on $\mathcal{H}_{C}$ which projects onto the state
    \begin{align}
    &|0\rangle_{\mathcal{H}_{C_{1}}} \otimes \cdots \otimes|0\rangle_{\mathcal{H}_{C_{j-1}}} \otimes|1\rangle_{\mathcal{H}_{C_{j}}} \otimes|0\rangle_{\mathcal{H}_{C_{j+1}}} \nonumber \\&\otimes \cdots \otimes|0\rangle_{\mathcal{H}_{C_{m}}}.
    \end{align}
    Then we define $p_{\mathrm{stop} \leq t}:=\| \Pi_{\mathrm{stop} \leq t} \mathcal{A}|\mathbf{0}\rangle \|^{2}$, and similarly $p_{\mathrm{stop} \geq t}$ and $p_{\mathrm{stop}=t_j}$.
\end{definition}
It is also worth mentioning that in our applications, it always holds that $p_{\mathrm{stop}\le t_m} = 1$. Let $ p_{\mathrm{stop}=t_j} := p_{\mathrm{stop} \leq t_j}-p_{\mathrm{stop} \leq t_{j-1}}$. We define the average complexity of $\mathcal{A}$ in a way similar to that in \cite{ambainis2012variable} by
\begin{align}
    \label{eq:def_avg}
    T_{\text{avg}} := \sqrt{\sum_{j=1}^m t_j^2\cdot p_{\text{stop}=t_j}},
\end{align}
and the maximum complexity of $\mathcal{A}$
\begin{align}
    \label{eq:def_max}
    T_{\max} := t_m.
\end{align}

\vspace{3mm}
\noindent
\textbf{Variable-time amplitude amplification and estimation.}
Suppose we have access to a variable-stopping-time quantum algorithm $\mathcal{A}$ acting on $\mathcal{H} = \mathcal{H}_C\otimes \mathcal{H}_{\mathcal{A}} $ such that
\begin{align}
    \mathcal{A}|\mathbf{0}\rangle=&\sqrt{p_{\text{succ}}}|1\rangle_{\mathcal{H}_F}|\phi\rangle_{\mathcal{H}_W,\mathcal{H}_C}\nonumber\\&+\sqrt{1-p_{\text{succ}}}|0\rangle_{\mathcal{H}_F}|\psi\rangle_{\mathcal{H}_W,\mathcal{H}_C},
\end{align}
where $\||\phi\rangle \| = \||\psi\rangle \|= 1$, $\mathcal{H}_{\mathcal{A}} = \mathcal{H}_{F}\otimes \mathcal{H}_{W}$, and $\mathcal{H}_F = \spn(|0\rangle,|1\rangle)$ indicates ``good" and ``bad" outcomes. If we want to obtain the ``good" outcome $|\phi\rangle$, we can use amplitude amplification algorithm in \thm{amplitude_amplification} with $\Theta(\frac{1}{\sqrt{p_{\text{succ}}}})$ calls to $\mathcal{A}$ and $\mathcal{A}^{\dagger}$, so the total complexity is $\frac{T_{\max}}{\sqrt{p_{\text{succ}}}}$. However, we can do better for variable-stopping-time algorithm. In \cite{ambainis2012variable}, the following variable-time amplitude amplification algorithm is proposed with lower complexity:
\begin{theorem}[Variable-time amplitude amplification (VTAA) {\cite[Theorem 1]{ambainis2012variable}}]
      Given a variable-stopping-time quantum algorithm $\mathcal{A}$ acting on $\mathcal{H} = \mathcal{H}_C\otimes \mathcal{H}_{\mathcal{A}} $ such that
    \begin{align}
        \mathcal{A}|\mathbf{0}\rangle=&\sqrt{p_{\mathrm{succ}}}|1\rangle_{\mathcal{H}_F}|\phi\rangle_{\mathcal{H}_W,\mathcal{H}_C}\nonumber \\&+\sqrt{1-p_{\mathrm{succ}}}|0\rangle_{\mathcal{H}_F}|\psi\rangle_{\mathcal{H}_W,\mathcal{H}_C},
    \end{align}
    where $\||\phi\rangle \|=1$, $\mathcal{H}_{\mathcal{A}} = \mathcal{H}_{F}\otimes \mathcal{H}_{W}$ and $\mathcal{H}_F = \spn(|0\rangle,|1\rangle)$. Let $T_{\mathrm{avg}}$ and $T_{\max}$ be the parameters defined in \eq{def_avg} and \eq{def_max}. We can construct a quantum algorithm $\mathcal{A}^{\prime}$ invoking $\mathcal{A}$ several times, for total time
    \begin{align}
    \mathcal{O}\left(T_{\max } \sqrt{\log T_{\max }}+\frac{T_{\mathrm{avg}}}{\sqrt{p_{\mathrm{succ }}}} \log ^{1.5} T_{\max }\right)
    \end{align}
    that produces a state $\alpha|1\rangle_{\mathcal{H}_F}|\phi\rangle_{\mathcal{H}_W,\mathcal{H}_C}+\beta|0\rangle_{\mathcal{H}_F}|\psi\rangle_{\mathcal{H}_W,\mathcal{H}_C}$ with probability $|\alpha|^{2} \geq 1 / 2$ as the output.
\end{theorem}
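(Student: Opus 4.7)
The plan is to construct $\mathcal{A}'$ via a nested fixed-point amplitude amplification scheme with $\mathcal{O}(\log T_{\max})$ levels, arranged so that the expensive later stages of $\mathcal{A}$ are executed only on branches that have not yet stopped. The key structural observation is that once a stopping flag $|1\rangle_{C_j}$ has been set on a branch, every subsequent $\mathcal{A}_k$ acts as the identity on that branch by the definition of a variable-stopping-time algorithm; hence, if the reflections inside each amplification round are controlled on the stopping registers, both the remaining stages and the amplification itself can be ``skipped for free'' on already-stopped branches. This is what makes the cost depend on $T_{\mathrm{avg}}$ rather than on $T_{\max}$.

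First I would partition the stopping times $t_1<t_2<\cdots<t_m$ into $L=\mathcal{O}(\log T_{\max})$ buckets with geometrically increasing thresholds, so that within any one bucket the per-invocation cost differs by at most a constant factor. A preliminary amplitude-estimation sweep computes constant-factor estimates $\hat q_\ell$ of the survival probabilities $p_{\mathrm{stop}\ge t}$ at each bucket boundary, at cost $\mathcal{O}(T_{\max}\sqrt{\log T_{\max}})$; this accounts for the additive $T_{\max}\sqrt{\log T_{\max}}$ term. Then I would recursively define amplifiers $\mathcal{B}_\ell$: $\mathcal{B}_\ell$ first invokes $\mathcal{B}_{\ell-1}$, then appends the stages of bucket $\ell$, and finally performs fixed-point amplitude amplification (\thm{amplitude_amplification}) targeting the subspace ``good $\mathcal{H}_F$-flag or still alive at stage $\ell$.'' The output of $\mathcal{A}'=\mathcal{B}_L$ is then supported on the $|1\rangle_{\mathcal{H}_F}|\phi\rangle$ subspace with amplitude boosted to a constant.

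The main obstacle is the cost bookkeeping. The cost of $\mathcal{B}_\ell$ equals that of $\mathcal{B}_{\ell-1}$ plus the length of bucket $\ell$, multiplied by the $\mathcal{O}(1/\sqrt{\hat q_\ell})$ rounds of fixed-point amplification at level $\ell$. Summing across buckets and invoking Cauchy--Schwarz in the form
\begin{align*}
\sum_{\ell=1}^{L}\bigl(t_{j_\ell}-t_{j_{\ell-1}}\bigr)\sqrt{p_{\mathrm{stop}=t_{j_\ell}}} \;\le\; \sqrt{L}\cdot\sqrt{\sum_{j=1}^{m} t_j^{2}\,p_{\mathrm{stop}=t_{j}}} \;=\; \sqrt{L}\cdot T_{\mathrm{avg}}
\end{align*}
produces the $T_{\mathrm{avg}}/\sqrt{p_{\mathrm{succ}}}$ scaling, where the outermost $1/\sqrt{p_{\mathrm{succ}}}$ factor comes from the last amplification aimed at the global ``good'' amplitude. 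Each of the $L$ fixed-point amplification levels contributes a $\log(1/\delta_\ell)$ factor via \thm{amplitude_amplification}, and the tolerances $\delta_\ell$ must be chosen by a union bound so that the total output error stays below a fixed constant, guaranteeing $|\alpha|^{2}\ge 1/2$.

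The most delicate point is controlling the propagation of approximation errors through the nested amplifications: an $\delta_{\ell}$-error at level $\ell$ is multiplied by the $\mathcal{O}(1/\sqrt{\hat q_{\ell+1}})$ amplification factor at each subsequent level, so the $\delta_\ell$ must be set geometrically decreasing in $\ell$. I would handle this by showing that $\delta_\ell = \Theta(1/(L\cdot\prod_{k>\ell}(1/\sqrt{\hat q_k})))$ suffices, which introduces only an extra $\log T_{\max}$ factor per level. Combined with the $\log(1/\delta_\ell)$ factors from \thm{amplitude_amplification}, this gives the announced $\log^{1.5} T_{\max}$ overhead and completes the claimed complexity bound.
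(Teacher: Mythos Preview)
The paper does not prove this theorem: it is quoted verbatim as a preliminary from \cite{ambainis2012variable} (``\cite[Theorem~1]{ambainis2012variable}'') and is used as a black box, with VTAE (\thm{vtae}) being the tool actually invoked later. So there is no ``paper's own proof'' to compare against.

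That said, your sketch is a reasonable reconstruction of Ambainis's original argument: the geometric bucketing of stopping times, the nested fixed-point amplifications with reflections controlled on the clock register, the preliminary estimation sweep for the survival probabilities, and the geometric schedule for the per-level error tolerances are all faithful to the source. One point to tighten: in your displayed Cauchy--Schwarz step the quantity that appears at level $\ell$ in the actual recursion is (up to constants) the \emph{survival ratio} $q_\ell \approx p_{\mathrm{stop}\ge t_{j_\ell}}/p_{\mathrm{stop}\ge t_{j_{\ell-1}}}$, not $p_{\mathrm{stop}=t_{j_\ell}}$; the telescoping of the nested $1/\sqrt{\hat q_\ell}$ factors, together with the geometric bucket widths $t_{j_\ell}-t_{j_{\ell-1}}=\Theta(t_{j_\ell})$, is what collapses the sum to $\sqrt{L}\cdot T_{\mathrm{avg}}$. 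As written, your inequality would need an extra line justifying why $(t_{j_\ell}-t_{j_{\ell-1}})^2\,p_{\mathrm{stop}=t_{j_\ell}}$ is dominated by the corresponding block of $\sum_j t_j^2\,p_{\mathrm{stop}=t_j}$.
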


In \cite{chakraborty2019linear}, standard amplitude estimation algorithm in \thm{amplitude_estimation} was also generalized to variable-time scenarios. \textcolor{black}{Our algorithms only need the following VTAE algorithm, but we include VTAA for completeness since VTAE is built upon VTAA (just similar to that Amplitude Estimation is built upon Amplitude Amplification).}

\begin{theorem}[Variable-time amplitude estimation (VTAE) {\cite[Theorem 23]{chakraborty2019linear}}]
\label{thm:vtae}
  Let $\mathcal{A}$ be a variable-stopping-time quantum algorithm acting on $\mathcal{H} = \mathcal{H}_C\otimes \mathcal{H}_{\mathcal{A}} $ such that
    \begin{align}
        \mathcal{A}|\mathbf{0}\rangle=&\sqrt{p_{\mathrm{succ}}}|1\rangle_{\mathcal{H}_F}|\phi\rangle_{\mathcal{H}_W,\mathcal{H}_C}\nonumber \\ &+\sqrt{1-p_{\mathrm{succ}}}|0\rangle_{\mathcal{H}_F}|\psi\rangle_{\mathcal{H}_W,\mathcal{H}_C},
    \end{align}
    where $\||\phi\rangle \|=1$, $\mathcal{H}_{\mathcal{A}} = \mathcal{H}_{F}\otimes \mathcal{H}_{W}$, and $\mathcal{H}_F = \spn(|0\rangle,|1\rangle)$. Let $T_{\mathrm{avg}},T_{\max}$ be the parameters defined in \eq{def_avg} and \eq{def_max}, respectively,  $T_{\max}' := 2T_{\max}/t_1$ and
    \begin{align}
        Q = T_{\max}\sqrt{\log(T_{\max}')} + \frac{T_{\mathrm{avg}}\log(T_{\max}')}{\sqrt{p_{\mathrm{succ}}}}.
    \end{align}
    Suppose that we know a lower bound $p_{\mathrm{succ}}'$  of $p_{\mathrm{succ}}$. Then for any $\epsilon,\delta \in (0,1)$, we can estimate $p_{\mathrm{succ}}$ to within multiplicative error $\epsilon$ and success probability at least $1-\delta$ with complexity
    \begin{align}
        \mathcal{O}\biggl(&\frac{Q}{\epsilon}\log^2(T_{\max}')\log(\frac{\log(T_{\max}')}{\delta}) \nonumber \biggr. \\ &\left.+ Q\log(T_{\max}')\log(\frac{1}{\delta}\log(\frac{T_{\max}'}{p_{\mathrm{succ}}'}))\right) = \widetilde{\mathcal{O}}(Q/\epsilon).
    \end{align}
\end{theorem}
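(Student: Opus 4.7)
The plan is to build VTAE on top of the already-established VTAA theorem in essentially the same way that standard amplitude estimation is built on top of standard amplitude amplification, while carefully tracking how the variable-stopping-time structure reduces the per-iteration cost from $T_{\max}$ to something closer to $T_{\mathrm{avg}}$. The starting point is the observation that the cited amplitude estimation algorithm (\thm{amplitude_estimation}) works by running quantum phase estimation on the Grover operator $G = -\mathcal{A} R_{\mathbf{0}} \mathcal{A}^{\dagger} R_{\chi}$, whose nontrivial eigenphase encodes $\theta = \arcsin(\sqrt{p_{\mathrm{succ}}})$. A naive plug-in would cost $T_{\max}$ per application of $G$, yielding total complexity $T_{\max}/(\epsilon \sqrt{p_{\mathrm{succ}}})$ for multiplicative error $\epsilon$.

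First, I would replace the naive Grover iterate by the gap-amplified unitary produced by VTAA. Concretely, VTAA constructs, in complexity $Q = T_{\max}\sqrt{\log T'_{\max}} + T_{\mathrm{avg}}\log(T'_{\max})/\sqrt{p_{\mathrm{succ}}}$, a unitary $\mathcal{A}'$ that prepares a state with a constant-amplitude good component. I would use $\mathcal{A}'$ (and its inverse, which is obtained by reversing the VTAA circuit) as the input unitary to an amplitude estimation subroutine whose goal is now to estimate $\widetilde{p}_{\mathrm{succ}}$, the success probability of $\mathcal{A}'$, rather than $p_{\mathrm{succ}}$ directly. Since $\widetilde{p}_{\mathrm{succ}} = \Theta(1)$, standard amplitude estimation on $\mathcal{A}'$ takes only $\mathcal{O}(1/\epsilon)$ Grover iterations to achieve multiplicative precision $\epsilon$, for a total cost of $\widetilde{\mathcal{O}}(Q/\epsilon)$. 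The recovered value of $\widetilde{p}_{\mathrm{succ}}$ is then post-processed to yield $p_{\mathrm{succ}}$, since the VTAA construction gives an explicit (and analytic) relation between the two.

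Second, I would handle the requirement that the VTAA parameter needs a lower bound on $p_{\mathrm{succ}}$. Using the known bound $p_{\mathrm{succ}}'$, one runs the above scheme with a geometrically decreasing sequence of guesses $p_{\mathrm{succ}}' , p_{\mathrm{succ}}'/2, p_{\mathrm{succ}}'/4, \ldots$, stopping once successive estimates are consistent; this is where the $\log(\log(T_{\max}'/p_{\mathrm{succ}}'))$ factor enters. Success probability boosting from the nominal $8/\pi^2$ up to $1-\delta$ is done by running $\mathcal{O}(\log(1/\delta))$ independent copies and taking the median, contributing the explicit $\log(1/\delta)$ factors. Summing these overheads recovers the stated complexity.

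The main obstacle in turning this plan into a rigorous proof is controlling error propagation. VTAA does not produce the exact good state but only an approximation of it, so when its output is fed into amplitude estimation the Grover operator acts on a slightly off-subspace, perturbing the eigenphases. I would need to bound this perturbation (via a Weyl-type inequality on the $G$-eigenphases and the fidelity gap between VTAA's output and the true good state), and show the induced bias in the estimated $\widetilde{p}_{\mathrm{succ}}$ is dominated by the $\epsilon$ from the final amplitude estimation step. A parallel careful argument is needed for the doubling search, to guarantee that a wrong guess of $p_{\mathrm{succ}}'$ cannot spuriously produce a consistent but biased estimate; this is handled by a union bound over the $\mathcal{O}(\log T'_{\max})$ guesses, which is the source of the extra $\log^2(T'_{\max})$ factor in the final complexity.
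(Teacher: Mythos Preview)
The paper does not prove this theorem: it is quoted verbatim as a preliminary result from \cite{chakraborty2019linear} (Theorem~23 there), and the paper uses it as a black box in \prop{correctness_final} and \algo{main_algo_improved}. There is therefore no ``paper's own proof'' to compare your proposal against. Your sketch is a plausible outline of how the cited result is established in \cite{chakraborty2019linear}---building VTAE on VTAA by amplifying to constant success probability, estimating the amplified amplitude, and inverting the known amplification relation---but verifying its correctness would require consulting that reference, not this paper.
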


Note that the total complexity of estimating $p_{\mathrm{succ}}$ to within multiplicative error $\epsilon$ using the standard amplitude estimation algorithm is $MT_{\max} = \mathcal{O}(\frac{T_{\max}}{\epsilon\sqrt{p_{\mathrm{succ}}}})$, where $M$ is determined in \eq{M_mul}. Therefore, if $T_{\mathrm{avg}}$ is much smaller than $T_{\max}$, we can achieve a great acceleration by replacing the standard amplitude estimation algorithm with variable-time amplitude estimation.

\section{Main Algorithm}\label{sec:main-algorithm}
\label{sec:main_algo}


\subsection{Estimating properties of probability distributions by QSVT}
\label{sec:main_algo_standard}
\noindent
In this section, we introduce a quantum algorithm for estimating properties of a probability distribution $\mathbf{p} = (p_i)_{i=1}^n$ on $[n]$ to within a certain error. Our algorithm is based on QSVT and amplitude estimation, which is similar to the entropy estimation algorithm in \cite{gilyen2019distributional}. In this paper, we mostly focus on the pure state preparation oracle in \defn{pure-state-preparation}, and we will show in \sec{applications} that our algorithm also works well with purified quantum query-access oracle in \defn{purified-query-access}. Here we give a brief explanation. Compared to the pure-state preparation oracle, the purified quantum query-access oracle just adds an unknown state in the right-hand side of \eq{purified-query-access}. However, in our algorithm, we produce a quantum state $|\psi\rangle$ such that the module square of the amplitude of the projection of $|\psi\rangle$ onto a subspace, $\|\Pi|\psi\rangle\|^2$, encodes the quantity we want to estimate, where the projector $\Pi$ acts as an identity in the added space. As a result, the module square $\|\Pi |\psi\rangle \|^2$ is independent of the unknown added state. 


In this section, we use the block-encoding in \eq{block-1} to encode the probability distribution $\mathbf{p} = (p_i)_{i=1}^n$ on $[n]$ and denote it by 
\begin{equation}
  \label{eq:def_E}
E := \widetilde{\Pi } U \Pi=\sum_{i=1}^{n}\sqrt{p_{i}}|i\rangle \langle \mathbf{0}|\otimes |i\rangle \langle i|,
\end{equation}
which has singular values $\sqrt{p_{1}}, \ldots, \sqrt{p_{n}}$.

Suppose $S$ is a polynomial satisfying \eq{qsvt_condition} in \thm{qsvt}.
We propose an algorithm to estimate $\sum_{i=1}^n p_i S(\sqrt{p_i})^2$ to within a certain multiplicative error.

Before describe out main algorithm, we first give a rough version of the amplitude estimation algorithm and it only needs a lower bound of the module of the amplitude.
\begin{lemma}
\label{lem:rough_amplitude_estimation}
Let $\mathcal{A}$ be a quantum algorithm on space $\mathcal{H} = \mathcal{H}_F\otimes \mathcal{H}_W$ such that
\begin{align}
\mathcal{A}|\mathbf{0}\rangle=\sqrt{p_{\mathrm{succ}}}|1\rangle_{\mathcal{H}_F}|\phi\rangle_{\mathcal{H}_W}\nonumber+\sqrt{1-p_{\mathrm{succ}}}|0\rangle_{\mathcal{H}_F}|\psi\rangle_{\mathcal{H}_W},
 \end{align} where $\||\phi\rangle \|=1$.
Given $L$ such that $L \le p_{\mathrm{succ}}$, there exists an algorithm which outputs an estimate $\tilde{p}\in[0,1]$ satisfying $\tilde{p}/p_{\mathrm{succ}}\in[\frac{1}{2},2]$ with success probability at least $1-\delta$, using $\mathcal{O}\left(\frac{1}{\sqrt{p_{\mathrm{succ}}}}\log(\frac{1}{\sqrt{p_{\mathrm{succ}}}})\log(\frac{\log(\frac{1}{L})}{\delta})\right) = \widetilde{\mathcal{O}}\left(\frac{1}{\sqrt{p_{\mathrm{succ}}}}\right)$ calls to $\mathcal{A} $ and $\mathcal{A}^{\dagger}$. 
\end{lemma}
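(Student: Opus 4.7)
The plan is a straightforward doubling/exponential-search reduction to Theorem \ref{thm:amplitude_estimation}, amplified in success probability by the standard median trick. The difficulty is that we do not know $p_{\mathrm{succ}}$ in advance and so cannot set the number of amplitude-estimation iterations $M$ directly; we only know the crude lower bound $L$. Running \thm{amplitude_estimation} with $M = \Theta(1/\sqrt{L})$ would give a correct estimate but spend $\Theta(1/\sqrt{L})$ queries, far worse than the claimed $\widetilde{\mathcal{O}}(1/\sqrt{p_{\mathrm{succ}}})$. The fix is to try geometrically increasing values of $M$ and stop as soon as the output is self-consistent with the precision guaranteed at that $M$.

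Concretely, for $k=1,2,\ldots,K$ with $K=\lceil \log_2(1/\sqrt{L})\rceil$, let $M_k = 2^k$. At each $k$ I would invoke \thm{amplitude_estimation} with parameter $M_k$ a total of $R = \Theta(\log(K/\delta))$ times and take the median of the outputs, calling the resulting estimate $\tilde{p}^{(k)}$. By a Chernoff/median argument this boosts the per-level success probability from the constant $8/\pi^2$ of \thm{amplitude_estimation} to $1-\delta/K$. I would halt at the smallest $k$ such that $\tilde{p}^{(k)} \ge C/M_k^2$ for a fixed constant $C > 4\pi^2$, and output $\tilde p := \tilde{p}^{(k)}$.

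Correctness follows from examining the error bound of \thm{amplitude_estimation}, $|\tilde{p}^{(k)} - p_{\mathrm{succ}}| \le 2\pi\sqrt{p_{\mathrm{succ}}}/M_k + \pi^2/M_k^2$. When $M_k$ is so small that $4\pi^2/M_k^2 \ge p_{\mathrm{succ}}$, a direct calculation shows $\tilde{p}^{(k)}\le 2p_{\mathrm{succ}}+\pi^2/M_k^2 < C/M_k^2$ for appropriate $C$, so with high probability the algorithm does not stop prematurely in a way that violates the $\tilde{p}/p_{\mathrm{succ}}\in[1/2,2]$ guarantee. Once $M_k \ge c/\sqrt{p_{\mathrm{succ}}}$ for a sufficiently large constant $c$, the same bound forces $|\tilde{p}^{(k)} - p_{\mathrm{succ}}| \le p_{\mathrm{succ}}/2$, which both satisfies the stopping test and gives the desired multiplicative accuracy. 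Hence the stopping time $k^\star$ satisfies $k^\star = \mathcal{O}(\log(1/\sqrt{p_{\mathrm{succ}}}))$ with probability at least $1-\delta$ by a union bound over the at most $K$ median-boosted rounds.

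For the complexity, each boosted round at level $k$ uses $\Theta(R\cdot M_k)$ queries to $\mathcal{A}$ and $\mathcal{A}^\dagger$, and a geometric sum over $k \le k^\star$ yields
\begin{equation}
\sum_{k=1}^{k^\star} R\cdot M_k \;=\; \mathcal{O}\!\left(\frac{1}{\sqrt{p_{\mathrm{succ}}}} \cdot \log\!\frac{\log(1/L)}{\delta}\right),
\end{equation}
and accounting for the up to $\log(1/\sqrt{p_{\mathrm{succ}}})$ outer iterations (including wasted rounds that do not satisfy the stopping criterion but still consume $\Theta(R\cdot M_k)$ queries) yields the stated $\mathcal{O}\!\bigl(\tfrac{1}{\sqrt{p_{\mathrm{succ}}}} \log(\tfrac{1}{\sqrt{p_{\mathrm{succ}}}}) \log(\tfrac{\log(1/L)}{\delta})\bigr)$ bound. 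The main subtlety — and what I expect to require the most care — is calibrating the threshold constant $C$ and the constants implicit in the stopping rule so that (i) premature stopping on small $M_k$ is ruled out except with probability absorbed into the median boost, and (ii) the rule is certainly triggered once $M_k = \Theta(1/\sqrt{p_{\mathrm{succ}}})$; everything else is a standard geometric-sum bookkeeping of query costs.
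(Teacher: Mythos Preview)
Your approach is correct but differs from the paper's. The paper does not use \thm{amplitude_estimation} at all here: instead it observes that $\widetilde{\Pi}\mathcal{A}\Pi$ (with $\Pi=|\mathbf{0}\rangle\langle\mathbf{0}|$ and $\widetilde{\Pi}=|1\rangle\langle 1|_F\otimes I$) has the single singular value $\sqrt{p_{\mathrm{succ}}}$, and applies the QSVT-based threshold test of \lem{separate_singular_value} to decide, for a given $\varphi$, whether $\sqrt{p_{\mathrm{succ}}}\ge 2\varphi$ or $\le \varphi$ with failure probability $\delta/\log(1/L)$ at cost $\mathcal{O}\bigl(\varphi^{-1}\log(\log(1/L)/\delta)\bigr)$. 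It then performs the same exponential search you do, halving $\varphi$ from $1$ down to $L$ and stopping the first time the test fires. Your route is more elementary in that it relies only on \thm{amplitude_estimation} plus a median boost, whereas the paper's binary threshold test sidesteps precisely the constant-calibration issue you flag at the end (your proposed $C>4\pi^2$ is too small; the argument you sketch actually needs $C$ on the order of $100\pi^2$ so that ``test fires'' implies $M_k\sqrt{p_{\mathrm{succ}}}$ is large enough to force $|\tilde p^{(k)}-p_{\mathrm{succ}}|\le p_{\mathrm{succ}}/2$). One minor point: your final paragraph double-counts. The geometric sum $\sum_{k\le k^\star} R\,M_k$ already includes every round up to stopping, so your displayed bound $\mathcal{O}\bigl(\tfrac{1}{\sqrt{p_{\mathrm{succ}}}}\log\tfrac{\log(1/L)}{\delta}\bigr)$ is the final answer and there is no additional $\log(1/\sqrt{p_{\mathrm{succ}}})$ factor to ``account for''---your bound is in fact slightly tighter than the one stated in the lemma.
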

The proof of \lem{rough_amplitude_estimation} is deferred to \append{other_proof}. With this estimate at hand, we now describe our main algorithm in \algo{main_algo}, and prove its correctness in \lem{main_algo}. 
\begin{algorithm*}[ht]
  \KwInput{Multiplicative error $\epsilon$, and quantum registers $(A,B,Q,F)$ initialized to $|\mathbf{0}\rangle$.}
  \KwOutput{$\tilde{p}$, an estimation of the value $\sum_{i=1}^n p_i S(\sqrt{p_i})^2$.}
  \KwParameter{$(L,S)$, where $L$ is a lower bound of $\sum_{i=1}^n p_i S(\sqrt{p_i})^2$, and $S$ is an odd or even polynomial.} 
  \KwNotation{
    \begin{itemize}[leftmargin=*]
    \setlength{\itemsep}{0pt}
    \setlength{\parsep}{0pt}
    \setlength{\parskip}{0pt}
        \item $A,B$ are two $\lceil \log n\rceil$-qubit input registers;
        \item $Q$ is a single-qubit register, to be used as an ancilla register for QSVT;
        \item $F$ is a single-qubit flag register indicating ``good" components whose amplitude we estimate;
        \item $U^{(SV)}_S$ is the unitary $U^{(SV)}_P$ in \thm{qsvt} when the transformation polynomial $P:=S$, and $U,\widetilde{\Pi},\Pi$ in \thm{qsvt} are set to $U,\widetilde{\Pi},\Pi$ in \eq{block-1}, \eq{block-1} and \eq{def_E}.
    \end{itemize}
}
  \SetKwProg{myproc}{Regard the following subroutine as $\mathcal{A}$:}{}{}
 \LinesNumberedHidden{
    \nonl\myproc{}{
      Apply $U_{\mathrm{pure}}$ to register $B$ and Hadamard gate $H$ to register $Q$ \label{lin:sample_from_p}\;
      Apply unitary $U^{(SV)}_S$ to registers $(A,B,Q)$ using $Q$ as the ancilla register \label{lin:state_after_U_S}\;
      \If{$S$ is an odd function}{Apply unitary $\text{C}_{\widetilde{\Pi }\otimes|+\rangle\langle+|}\text{NOT} = \widetilde{\Pi }\otimes |+\rangle\langle+|_Q\otimes X_F + (I-\widetilde{\Pi }\otimes |+\rangle\langle+|_Q)\otimes I_F$ to registers $(A,B,Q,F)$\;}
      \Else{Apply unitary $\text{C}_{\Pi\otimes|+\rangle\langle+|}\text{NOT} = \Pi \otimes |+\rangle\langle+|_Q\otimes X_F + (I-\Pi \otimes |+\rangle\langle+|_Q)\otimes I_F$ to registers $(A,B,Q,F)$\label{lin:state_after_cnot}\;}

    }
  }
    Run the algorithm in \lem{rough_amplitude_estimation} with $\delta = \frac{1}{8}$ and lower bound $L$ to obtain a rough estimate of $\|(I\otimes|1\rangle\langle1|_F)|\mathcal{A}|\mathbf{0}\rangle \|^2$. Denote this estimate by $P$\label{lin:rough_estimate}\;
    Use the amplitude estimation algorithm in \thm{amplitude_estimation} with $M = \lceil \frac{5\pi}{\sqrt{P}\epsilon}\rceil$ to estimate  $\|(I\otimes|1\rangle\langle1|_F)|\mathcal{A}|\mathbf{0}\rangle \|^2$ and denote the result by $\tilde{p}$. Output $\tilde{p}$\;
\caption{Estimate the value $\sum_{i=1}^n p_i S(\sqrt{p_i})^2$ for $\mathbf{p} =(p_i)_{i=1}^n$. }
  \label{algo:main_algo}
\end{algorithm*}

\begin{lemma}
\label{lem:main_algo}
  Suppose $S$ is a polynomial satisfying \eq{qsvt_condition} in \thm{qsvt}, $\mathbf{p} = (p_i)_{i=1}^n$ is a probability distribution on $[n]$ and we know a lower bound $L \le \sum_{i=1}^n p_i S(\sqrt{p_i})^2$. Then for any $\epsilon\in(0,1)$, \algo{main_algo} with parameters $L:=L,S:=S$ and input $\epsilon$ outputs an estimate of $\sum_{i=1}^n p_i S(\sqrt{p_i})^2$ to within multiplicative error $\epsilon$ with success probability at least $\frac{2}{3}$ using $\widetilde{\mathcal{O}}\left(\frac{1}{\epsilon}\frac{\deg(S)}{\sqrt{\sum_{i=1}^n p_i S(\sqrt{p_i})^2}}\right)$ calls to $U_{\mathrm{pure}}$ and $U_{\mathrm{pure}}^\dagger$.
\end{lemma}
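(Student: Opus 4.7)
The plan is to carry out a standard two-stage amplitude-estimation argument, where the hardest step is tracking what the subroutine $\mathcal{A}$ in \algo{main_algo} actually prepares so that the ``good'' flag probability equals $\sum_{i=1}^n p_i S(\sqrt{p_i})^2$.

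First I would analyze $\mathcal{A}$ line by line. After \lin{sample_from_p}, the state on $(A,B,Q)$ is
\begin{equation*}
|\mathbf{0}\rangle_A \otimes \Bigl(\sum_{i=1}^n \sqrt{p_i}\,|i\rangle_B\Bigr)\otimes |+\rangle_Q
= \Bigl(\sum_{i=1}^n \sqrt{p_i}\,|\mathbf{0}\rangle_A|i\rangle_B\Bigr)\otimes |+\rangle_Q,
\end{equation*}
which lies in the image of $\Pi=|\mathbf{0}\rangle\langle \mathbf{0}|\otimes I$ (the right-projector of the block encoding $E$). Because $E|\mathbf{0}\rangle_A|i\rangle_B=\sqrt{p_i}|i\rangle_A|i\rangle_B$, the right singular vectors of $E$ are $\{|\mathbf{0}\rangle_A|i\rangle_B\}$ with singular values $\sqrt{p_i}$ and left singular vectors $\{|i\rangle_A|i\rangle_B\}$. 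Invoking \thm{qsvt} on the block encoding in \eq{block-1}, \lin{state_after_U_S} implements $U^{(SV)}_S$; the component of the resulting state with $Q=|+\rangle$ and $(A,B)$ projected onto the appropriate support equals $S^{(SV)}(E)$ applied to the input. Concretely, for odd $S$ the good component is $|+\rangle_Q\otimes \sum_i\sqrt{p_i}\,S(\sqrt{p_i})\,|i\rangle_A|i\rangle_B$ (in the $\widetilde\Pi$ subspace), while for even $S$ the good component is $|+\rangle_Q\otimes \sum_i\sqrt{p_i}\,S(\sqrt{p_i})\,|\mathbf{0}\rangle_A|i\rangle_B$ (in the $\Pi$ subspace). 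The controlled-NOTs in \lin{state_after_cnot} are exactly tailored to these two cases, flipping $F$ to $|1\rangle$ on these subspaces, so
\begin{equation*}
p_{\mathrm{succ}} := \|(I\otimes |1\rangle\langle 1|_F)\,\mathcal{A}|\mathbf{0}\rangle\|^2 = \sum_{i=1}^n p_i\,S(\sqrt{p_i})^2,
\end{equation*}
which is precisely the target quantity.

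With this identification in hand, I would combine the two estimation stages. \lin{rough_estimate} invokes \lem{rough_amplitude_estimation} with the known lower bound $L\le p_{\mathrm{succ}}$ to produce $P$ satisfying $P/p_{\mathrm{succ}}\in[1/2,2]$ with success probability $\ge 7/8$, using $\widetilde{\mathcal{O}}(1/\sqrt{p_{\mathrm{succ}}})$ calls to $\mathcal{A}$. Then standard amplitude estimation (\thm{amplitude_estimation}) with $M=\lceil 5\pi/(\sqrt{P}\epsilon)\rceil$ gives, with probability $\ge 8/\pi^2$,
\begin{equation*}
|\tilde p - p_{\mathrm{succ}}| \le \frac{2\pi\sqrt{p_{\mathrm{succ}}(1-p_{\mathrm{succ}})}}{M}+\frac{\pi^2}{M^2}.
\end{equation*}
Substituting $1/M\le \sqrt{P}\epsilon/(5\pi)\le \sqrt{2p_{\mathrm{succ}}}\,\epsilon/(5\pi)$ and using $\epsilon<1$ bounds each term by $\tfrac{1}{2}\epsilon\, p_{\mathrm{succ}}$, giving a multiplicative error of at most $\epsilon$. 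A union bound over the two stages yields overall success probability $\ge 7/8\cdot 8/\pi^2>2/3$ (with a standard constant boost if needed).

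For the query bound, each call to $\mathcal{A}$ invokes $U^{(SV)}_S$ once, which by \thm{qsvt} uses $\mathcal{O}(\deg(S))$ queries to $U_{\mathrm{pure}}$ and $U_{\mathrm{pure}}^\dagger$, plus one extra application of $U_{\mathrm{pure}}$ in \lin{sample_from_p}. The total number of $\mathcal{A}$-calls across the two stages is $\widetilde{\mathcal{O}}(1/\sqrt{p_{\mathrm{succ}}})+\mathcal{O}(1/(\sqrt{P}\epsilon))=\widetilde{\mathcal{O}}(1/(\epsilon\sqrt{p_{\mathrm{succ}}}))$, giving the claimed $\widetilde{\mathcal{O}}(\deg(S)/(\epsilon\sqrt{\sum_i p_i S(\sqrt{p_i})^2}))$ complexity. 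I expect the main obstacle to be the bookkeeping of the QSVT step---in particular, correctly handling the odd/even parity of $S$, verifying that the initial $(A,B)$-state lies in $\mathrm{Im}(\Pi)$ so that Corollary 18 of \cite{gilyen2019singular} applies directly, and matching the flag construction in \lin{state_after_cnot} to the appropriate projector $\widetilde\Pi$ or $\Pi$---so that the norm-squared of the flagged component is exactly the power sum we want.
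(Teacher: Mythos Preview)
Your proposal is correct and follows essentially the same approach as the paper: compute the flagged probability $\sum_i p_i S(\sqrt{p_i})^2$ via the QSVT identity in \thm{qsvt} applied to the block encoding \eq{block-1}, then combine \lem{rough_amplitude_estimation} with \thm{amplitude_estimation} at $M=\lceil 5\pi/(\sqrt{P}\epsilon)\rceil$ exactly as in the paper. One tiny arithmetic slip: the first amplitude-estimation error term is bounded by $\tfrac{2\sqrt{2}}{5}\epsilon\,p_{\mathrm{succ}}$, not $\tfrac12\epsilon\,p_{\mathrm{succ}}$, but since the second term is at most $\tfrac{2}{25}\epsilon\,p_{\mathrm{succ}}$ the total is still below $\epsilon\,p_{\mathrm{succ}}$, matching the paper's bound.
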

\begin{proof}
    We first analyze the error of \algo{main_algo}.
    Let the state in registers $(A,B,Q)$ after \lin{state_after_U_S} of \algo{main_algo} be $|\Phi_p\rangle$. Assuming $S$ is an odd function for the moment for simplicity, the output $|\Phi_{p}\rangle$ satisfies that

  \begin{align}
    \label{eq:cal_qsvt}
    &(\widetilde{\Pi } \otimes \langle+|_Q)|\Phi_{p}\rangle\nonumber\\
    =& (\widetilde{\Pi } \otimes \langle+|_Q)U^{(SV)}_S(|\mathbf{0}\rangle_A|\Psi_{\mathbf{p}}\rangle_B|+\rangle_Q)\nonumber\\
    =& (\widetilde{\Pi } \otimes \langle+|_Q)(|0\rangle\langle 0|_Q\otimes U_{\Phi_S}+| 1\rangle\langle 1|_Q \otimes U_{-\Phi_S})\nonumber \\ &\cdot(|\mathbf{0}\rangle_A|\Psi_{\mathbf{p}}\rangle_B |+\rangle_Q)\nonumber\\
    =& \bigl((\widetilde{\Pi } \otimes \langle+|_Q)(|0\rangle\langle 0|_Q\otimes U_{\Phi_S}+| 1\rangle\langle 1|_Q \otimes U_{-\Phi_S})\nonumber \\ &\cdot(\Pi\otimes |+\rangle_Q)\bigr)(|\mathbf{0}\rangle_A|\Psi_{\mathbf{p}}\rangle_B )\nonumber\\
     =& S^{(SV)}(\tilde \Pi U \Pi)\Big(\sum_{i=1}^{n} \sqrt{p_{i}}|\mathbf{0}\rangle_A|i\rangle_B\Big) \nonumber\\
     =& \Big(\sum_{i=1}^{n}S(\sqrt{p_i}) |i\rangle \langle \mathbf{0}|_A\otimes |i\rangle \langle i|_B\Big)\Big(\sum_{i=1}^{n} \sqrt{p_{i}}|\mathbf{0}\rangle_A|i\rangle_B\Big)\nonumber \\
     =& \sum_{i=1}^{n} \sqrt{p_{i}} S\left(\sqrt{p_{i}}\right)|i\rangle_A|i\rangle_B,
  \end{align}
where the third equation comes from $(\Pi\otimes | + \rangle_Q)(|\mathbf{0}\rangle_A|\Psi_{\mathbf{p}}\rangle_B ) = |\mathbf{0}\rangle_A|\Psi_{\mathbf{p}}\rangle_B |+\rangle_Q$. 

Let the state in registers $(A,B,Q,F)$ after \lin{state_after_cnot} be $|\Psi_{\mathbf{p}}\rangle$. Then we have
\begin{align}
    \label{eq:ctrl_x}
    |\Psi_{p}\rangle
    & = \text{C}_{\widetilde{\Pi }\otimes|+\rangle\langle+|}\text{NOT}|\Phi_{p}\rangle|0\rangle_F \nonumber\\
    & = (\widetilde{\Pi }\otimes |+\rangle\langle+|_Q\otimes X_F)|\Phi_{p}\rangle|0\rangle_F \nonumber \\ &\quad+ ((I-\widetilde{\Pi }\otimes |+\rangle\langle+|_Q)\otimes I_F)|\Phi_{p}\rangle|0\rangle_F \nonumber\\
    & = \sum_{i=1}^{n} \sqrt{p_{i}} S\left(\sqrt{p_{i}}\right)|i\rangle_A|i\rangle_B|+\rangle_Q|1\rangle_F + |\psi_{\mathrm{garbage}}\rangle|0\rangle_F,
\end{align}
where the third equation comes from \eq{cal_qsvt} and $|\psi_{\mathrm{garbage}}\rangle$ is an unnormalized state which we do not care about.

Although we suppose $S$ to be an odd function, it is easily verified that \eq{cal_qsvt} and \eq{ctrl_x} hold for all even functions $S$ as well, if we replace $\widetilde{\Pi}$ with $\Pi$.

Note that if we measure register $F$ after the subroutine $\mathcal A$, the success probability is
\begin{align}
  \begin{aligned}
    &\|(I\otimes|1\rangle\langle1|_F)|\Psi_{p}\rangle \|^2\nonumber \\
=& \|\sum_{i=1}^{n} \sqrt{p_{i}} S\left(\sqrt{p_{i}}\right)|i\rangle_A|i\rangle_B|+\rangle_Q|1\rangle_F\|^2\nonumber \\ =& \sum_{i=1}^n p_iS(\sqrt{p_i})^2,
  \end{aligned}
\end{align}
the quantity we would like to estimate.

From \lem{rough_amplitude_estimation}, we can infer that the output $P$ in \lin{rough_estimate} satisfies $P/\sum_{i=1}^n p_iS(\sqrt{p_i})^2
\in[\frac{1}{2},2]$ with probability at least $\frac{7}{8}$, so according to \thm{amplitude_estimation}, with success probability at least $\frac{8}{\pi^2}\frac{7}{8}\ge \frac{2}{3}$, the output $\tilde{p}$ satisfies 
\begin{align}
    |\tilde{p}- \sum_{i=1}^n p_iS(\sqrt{p_i})^2| &\le \frac{2}{5}\sqrt{P}\epsilon \sqrt{\sum_{i=1}^n p_iS(\sqrt{p_i})^2} + \frac{1}{25} P\epsilon^2 \nonumber \\
    &\le \Big(\frac{2\sqrt{2}}{5} + \frac{2}{25}\Big) \epsilon\sum_{i=1}^n p_iS(\sqrt{p_i})^2 \nonumber \\ &\le  \epsilon\sum_{i=1}^n p_iS(\sqrt{p_i})^2,
\end{align}
where the second inequality is because $P \le 2\sum_{i=1}^n p_i S(\sqrt{p_i})^2$. 



We now calculate the complexity of the algorithm.
In $\mathcal{A}$, we call $U_{\mathrm{pure}}$ and $U_{\mathrm{pure}}^{\dagger}$ $\widetilde{\mathcal{O}}(\deg(S))$ times. In the main algorithm, the step using \lem{rough_amplitude_estimation} calls subroutine $\mathcal{A}$  $\tilde O(1/\sqrt{\sum_i p_i S(\sqrt{p_i})^2})$ times, and the step using \thm{amplitude_estimation} calls subroutine $\mathcal{A}$  $M$ times. Overall, the query complexity of \algo{main_algo}  is \begin{align}&\widetilde{\mathcal{O}}\left(\left(M+\frac{1}{\sqrt{\sum_{i=1}^n p_iS(\sqrt{p_i})^2}}\right)\deg(S)\right) \nonumber \\ =& \widetilde{\mathcal{O}}\left(\frac{\deg(S)}{\epsilon\sqrt{\sum_{i=1}^n p_iS(\sqrt{p_i})^2}}\right),\end{align}
as claimed.
\end{proof}

This algorithm is also gate efficient. If $n$ is a power of 2, $\text{C}_{\widetilde{\Pi }\otimes|+\rangle\langle+|}\text{NOT}$ can be efficiently implemented following \fig{ctrl_x}, and it can be easily generalized to any integer $n > 0
$ using additional quantum circuit which can determine whether $|i\rangle \in \mathcal{H},\ \dim(\mathcal{H}) = 2^{\lceil\log n\rceil}$ satisfies $i\le n$. 

\begin{figure}
    \centering
    \begin{quantikz}
    &[4mm] \ctrl{1}\qwbundle{n_A} & \qw & \qw  & \qw & \ctrl{1} &\qw \\
    & \targ{} \qwbundle{n_B} & \qw  & \octrl{1} & \qw & \targ{} &\qw \\
    & \qw \qwbundle{n_Q} & \gate{H} & \octrl{1}  &\gate{H} & \qw & \qw   \\
    & \qwbundle{n_F} & \qw & \targ{} & \qw & \qw &\qw
    \end{quantikz}
    \caption{Circuit of $\text{C}_{\widetilde{\Pi }\otimes|+\rangle\langle+|}\text{NOT} = \widetilde{\Pi }\otimes |+\rangle\langle+|_Q\otimes X_F + (I-\widetilde{\Pi }\otimes |+\rangle\langle+|_Q)\otimes I_F$ if $n$ is a power of 2. \\
    $n_A,n_B,n_C,n_F$ are the sizes of the registers $A,B,C,F$, respectively.
    The CNOT gate between two registers with the same size is an abbreviation of a sequence of CNOT gates between qubits in different registers with the same index and the CNOT gate targeting a qubit conditional on a register will flip the qubit when the regisiter is an all-0/all-1 state. }
    \label{fig:ctrl_x}
\end{figure}

\subsection{Improvements based on annealing}
\label{sec:annealing}
\noindent
In \sec{main_algo_standard}, we have shown how to estimate the quantity 
$\sum_{i=1}^n p_i S(\sqrt{p_i})^2$ to within a certain multiplicative error, where $S$ is a polynomial satisfying \eq{qsvt_condition} in \thm{qsvt}. However, in many cases the quantity we want to estimate cannot be written in this form. Here we consider a more general quantity of discrete distributions $\mathbf{p} = (p_i)_{i=1}^n$ on $[n]$:
\begin{align}
    f(\mathbf{p}):=\sum_{i=1}^n p_i g(p_i),
\end{align}
where $g(x)$ is a monotonically increasing function on $[0,1]$ such that $g(0) = 0$, $g(1) = 1$ and $xg(x)$ is a convex function. 

An observation is that for any probability distribution $\mathbf{p}$ on $[n]$, given a lower bound $L$ of $f(\mathbf{p})$, if we can find a polynomial $S$ satisfying \eq{qsvt_condition} in \thm{qsvt} and
\begin{align}
    \Big|\sum_{i=1}^n q_i S(\sqrt{q_i})^2 - f(\mathbf{q})\Big| \le \epsilon f(\mathbf{q})
\end{align}
for any distribution $\mathbf{q}\in \Delta^n$, then by \lem{main_algo}, \algo{main_algo} can estimate $f(\mathbf{p})$ to within  multiplicative error $\epsilon$ using
\begin{align}
    \widetilde{\mathcal{O}}\left(\left(\frac{1}{\epsilon}\right)\frac{\deg(S)}{\sqrt{f(\mathbf{p})}}\right)
\end{align}
calls to $U_{\mathrm{pure}}$ and $U_{\mathrm{pure}}^{\dagger}$ for any distribution $\mathbf{p}$.

However, this algorithm can be sub-optimal in many cases. To  give an intuitive explanation, we compare this algorithm with the one in \cite{hamoudi2019Chebyshev}. In \cite{hamoudi2019Chebyshev}, they develop a quantum algorithm which can estimate the mean of a random variable $X$ to within multiplicative error $\epsilon$ using $\widetilde{\mathcal{O}}\left(\frac{\sqrt{\E[X^2]}}{\epsilon E[X]}\right)$ quantum samples. {\algo{main_algo}} is somewhat similar to this algorithm, and it can also be seen as estimating the mean of a random variable $X$ with 
$\Pr[X = g(p_i)] = p_i$. Note a catch 
that the value of $X$ in our problem is not given, and we need to estimate it ourselves. 
In \algo{main_algo}, we takes one sample from $\mathbf{p} = (p_i)_{i=1}^n$ in \lin{sample_from_p} and estimate the value of this sample by QSVT in \lin{state_after_U_S}. Therefore, for a fair comparison, we remove the cost of estimating the value of the sample in \algo{main_algo},  and
the remaining query cost is $\widetilde{\mathcal{O}}\left(\frac{1}{\epsilon \sqrt{f(\mathbf{p}})}\right) = \widetilde{\mathcal{O}}\left(\frac{1}{\epsilon \sqrt{\E[X]}}\right)$. This is higher than the cost of $\widetilde{\mathcal{O}}\left(\frac{\sqrt{\E[X^2]}}{\epsilon \E[X]}\right)$ in \cite{hamoudi2019Chebyshev} because $\E[X^2] \le \E[X]$ due to $X = g(p_i) \le g(1) = 1$.


Nevertheless, we can improve \algo{main_algo} if we are given an estimate of $f(\mathbf{p})$ to within constant multiplicative error. The idea is that we can estimate $\sum_{i=1}^n \Phi p_i g(p_i)$ for an amplification factor $\Phi \gg 1$ instead using \algo{main_algo}, for which we need to find a polynomial $S$ such that $\sum_{i=1}^n p_i S(\sqrt{p_i})^2/(\sum_{i=1}^n \Phi p_i g(p_i))\in [1-\epsilon,1+\epsilon]$. This brings the benefit that the quantity we estimate is much bigger, so the query complexity becomes smaller since it is inversely proportional to the square root of the quantity we estimate. Nevertheless, the amplification factor $\Phi$ should not be too large, since we need to guarantee the existence of the polynomial $S(x)$ which satisfies \eq{qsvt_condition} in \thm{qsvt} and is also an approximation to $\sqrt{\Phi g(x^2)}$ when $x = \sqrt{p_i}$. \thm{qsvt} requires $|S(x)|\le 1$ for all $x\in[-1,1]$, and this requires $|\Phi g(x^2)|\le 1$ for $x = \sqrt{p_i}$. Since $g(x^2)$ is a monotonically increasing function, we only need to ensure that $\Phi g(\max_{i\in[n]}p_i) \le 1$. Estimating the maximum $p_i$ is not simple, and an alternative method is to obtain an upper bound of $p_i$ from a rough estimate of $f(\mathbf{p})$. This is possible because $g(x)$ is monotonically increasing and positive, so we have $p_ig(p_i) \le f(\mathbf{p})$ and then $p_i\le (xg(x))^{-1}(f(\mathbf{p}))$ for all $i\in [n]$, where $(xg(x))^{-1}$ is the inverse function of $xg(x)$ on $[0,1]$. Detailed analysis is conducted in the following lemma.
\begin{lemma}
\label{lem:annealing_pre}
  For any convex and monotonically increasing function $g(x)$ on $[0,1]$ such that $g(0) = 0$ and $g(1) = 1$ and probability distribution $\mathbf{p} = (p_i)_{i=1}^n$ on $[n]$, let $f(\mathbf{p}):=\sum_{i=1}^n p_i g(p_i)$. Suppose that we are given constants $a,b$ and $P$ such that $af(\mathbf{p}) \le P \le b f(\mathbf{p})$ and let $p^{*}\in[0,1]$ be such that $p^{*}g(p^{*})  = \min(\frac{1}{a}P,1)$. For any $\epsilon\in(0,1)$, let $\epsilon_0 = \frac{\epsilon}{2}$. Then if we can construct a polynomial $S$ satisfying \eq{qsvt_condition} in \thm{qsvt} and
  \begin{align}
  \label{eq:annealing_pre}
      \Big|\sum_{i=1}^n q_i S(\sqrt{q_i})^2 &- c\frac{1}{g(p^*)}f(\mathbf{q})\Big| \le \frac{ca}{2b} p^{*}\epsilon_0 
  \end{align}
  for all $\mathbf{q}$ satisfying $\forall i \in[n], q_i\le p^*$, where $c> 0$ is an arbitrary positive constant, \algo{main_algo} with parameters to be $P:=\frac{P}{b}$, $S:=S$, and input $\epsilon_0$ outputs $\tilde{p}$ satisfying that $\frac{g(p^*)}{c}\tilde{p}$ is an estimate of $f(\mathbf{p})$ to within  multiplicative error $\epsilon$. This call to \algo{main_algo} uses $\widetilde{\mathcal{O}}\left(\frac{\deg(S)}{\epsilon\sqrt{p^{*}}}\right)$ calls to $U_{\mathrm{pure}}$ and $U_{\mathrm{pure}}^{\dagger}$ in \defn{pure-state-preparation}.
\end{lemma}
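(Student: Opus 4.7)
The plan is to reduce the lemma to a single invocation of \lem{main_algo}. Define $T(\mathbf{q}) := \sum_{i=1}^n q_i S(\sqrt{q_i})^2$; the hypothesis \eq{annealing_pre} makes $T(\mathbf{p})$ a faithful proxy for $\tfrac{c}{g(p^*)} f(\mathbf{p})$, so an $\epsilon_0$-multiplicative estimate $\tilde p$ of $T(\mathbf{p})$, rescaled through the factor $\tfrac{g(p^*)}{c}$, will give an $\epsilon$-multiplicative estimate of $f(\mathbf{p})$, and the query count of \lem{main_algo} will translate directly into the claimed bound.

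Before invoking \eq{annealing_pre} at $\mathbf{q} = \mathbf{p}$, I first need to verify that $p_i \le p^*$ for every $i$. Since every summand of $f$ is nonnegative, $p_i g(p_i) \le f(\mathbf{p}) \le P/a$; combined with the trivial bound $p_i g(p_i) \le 1$ this gives $p_i g(p_i) \le \min(P/a,1) = p^* g(p^*)$. Because $xg(x)$ is strictly increasing on $[0,1]$ (using $g(0)=0$ and the monotonicity of $g$), this forces $p_i \le p^*$. Applying \eq{annealing_pre} to $\mathbf{p}$ then yields
\begin{align*}
\bigl|T(\mathbf{p}) - \tfrac{c}{g(p^*)} f(\mathbf{p})\bigr| \le \tfrac{ca}{2b}\, p^*\,\epsilon_0.
\end{align*}
Combining this with $f(\mathbf{p}) \ge P/b$ furnishes a valid lower bound on $T(\mathbf{p})$ to feed into \algo{main_algo}, so \lem{main_algo} applies with precision $\epsilon_0$ and produces $\tilde p$ with $|\tilde p - T(\mathbf{p})| \le \epsilon_0 T(\mathbf{p})$.

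Two triangle-inequality steps then give
\begin{align*}
\Bigl|\tfrac{g(p^*)}{c}\tilde p - f(\mathbf{p})\Bigr| \le \epsilon_0\,\tfrac{g(p^*)}{c}\,T(\mathbf{p}) + \tfrac{a}{2b}\,g(p^*)\,p^*\,\epsilon_0.
\end{align*}
The identity $g(p^*)p^* = \min(P/a,1) \le P/a \le (b/a) f(\mathbf{p})$ bounds the second term by $\tfrac12 \epsilon_0 f(\mathbf{p})$; substituting the same inequality into the upper bound $T(\mathbf{p}) \le \tfrac{c}{g(p^*)} f(\mathbf{p}) + \tfrac{ca}{2b} p^* \epsilon_0$ bounds the first term by $(1 + \tfrac{\epsilon_0}{2})\,\epsilon_0 f(\mathbf{p})$. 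For $\epsilon_0 \le 1$ the total is at most $2\epsilon_0 f(\mathbf{p}) = \epsilon f(\mathbf{p})$, as claimed. For the query count, the inequality $f(\mathbf{p}) \ge (a/b)\,p^*g(p^*)$ (immediate from $f(\mathbf{p}) \ge P/b$ when $P/a \le 1$, and from $P>a$ when $p^*=1$) yields $T(\mathbf{p}) = \Omega(c p^*)$, so \lem{main_algo} charges $\widetilde{\mathcal{O}}(\deg(S)/(\epsilon\sqrt{p^*}))$ queries after absorbing the constants $a,b,c$.

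The main obstacle is the error bookkeeping: the specific coefficient $\tfrac{a}{2b}$ in \eq{annealing_pre} is calibrated so that after rescaling by $g(p^*)/c$ the polynomial-approximation contribution collapses to at most $\tfrac12 \epsilon_0 f(\mathbf{p})$, which is exactly what is needed for it and the amplitude-estimation error to fit together within the final budget $\epsilon f(\mathbf{p})$; any looser calibration would force a different rescaling or a sharper lower-bound hypothesis.
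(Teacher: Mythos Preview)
Your proposal is correct and follows essentially the same approach as the paper: both verify $p_i\le p^*$ via the monotonicity of $xg(x)$, invoke \eq{annealing_pre} at $\mathbf q=\mathbf p$, derive the lower bound $T(\mathbf p)\ge \tfrac{ca}{2b}p^*$ to control the query cost from \lem{main_algo}, and then combine the polynomial-approximation error with the amplitude-estimation error to reach the $2\epsilon_0=\epsilon$ budget. The only cosmetic difference is that the paper first converts the additive bound in \eq{annealing_pre} into an $\epsilon_0/2$ multiplicative error and then composes it with the $\epsilon_0$ multiplicative error of $\tilde p$ via $(1+\epsilon_0)(1+\epsilon_0/2)-1\le 2\epsilon_0$, whereas you keep everything additive and bound each piece of the triangle inequality directly; the arithmetic is equivalent.
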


\begin{proof}
From \lem{main_algo}, $\tilde{p}$ is an estimate of $\sum_{i=1}^n p_i S(\sqrt{p_i})^2$ to within multiplicative error $\epsilon_0$, and this call to \algo{main_algo} uses $\widetilde{\mathcal{O}}\left(\frac{1}{\epsilon_0}\frac{\deg(S)}{\sqrt{p_i S(\sqrt{p_i})^2}}\right)$ calls to $U_{\mathrm{pure}}$ and $U_{\mathrm{pure}}^{\dagger}$. From \eq{annealing_pre}, we have
\begin{align}
\label{eq:bound_q_S_q}
\hspace{-3mm}
    \sum_{i=1}^n p_i S(\sqrt{p_i})^2 \ge& c\frac{1}{g(p^*)} f(\mathbf{p}) - \frac{ca}{2b} p^* \epsilon_0   \nonumber \\=& c\frac{p^*}{\min(\frac{1}{a}P,1)} f(\mathbf{p}) -\frac{ca}{2b} p^* \epsilon_0 \nonumber \\  \ge& \frac{ca}{b}p^*- \frac{ca}{2b} p^*\epsilon_0\ge \frac{a}{2b}cp^*
\end{align}
where the first equation comes from $p^* g(p^*) = \min(\frac{1}{a} P, 1)$ and the second inequality comes from $P \le b f(\mathbf{p})$.
Then by \lem{main_algo}, the query complexity is $\widetilde{\mathcal{O}}\left(\frac{1}{\epsilon_0}\frac{\deg(S)}{\sqrt{p_i S(\sqrt{p_i})^2}}\right) = \widetilde{\mathcal{O}}\left(\frac{1}{\epsilon} \frac{\deg(S)}{\sqrt{p^*}}\right)$.

Let $\tilde{f}(\mathbf{p}):=\frac{g(p^*)}{c}\tilde{p}$. We can infer that $\tilde{f}(\mathbf{p})$ is an approximation of $\frac{g(p^*)}{c}\sum_{i=1}^n p_i S(\sqrt{p_i})^2$ within $\epsilon_0$ multiplicative error. We now prove that $\tilde{f}(\mathbf{p})$ is an estimate of $f(\mathbf{p})$ to within multiplicative error $2\epsilon_0$.

For any $i\in [n]$, we have $p_i g(p_i) \le f(\mathbf{p}) \le \frac{1}{a} P$
and $p_i g(p_i) \le 1$, which implies that $p_i g(p_i) \le p^* g(p^*)$ for any $i\in[n]$. Since $xg(x)$ is a monotonically increasing function on $[0,1]$, we have $p_i \le p^*$ for all $i\in[n]$.
Therefore, from \eq{annealing_pre}, $\sum_{i=1}^n p_i S(\sqrt{p_i})^2$ is an approximation of $c\frac{1}{g(p^*)}f(\mathbf{p})$ within $\frac{ca}{2b}q^* \epsilon_0$ additive error. Since $\frac{c}{g(p^*)}f(\mathbf{p}) = c\frac{p^*}{\min(\frac{1}{a}P,1)}f(\mathbf{p}) \ge \frac{ca}{b}p^*$, we can infer that $\sum_{i=1}^n p_i S(\sqrt{p_i})^2$ is an approximation of $c\frac{1}{g(p^*)}f(\mathbf{p})$ within $\frac{\frac{ca}{2b}p^* \epsilon_0}{c\frac{1}{g(p^*)}f(\mathbf{p})} \le \frac{\frac{ca}{2b} p^* \epsilon_0}{\frac{ca}{b} p^*}= \frac{\epsilon_0}{2}$ multiplicative error, and because  $\tilde{f}(\mathbf{p})$ is an approximation of $\frac{g(p^*)}{c}\sum_{i=1}^n p_i S(\sqrt{p_i})^2$ within $\epsilon_0$ multiplicative error, we can infer that $\tilde{f}(\mathbf{p})$ is an approximation of $f(\mathbf{p})$ within $(1+\epsilon_0)(1+\frac{\epsilon_0}{2})-1 \le 2\epsilon_0 = \epsilon$ multiplicative error.
\end{proof}

We now show that with the additional information $P,a,b$ in \lem{annealing_pre}, the new query complexity bound of $\widetilde{\mathcal{O}}\left(\frac{\deg(S)}{\epsilon\sqrt{p^{*}}}\right)$ improves the $\frac{1}{\sqrt{\E[X]}}$ term in the bound in \lem{main_algo} to $\frac{\sqrt{\E[X^2]}}{\E[X]}$.

Since $xg(x)$ is a convex function, from Jensen's inequality, we have $f(\mathbf{p}) = \sum_{i=1}^n p_i g(p_i) \ge g(\frac{1}{n})$, and because $xg(x)$ is a monotonically increasing function, we have $p^* \ge (xg(x))^{-1}(\frac{1}{a}g(\frac{1}{n}))$. Therefore, the complexity bound in \lem{annealing_pre} becomes  $\mathcal{O}\left(\frac{\deg(S)}{\epsilon \sqrt{(xg(x))^{-1}(\frac{1}{a}g(\frac{1}{n}))}}\right)$ in the worst case and 
we prove that this bound is equivalent to the aforementioned  $\widetilde{\mathcal{O}}\left(\frac{\deg(S)\sqrt{\E[X^2]}}{\epsilon \E[X]}\right)$ bound in the following lemma, whose proof is deferred to \append{other_proof}.
\begin{lemma}
    \label{lem:bound_of_annealing}
 Let $\mathbf{p}= (p_i)_{i=1}^n$ be a probability distribution on $[n]$, and $g(x)$ be a monotonically increasing function on $[0,1]$ such that $g(0) = 1,g(1) = 1$ and $xg(x)$ is a convex function. Then we have $\max_{\mathbf{p}\in \Delta^n}\frac{\sqrt{\sum_{i=1}^n p_i g(p_i)^2}}{\sum_{i=1}^n p_i g(p_i)} = \Theta\left(\frac{1}{\sqrt{(xg(x))^{-1}(\frac{1}{a}g(\frac{1}{n}))}}\right)$ as $n\to \infty$, where $a > 0$ is  a constant satisfying the conditions in \lem{annealing_pre}. 
\end{lemma}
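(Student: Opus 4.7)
The plan is to prove matching upper and lower bounds $R(n):=\max_{\mathbf{p}\in\Delta^n}\sqrt{\sum_i p_i g(p_i)^2}/\sum_i p_i g(p_i) = \Theta(1/\sqrt{p^*})$, where $p^*:=(xg(x))^{-1}(g(1/n)/a)$ and $F(x):=xg(x)$. Write $f(\mathbf{p}):=\sum_i p_i g(p_i)$ throughout.

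\textbf{Upper bound.} The elementary estimate $\sum_i p_i g(p_i)^2 \le g(p_{\max})\,f(\mathbf{p})$, obtained by factoring out $\max_i g(p_i)=g(p_{\max})$, gives $R(n)^2 \le g(p_{\max})/f(\mathbf{p})$, so it suffices to show $p^* g(p_{\max}) = O(f(\mathbf{p}))$. I would split into cases. If $p_{\max}\le p^*$, monotonicity of $g$ yields $p^* g(p_{\max}) \le p^* g(p^*) = g(1/n)/a$; Jensen's inequality applied to the convex function $F$ with the uniform weights $1/n$ on $\{p_1,\ldots,p_n\}$ gives $f(\mathbf{p}) \ge n F(1/n) = g(1/n)$, so $p^* g(p_{\max}) \le f(\mathbf{p})/a$. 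If instead $p_{\max}>p^*$, then $p^* g(p_{\max}) \le p_{\max}\,g(p_{\max}) = F(p_{\max}) \le f(\mathbf{p})$, since $f(\mathbf{p})$ dominates each of its nonnegative summands $F(p_i)$. Combining the two cases yields $R(n)^2 \le \max(1, 1/a)/p^*$.

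\textbf{Lower bound.} I would exhibit an explicit near-extremizer: take $p_1 = p^*$ (which lies in $(0,1)$ for $n$ large, since $p^*\to 0$) and $p_i = (1-p^*)/(n-1)$ for $i\ge 2$. Just the heavy atom contributes $p_1 g(p_1)^2 = g(p^*) g(1/n)/a$, so $\sum_i p_i g(p_i)^2 \ge g(p^*)g(1/n)/a$. For the denominator, $f(\mathbf{p}) = g(1/n)/a + (1-p^*)g((1-p^*)/(n-1))$; since $(1-p^*)/(n-1) = \Theta(1/n)$, monotonicity and mild regularity of $g$ place the second summand on the order of $g(1/n)$, so $f(\mathbf{p}) = \Theta(g(1/n))$. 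Substituting gives $R(n)^2 \ge \Omega(g(p^*)/g(1/n)) = \Omega(1/(a p^*))$, using the identity $g(p^*)/g(1/n) = 1/(a p^*)$ that follows from $p^* g(p^*) = g(1/n)/a$. This matches the upper bound.

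\textbf{Main obstacle.} The delicate step is controlling $g((1-p^*)/(n-1))$ from above by $O(g(1/n))$ in the lower-bound denominator. Since the argument is of order $1/n$, this is a mild doubling-type regularity $g(2x)=O(g(x))$ near zero, which is automatic for the power functions $g(x)=x^{\alpha-1}$ relevant to R\'enyi entropy estimation (giving $g(2x)=2^{\alpha-1}g(x)$), and is absorbed into the implicit constants of the $\Theta$, which may depend on $g$ and $a$. No regularity beyond the hypotheses is required for the upper bound, which uses only monotonicity of $g$ together with convexity of $xg(x)$.
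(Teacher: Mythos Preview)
Your proof is correct and follows the same overall scaffolding as the paper's: bound the ratio by $\sqrt{g(p_{\max})/f(\mathbf{p})}$, then split cases on $p_{\max}$ versus $p^*$ for the upper bound, and use the distribution $(p^*,\frac{1-p^*}{n-1},\ldots,\frac{1-p^*}{n-1})$ for the lower bound. The difference is in how the upper bound is finished. The paper applies Jensen to the $n-1$ non-maximal terms to get $f(\mathbf{p})\ge p_M g(p_M)+(1-p_M)g\bigl(\tfrac{1-p_M}{n-1}\bigr)$, and then in the case $p_M<p^*$ must invoke a limit argument that implicitly uses the same regularity $g(c_n/n)/g(1/n)\to 1$ that you flag for the lower bound. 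Your version instead applies Jensen to all $n$ terms to get $f(\mathbf{p})\ge nF(1/n)=g(1/n)$ directly, which combined with $p^*g(p_{\max})\le p^*g(p^*)=g(1/n)/a$ closes the case cleanly, with no regularity needed. So your upper bound is strictly more self-contained than the paper's; the doubling-type regularity you isolate as the ``main obstacle'' is genuinely needed only for the lower bound (in both proofs), and you are right that it is not a consequence of the stated hypotheses alone but holds for the power functions $g(x)=x^{\alpha-1}$ used in the applications.
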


The remained problem is to get an estimate of $f(\mathbf{p})$ to within constant multiplicative error. We propose a framework based on annealing as follows to solve it.

\begin{proposition}
\label{prop:annealing}
Let $f(\mathbf{q})$ be any positive function on $\Delta^n$. Suppose there exists a sequence of functions $f_1(\mathbf{q}),\ldots,f_l(\mathbf{q}) = f(\mathbf{q})$ satisfying
\begin{align}
    \begin{cases}\max_{\mathbf{r} \in \Delta^n}\frac{\max_{\mathbf{q}\in \Delta^n, f_k(\mathbf{r})/f_k(\mathbf{q})\in [1-\epsilon_k,1+\epsilon_k]} f_{k+1}(\mathbf{q}) }{\min_{\mathbf{q}\in \Delta^n, f_k(\mathbf{r})/f_k(\mathbf{q})\in [1-\epsilon_k,1+\epsilon_k]} f_{k+1}(\mathbf{q})} \le c  \\ \qquad\qquad\qquad\qquad\ \text{ for } k=1,\ldots,l-1\text{, }\\
    \frac{\max_{\mathbf{q}\in \Delta^n} f_{1}(\mathbf{q}) }{\min_{\mathbf{q}\in \Delta^n} f_{1}(\mathbf{q})} \le c,\label{eq:annealing_condition}
    \end{cases}
\end{align}
for some constant $c$. If for any $k = 1,\ldots,l$, there exists a quantum algorithm $\mathcal{A}_k$ which can estimate $f_k(\mathbf{q})$ to within multiplicative error $\epsilon_k$ with success probability at least $1-\frac{\delta}{l}$ using $Q_k$ calls to $U_{\mathrm{pure}}$ and $U_{\mathrm{pure}}^{\dagger}$ given two constants $a,b$, and $P$ satisfying $a f_k(\mathbf{q})\le P\le b f_k(\mathbf{q})$, then there exists a quantum algorithm which can estimate $f(\mathbf{p})$ to within multiplicative error $\epsilon$ with success probability at least $1-\delta$ using $\sum_{k=1}^l Q_k$ calls to $U_{\mathrm{pure}}$ and $U_{\mathrm{pure}}^{\dagger}$ in \defn{pure-state-preparation}.
\end{proposition}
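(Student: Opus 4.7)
The plan is to prove the proposition by induction on $k$, constructing the target algorithm as a cascade of $\mathcal{A}_1,\ldots,\mathcal{A}_l$ in which the output $\hat{f}_k$ of the $k$-th call provides, via the annealing condition~\eq{annealing_condition}, the constant-factor bounds $(a,b,P)$ needed to invoke $\mathcal{A}_{k+1}$. The final estimate $\hat{f}_l$ then serves as the estimate of $f(\mathbf{p}) = f_l(\mathbf{p})$ to within multiplicative error $\epsilon = \epsilon_l$ (the natural reading of the statement, since $f=f_l$). The total query cost is $\sum_{k=1}^l Q_k$, and by a union bound over the $l$ stages, each failing with probability at most $\delta/l$, the overall success probability is at least $1-\delta$.

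For the base case $k=1$, the second line of~\eq{annealing_condition} gives $\max_{\mathbf{q}\in\Delta^n} f_1(\mathbf{q}) / \min_{\mathbf{q}\in\Delta^n} f_1(\mathbf{q}) \le c$, so any fixed value $P_1$ in the range of $f_1$ satisfies $f_1(\mathbf{p})/c \le P_1 \le c\, f_1(\mathbf{p})$. Feeding this to $\mathcal{A}_1$ with $a = 1/c$ and $b = c$ produces $\hat{f}_1$ with $\hat{f}_1 / f_1(\mathbf{p}) \in [1-\epsilon_1, 1+\epsilon_1]$.

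For the inductive step, suppose $\hat{f}_k$ satisfies $\hat{f}_k/f_k(\mathbf{p}) \in [1-\epsilon_k, 1+\epsilon_k]$. Define the consistent set $S_k := \{\mathbf{q} \in \Delta^n : \hat{f}_k/f_k(\mathbf{q}) \in [1-\epsilon_k, 1+\epsilon_k]\}$; note that $\mathbf{p} \in S_k$ by the inductive hypothesis. Taking $\mathbf{r} \in \Delta^n$ with $f_k(\mathbf{r}) = \hat{f}_k$ (which exists after clipping $\hat{f}_k$ into the range of $f_k$ if necessary, a step that only inflates the downstream constants by a $1\pm\epsilon_k$ factor), the first line of~\eq{annealing_condition} yields $\max_{\mathbf{q}\in S_k} f_{k+1}(\mathbf{q}) / \min_{\mathbf{q}\in S_k} f_{k+1}(\mathbf{q}) \le c$. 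Setting $P_{k+1} := \min_{\mathbf{q}\in S_k} f_{k+1}(\mathbf{q})$, which is a purely classical computation determined by $\hat{f}_k$, gives $P_{k+1}/f_{k+1}(\mathbf{p}) \in [1/c, 1]$, so $\mathcal{A}_{k+1}$ can be legitimately invoked with $a = 1/c$, $b = 1$, $P = P_{k+1}$ to produce $\hat{f}_{k+1}$. The conceptual heart of the argument is precisely what~\eq{annealing_condition} encodes: a multiplicative-error estimate at stage $k$ must be enough to pin down $f_{k+1}(\mathbf{p})$ up to a constant factor, so that the next algorithm in the cascade receives valid bounds; all remaining work is routine bookkeeping of errors and success probabilities, together with aggregating the $l$ query costs.
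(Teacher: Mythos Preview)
Your proposal is correct and follows essentially the same inductive cascade as the paper's proof: both start from the second line of \eq{annealing_condition} to bootstrap $\mathcal{A}_1$, then use the output $\hat f_k$ together with the first line of \eq{annealing_condition} to produce valid constants $(a,b,P)$ for $\mathcal{A}_{k+1}$, with a union bound over the $l$ stages. The only cosmetic differences are that the paper takes $P_k=\max_{\mathbf{q}\in S_{k-1}}f_k(\mathbf{q})$ with $(a,b)=(1,c)$ while you take the $\min$ with $(a,b)=(1/c,1)$, and the paper tracks the success probability multiplicatively stage-by-stage rather than via a single union bound; your explicit remark about clipping $\hat f_k$ into the range of $f_k$ is a detail the paper leaves implicit.
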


Intuitively, this is a framework based on annealing due to  \eq{annealing_condition}. The first condition in \eq{annealing_condition} ensures that we can get an estimate of $f_{k+1}(\mathbf{q})$ to within constant multiplicative error given a good estimate of $f_k(\mathbf{q})$ to within multiplicative error $\epsilon_k$. This rough estimate of $f_{k+1}(\mathbf{q})$ is used to construct the parameters of the next-stage algorithm estimating $f_{k+2}(\mathbf{q})$. A common construction to meet this condition is to choose $f_{k-1}$ close to $f_k$. To meet the second condition we need $f_1$ to be nearly a constant. If we consider $f_k$ as energy functions, this function sequence from $f_l$ to $f_1$ 
resembles an annealing process which slowly lowers the temperature. 

\begin{proof}[Proof of \prop{annealing}]
From~\eq{annealing_condition}, we have $\max_{\mathbf{q}\in \Delta^n} f_1(\mathbf{q}) \le c\min_{\mathbf{q}\in \Delta^n} f_1(\mathbf{q}) \le cf_1(\mathbf{p})$, so $P: = \max_{\mathbf{q}\in \Delta^n} f_1(\mathbf{q})$, $a := 1$, and $b := c$ are valid parameters for algorithm $\mathcal{A}_1$. By our assumption, we can get an estimate of $f_1(\mathbf{p})$, denoted by $\tilde{f}_1(\mathbf{p})$, to within multiplicative error $\epsilon_1$ with success probability at least $1-\frac{\delta}{l}$ using $\mathcal{A}_1$ with $Q_1$ calls to $U_{\mathrm{pure}}$ and $U_{\mathrm{pure}}^{\dagger}$.

For each $k$ from $2$ to $l$, we let
\begin{align}
    \label{eq:parameters_k}
\hspace{-2mm}   P_k :=\max_{\mathbf{q}\in \Delta^n,\tilde{f}_{k-1}(\mathbf{p})/f_{k-1}(\mathbf{q})\in [1-\epsilon_{k-1},1+\epsilon_{k-1}]}f_{k}(\mathbf{q}),
\end{align}
set the parameters of $\mathcal{A}_k$ as $P:=P_k, a:=1, b:=c$, run $\mathcal{A}_k$ to estimate $f_k(\mathbf{p})$, and denote the output by $\tilde{f}_{k}(\mathbf{p})$.

We prove that the output $\tilde{f}_{k}(\mathbf{p})$ is an estimate of $f_k(\mathbf{p})$ to within multiplicative error $\epsilon_k$ with success probability at least $1-\frac{\delta k}{l}$ by induction. The statement is true for $k = 1$ by our assumption. If the statement is true for $k-1$, which means $\tilde{f}_{k-1}(\mathbf{p})/f_{k-1}(\mathbf{p})\in[1-\epsilon_{k-1},1+\epsilon_{k-1}]$ with probability at least $1-\frac{\delta(k-1)}{l}$, from \eq{parameters_k}, we can infer that
\begin{align}
\hspace{-2.5mm}    f_k(\mathbf{p})&\le  \max_{\mathbf{q}\in \Delta^n,\frac{\tilde{f}_{k-1}(\mathbf{p})}{f_{k-1}(\mathbf{q})}\in [1-\epsilon_{k-1},1+\epsilon_{k-1}]}f_{k}(\mathbf{q}) = P_k \\
\hspace{-2.5mm}    f_k(\mathbf{p}) &\ge \min_{\mathbf{q}\in \Delta^n,\frac{\tilde{f}_{k-1}(\mathbf{p})}{f_{k-1}(\mathbf{q})}\in [1-\epsilon_{k-1},1+\epsilon_{k-1}]}f_{k}(\mathbf{q}) \ge \frac{P_k}{c} 
\end{align}
where the last inequality comes from \eq{annealing_condition}. Therefore, $(P_k,1,c)$ are valid parameters for $\mathcal{A}_k$, and the output $\tilde{f}_k(\mathbf{p})$ is an estimate of $f_k(\mathbf{p})$ to within multiplicative error $\epsilon_k$ with success probability at least $(1-\frac{\delta(k-1)}{l})(1-\frac{\delta}{l})\ge 1-\frac{\delta k}{l}$, which completes the induction proof.

In conclusion, $\tilde{f}_l(\mathbf{p})$ is an estimate of $f_l(\mathbf{p}) = f(\mathbf{p})$ to within multiplicative error $\epsilon_l = \epsilon$, and the query complexity of the whole algorithm is $\sum_{k=1}^l Q_k$.
\end{proof}

Our framework generalizes the annealing technique used in \cite{li2019entropy} to a family of functions and make it compatible with \algo{main_algo} based on QSVT. Although our annealing scheme is similar to \cite{li2019entropy}, we use quite different estimation algorithms, so the way we combine it with the annealing scheme is also different. In fact, the main reason why \cite{li2019entropy} needs annealing is that their algorithm used an estimation subroutine in \cite{hamoudi2019Chebyshev}, which requires a rough estimate of the mean by the annealing. However, in a follow-up work \cite{hamoudi2021quantum}, this requirement is removed, so the annealing  becomes unnecessary for the algorithm in \cite{li2019entropy}.

\subsection{Improvements based on variable-time amplitude estimation}
\label{sec:main_algo_2}\noindent
In \sec{main_algo_standard}, we apply QSVT to all singular values of $E$ with the same transformation polynomial $S$. \textcolor{black}{For functions which is not smooth at $0$ such as $\frac{1}{x}$ or $x^{\alpha}$ for irrational $\alpha >0$, the complexity of applying them to singular values is proportional to the ratio of the largest possible singular value to the smallest possible singular value $\frac{\sigma_{\max}}{\sigma_{\min}}$. Improvements in this section can be summarized as dividing the algorithm into multiple phases and applying QSVT to a narrower range of singular values in each phase.}

\textcolor{black}{This idea comes from \cite{ambainis2012variable} and \cite{childs2017} which improved the complexity of Quantum Linear System Solver from $\tilde{O}(\kappa^2)$ to $\tilde{O}(\kappa)$ by VTAA. Although they do not use QSVT, solving a linear system is equivalent to applying the function $\frac{1}{x}$ to eigenvalues of the matrix, and hence similar ideas still work in our setting.}


\textcolor{black}{Basically, we replace the QSVT subroutine $\mathcal{A}$ in \algo{main_algo} with a variable-stopping-time quantum algorithm and replace the standard amplitude estimation with variable-time amplitude estimation.} In the variable-stopping-time algorithm, we only apply singular value transformation to the singular values in a small pre-defined interval in each stage. In this way, those branches stopping in an early stage make $T_{\mathrm{avg}}$ smaller than $T_{\max}$.


Before describing the improved main algorithm, we give an algorithm to separate singular values.


\vspace{3mm}
\noindent
\textbf{Singular values separation.}
In order to transform different singular values in different stages of $\mathcal{A}$, we need to decompose a state into several components and each of them is a linear combination of singular vectors of $E$ whose singular values fall into a small interval.

In \cite{childs2017}, they use a gapped phase estimation  
algorithm and Hamiltonian simulation algorithm to separate eigenvalues in different intervals. We extend their algorithm to the following one which can deal with singular values.

\begin{lemma}
\label{lem:separate_singular_value}
Let $U$ be a unitary, and $\widetilde{\Pi}, \Pi$ orthogonal projectors with the same rank $d$ acting on $\mathcal{H}_I$. Suppose $A = \widetilde{\Pi}U\Pi$ has a singular value decomposition $A = \sum_{i = 1}^d\sigma_i|\tilde{\psi}_i\rangle\langle \psi_i|_I$. Let $\varphi \in(0,1]$ and $\epsilon>0$. Then there is a unitary $W(\varphi,\epsilon)$ using $\mathcal{O}\left(\frac{1}{\varphi} \log \frac{1}{\epsilon}\right)$ calls to $U$ and $U^{\dagger}$ such that
\begin{align}
    &W(\varphi,\epsilon)|0\rangle_{C}|0\rangle_{P}|\psi_i\rangle_I \nonumber \\ =& \beta_{0}|0\rangle_C|\gamma\rangle_{P,I}+\beta_{1}|1\rangle_{C}|+\rangle_{P}|\psi_i\rangle_I
\end{align}
where $|\beta_{0}|^{2}+|\beta_{1}|^{2}=1$, such that
  \begin{itemize}
    \item if $0 \leq \sigma_i \leq \varphi$ then $\left|\beta_{1}\right| \leq \epsilon$ and
    \item if $2 \varphi \leq \sigma_i \leq 1$ then $\left|\beta_{0}\right| \leq \epsilon$.
          \label{lem:gpe}
  \end{itemize}
  Here $C$ and $P$ are two single-qubit registers, and $I$ is the register that $A$ acts on. 
\end{lemma}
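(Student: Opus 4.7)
The plan is to realize $W(\varphi, \epsilon)$ by applying QSVT with a carefully chosen even polynomial approximation of an indicator function to the block-encoded operator $A$, and then coherently flagging the outcome with an additional qubit $C$. This is the singular-value analogue of the Childs--Kothari--Somma gapped phase estimation technique for Hamiltonian eigenvalues, which we used before for eigenvalues.

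First I would construct an even real polynomial $R$ of degree $d = \mathcal{O}\!\left(\frac{1}{\varphi}\log\frac{1}{\epsilon}\right)$ satisfying $|R(x)| \leq 1$ on $[-1, 1]$, $|R(x)| \leq \epsilon$ on $[-\varphi, \varphi]$, and $|1 - R(x)| \leq \epsilon^2/2$ on $[-1, -2\varphi] \cup [2\varphi, 1]$. Such a polynomial can be obtained from standard smooth approximations of a shifted rectangle indicator (see, e.g., Corollary~16 and Lemma~29 of \cite{gilyen2019singular}), and the even-parity requirement is compatible with these bounds because the target function is itself even in $x$. The $\frac{1}{\varphi}$-scaling arises because each of the two transition regions, centred at $\pm \frac{3\varphi}{2}$, has width $\Theta(\varphi)$.

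Next I would invoke \thm{qsvt} in the even-parity case with $P := R$, yielding a unitary $U_R^{(SV)}$ on $\mathcal{H}_P \otimes \mathcal{H}_I$ (with $P$ a single ancilla qubit) that uses $\mathcal{O}(d)$ calls to $U, U^\dagger$ and controlled reflections about $\Pi, \widetilde{\Pi}$. Since $R$ is even, $R^{(SV)}(A) = \sum_j R(\sigma_j)|\psi_j\rangle\langle\psi_j|$, and because $\Pi|\psi_i\rangle = |\psi_i\rangle$, the QSVT identity gives $(\langle+|_P \otimes \Pi)\, U_R^{(SV)}(|+\rangle_P|\psi_i\rangle) = R(\sigma_i)|\psi_i\rangle$. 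Unitarity of $U_R^{(SV)}$ then forces the decomposition
\begin{align*}
U_R^{(SV)}\bigl(|+\rangle_P|\psi_i\rangle_I\bigr) = R(\sigma_i)\,|+\rangle_P|\psi_i\rangle_I + \sqrt{1 - |R(\sigma_i)|^2}\,|\eta_i\rangle,
\end{align*}
where $|\eta_i\rangle$ is orthogonal to $|+\rangle_P \otimes \mathrm{range}(\Pi)$.

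Finally I would append the flag qubit $C$ (initialized to $|0\rangle$) and apply the coherent check
\begin{align*}
V \;=\; I_C \otimes \bigl(I - |+\rangle\langle+|_P \otimes \Pi\bigr) \;+\; X_C \otimes \bigl(|+\rangle\langle+|_P \otimes \Pi\bigr),
\end{align*}
which flips $C$ conditional on register $P$ being in $|+\rangle$ and register $I$ being in the range of $\Pi$. Because $\Pi$ is a computational-basis projector of the block encoding and $|+\rangle\langle+|_P$ can be detected by a Hadamard on $P$, this $V$ can be implemented without any further queries to $U$. Composing $H_P$, then $U_R^{(SV)}$, then $V$ on the input $|0\rangle_C|0\rangle_P|\psi_i\rangle_I$ produces
\begin{align*}
R(\sigma_i)\,|1\rangle_C|+\rangle_P|\psi_i\rangle_I \;+\; \sqrt{1 - |R(\sigma_i)|^2}\,|0\rangle_C|\eta_i\rangle_{P,I},
\end{align*}
which is exactly the required form with $\beta_1 = R(\sigma_i)$ and $\beta_0 = \sqrt{1 - |R(\sigma_i)|^2}$; the case analyses $\sigma_i \leq \varphi \Rightarrow |\beta_1| \leq \epsilon$ and $\sigma_i \geq 2\varphi \Rightarrow |\beta_0|^2 \leq 1 - (1 - \epsilon^2/2)^2 \leq \epsilon^2$ then follow directly from the bounds on $R$, and the total query count is $\mathcal{O}(d) = \mathcal{O}(\frac{1}{\varphi}\log\frac{1}{\epsilon})$ as required. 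The only delicate step is the polynomial construction with the advertised $\frac{1}{\varphi}$-degree rather than $\frac{1}{\varphi^2}$ (which would come out of a naive change of variables $y = x^2$); everything else reduces to standard QSVT bookkeeping.
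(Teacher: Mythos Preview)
Your proposal is correct and matches the paper's approach essentially verbatim: both apply QSVT with an even rectangle-approximation polynomial (Lemma~29 of \cite{gilyen2019singular}) of degree $\mathcal{O}(\frac{1}{\varphi}\log\frac{1}{\epsilon})$ to the block-encoded $A$, then flag the $|+\rangle_P\otimes\mathrm{range}(\Pi)$ component with a CNOT into register $C$. The only cosmetic difference is that you take $R$ small on $[-\varphi,\varphi]$ and close to $1$ on $[2\varphi,1]$, whereas the paper uses the complementary polynomial $S\approx 1-R$; your orientation is the one that directly matches the $\beta_0/\beta_1$ convention in the lemma statement, and the ``delicate'' degree point you flag is handled immediately by the $\mathcal{O}(\log(1/\epsilon')/\delta')$ bound of \lem{rectangle_approximation} with $\delta'=\varphi/2$.
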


The proof of \lem{separate_singular_value} is deferred to \append{other_proof}.

\begin{algorithm*}[ht]
  \KwInput{Multiplicative error $\epsilon$, and quantum registers $(F,C,A,B,Q,P,I)$ initialized to $|\mathbf{0}\rangle$.}
  \KwOutput{Quantum state in registers $(F,C,A,B,Q,P,I)$}
  \KwParameter{$(S,\beta,L,m,\{S_j\mid j\in[m]\})$, where $S$ is the transform polynomial we want to approximate, $\beta$ is an upper bound of $\sqrt{p_i}$ for all $i\in[n]$, $L$ is a lower bound of $\sum_{i = 1}^np_iS(\sqrt{p_i})^2$, $m$ is the number of stages of $\mathcal{A}$, and $S_j$ is the transformation polynomial in $\mathcal{A}_j$, which satisfies \eq{qsvt_condition} in \thm{qsvt} and \eq{S_j_condition}.

  }
  \KwNotation{
  \begin{itemize}[leftmargin=*]
  \setlength{\itemsep}{0pt}
    \setlength{\parsep}{0pt}
    \setlength{\parskip}{0pt}
\item  $F$ is a single-qubit flag register indicating ``good" components whose amplitude we estimate;

\item  $C = (C_{1}, \ldots, C_{m})$ is an $m$-qubit clock register determining the interval which the singular value belongs to;

\item  $A,B$ are two $\lceil \log n\rceil$-qubit input registers;

\item    $Q$ is a single-qubit register used as ancilla register for QSVT;

\item  $P = (P_{1}, P_{2}, \ldots, P_{m})$ and $I = (I_{1},I_{2},\ldots,I_{m})$ are two registers used as ancilla registers in \lem{separate_singular_value}. Each $I_{j}$ is a $2\lceil \log n\rceil$-qubit register and each $P_j$ is a single-qubit register; and

\item  $U^{(SV)}_{S_j}$ is the unitary $U^{(SV)}_P$ in \thm{qsvt} when the transformation polynomial $P:=S_j$ and $U,\widetilde{\Pi},\Pi$ in \thm{qsvt} are set to $U,\widetilde{\Pi},\Pi$ in \eq{block-1}, \eq{block-1} and \eq{def_E}.
\end{itemize}
  }
  Set $\varphi_{j}:=\beta2^{-j}$ for $j = 0,\ldots,m-1$, and $\varphi_m := 0$\;
  \SetKwProg{myproc}{Regard the following subroutine as $\mathcal{A}_j$:}{}{}
  \LinesNumberedHidden{
    \nonl\myproc{}{
      Conditional on first $j-1$ qubits in register $C$ being $|\mathbf{0}\rangle$, apply CNOT gates controlled by qubits in register $B$ to flip the last $\lceil \log n\rceil$ qubits in register $I_j$\;
      Conditional on first $j-1$ qubits in register $C$ being $|\mathbf{0}\rangle$, if $j < m$, apply $W(\varphi_j,L\epsilon/m)$ in \lem{separate_singular_value} with $\widetilde{\Pi},\Pi,U$ defined in \sec{main_algo_standard} to the state in register $I_j$ using $C_{j}$ as the output register $C$ and $P_j$ as the ancilla register $P$,
      else apply $X$ gate to register $C_m$\;

      Conditional on $C_{j}$ being $|1\rangle_{C_{j}}$, apply unitary $\text{C}_{\widetilde{\Pi }\otimes|+\rangle\langle+|}\text{NOT}\cdot U^{(SV)}_{S_{j}}$ or $\text{C}_{\Pi\otimes|+\rangle\langle+|}\text{NOT}\cdot U^{(SV)}_{S_{j}}$ in the same way as in \lin{state_after_U_S} to \lin{state_after_cnot} of \algo{main_algo} to the state in $(A,B,Q,F)$\;
    }
  }

    Apply $U_{\mathrm{pure}}$ to register $B$, Hadamard gate $H$ to $Q$\;
    Apply $\mathcal{A}:=\mathcal{A}_m\cdots\mathcal{A}_1$ to registers $(F,C,A,B,Q,P,I)$\;
    \caption{A variable-stopping-time subroutine when estimating $\sum_{i=1}^n p_i S(\sqrt{p_i})^2$.}
  \label{algo:improved_main_algo_sub}
\end{algorithm*}
\vspace{3mm}
\noindent
\textbf{Variable-stopping-time subroutine $\mathcal{A}$.}
We now describe the $m$-stage variable-stopping-time quantum algorithm $\mathcal{A}=\mathcal{A}_{m} \cdot \cdots \cdot \mathcal{A}_{1}$.

To construct $\mathcal{A}$, we suppose that we are given $\beta \in (0,1]$ such that $\sqrt{p_i} \in [0,\beta)$ for all $i = 1,\ldots,n$ and a lower bound $L > 0$ such that $\sum_{i = 1}^n p_i S(\sqrt{p_i})^2 \ge L$.

Let $\varphi_{j}:=\beta2^{-j}$ for $j = 0,\ldots,m-1$ and $\varphi_m := 0$. We first divide $[0,\beta)$ into $m$ intervals $[\varphi_m,\varphi_{m-1}),[\varphi_{m-1},\varphi_{m-2}),\ldots,$ $[\varphi_{1},\varphi_0)$. Then we transform singular values in these intervals in different stages of $\mathcal{A}$. Specifically,
\begin{itemize}
    \item for $j = 2,\ldots,m$, we transform singular values in $[\beta 2^{-j},\beta 2^{-j+2})$ in $\mathcal{A}_j$, and
    \item for $j = 1$, we transform singular values in $[\frac{1}{2}\beta,\beta)$ in $\mathcal{A}_1$.
\end{itemize}

Then we construct the $j$-th stage of $\mathcal{A}$. First, we need to determine the transformation polynomial, $S_j$ in this stage. Since we like $S_{j}$ to perform a transformation similar to $S$, we need to construct polynomials $S_{j}$ for $j = 1,\ldots, m$ such that $S_{j}$ satisfies \eq{qsvt_condition} in \thm{qsvt} and
\begin{align}
\label{eq:S_j_condition}
    |S_{1}(x)-S(x)| \le L \epsilon &\text{ for all } x \in [\beta/2,\beta), \nonumber\\
    |S_{j}(x)-S(x)|\le L \epsilon &\text{ for all } x\in[\beta 2^{-j},\beta 2^{-j+2}),\nonumber \\ &\text{ and }j=2,\ldots,m.
\end{align}

Note that for any $S_j$, we only require it to be a good approximation of $S(x)$ in a small interval, so we may construct such polynomial with lower degree than $S$. Since the complexity of variable-time amplitude estimation is proportional to the average time of all stages, which is the average degree of all transformation polynomial $S_j$, this variable-stopping-time algorithm can improve our vanilla algorithm in \sec{main_algo_standard}. 

Assuming that we have constructed such $S_j$ satisfying \eq{S_j_condition}, we give a detailed description of $\mathcal{A}$ in \algo{improved_main_algo_sub}.

\vspace{3mm}
\noindent
\textbf{Final algorithm.} We now describe our final algorithm in \algo{main_algo_improved}.

\begin{algorithm*}[ht]
  \KwInput{$(\epsilon,\delta)$, where $\epsilon$ is the multiplicative error, and $\delta$ is the failure probability; quantum registers $(F,C,A,B,Q,P,I)$ initialized to $|\mathbf{0}\rangle$.}
  \KwOutput{$\tilde{p}$.}
  \KwNotation{Notations is the same as that of \algo{improved_main_algo_sub}.
    }
    \KwParameter{$(S, \beta,L,m,\{S_j\mid j\in[m]\})$.
  }
    Use variable-time amplitude estimation algorithm in \thm{vtae} with the parameters $(\epsilon,\delta)$ to be $\epsilon:=\epsilon,\delta:=\delta$ and registers $W:=(A,B,Q,P,I),C:=C,F:=F$ to estimate $\|(|1\rangle\langle1|_F\otimes I)\widetilde{\mathcal{A}}(|\mathbf{0}\rangle\|^2$, where $\widetilde{\mathcal{A}}$ is \algo{improved_main_algo_sub} with the same parameters $(S,\beta,L,m,\{S_j\mid j\in[m]\})$ and input $\epsilon$. Denote the output of variable-time amplitude estimation algorithm by $\tilde{p}$. Output $\tilde{p}$\;
    \caption{Improved algorithm for estimating $\sum_{i=1}^n p_i S(\sqrt{p_i})^2$ of $\mathbf{p} = (p_i)_{i=1}^n$. }
  \label{algo:main_algo_improved}
\end{algorithm*}

We prove the output of \algo{main_algo_improved} is an estimate of $\sum_{i=1}^n p_i S(\sqrt{p_i})^2$ to within multiplicative error $\epsilon$ with high probability in the following proposition:
\begin{proposition}
\label{prop:correctness_final}
  Let $\epsilon, \delta, \beta \in(0,1)$, $\mathbf{p} = (p_i)_{i = 1}^n$ be a probability distribution such that $\sqrt{p_i}\le \beta$ for all $i = 1,\ldots,n$ and $S$ be a polynomial which satisfies \eq{qsvt_condition} in \thm{qsvt}. Suppose that we are given $\beta$, $L > 0$ such that $\sum_{i = 1}^n p_i S(\sqrt{p_i})^2 \ge L$ and a sequence of polynomials $\{S_j\mid j = 1,\ldots,m\}$ which satisfy \eq{qsvt_condition} in \thm{qsvt} and \eq{S_j_condition}. \algo{main_algo_improved} with input $(\epsilon,\delta)$ and parameters $(S, \beta,L,m,\{S_j\mid j\in[m]\})$ outputs an estimate of $\sum_{i=1}^n p_i S(\sqrt{p_i})^2$ to within multiplicative error $\epsilon$ with success probability at least $1-\delta$.

  Let $t_j = \frac{2^{j}}{\beta}\log(\frac{m}{\epsilon L}) + \sum_{k = 1}^{j}\deg(S_k)$ for all $j = 1,\ldots,m$ and $t_{m+1} = t_m$, the query complexity of \algo{main_algo_improved} is
  \begin{align}
    \widetilde{\mathcal{O}}\left(t_m + \sqrt{\epsilon} \sum_{j=1}^m t_j + \frac{\sqrt{\sum_{j = 1}^m\sum_{i:\sqrt{p_i}\in[\varphi_j,\varphi_{j-1})}p_i t_{j+1}^2}}{\sqrt{\sum_{i = 1}^np_iS(\sqrt{p_i})^2}}\right),
  \end{align}
  where $\varphi_j = 2^{-j}\beta$ for $j = 1,\ldots,m-1$ and $\varphi_m = 0$. We omit $\mathrm{polylog}$ terms of $L$, $\delta$ and $t_m$ in this bound. 
\end{proposition}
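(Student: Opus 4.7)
My plan is to track the state produced by $\mathcal{A}=\mathcal{A}_m\cdots\mathcal{A}_1$ one singular vector at a time and then invoke \thm{vtae}. First I would expand the initial state $U_{\mathrm{pure}}|\mathbf{0}\rangle_B=\sum_i\sqrt{p_i}|i\rangle_B$ in the right-singular-vector basis of $E$, which for this diagonal block encoding is just the computational basis. For each index $i$, the CNOT in stage $j$ copies $|i\rangle$ into register $I_j$ and the separator $W(\varphi_j,L\epsilon/m)$ of \lem{separate_singular_value} routes the component to the ``processed'' branch (flipping $C_j$ to $|1\rangle$) whenever $\sqrt{p_i}\ge 2\varphi_j$, keeps it alive when $\sqrt{p_i}\le\varphi_j$, and introduces error at most $L\epsilon/m$ otherwise. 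Summed over the $m$ stages this gives an overall $\ell_2$-error of $O(L\epsilon)$, and every $i$ ends up routed in essentially one stage $j(i)$ whose relevant interval $[\beta 2^{-j(i)},\beta 2^{-j(i)+2})$ contains $\sqrt{p_i}$.

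Next I would invoke the single-stage QSVT calculation from the proof of \lem{main_algo} (equations \eq{cal_qsvt}--\eq{ctrl_x}): conditional on $C_{j(i)}=|1\rangle$, the controlled block $\text{C}_{\widetilde{\Pi}\otimes|+\rangle\langle+|}\text{NOT}\cdot U^{(SV)}_{S_{j(i)}}$ writes amplitude $\sqrt{p_i}\,S_{j(i)}(\sqrt{p_i})$ into the good-flag subspace. By \eq{S_j_condition}, $|S_{j(i)}(\sqrt{p_i})-S(\sqrt{p_i})|\le L\epsilon$, so the good-flag component of $\widetilde{\mathcal{A}}|\mathbf{0}\rangle$ is within $\ell_2$-distance $O(L\epsilon)$ of the idealized state $\sum_i\sqrt{p_i}S(\sqrt{p_i})|i\rangle_A|i\rangle_B|+\rangle_Q|\text{clock}_{j(i)}\rangle|1\rangle_F$. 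Squaring and using $L\le\sum_i p_iS(\sqrt{p_i})^2$ shows that $\|(|1\rangle\langle 1|_F\otimes I)\widetilde{\mathcal{A}}|\mathbf{0}\rangle\|^2$ equals $\sum_i p_iS(\sqrt{p_i})^2$ up to multiplicative error $O(\epsilon)$. After rescaling $\epsilon$ by a constant, running \thm{vtae} to multiplicative error $\epsilon/2$ produces the claimed estimate of $\sum_i p_iS(\sqrt{p_i})^2$ with success probability at least $1-\delta$.

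For the complexity I would compute the VTAE parameters. The cost of stage $j$ is dominated by the calls to $W(\varphi_j,L\epsilon/m)$, which contribute $\mathcal{O}((2^j/\beta)\log(m/(L\epsilon)))$ queries, and by the QSVT step of cost $\mathcal{O}(\deg(S_j))$; telescoping over $k\le j$ with the geometric sum $\sum_{k=1}^j 2^k=\mathcal{O}(2^j)$ gives cumulative cost $T_j=\Theta(t_j)$, and hence $T_{\max}=t_m$. The stopping probability $p_{\text{stop}=t_j}$ equals, up to the $O(L\epsilon)$ slack already absorbed, the weight $\sum_{i:\sqrt{p_i}\in[\varphi_j,\varphi_{j-1})}p_i$ of indices routed at stage $j$, so $T_{\mathrm{avg}}^2\le\sum_j t_{j+1}^2\sum_{i:\sqrt{p_i}\in[\varphi_j,\varphi_{j-1})}p_i$ (I use $t_{j+1}$ as a safe upper bound on the stopping time at the end of stage $j$ to absorb the transitional-zone leakage into the next stage). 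Taking the a-priori lower bound $p_{\mathrm{succ}}'=L$ and substituting into $Q=T_{\max}\sqrt{\log T_{\max}'}+T_{\mathrm{avg}}\log(T_{\max}')/\sqrt{p_{\mathrm{succ}}}$ in \thm{vtae} yields the stated $\widetilde{\mathcal{O}}$ bound after absorbing polylogs in $L$, $\delta$, and $t_m$.

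The main obstacle is the error accounting in the first step: the separator $W(\varphi,L\epsilon/m)$ has a transitional zone $(\varphi,2\varphi)$ in which a singular value $\sqrt{p_i}$ may be partially routed at stage $j$ and partially at stage $j+1$. The design choice in \eq{S_j_condition} — asking $S_j$ to approximate $S$ on the wider interval $[\beta 2^{-j},\beta 2^{-j+2})$ rather than only on $[\varphi_j,\varphi_{j-1})$ — is precisely what makes the analysis close: whichever branch processes a borderline $\sqrt{p_i}$, the polynomial applied still matches $S(\sqrt{p_i})$ to within $L\epsilon$, so that the per-stage separation error and per-stage polynomial-approximation error can be combined without any cross terms blowing up, keeping the total error $O(L\epsilon)$ as required.
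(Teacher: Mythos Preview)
Your plan matches the paper's proof closely: track the state one singular-vector at a time, use \lem{separate_singular_value} to localize each $\sqrt{p_i}$ to a pair of consecutive stages, invoke the single-stage QSVT computation of \lem{main_algo}, and then feed the resulting $T_{\max},T_{\mathrm{avg}},p_{\mathrm{succ}}$ into \thm{vtae}. The crucial observation you flag at the end---that every borderline $\sqrt{p_i}$ is processed partly at stage $j$ and partly at stage $j{+}1$, and \eq{S_j_condition} guarantees both $S_j$ and $S_{j+1}$ approximate $S$ there---is exactly the mechanism the paper relies on.

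Two places where you gloss over what the paper does explicitly: First, your claim that $p_{\mathrm{stop}=t_j}\approx\sum_{i\in Q_j}p_i$ is not correct as an equality; an index $i\in Q_j$ lands at stage $j$ with weight $|\beta_1^{(i,j)}|^2$ and at stage $j{+}1$ with weight $|\beta_0^{(i,j)}|^2$, so $p_{\mathrm{stop}=t_j}$ actually mixes contributions from $Q_j$ and $Q_{j-1}$. The paper handles this by the re-indexing $\sum_j(|\beta_1^{(i,j)}|^2+|\beta_0^{(i,j)}|^2)t_{j+1}^2=t_{j+1}^2$, which is the rigorous version of your ``use $t_{j+1}$ to absorb the leakage'' heuristic. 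Second, the middle term $\sqrt{\epsilon}\sum_j t_j$ in the stated bound does not come for free: it arises because each $p_{\mathrm{stop}=t_j}$ carries an additive $O(L\epsilon)$ error from the separator, and summing against $t_j^2$ gives an extra $O(L\epsilon\sum_j t_j^2)$ in $T_{\mathrm{avg}}^2$, which after dividing by $\sqrt{p_{\mathrm{succ}}}\ge\sqrt{L}$ yields $\sqrt{\epsilon}\sum_j t_j$. You say this slack is ``already absorbed,'' but it is precisely the source of that term and should be tracked through the $T_{\mathrm{avg}}$ computation rather than dropped.
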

The proof of \prop{correctness_final} is deferred to \append{correctness_final}.

\begin{remark}
The $\sqrt{\epsilon}\sum_{j=1}^{m} t_j$ term in query complexity can be eliminated by a more detailed analysis mentioned in \cite{childs2017}, but it does not improve the complexity bounds in our applications.
\end{remark}

\section{R{\'e}nyi Entropy Estimation ($\alpha>1$)}\label{sec:alpha-large}
\noindent
In this section, we propose a quantum algorithm to estimate $H_{\alpha}(\mathbf{p}) = \frac{1}{1-\alpha}\sum_{i=1}^np_i^{\alpha}$ for $\alpha > 1$ to within additive error $\epsilon$. This is equivalent to estimating $P_{\alpha}(\mathbf{p}) = \sum_{i=1}^n p_i^{\alpha}$ to within multiplicative error $\mathcal{O}(\epsilon)$.

Let $g(x) := x^{\alpha-1}$, and then we have $P_{\alpha}(\mathbf{p}) = \sum_{i=1}^n p_i g(p_i)$. Since $g(x) = x^{\alpha-1}$ for $\alpha > 1 $ is monotonically increasing function on $[0,1]$ such that $g(0)= 0, g(1) = 1$ and $xg(x) = x^{\alpha}$ is a convex function, we can use the framework in \sec{annealing} to construct our algorithm.

\subsection{Estimate $P_{\alpha}(\mathbf{p})$ given a rough bound}
\noindent
We first construct a quantum algorithm which can estimate $P_{\alpha}(\mathbf{p})$ to within multiplicative error given $P,a,b$ such that $a P_{\alpha}(\mathbf{p}) \le P \le b P_{\alpha}(\mathbf{p})$ following \lem{annealing_pre}. Let $p^* = (xg(x))^{-1}(\min(\frac{1}{a}P,1)) = (\min(\frac{1}{a}P,1))^{\frac{1}{\alpha}}$. Like \lem{annealing_pre}, we need to construct a polynomial $S(x)$ which satisfies \eq{qsvt_condition} in \thm{qsvt} and \eq{annealing_pre}, which means
\begin{align}
  \Big|\sum_{i=1}^n q_i S(\sqrt{q_i})^2 &- c\frac{1}{g(p^*)}f(\mathbf{q})\Big| \le \frac{ca}{2b} p^{*}\epsilon_0
\end{align}
for all $\mathbf{q}$ such that $q_i\le p^*$ for all $i\in[n]$, where $c> 0$ is an arbitrary constant.

Before constructing such a polynomial, we first construct a class of polynomials which satisfies \eq{qsvt_condition} in \thm{qsvt} and is also a good approximation to $2^{-c-1}\beta^{-c}x^{c}$ in $[\nu,\beta]$ for any $\beta \in (0,1]$, $\nu \in (0,\beta)$, and $c > 0$.
\begin{lemma}
  \label{lem:scaled_poly_approx_with_small_value_at_zero}
  For any $c > 0$, $\beta \in (0,1]$, $\nu \in(0,\beta)$, and $\eta\in(0,\frac{1}{2})$, let $f(x) = 2^{-c-1}\beta^{-c}x^c$, there is an efficiently computable even or odd polynomial $S\in \mathbb{R}[x]$ of degree $\mathcal{O}\left(\frac{c}{\nu}\log(\frac{1}{\beta\nu\eta})\right)$ such that
  \begin{align}
      \forall x & \in [0,\nu]: |S(x)| \le 2f(x)         \nonumber \\
      \forall x & \in [\nu, \beta]: |f(x)-S(x)| \le \eta \nonumber \\
      \forall x & \in [-1,1]: |S(x)| \le 1               
  \end{align}
\end{lemma}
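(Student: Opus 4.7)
My plan is to adapt the polynomial approximation of $x^c$ from Corollary 66/67 of~\cite{gilyen2019singular}, sharpening its behavior near $x=0$ from the standard bound $|P(x)|\le 1/2$ to the pointwise bound $|S(x)|\le 2f(x)$. Start by invoking the QSVT result, rescaled from $[\nu/\beta,1]$ back to $[\nu,\beta]$ via the substitution $y=x/\beta$, to obtain a polynomial $P$ of degree $\mathcal{O}((c/\nu)\log(1/\eta))$ satisfying $|P(x)-f(x)|\le \eta/2$ on $[\nu,\beta]$ and $|P(x)|\le 1/2$ on $[-1,1]$. This settles the second and third required conditions up to constant factors, but leaves the first condition unmet since $|P(x)|$ can be as large as $1/2$ on $[0,\nu]$, well above $2f(x)=2^{-c}\beta^{-c}x^c$.

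The natural strategy is to construct $S$ as a product $S = P\cdot R$, where $R$ is an auxiliary polynomial designed to be close to $1$ on $[\nu,\beta]$ (so the approximation of $f$ is preserved), to decay as fast as $(x/\nu)^c$ on $[0,\nu]$ (so the required $x^c$-scaling is produced), and to stay bounded by $2$ on $[-1,1]$ (so the global bound is maintained). Such an $R$ can be built by combining a polynomial approximation of a smoothed step / soft rectangle from~\cite{gilyen2019singular} (Lemma 29 and Corollary 16), which can be made $\approx 1$ on $[\nu,\beta]$ and small on $[0,\nu/2]$ with degree $\mathcal{O}((1/\nu)\log(1/\eta))$, multiplied by an explicit low-degree factor with zeros at or near $0$ (roughly $\lceil c\rceil$ zeros) to match the $x^c$-decay. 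With careful parameter tuning, $\deg(R)=\mathcal{O}((c/\nu)\log(1/(\beta\nu\eta)))$, so the total degree of $S$ fits within the claimed bound, and parity is preserved by choosing $P$ and $R$ with compatible parities.

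Verifying the three conditions for $S=P\cdot R$: on $[\nu,\beta]$, $|S(x)-f(x)|\le |P(x)-f(x)|\,|R(x)|+|f(x)|\,|R(x)-1|$ is bounded by $\eta$ after splitting the error budget between $P$ and $R$; on $[0,\nu]$, $|S(x)|=|P(x)|\,|R(x)|\le (1/2)\cdot |R(x)|$ combined with the $x^c$-decay of $R$ yields $|S(x)|\le 2f(x)$; on $[-1,1]$, $|S|\le |P|\cdot|R|\le (1/2)\cdot 2 = 1$.

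The main obstacle is constructing $R$ with both the sharp $(x/\nu)^c$-decay on $[0,\nu]$ and near-$1$ behavior on $[\nu,\beta]$ within the claimed degree bound: these two constraints nearly conflict at the transition point $x=\nu$, where $R(\nu)\approx 1$ must coexist with a steep decrease just below $\nu$, especially when $\nu\ll\beta$. Resolving this requires carefully combining the smoothed-step polynomial with a vanishing factor at $0$ with the correct normalization, and bounding the product in the transition region. An alternative route that sidesteps this difficulty is to directly analyze the QSVT polynomial $P$ itself --- by leveraging its underlying integral or Chebyshev representation of $x^c$ --- to show $|P(x)|=\mathcal{O}(x^c)$ for small $x$, which would yield $S=P$ directly with appropriate constants.
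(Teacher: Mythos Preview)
Your ingredients are essentially the same as the paper's --- a monomial factor with roughly $\lceil c\rceil$ zeros at $0$, a rectangle-type polynomial for global boundedness, and an approximation polynomial --- but your packaging as $S=P\cdot R$, with $P$ already approximating $f(x)\propto x^c$, creates exactly the tension you flag. If $R$ carries a factor like $(x/\nu)^{\lceil c\rceil}$ to enforce the decay on $[0,\nu]$, that same factor grows to $(\beta/\nu)^{\lceil c\rceil}$ on $[\nu,\beta]$, so $R$ cannot simultaneously be $\approx 1$ there; and a pure smoothed-step factor alone cannot give the pointwise bound $|S(x)|\le 2f(x)$ since $f(x)\to 0$ as $x\to 0$. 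Your alternative of showing the Corollary~66 polynomial for $x^c$ already satisfies $|P(x)|=\mathcal{O}(x^c)$ near $0$ is not supported by that construction.

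The paper sidesteps this by \emph{not} approximating $x^c$ first. Instead it factors $x^c=x^{\lceil c\rceil}\cdot x^{-d}$ with $d=\lceil c\rceil-c\in[0,1)$ and sets
\[
S(x)\;=\;2^{-c}\beta^{-c}\nu^{-d}\,x^{\lceil c\rceil}\,\tilde Q(x)\,P(x),
\]
where $\tilde Q$ is the \cite{gilyen2019singular} approximation to $\tfrac{\nu^d}{2}x^{-d}$ on $[\nu,1]$ (with $|\tilde Q|\le 1$ everywhere) and $P$ is a rectangle polynomial that is $\approx 1$ on $[-\beta,\beta]$ and $\approx 0$ on $[2\beta,1]$. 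On $[\nu,\beta]$ the three factors combine to $\approx f(x)$. On $[0,\nu]$ one uses only $|\tilde Q|\le 1$ and $|P|\le 1$, so $|S(x)|\le 2^{-c}\beta^{-c}\nu^{-d}x^{\lceil c\rceil}=2^{-c}\beta^{-c}(x/\nu)^{d}x^{c}\le 2f(x)$: the exact monomial $x^{\lceil c\rceil}$ delivers the pointwise decay for free, and the approximated factor only needs to be \emph{bounded}, not close to $1$. The rectangle $P$ then kills the growth of $x^{\lceil c\rceil}$ on $[2\beta,1]$ to secure $|S|\le 1$ globally. In short, move the $\lceil c\rceil$ zeros from your $R$ into the main construction and approximate the residual \emph{negative} power $x^{-d}$ instead of $x^c$; this dissolves the conflict at $x=\nu$ that you identified.
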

\begin{proof}
  Let $d = \lceil c \rceil -c$.
  We first introduce a lemma to construct polynomial approximation of $kx^{-d}$ where $k$ is a constant.
  \begin{lemma}[{\cite[Corollary 67, Polynomial approximations of negative power functions]{gilyen2019singular}}] Let $\delta, \varepsilon \in\left(0, \frac{1}{2}\right], c>0$, and let $f(x):=\frac{\delta^{c}}{2} x^{-c}$, then there exist even/odd polynomials $P, P^{\prime} \in \mathbb{R}[x]$ such that $\|P-f\|_{[\delta, 1]} \leq \varepsilon$, $\|P\|_{[-1,1]} \leq 1$ and similarly $\left\|P^{\prime}-f\right\|_{[\delta, 1]} \leq \varepsilon,\left\|P^{\prime}\right\|_{[-1,1]} \leq 1$. In addition, the degree of the polynomials
  are $\mathcal{O}\left(\frac{\max [1, c]}{\delta} \log \left(\frac{1}{\varepsilon}\right)\right)$.
  \label{lem:poly-approx-nega-power}
\end{lemma}

Setting the parameters $(\epsilon,\delta,c)$ in \lem{poly-approx-nega-power} to $\delta \coloneqq \nu, c \coloneqq d, \epsilon \coloneqq \frac{\beta^{c}\nu^{d}\eta}{2}$, we can construct an even polynomial $\tilde{Q}\coloneqq P$ in \lem{poly-approx-nega-power} with $\deg(\tilde{Q}) = \mathcal{O}\left(\frac{1}{\nu}\log(\frac{1}{\beta\nu\eta})\right)$ such that  
\begin{align}
      & \forall x \in [-1,1]:        |\tilde{Q}(x)| \le 1                                                      \nonumber\\
      & \forall x \in [\nu,1]:       |\tilde{Q}(x)-\frac{\nu^{d}}{2}x^{-d}|\le \frac{\beta^{c}\nu^{d}\eta}{2}.
\end{align}

Then we need to construct a polynomial approximation to the rectangle function according to the following lemma:
\begin{lemma}[Polynomial approximations of the rectangle function {\cite[Lemma 29]{gilyen2019singular}}]\label{lem:rectangle_approximation}
  Let $\delta^{\prime}, \varepsilon^{\prime} \in\left(0, \frac{1}{2}\right)$ and $t$ satisfying $\delta' \le t \le 1$. There exists an even polynomial $P^{\prime} \in \mathbb{R}[x]$ of degree $\mathcal{O}\left(\log \left(\frac{1}{\varepsilon^{\prime}}\right) / \delta^{\prime}\right)$, such that $\left|P^{\prime}(x)\right| \leq 1$ for all $x \in[-1,1]$, and
  \begin{align}
    \begin{cases}P^{\prime}(x) \in \left[0, \varepsilon^{\prime}\right] & \forall x \in\left[-1,-t-\delta^{\prime}\right] \cup\left[t+\delta^{\prime}, 1\right], \\ P^{\prime}(x) \in\left[1-\varepsilon^{\prime}, 1\right] & \forall x \in\left[-t+\delta^{\prime}, t-\delta^{\prime}\right].\end{cases}
  \end{align}
\end{lemma}

  Setting $\delta'\coloneqq \frac{\beta}{2}, t \coloneqq \frac{3\beta}{2}, \epsilon' \coloneqq{\frac{\beta^{c}\nu^{d}\eta}{2}}$ in \lem{rectangle_approximation}, we can construct an even polynomial $P$ with $\deg(P) = \mathcal{O}\left(\frac{1}{\beta}\log(\frac{1}{\beta\nu\eta})\right)$ such that
  \begin{align}
       & \forall x \in[-1,1]: |P(x)| \le 1                                            \nonumber \\
       & \forall x \in [2\beta,1]:|P(x)| \le \frac{\beta^{c}\nu^{d}\eta}{2}           \nonumber \\
       & \forall x \in [-\beta,\beta]:1-\frac{\beta^{c}\nu^{d}\eta}{2} \le P(x) \le 1. 
  \end{align}
      
  Let $Q(x) \coloneqq \tilde{Q}(x)P(x)$, we have
  \begin{align}
    \forall x \in[\nu,\beta]: &|Q(x)-\frac{\nu^{d}}{2}x^{-d}| \nonumber\\ \le& |Q(x)-\tilde{Q}(x)|+|\tilde{Q}(x)-\frac{\nu^{d}}{2}x^{-d}|\nonumber\\
    \le& |\tilde{Q}(x)||1-P(x)| + \frac{\beta^c\nu^d\eta}{2}\nonumber\\
    \le&|1-P(x)|+\frac{\beta^c\nu^d\eta}{2} \le \beta^{c}\nu^{d}\eta \nonumber
  \end{align}
  \begin{align}
      \label{eq:small_value_eqn1}
      \forall x \in[-1,1]: |Q(x)| & \le 1\nonumber\\
      \forall x \in [\beta,2\beta]: |Q(x)|
      & \le |\tilde{Q}(x)| \le \frac{\nu^dx^{-d}}{2}+\frac{\beta^c\nu^d\eta}{2} \nonumber\\
      \forall x \in [2\beta,1]: |Q(x)|                         & \le |P(x)|\le \frac{\beta^c\nu^d\eta}{2}.
  \end{align}

  Then, let $S(x) \coloneqq 2^{-c}\beta^{-c}\nu^{-d}x^{\lceil c\rceil}Q(x)$, which is an even or odd polynomial since $Q(x) = \tilde{Q}(x)P(x)$ and $\tilde{Q}(x),P(x)$ are even polynomials. We can infer that
  \begin{align}
    \label{eq:small_value_eqn2}
    \forall x \in [0,\nu]: S(x)\le & 2^{-c}\beta^{-c}\nu^{-d}x^{\lceil c \rceil} \nonumber \\= &  2^{-c}\beta^{-c}\nu^{-d}x^{d}x^{c} \nonumber \\ \le &  2^{-c}\beta^{-c}x^{c} = 2f(x),
  \end{align}
  where the first inequality comes from $|Q(x)|\le 1$, and
  \begin{align}
      \label{eq:small_value_eqn3}
      \forall x \in[\nu,\beta]: &|S(x)-f(x)| \nonumber \\
      = & |S(x)-2^{-c-1}\beta^{-c}x^c| \nonumber \\ =&2^{-c}\beta^{-c}\nu^{-d}x^{\lceil c \rceil}|Q(x)-\frac{\nu^dx^{-d}}{2}| \nonumber\\
      \le & 2^{-c}\beta^{-c}\nu^{-d}x^{\lceil c \rceil}(\beta^c\nu^d\eta)        \nonumber\\
      \le & 2^{-c}\eta \le \eta,
  \end{align}
  
  where the first inequality comes from \eq{small_value_eqn1}.

  To prove that $S(x)$ is bounded by $1$ on $[-1,1]$, since $S(x)$ is even or odd, we only need to prove $\forall x\in [0,1]: |S(x)|\le 1$.
  For $x\in[0,2\beta]$, we have
  \begin{align}
    \forall x \in [0,\nu]:|S(x)|     & \le 2f(x) \le 1   \nonumber                 \\
    \forall x\in [\nu,\beta]: |S(x)| & \le 2^{-c-1}\beta^{-c}x^{c} + \eta
    \le \frac{1}{2} + \eta
    \le 1                              ,
  \end{align}
  where the first inequality comes from \eq{small_value_eqn2} and the third inequality comes from \eq{small_value_eqn3}, and
  \begin{align}
    \forall x\in [\beta,2\beta]: & |S(x)| \nonumber \\
    \le & 2^{-c-1}\beta^{-c}\nu^{-d}x^{\lceil c \rceil}\left(\frac{\nu^dx^{-d}+\beta^c\nu^{d}\eta}{2}\right)\nonumber \\
    \le & 2^{-c-1}\frac{\beta^{-c}x^c+\eta}{2} \nonumber\\
    \le & 2^{-c-1}\frac{2^c+\eta}{2} \le 1,
  \end{align}
  where the first inequality comes from \eq{small_value_eqn1}.
  
  For $x\in[2\beta,1]$, we have
  \begin{align}
    \forall x \in[2\beta,1]: |S(x)| = & 2^{-c-1}\beta^{-c}\nu^{-d}x^{\lceil c \rceil}Q(x)\nonumber \\
    \le& 2^{-c-1}\beta^{-c}\nu^{-d} (\frac{\beta^c\nu^d\eta}{2}) \nonumber \\\le& \eta \le 1,
  \end{align}
  where the first inequality comes from \eq{small_value_eqn1}. Therefore, we can conclude that $\forall x\in [-1,1]: |S(x)|\le 1$. Together with \eq{small_value_eqn2}, \eq{small_value_eqn3} and that $S(x)$ is even or odd polynomial, we have that $S(x)$ with $\deg(S) = \lceil c \rceil + \deg(\tilde{Q}) + \deg{P} = \mathcal{O}\left(\frac{c}{\nu}\log(\frac{1}{\beta\nu\eta})\right)$ satisfies the conditions in this lemma.
\end{proof}

By carefully choosing the parameters in \lem{scaled_poly_approx_with_small_value_at_zero}, we can construct a polynomial which is similar to the polynomial $S$ in \eq{annealing_pre} as follows.

\begin{lemma}
  \label{lem:polynomial_large_alpha}
  For any probability distribution $\mathbf{p}$ on $[n]$, suppose that we are given $P,a,b$ such that $a P_{\alpha}(\mathbf{p}) \le P \le b P_{\alpha}(\mathbf{p})$. Let $p^* = \min(\frac{1}{a}P,1)^{\frac{1}{\alpha}}$. Then for any $\epsilon \in (0,1)$, $\alpha > 1$ and constants $d,d' > 0$, the polynomial $S$ in \lem{scaled_poly_approx_with_small_value_at_zero} with parameters $(c,\beta,\nu,\eta)$ to be $c := \alpha-1,\beta := \sqrt{p^*}, \nu := \left(d\frac{a(p^*)^{\alpha}\epsilon}{5bn}\right)^{\frac{1}{2\alpha}}, \eta := d'2^{-2\alpha}\frac{a}{b}p^*\epsilon$ has $\deg(S) = \widetilde{\mathcal{O}}\Bigl(\left(\frac{n}{\epsilon}\right)^{\frac{1}{2\alpha}} \frac{1}{\sqrt{p^*}}\Bigr)$, and satisfies \eq{qsvt_condition} in \thm{qsvt} and
  \begin{align}
  \label{eq:approximation_large_alpha}
      &\left|\sum_{i=1}^n p_iS(\sqrt{p_i})^2 -2^{-2\alpha}(p^*)^{1-\alpha}P_{\alpha}(\mathbf{p})\right| \nonumber \\ \le & (2d'+d)2^{-2\alpha}\frac{a}{b}p^*\epsilon.
  \end{align}
\end{lemma}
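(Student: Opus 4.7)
The plan is to verify three things about the polynomial $S$ from \lem{scaled_poly_approx_with_small_value_at_zero} with the prescribed parameter choices: that it inherits \eq{qsvt_condition}, has the claimed degree, and meets the additive error target \eq{approximation_large_alpha}. The first two are essentially substitutions; the substantive work is in the error bound.

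For the QSVT condition, \lem{scaled_poly_approx_with_small_value_at_zero} already guarantees $S$ is even or odd and bounded by $1$ on $[-1,1]$. For the degree, plugging $c=\alpha-1=\mathcal{O}(1)$, $\beta=\sqrt{p^*}$, $\nu=\bigl(\tfrac{da(p^*)^\alpha\epsilon}{5bn}\bigr)^{1/(2\alpha)}$, and $\eta=d'2^{-2\alpha}\tfrac{a}{b}p^*\epsilon$ into the formula $\mathcal{O}\bigl(\tfrac{c}{\nu}\log\tfrac{1}{\beta\nu\eta}\bigr)$ yields $1/\nu=\Theta\bigl((n/\epsilon)^{1/(2\alpha)}/\sqrt{p^*}\bigr)$ and a polylogarithmic factor in $n$, $1/\epsilon$, $1/p^*$, giving the claimed $\widetilde{\mathcal{O}}\bigl((n/\epsilon)^{1/(2\alpha)}/\sqrt{p^*}\bigr)$ bound.

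The error analysis hinges on two preliminary observations. First, the function approximated is $f(x):=2^{-\alpha}(p^*)^{-(\alpha-1)/2}x^{\alpha-1}$, and a direct calculation gives
\[
\sum_{i=1}^n p_i f(\sqrt{p_i})^2 \;=\; 2^{-2\alpha}(p^*)^{1-\alpha}\sum_{i=1}^n p_i^\alpha \;=\; 2^{-2\alpha}(p^*)^{1-\alpha}P_\alpha(\mathbf{p}),
\]
which matches the target term in \eq{approximation_large_alpha} exactly. Second, from $P\ge aP_\alpha(\mathbf{p})\ge ap_i^\alpha$ combined with $p_i\le 1$, one concludes $p_i\le p^*$ for every $i$, so every $\sqrt{p_i}$ lies in $[0,\beta]$ and falls into one of the two regimes controlled by \lem{scaled_poly_approx_with_small_value_at_zero}.

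It then remains to split the error $\sum_i p_i|S(\sqrt{p_i})^2-f(\sqrt{p_i})^2|$ according to whether $\sqrt{p_i}\le\nu$ or $\sqrt{p_i}\in(\nu,\beta]$. In the small regime, $|S(x)|\le 2f(x)$ gives $|S(x)^2-f(x)^2|\le 5f(x)^2=5\cdot 2^{-2\alpha}(p^*)^{1-\alpha}p_i^{\alpha-1}$, so the $p_i$-weighted sum over at most $n$ indices, each satisfying $p_i^\alpha\le\nu^{2\alpha}$, contributes exactly $d\cdot 2^{-2\alpha}(a/b)p^*\epsilon$ by the choice of $\nu$. In the large regime I would combine $|S-f|\le\eta$ with the bounds $|S|\le 1$ and $|f|\le 2^{-\alpha}\le 1$ to obtain the clean estimate $|S(x)^2-f(x)^2|\le 2\eta$, so the $p_i$-weighted sum is at most $2\eta=2d'\cdot 2^{-2\alpha}(a/b)p^*\epsilon$. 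Adding the two contributions produces exactly the stated $(2d'+d)2^{-2\alpha}(a/b)p^*\epsilon$ bound. The only mild subtlety is to use the coarse estimate $|S+f|\le 2$ rather than the tighter $|S+f|\le 2f+\eta$ in the large regime, so as to avoid an awkward $\eta^2$ cross-term that would not absorb cleanly into the target constants.
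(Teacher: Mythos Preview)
Your proposal is correct and follows essentially the same route as the paper's proof: the same verification of the QSVT condition and degree, the same observation that $p_i\le p^*$, and the same split of the error into the small regime (where $|S|\le 2f$ yields $|S^2-f^2|\le 5f^2$ and the choice of $\nu$ produces the $d$ term) and the large regime (where $|S-f|\le\eta$ and $|S+f|\le 2$ yield the $2d'$ term). Your final remark about using the coarse bound $|S+f|\le 2$ is exactly what the paper does as well.
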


\begin{proof}
  \lem{scaled_poly_approx_with_small_value_at_zero} implies that $\deg(S)= \mathcal{O}\left(\frac{c}{\nu} \log(\frac{1}{\beta\nu\eta})\right) =\widetilde{\mathcal{O}}\bigl(\left(\frac{n}{\epsilon}\right)^{\frac{1}{2\alpha}} \frac{1}{\sqrt{p^*}}\bigr)$, $S$ is an even or odd polynomial, and $|S(x)|\le 1$ for all $x\in[-1,1]$. Therefore, $S$ satisfies \eq{qsvt_condition} in \thm{qsvt}.

  We now prove that $S$ satisfies \eq{approximation_large_alpha}. From the definition of $P_\alpha(\mathbf{p})$, we can infer that
    \begin{align}
      \label{eq:range-p}
      \forall i,p_i \le \left(\sum_{i = 1}^n p_i^{\alpha}\right)^{\frac{1}{\alpha}} \le \left(\frac{1}{a}P\right)^{\frac{1}{\alpha}}
    \end{align}
    and $p_i \le 1$, so $p_i\le \min((\frac{1}{a}P)^{\frac{1}{\alpha}},1) = p^* = \beta^2$.

  From \lem{scaled_poly_approx_with_small_value_at_zero}, we can infer that $S$ satisfies
  \begin{align}
    \forall x & \in [0,\nu]: |S(x)| \le  2^{1-\alpha}\beta^{1-\alpha}x^{\alpha-1}          \label{eq:S-less-2f}  
    \\
    \forall x & \in [\nu, \beta]: |2^{-\alpha}\beta^{1-\alpha}x^{\alpha-1}-S(x)| \le \eta \label{eq:error-less-eta}. 
  \end{align}
      
  For $i$ such that $\sqrt{p_i} \le \nu$, we have
  \begin{align}
      &\sum_{\sqrt{p_i} \le \nu} |p_iS(\sqrt{p_i})^2-2^{-2\alpha}\beta^{-2\alpha+2}p_i^{\alpha}|
        \nonumber \\ \le & \sum_{\sqrt{p_i} \le \nu}|p_iS(\sqrt{p_i})^2| + |2^{-2\alpha}\beta^{-2\alpha+2}p_i^{\alpha}|      \notag                          \\
        \le & \sum_{\sqrt{p_i} \le \nu} p_i2^{-2\alpha+2}\beta^{-2\alpha+2}p_i^{\alpha-1} + 2^{-2\alpha}\beta^{-2\alpha+2}p_i^{\alpha}               \notag   \\
        \le & \sum_{\sqrt{p_i} \le \nu} 2^{-2\alpha}\beta^{-2\alpha+2}5p_i^{\alpha}                                                 \notag    \\
        \le & \sum_{\sqrt{p_i} \le \nu} 2^{-2\alpha}\beta^{-2\alpha+2}5\nu^{2\alpha}                                                  \notag  \\
        \le & n2^{-2\alpha}\beta^{-2\alpha+2}5\nu^{2\alpha} \notag \\
        = &n2^{-2\alpha}(p^*)^{1-\alpha}5\left(d\frac{a(p^*)^{\alpha}}{5bn}\epsilon\right)\notag \\
        = &d2^{-2\alpha}\frac{a}{b}p^* \epsilon, \label{eq:large_alpha_little_delta}
    \end{align}
  where the second inequality comes from \eq{S-less-2f}.

  For $i$ such that $\sqrt{p_i} > \nu$, we have
  \begin{align}
      & \sum_{\sqrt{p_i} > \nu} |p_iS(\sqrt{p_i})^2-2^{-2\alpha}\beta^{-2\alpha+2}p_i^{\alpha}|
      \nonumber \\= & \sum_{\sqrt{p_i}> \nu} p_i|S(\sqrt{p_i})^2-2^{-2\alpha}\beta^{-2\alpha+2}(\sqrt{p_i})^{2c}| \notag
      \\ = & \sum_{\sqrt{p_i}> \nu} p_i|S(\sqrt{p_i})-2^{-\alpha}\beta^{-\alpha+1}(\sqrt{p_i})^{c}|\nonumber \\ &\cdot|S(\sqrt{p_i})+2^{-\alpha}\beta^{-\alpha+1}(\sqrt{p_i})^{c}| \notag
      \\ \le & 2\sum_{\sqrt{p_i}> \nu} p_i|S(\sqrt{p_i})-2^{-\alpha}\beta^{-\alpha+1}(\sqrt{p_i})^{c}| \notag
      \\ \le & 2\eta \sum_{\sqrt{p_i} > \nu} p_i \le 2\eta = 2d'2^{-2\alpha}\frac{a}{b}p^*\epsilon, \label{eq:large_alpha_large_delta}
    \end{align}
  where the first inequality comes from \eq{range-p} and \eq{error-less-eta}.

  From \eq{large_alpha_little_delta} and \eq{large_alpha_large_delta}, we can infer that
  \begin{align}
    \label{eq:large_alpha_eqn1}
    &\left|\sum_{i=1}^n p_iS(\sqrt{p_i})^2 -2^{-2\alpha}(p^*)^{1-\alpha}P_{\alpha}(\mathbf{p})\right| \nonumber \\ \le & (2d'+d)2^{-2\alpha}\frac{a}{b}p^*\epsilon,
  \end{align}
  which completes the proof.
\end{proof}

Therefore, from \lem{annealing_pre}, there exists an algorithm which can estimate $P_{\alpha}(\mathbf{p})$ to within multiplicative error $\mathcal{O}(\epsilon)$ using $\widetilde{\mathcal{O}}\left(\frac{\deg(S)}{\epsilon \sqrt{p^*}}\right) = \widetilde{\mathcal{O}}\left(\frac{n^{\frac{1}{2\alpha}}}{\epsilon^{1+\frac{1}{2\alpha}} p^*}\right) = \widetilde{\mathcal{O}}\left(\frac{n^{\frac{1}{2\alpha}}}{\epsilon^{1+\frac{1}{2\alpha}}P^{\frac{1}{\alpha}}}\right)$ given $P,a,b$ such that $aP_{\alpha}(\mathbf{p}) \le P \le bP_{\alpha}(\mathbf{p})$.

Then we use \algo{main_algo_improved} to replace \algo{main_algo} in \lem{annealing_pre} and apply \prop{correctness_final} to achieve a better query complexity upper bound.

\begin{lemma}
  \label{lem:large_alpha_given_P}
  For any $\alpha > 1$, there exists an algorithm $\mathcal{A}$ such that for any $\delta \in (0,1)$, $\epsilon\in(0,1)$ and probability distribution $\mathbf{p}$ on $[n]$, given $P,a,b$ such that $aP_{\alpha}(\mathbf{p}) \le P \le bP_{\alpha}(\mathbf{p})$ where $a,b$ are two constants, $\mathcal{A}$ can estimate $P_{\alpha}(\mathbf{p})$ to within multiplicative error $\epsilon$
  with success probability at least $1-\delta$ using $ \widetilde{\mathcal{O}}\left(\frac{n^{1-\frac{1}{2\alpha}}}{\epsilon} + \frac{\sqrt{n}}{\epsilon^{1+\frac{1}{2\alpha}}}\right)$ calls to $U_{\mathrm{pure}}$ and $U_{\mathrm{pure}}^{\dagger}$ in \defn{pure-state-preparation}.
\end{lemma}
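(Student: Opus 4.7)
The plan is to invoke the variable-time algorithm \algo{main_algo_improved} from \sec{main_algo_2}, whose cost is governed by \prop{correctness_final}, rather than the vanilla \algo{main_algo} used inside \lem{annealing_pre}. The motivation is that the global approximating polynomial $S$ produced by \lem{polynomial_large_alpha} has degree $\widetilde{\mathcal{O}}((n/\epsilon)^{1/(2\alpha)}/\sqrt{p^{*}})$, but for any singular value $\sqrt{p_i}$ substantially larger than the threshold $\nu_S:=(d\,a(p^{*})^{\alpha}\epsilon/(5bn))^{1/(2\alpha)}$ a much lower-degree polynomial already approximates the target power function $f(x)=2^{-\alpha}\beta^{1-\alpha}x^{\alpha-1}$ well on a dyadic interval around $\sqrt{p_i}$. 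Paying only the local degree on each stratum, weighted by the probability mass it contains, is exactly what VTAE is designed to exploit.

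Concretely, with $\beta:=\sqrt{p^{*}}$ and $p^{*}=(\min(P/a,1))^{1/\alpha}$, I would set $m:=\lceil\log_{2}(\beta/\nu_S)\rceil=\mathcal{O}(\log(n/\epsilon))$ so that $\varphi_{m}=\beta 2^{-m}\le\nu_S$. For each $j\in[m]$, I apply \lem{scaled_poly_approx_with_small_value_at_zero} with parameters $(c,\beta,\nu,\eta):=(\alpha-1,\beta,\varphi_{j},L\epsilon/2)$ to obtain a stage polynomial $S_{j}$ of degree $\widetilde{\mathcal{O}}(\alpha/\varphi_{j})=\widetilde{\mathcal{O}}(2^{j}/\sqrt{p^{*}})$. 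Since $[\varphi_{j},\varphi_{j-1})\subseteq[\nu_S,\beta]$, the two approximations $|S_{j}-f|\le L\epsilon/2$ (from \lem{scaled_poly_approx_with_small_value_at_zero}) and $|S-f|\le\eta\le L\epsilon/2$ (from \lem{polynomial_large_alpha}, by selecting the constants $d,d'$ appropriately) combine via the triangle inequality to yield $|S_{j}-S|\le L\epsilon$ on the $j$-th stratum, verifying hypothesis \eq{S_j_condition}. A valid amplitude lower bound for \prop{correctness_final} is $L:=\Theta(2^{-2\alpha}(p^{*})^{1-\alpha}P/b)$, because \lem{polynomial_large_alpha} guarantees $\sum_{i}p_{i}S(\sqrt{p_{i}})^{2}=\Theta(2^{-2\alpha}(p^{*})^{1-\alpha}P_{\alpha}(\mathbf{p}))$. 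Feeding these inputs into \algo{main_algo_improved} and rescaling the returned estimate by $2^{2\alpha}(p^{*})^{\alpha-1}$ (exactly as in \lem{annealing_pre}) converts it into an estimate of $P_{\alpha}(\mathbf{p})$ with multiplicative error $\epsilon$.

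The complexity follows by plugging $t_{j}=\widetilde{\mathcal{O}}(2^{j}/\sqrt{p^{*}})$ into the formula of \prop{correctness_final}. The stage-depth term yields $t_{m}=\widetilde{\mathcal{O}}((n/\epsilon)^{1/(2\alpha)}/\sqrt{p^{*}})$; combined with the structural bound $p^{*}\ge(P_{\alpha}(\mathbf{p})/b)^{1/\alpha}=\Omega(n^{1/\alpha-1})$ coming from $P_{\alpha}(\mathbf{p})\ge n^{1-\alpha}$, and the $1/\epsilon$ factor contributed by amplitude estimation, this contributes $\widetilde{\mathcal{O}}(\sqrt{n}/\epsilon^{1+1/(2\alpha)})$. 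The main obstacle is bounding $\Sigma:=\sum_{j=1}^{m}\pi_{j}t_{j+1}^{2}$ with $\pi_{j}:=\sum_{i:\sqrt{p_{i}}\in[\varphi_{j},\varphi_{j-1})}p_{i}$. I would combine two complementary bounds: $\pi_{j}\le 4n\varphi_{j}^{2}$ (there are at most $n$ summands, each of size at most $\varphi_{j-1}^{2}=4\varphi_{j}^{2}$) and $\pi_{j}\le P_{\alpha}(\mathbf{p})\varphi_{j}^{-2(\alpha-1)}$ (since $P_{\alpha}(\mathbf{p})\ge\sum_{i:p_{i}\ge\varphi_{j}^{2}}p_{i}^{\alpha}\ge\varphi_{j}^{2(\alpha-1)}\pi_{j}$). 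Splitting the dyadic sum at the crossover $\varphi_{j}^{*}=(P_{\alpha}(\mathbf{p})/n)^{1/(2\alpha)}$, each of the two geometric pieces is dominated by its value at the crossover and contributes $\widetilde{\mathcal{O}}(n)$, giving $\Sigma=\widetilde{\mathcal{O}}(n/p^{*})$. Dividing $\sqrt{\Sigma}$ by $\sqrt{\sum_{i}p_{i}S(\sqrt{p_{i}})^{2}}=\Theta((p^{*})^{(1-\alpha)/2}\sqrt{P_{\alpha}(\mathbf{p})})$ and invoking $P_{\alpha}(\mathbf{p})\ge n^{1-\alpha}$ once more produces the amplitude-related term $\widetilde{\mathcal{O}}(n^{1-1/(2\alpha)})$; multiplied by the VTAE $1/\epsilon$ factor this matches the first term of the claim. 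Boosting the success probability from a constant to $1-\delta$ by taking a median of $\mathcal{O}(\log(1/\delta))$ independent runs is absorbed into the $\widetilde{\mathcal{O}}$.
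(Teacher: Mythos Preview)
Your construction and correctness argument match the paper's proof essentially verbatim: same choice of $p^{*}$, $\beta=\sqrt{p^{*}}$, $m=\lceil\log(\beta/\nu_S)\rceil$, same stage polynomials $S_j$ from \lem{scaled_poly_approx_with_small_value_at_zero}, same triangle-inequality verification of \eq{S_j_condition}, same lower bound $L=\Theta(p^{*})$, and the same rescaling of the VTAE output by $2^{2\alpha}(p^{*})^{\alpha-1}$.

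The only issue is in the complexity bookkeeping for $\Sigma=\sum_{j}\pi_{j}t_{j+1}^{2}$. Your two-sided crossover argument is unnecessary and the stated conclusion $\Sigma=\widetilde{\mathcal{O}}(n/p^{*})$ is off by a factor of $1/p^{*}$; in fact that value is inconsistent with your own final claim of $\widetilde{\mathcal{O}}(n^{1-1/(2\alpha)})$, since $\sqrt{n/p^{*}}/\sqrt{p^{*}}=\sqrt{n}/p^{*}\le n^{3/2-1/\alpha}$ in the worst case, not $n^{1-1/(2\alpha)}$. The paper's argument is much simpler: because $t_{j+1}=\widetilde{\mathcal{O}}(1/\varphi_{j})$ and every $p_i$ in stratum $j$ satisfies $p_i\le\varphi_{j-1}^{2}=4\varphi_{j}^{2}$, each individual term obeys $p_i\,t_{j+1}^{2}=\widetilde{\mathcal{O}}(1)$, so summing over all $i$ gives $\Sigma=\widetilde{\mathcal{O}}(n)$ directly (the last stratum $j=m$ is handled the same way since $\varphi_{m-1}\ge\nu_S$). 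Then, using only the lower bound $\sum_i p_iS(\sqrt{p_i})^{2}\ge L=\Theta(p^{*})$, one gets $\sqrt{\Sigma/L}=\widetilde{\mathcal{O}}(\sqrt{n/p^{*}})\le\widetilde{\mathcal{O}}(n^{1-1/(2\alpha)})$, which after the $1/\epsilon$ factor from VTAE gives the first term. Your crossover bound $\pi_j\le P_{\alpha}(\mathbf{p})\varphi_j^{-2(\alpha-1)}$ is valid but never needed, and the ``first geometric piece'' you describe is not geometric at all---the product $\pi_j t_{j+1}^{2}$ is already $\widetilde{\mathcal{O}}(n)$ uniformly in $j$ under the cardinality bound alone.
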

\begin{proof}
We will first construct such an algorithm $\mathcal{A}$ using \prop{correctness_final}, prove its correctness, and then compute its query complexity.

\noindent
\textbf{Construction and correctness. }Let $p^*: = \min(\frac{1}{a}P,1)^{\frac{1}{\alpha}}$ and $\beta_0 = \sqrt{p^*}$, and we have
\begin{align}
  \forall i,p_i \le \left(\sum_{i = 1}^n p_i^{\alpha}\right)^{\frac{1}{\alpha}} \le \left(\frac{1}{a}P\right)^{\frac{1}{\alpha}}
\end{align}
and $p_i \le 1$, so $p_i\le \min((\frac{1}{a}P)^{\frac{1}{\alpha}},1) = p^* = \beta_0^2$.

    Let $\nu_0 = \left(\frac{1}{4}\frac{a(p^*)^{\alpha}\epsilon}{5bn}\right)^{\frac{1}{2\alpha}}$. Before constructing $S$ and $S_j$ in \prop{correctness_final}, we first define the number of stages of our variable-stopping-time quantum algorithm $m_0:=\lceil \log(\frac{\beta_0}{\nu_0})\rceil+1$, and $\nu_j = 2^{-j} \beta_0$ for $j = 1,\ldots, m_0-1$, $\nu_{m_0} = \nu_{m_0-1} = 2^{-m_0+1}\beta_0$.

    Let $L := \frac{a}{b}2^{-2\alpha-1}p^*$, and we will prove that $L$ is an lower bound of $\sum_{i=1}^n p_i S_0(\sqrt{p_i})^2$ later.
    
    Let $S_0$ be the polynomial $S$ in \lem{scaled_poly_approx_with_small_value_at_zero} with parameters $(c,\beta,\nu,\eta)$ to be $c := \alpha-1$, $\beta:= \beta_0$, $\nu:= \nu_{m_0}$, $\eta := \frac{L}{4}\epsilon$.
    
    Let $S_j$ for $j=1,\ldots, m_0$ be the polynomial $S$ in \lem{scaled_poly_approx_with_small_value_at_zero} with parameters $(c,\beta,\nu,\eta)$ to be $c := \alpha-1$, $\beta:= \beta_0$, $\nu:= \nu_j$, $\eta = \frac{L}{4}\epsilon$. From \lem{scaled_poly_approx_with_small_value_at_zero}, we have
    \begin{align}
    \label{eq:equation_S_j}
        |S_j(x)-2^{-\alpha}\beta_0^{-\alpha+1}x^{\alpha-1}| \le \frac{L}{4}\epsilon \ \forall x\in [\nu_j,\beta_0].
    \end{align}

    Now we set the parameters $(S, \beta,L,m,\{S_j\mid j\in[m]\})$ of \algo{main_algo_improved} to be $S:=S_0$, $\beta := \beta_0$, $L:=L$, $m:=m_0$, $S_j := S_j$ for $j = 1,\ldots,m_0$, and then prove that these parameters satisfy the conditions in \prop{correctness_final}.
    \begin{itemize}
        \item For $\beta_0$, we have shown that it is an upper bound of $\sqrt{p_i}$.
        \item For $L$, we need to prove it is a lower bound of $\sum_{i=1}^n p_i S_0(\sqrt{p_i})^2$. Note that $\nu_{m_0} = 2^{-m_0+1}\beta_0 = 2^{-\lceil \log(\frac{\beta_0}{\nu_0})\rceil}\beta_0 \le \nu_0 = \left(\frac{1}{4}\frac{a(p^*)^{\alpha}\epsilon}{5bn}\right)^{\frac{1}{2\alpha}}$. Let $d = \nu_{m_0}^{2\alpha}/\left(\frac{a(p^*)^{\alpha}\epsilon}{5bn}\right)$, and $d' = \frac{L}{4}\epsilon/(2^{-2\alpha}\frac{a}{b}p^*\epsilon)$, and then we have $d \le \left(\frac{1}{4}\frac{a(p^*)^{\alpha}\epsilon}{5bn}\right)/\left(\frac{a(p^*)^{\alpha}\epsilon}{5bn}\right) =  \frac{1}{4}$, $d' = \frac{1}{8}$. Therefore, the parameters $(c=\alpha-1,\beta=\beta_0,\nu=\nu_{m_0}, \eta = \frac{L}{4}\epsilon)$ of $S_0$ satisfy the conditions in \lem{polynomial_large_alpha} with constants $d$ and $d'$. From \lem{polynomial_large_alpha}, we have
        \begin{align}
        \label{eq:approximation_large_alpha_final}
            &\left|\sum_{i=1}^n p_iS_0(\sqrt{p_i})^2 -2^{-2\alpha}(p^*)^{1-\alpha}P_{\alpha}(\mathbf{p})\right| \nonumber \\ \le& (2d'+d)2^{-2\alpha}\frac{a}{b}p^*\epsilon \le \frac{1}{2}2^{-2\alpha}\frac{a}{b}p^*\epsilon = L\epsilon.
        \end{align}
    
        From \eq{approximation_large_alpha_final}, we have
        \begin{align}
            \sum_{i=1}^n p_iS_0(\sqrt{p_i})^2 &\ge  2^{-2\alpha}(p^*)^{1-\alpha}P_{\alpha}(\mathbf{p}) - L\epsilon\nonumber\\ &\ge \frac{P}{b}2^{-2\alpha}(p^*)^{1-\alpha} - L\epsilon \nonumber\\
            &\ge \frac{a}{b}2^{-2\alpha}p^* - L\epsilon = 2L- L\epsilon \ge L.
        \end{align}
    Therefore, $L$ is a lower bound of $\sum_{i=1}^n p_i S_0(\sqrt{p_i})^2$.
    \item For $S_j$, they are constructed by applying \lem{scaled_poly_approx_with_small_value_at_zero}, so they satisfy \eq{qsvt_condition} in \thm{qsvt}.
    Note that the parameters of $S_0$ in \lem{scaled_poly_approx_with_small_value_at_zero} is the same as the parameters of $S_{m_0}$, so we have $S_0 = S_{m_0}$. Then we can infer that for any  $x\in [\nu_j,\beta_0]$, 
    \begin{align}
        &|S_j(x)-S_0(x)|\nonumber \\ = & |S_j(x)-S_{m_0}(x)| \nonumber\\
        \le& |S_j(x)-2^{-\alpha}\beta_0^{-\alpha+1}x^{\alpha-1}| \nonumber \\ &+  |S_{m_0}(x)-2^{-\alpha}\beta_0^{-\alpha+1}x^{\alpha-1}|\nonumber \\
        \le& \frac{1}{4}L\epsilon +\frac{1}{4}L\epsilon \le L\epsilon  \label{eq:equation_S_j_2},
    \end{align}
    where the second inequality comes from \eq{equation_S_j} and $\nu_{m_0} < \nu_j$ for $j < m_0$.
    From \eq{equation_S_j_2}, we can infer that $\text{for any } x\in [\beta_0 2^{-j},\beta_0]:|S_j(x)-S_0(x)| \le L\epsilon$, which meets the requirements of \eq{S_j_condition}.
    \end{itemize}
    
    Therefore, the parameters we set are valid for \prop{correctness_final}, so \algo{main_algo_improved} with the same parameters and input $(\epsilon,\delta)$ can estimate $\sum_{i=1}^n p_i S_0(\sqrt{p_i})^2$ to within multiplicative error $\epsilon$ within success probability at least $1-\delta$. Denote the estimate by $\tilde{p}$. 
    
    Note that \begin{align}
        2^{-2\alpha}(p^*)^{1-\alpha}P_{\alpha}(\mathbf{p}) \nonumber  \ge & 2^{-2\alpha}(p^*)^{1-\alpha}\frac{P}{b} \nonumber \\ \ge& 2^{-2\alpha}(p^*)^{1-\alpha}\frac{a(p^*)^{\alpha}}{b}\nonumber \\ = & 2^{-2\alpha}\frac{a}{b}p^* =\frac{L}{2},
    \end{align}
    and because $\sum_{i=1}^n p_iS_0(\sqrt{p_i})^2$ is an approximation of $2^{-2\alpha}(p^*)^{1-\alpha}P_{\alpha}(\mathbf{p})$ within additive error $L\epsilon$ from \eq{approximation_large_alpha_final}, it is also an approximation of $2^{-2\alpha}(p^*)^{1-\alpha}P_{\alpha}(\mathbf{p})$ within multiplicative error $\frac{L\epsilon}{L/2} = 2\epsilon$. Therefore, $2^{2\alpha}(p^*)^{\alpha-1}\sum_{i=1}^n p_iS_0(\sqrt{p_i})^2$ is a $2\epsilon$ multiplicative approximation of $P_{\alpha}(\mathbf{p})$. Therefore, $2^{2\alpha}(p^*)^{\alpha-1}\tilde{p}$ is an $(1+2\epsilon)(1+\epsilon)-1\le 5\epsilon$ multiplicative approximation of $P_{\alpha}(\mathbf{p})$. We can rescale $\epsilon$ to $\frac{1}{5}\epsilon$ so that we can obtain an $\epsilon$-multiplicative approximation of $P_{\alpha}(\mathbf{p})$. 
    
    \vspace{3mm}
    \noindent
    \textbf{Complexity.}
    Now we compute the query complexity of the above algorithm. 
    First let us compute $t_j$ defined in \prop{correctness_final}. For $t_j$, we have
    \begin{align}
        t_j = &\frac{2^{j}}{\beta_0}\log(\frac{m_0}{\epsilon L}) + \sum_{k = 1}^{j}\deg(S_k)\nonumber\\
        =&\widetilde{\mathcal{O}}\left(\frac{2^j}{\beta_0} + \sum_{k = 1}^{j} \frac{2^k}{\beta_0}\right)= \widetilde{\mathcal{O}}\left(\frac{2^j}{\beta_0}\right),
    \end{align}
    where the second equation comes from $\deg(S_j) = \widetilde{\mathcal{O}}\left(\frac{1}{\nu_j}\right)$ given in \lem{scaled_poly_approx_with_small_value_at_zero}.

Let $\varphi_j = 2^{-j}\beta_0$ for $j = 1,\ldots,m_0-1$ and $\varphi_{m_0}=0$ following the definition in \prop{correctness_final}. Then from \prop{correctness_final}, the complexity of the algorithm we construct is
\begin{align}
  \label{eq:complexity_final_first}
    \widetilde{\mathcal{O}}\biggl(\frac{1}{\epsilon}\biggl(t_{m_0} +& \sqrt{\epsilon} \sum_{j=1}^{m_0} t_j \nonumber\\+& \frac{\sqrt{\sum_{j = 1}^{m_0}\sum_{i:\sqrt{p_i}\in[\varphi_{j},\varphi_{j-1})}p_i t_{j+1}^2}}{\sqrt{\sum_{i = 1}^np_iS_0(\sqrt{p_i})^2}} \biggr)\biggr) 
\end{align}

Let $Q_j= \{i: 2^{-j}\beta_0 \le \sqrt{p_i} \ge 2^{-j+1}\beta_0\}$, then we have 
\begin{align}
  \label{eq:averge_query_complexity}
  &\frac{\sqrt{\sum_{j = 1}^{m_0}\sum_{i:\sqrt{p_i}\in[\varphi_{j},\varphi_{j-1})}p_i t_{j+1}^2}}{\sqrt{\sum_{i = 1}^np_iS_0(\sqrt{p_i})^2}}\nonumber\\
  \le&\frac{1}{\sqrt{L}}\sqrt{\sum_{j=1}^{m_0-1}|Q_j|(2^{-j+1}\beta_0)^2(\frac{2^{j+1}}{\beta_0})^2 + \sum_{i\in Q_{m_{0}}}p_i\frac{1}{\varphi_{m_0-1}^2}} \nonumber\\
  = &\mathcal{O}\Bigl(\sqrt{\frac{n}{L}}\Bigr)
\end{align}
Subtitute it into \eq{complexity_final_first}, we get the total query complexity of the algorithm
\begin{align}
  &\widetilde{\mathcal{O}}\left(\frac{1}{\epsilon}\left(\frac{1}{\nu_{m_0}} + \sqrt{\frac{n}{L}}\right)\right) \nonumber\\
  =&\widetilde{\mathcal{O}}\left(\frac{1}{\epsilon}\left(\frac{n^{\frac{1}{2\alpha}}}{\sqrt{p^*}\epsilon^{\frac{1}{2\alpha}}} + \frac{\sqrt{n}}{\sqrt{p^*}}\right)\right) \nonumber\\
  =&\mathcal{O}\left(\frac{n^{\frac{1}{2\alpha}}}{(P_{\alpha}(\mathbf{p}))^{\frac{1}{2\alpha}}\epsilon^{1+\frac{1}{2\alpha}}}+\frac{\sqrt{n}}{(P_{\alpha}(\mathbf{p}))^{\frac{1}{2\alpha}}\epsilon}\right), \label{eq:case-dependent-large-alpha}
\end{align}
where the first equation comes from
\begin{align}\label{eq:nu-m-0}
\nu_{m_0} = &2^{-m_0+1}\beta_0 = 2^{-\lceil \log(\frac{\beta_0}{\nu_0})\rceil}\beta_0 = \Theta(\nu_0) \nonumber \\
= & \Theta\Bigl(\left(\frac{2^{2\alpha}(p^*)^{\alpha}\epsilon}{5n}\right)^{\frac{1}{2\alpha}}\Bigr) = \Theta\Bigl(\frac{\sqrt{p^*} \epsilon^{\frac{1}{2\alpha}}}{n^{\frac{1}{2\alpha}}}\Bigr)
\end{align}
and $L = \Theta(p^*)$, and the fourth equation comes from $p^* = \min(\frac{1}{a}P,1)^{\frac{1}{\alpha}}$ and $P = \Theta(P_{\alpha}(\mathbf{p}))$.

In the worst case that $P_{\alpha}(\mathbf{p}) = n^{1-\alpha}$, the complexity bound becomes $ \widetilde{\mathcal{O}}\left(\frac{n^{1-\frac{1}{2\alpha}}}{\epsilon} + \frac{\sqrt{n}}{\epsilon^{1+\frac{1}{2\alpha}}}\right)$.
\end{proof}

\begin{remark}
Note that \eq{case-dependent-large-alpha} above established a case-dependent bound of estimating $H_{\alpha}(\mathbf{p})$ given a rough estimation in advance. The requirement of the rough estimation can be removed during the analysis in the next subsection. 
\end{remark}

\subsection{Estimate $H_{\alpha}(\mathbf{p})$ by annealing}
\noindent
We now apply the annealing method in \prop{annealing} to remove the requirement of $P,a,b$ in \lem{large_alpha_given_P}.
\begin{theorem}\label{thm:main-alpha-large}
  For any $\alpha > 1$, there exists an algorithm $\mathcal{A}$ such that for any $\delta \in (0,1)$, $\epsilon\in(0,1)$ and probability distribution $\mathbf{p}$ on $[n]$, $\mathcal{A}$ can estimate $H_{\alpha}(\mathbf{p})$ to within additive error $\epsilon$
  with success probability at least $1-\delta$ using $ \widetilde{\mathcal{O}}\left(\frac{n^{1-\frac{1}{2\alpha}}}{\epsilon} + \frac{\sqrt{n}}{\epsilon^{1+\frac{1}{2\alpha}}}\right)$ calls to $U_{\mathrm{pure}}$ and $U_{\mathrm{pure}}^{\dagger}$ in \defn{pure-state-preparation}.
\end{theorem}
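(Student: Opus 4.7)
The plan is to combine the rough-estimate subroutine of \lem{large_alpha_given_P} with the annealing framework of \prop{annealing}, using a sequence of R\'enyi exponents as the interpolating functions. First, observe that $H_\alpha(\mathbf{p}) = \frac{1}{1-\alpha}\log P_\alpha(\mathbf{p})$, so an estimate of $P_\alpha(\mathbf{p})$ within multiplicative error $\Theta((\alpha-1)\epsilon)$ translates to an estimate of $H_\alpha(\mathbf{p})$ within additive error $\epsilon$. Hence it suffices to multiplicatively estimate $P_\alpha(\mathbf{p})$ with the stated query cost and then rescale.

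Next, I would define the annealing schedule $\alpha_1 < \alpha_2 < \cdots < \alpha_l = \alpha$ with step size $\gamma := \alpha\cdot\log_2(c)/\log_2 n$ for some constant $c$ (e.g.\ $c=2$), $\alpha_1 := 1+\gamma$, and $l = \lceil (\alpha-\alpha_1)/\gamma \rceil + 1 = O(\log n)$. Set $f_k(\mathbf{q}) := P_{\alpha_k}(\mathbf{q})$. To verify \eq{annealing_condition}: for the base case, by the power-mean inequality $P_{\alpha_1}(\mathbf{q}) \in [n^{1-\alpha_1}, 1] = [1/c, 1]$, so $\max f_1/\min f_1 \le c$. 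For consecutive stages, I would use the standard $\ell_p$-norm comparison
\begin{equation*}
P_{\alpha_k}(\mathbf{q})^{\alpha_{k+1}/\alpha_k}/n^{\gamma/\alpha_k} \;\le\; P_{\alpha_{k+1}}(\mathbf{q}) \;\le\; P_{\alpha_k}(\mathbf{q})^{\alpha_{k+1}/\alpha_k},
\end{equation*}
which holds for $\alpha_k \ge 1$ and $\alpha_{k+1} = \alpha_k + \gamma$. The upper bound follows from $q_i^\gamma \le P_{\alpha_k}(\mathbf{q})^{\gamma/\alpha_k}$ (since $q_i^{\alpha_k}\le P_{\alpha_k}(\mathbf{q})$), and the lower bound from the monotonicity of $\ell_p$ norms. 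Since $n^{\gamma/\alpha_k} \le c$ by choice of $\gamma$, conditioning on $P_{\alpha_k}(\mathbf{q})$ within multiplicative $1\pm \epsilon_k$ (for a suitably small constant $\epsilon_k$) forces $P_{\alpha_{k+1}}(\mathbf{q})$ into an interval with ratio at most some absolute constant $c'$, matching \eq{annealing_condition}.

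Then I would instantiate \prop{annealing} taking $\mathcal{A}_k$ to be the algorithm from \lem{large_alpha_given_P} applied at exponent $\alpha_k$ with precision $\epsilon_k$. For $k<l$ we take $\epsilon_k = \Theta(1)$, and for $k=l$ we take $\epsilon_l = \Theta((\alpha-1)\epsilon)$ so that the final estimate of $P_\alpha(\mathbf{p})$ translates to an additive $\epsilon$ estimate of $H_\alpha(\mathbf{p})$. The query cost of stage $k$ is $\widetilde{\mathcal{O}}\bigl(n^{1-1/(2\alpha_k)}/\epsilon_k + \sqrt{n}/\epsilon_k^{1+1/(2\alpha_k)}\bigr)$. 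Summing over the $l=O(\log n)$ stages, the constant-precision stages contribute $\widetilde{\mathcal{O}}(n^{1-1/(2\alpha)}+\sqrt{n})$ (the $k=l-1$ term dominates since $\alpha_k\le\alpha$), while the last stage contributes $\widetilde{\mathcal{O}}\bigl(n^{1-1/(2\alpha)}/\epsilon + \sqrt{n}/\epsilon^{1+1/(2\alpha)}\bigr)$, which dominates. Rescaling $\delta$ to $\delta/l$ costs only a $\mathrm{polylog}$ factor absorbed into $\widetilde{\mathcal{O}}$.

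The main obstacle will be establishing the annealing condition \eq{annealing_condition} tightly: one needs to show not only the deterministic $\ell_p$-norm chain above but also that when $P_{\alpha_k}(\mathbf{q})$ is pinned only up to multiplicative $(1\pm \epsilon_k)$, the induced range of $P_{\alpha_{k+1}}(\mathbf{q})$ remains within a universal constant ratio. Choosing $\gamma$ small enough to make $n^{\gamma/\alpha_k}$ a constant is what forces $l=\Theta(\log n)$ stages and hence the logarithmic factors hidden in the $\widetilde{\mathcal{O}}$. Everything else is bookkeeping: verifying that the parameters $(P,a,b)$ fed to $\mathcal{A}_k$ in each stage are valid for \lem{large_alpha_given_P}, and that the accumulated failure probability stays below $\delta$.
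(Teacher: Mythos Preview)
Your proposal is correct and follows the same approach as the paper: combine \lem{large_alpha_given_P} with the annealing framework of \prop{annealing} over a sequence of R\'enyi exponents increasing to $\alpha$, using the $\ell_p$-norm comparison inequality (this is exactly the Li--Wu lemma the paper quotes) to verify \eq{annealing_condition}. The paper parametrizes the schedule multiplicatively, $\alpha_k = \alpha(1+1/\ln n)^{k-l}$, whereas you use an additive step $\alpha_k = 1+k\gamma$; both yield $l=\widetilde{\mathcal{O}}(1)$ stages and the same final bound. One small slip: with $\gamma = \alpha\log_2 c/\log_2 n$ you get $n^{1-\alpha_1}=n^{-\gamma}=c^{-\alpha}$ and $n^{\gamma/\alpha_k}\le c^{\alpha/\alpha_k}\le c^{\alpha}$, not $c$ as you wrote; either drop the factor $\alpha$ from $\gamma$ or simply take $c^{\alpha}$ as the constant in \eq{annealing_condition}.
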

\begin{proof}
Let the parameters $l$ and function sequence $\{g_k\}$ be such that $l := \lceil \frac{\ln(\alpha)}{\ln(1+1/\ln(n))}\rceil$ and $g_k(x) := x^{\alpha(1+\frac{1}{\ln(n)})^{k-l}-1}$ for $k = 1,\ldots,l$. Note that $g_k(x)$ is monotonically increasing on $[0,1]$, $xg_k(x) = x^{\alpha(1+\frac{1}{\ln(n)})^{k-l}}$ is convex, $g_k(0) = 0$, and $g_k(1) = 1$ for $k = 1,\ldots,l$.

We now prove that \eq{annealing_condition} hold for $f_k(\mathbf{p}) = \sum_{i = 1}^n p_i g_k(p_i) = P_{\alpha(1+\frac{1}{\ln(n)})^{k-l}}(\mathbf{p})$.

First, we introduce the following lemma to connect the value of $f_k(\mathbf{p})$ to $f_{k+1}(\mathbf{p})$.

\begin{lemma}[{\cite[Lemma 5.3]{li2019entropy}}]
  For any distribution $\mathbf{p} = \left(p_{i}\right)_{i=1}^{n}$ and $0<\alpha_{1}<\alpha_{2}$, we have
\begin{align}
    \left(\sum_{i \in[n]} p_{i}^{\alpha_{2}}\right)^{\frac{\alpha_{1}}{\alpha_{2}}} \leq \sum_{i \in[n]} p_{i}^{\alpha_{1}} \leq n^{1-\frac{\alpha_{1}}{\alpha_{2}}}\left(\sum_{i \in[n]} p_{i}^{\alpha_{2}}\right)^{\frac{\alpha_{1}}{\alpha_{2}}}.
\end{align}
\end{lemma}

Specifically, for $f_{k+1} = P_{\alpha_2}(\mathbf{p})$ and $f_{k} = P_{\alpha_1}(\mathbf{p})$, we have  $\frac{\alpha_1}{\alpha_2} = \frac{1}{1+\frac{1}{\ln(n)}}$, and
\begin{equation}
  \label{eq:recursive_eqn}
  f_{k+1}^{\frac{1}{1+\frac{1}{\ln(n)}}} \le f_{k} \le n^{\frac{1}{1+\ln(n)}}f_{k+1}^{\frac{1}{1+\frac{1}{\ln(n)}}}\le ef_{k+1}^{\frac{1}{1+\frac{1}{\ln(n)}}}.
\end{equation}

Since  $\alpha (1+\frac{1}{\ln(n)})^{-l} = \alpha (1+\frac{1}{\ln(n)})^{\lceil \frac{\ln(\alpha)}{\ln(1+1/\ln(n))}\rceil} \le  1+\frac{1}{\ln(n)}$, we have $f_1(\mathbf{p}) = P_{\alpha (1+\frac{1}{\ln(n)})^{-l}}(\mathbf{p}) \ge n^{1-\alpha (1+\frac{1}{\ln(n)})^{-l}} \ge n^{1-(1+1/\ln(n))}  = \frac{1}{e}$. Then we have
\begin{align}
\label{eq:annealing_equation_1}
    \frac{\max_{\mathbf{q}\in\Delta^n}f_1(\mathbf{q})}{\min_{\mathbf{q}\in\Delta^n}f_1(\mathbf{q})} \le \frac{1}{1/e} = e.
\end{align}

For any distributions $\mathbf{q}$, $\mathbf{r}$ on $[n]$, and $k \in [l-1]$ such that $f_k(\mathbf{r})/f_k(\mathbf{q})\in[\frac{3}{4}, \frac{5}{4}]$, from \eq{recursive_eqn}, we have
\begin{align}
    f_{k+1}(\mathbf{q}) &\ge\left(\frac{1}{e}f_k(\mathbf{q})\right)^{1+\frac{1}{\ln n}}\ge \left(\frac{1}{e}\frac{4}{5}f_k(\mathbf{r})\right)^{1+\frac{1}{\ln n}}, \\
    f_{k+1}(\mathbf{q}) &\le f_k(\mathbf{q})^{1+1/\ln(n)}\le \left(\frac{4}{3}f_k(\mathbf{r})\right)^{1+1/\ln(n)}.
\end{align}

Therefore, we have
\begin{align}
\label{eq:annealing_equation_2}
&\max_{\mathbf{r} \in \Delta^n}\frac{\max_{\mathbf{q}\in \Delta^n, f_k(\mathbf{r})/f_k(\mathbf{q})\in [\frac{3}{4}, \frac{5}{4}]} f_{k+1}(\mathbf{q}) }{\min_{\mathbf{q}\in \Delta^n, f_k(\mathbf{r})/f_k(\mathbf{q})\in [\frac{3}{4}, \frac{5}{4}]} f_{k+1}(\mathbf{q})} \nonumber \\
\le & \max_{\mathbf{r} \in \Delta^n} \Bigl(\frac{\frac{4}{3}f_k(\mathbf{r})}{\frac{1}{e}\frac{4}{5}f_k(\mathbf{r})}\Bigr)^{1+1/\ln(n)} \\
= & \Bigl(\frac{5e}{3}\Bigr)^{1+1/\ln(n)} \le 4e^2
\end{align}
for all $k \in [l-1]$.

Setting $\epsilon_k := \frac{1}{4}$ for all $k \in[l-1]$, $\epsilon_l := \min(\frac{1}{2},\frac{(\alpha-1)\epsilon}{2})$, and $c:=4e^2$ in \prop{annealing}, from \eq{annealing_equation_1} and \eq{annealing_equation_2}, we can infer that \eq{annealing_condition} is satisfied.

For all $k\in[l]$, setting $\alpha:=\alpha(1+\frac{1}{\ln(n)})^{k-l}$ in \lem{large_alpha_given_P}, from  \lem{large_alpha_given_P}, there exists an algorithm $\mathcal{A}_k$ which can estimate $f_k(\mathbf{p})$ to within multiplicative error $\epsilon_k$ with success probability at least $1-\frac{\delta}{l}$ using $Q_k = \widetilde{\mathcal{O}}\left(\frac{n^{1-\frac{1}{2\alpha}}}{\epsilon_k} + \frac{\sqrt{n}}{\epsilon_k^{1+\frac{1}{2\alpha}}}\right)$ calls to $U_{\mathrm{pure}}$ and $U_{\mathrm{pure}}^{\dagger}$. These $\mathcal{A}_k$ satisfy the conditions in \prop{annealing}, so we can construct an algorithm to estimate $P_{\alpha}(\mathbf{p})$ to within multiplicative error $\epsilon_l = \Theta(\epsilon)$ using
\begin{align}
    &\sum_{k=1}^l Q_k \nonumber \\
    = & \sum_{k=1}^l \widetilde{\mathcal{O}}\biggl(\frac{n^{1-\frac{1}{2\alpha(1+\frac{1}{\ln n})^{k-l}}}}{\epsilon_k} + \frac{\sqrt{n}}{\epsilon_k^{1+\frac{1}{2\alpha(1+\frac{1}{\ln n})^{k-l}}}}\biggr) \nonumber\\
    = & \widetilde{\mathcal{O}}\biggl((l-1)\biggl(\frac{n^{1-\frac{1}{2\alpha}}}{1/4} + \frac{\sqrt{n}}{(\frac{1}{4})^{1+\frac{1}{2\alpha}}}\biggr) + \frac{n^{1-\frac{1}{2\alpha}}}{\epsilon} + \frac{\sqrt{n}}{\epsilon^{1+\frac{1}{2\alpha}}}\biggr)\nonumber\\
    = &\widetilde{\mathcal{O}}\biggl(\frac{n^{1-\frac{1}{2\alpha}}}{\epsilon} + \frac{\sqrt{n}}{\epsilon^{1+\frac{1}{2\alpha}}}\biggr)\label{eq:complexity-bound-large-alpha}
\end{align}
where the second equation comes from $n^{1-\frac{1}{2\alpha(1+\frac{1}{\ln(n)})^{k-l}}}\le n^{1-\frac{1}{2\alpha}}$ and $\epsilon_k = \frac{1}{4}$ for all $k\in[l-1]$, and the third equation comes from $l = \mathcal{O}(\ln(\alpha))$ and can be omitted in $\widetilde{\mathcal{O}}(\cdot)$.

Denote the estimate of $P_{\alpha}(\mathbf{p})$ by $\widetilde{P}_{\alpha}(\mathbf{p})$, we have $\frac{\widetilde{P}_{\alpha}(\mathbf{p})}{P_{\alpha}(\mathbf{p})}\in[1-\epsilon_l,1+\epsilon_l]$. Let $\widetilde{H}_\alpha(\mathbf{p})=\frac{1}{1-\alpha}\log(\widetilde{P}_{\alpha}(\mathbf{p}))$, then we have
\begin{align}
    |\widetilde{H}_\alpha(\mathbf{p})-H_\alpha(\mathbf{p})|
    = & \frac{1}{\alpha-1}\left|\log(\widetilde{P}_{\alpha}(\mathbf{p}))-\log(P_{\alpha}(\mathbf{p}))\right| \nonumber \\
    = & \frac{1}{\alpha-1}\Bigl|\log\Bigl(\frac{\widetilde{P}_\alpha(\mathbf{p})}{P_\alpha(\mathbf{p})}\Bigr)\Bigr| \nonumber \\
    \le & \frac{2}{\alpha-1}\epsilon_l \le \epsilon,
\end{align}
where the first inequality comes from $|\log(1+x)| \le 2|x|$ for all $x > -\frac{1}{2}$, so $\widetilde{H}_\alpha(\mathbf{p})$ is an estimate of $H_\alpha(\mathbf{p})$ within additive error $\epsilon$.
\end{proof}


\section{R{\'e}nyi Entropy Estimation ($0<\alpha<1$)}\label{sec:alpha-small}
\subsection{Upper bound}

\noindent
In order to approximate $H_{\alpha}(\mathbf{p}) $ within a given additive error $\epsilon$, we need to approximate $P_{\alpha}(\mathbf{p})$ within multiplicative error $\mathcal{O}(\epsilon)$. Note that $P_{\alpha}(\mathbf{p}) = \sum_{i=1}^n p_i ^{\alpha} = \sum_{i=1}^n p_i p_i^{\alpha-1}$ and $\alpha-1 < 1$ for $\alpha < 1$, so we first construct a series of polynomials $S$, such that for any constant $d > 0$, there exists polynomial $S$ such that $\sum_{i=1}^n p_iS(\sqrt{p_i})^2$ is an $\mathcal{O}(\epsilon)$ multiplicative approximation of $\frac{(d\epsilon/n)^{\frac{1}{\alpha}-1}}{4} P_{\alpha}(\mathbf{p})$. 

\begin{lemma}
\label{lem:small_alpha_approx}
  For any $\epsilon\in(0,1)$, $\alpha \in (0,1)$, and constants $d,d' > 0$, the odd polynomial $P'$ in \lem{poly-approx-nega-power} with parameters $(\delta, \varepsilon, c)$ to be $\delta := (\frac{d\epsilon}{n})^{{\frac{1}{2\alpha}}},c := 1-\alpha,\varepsilon := d'\epsilon\delta^{2c}$ has $\deg(S) = \widetilde{\mathcal{O}}\left(\frac{n^{\frac{1}{2\alpha}}}{\epsilon^{\frac{1}{2\alpha}}}\right)$, and satisfies \eq{qsvt_condition} in \thm{qsvt} and
  \begin{align}
      \sum_{i=1}^n \Bigl|p_iS(\sqrt{p_i})^2-\frac{\delta^{2c}}{4}p_i^{\alpha}\Bigr|\le \left(\frac{5}{4}d + 2d'\right)\epsilon \delta^{2c}.
  \end{align}
\end{lemma}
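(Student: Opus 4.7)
The plan is to apply Lemma \ref{lem:poly-approx-nega-power} with the stated parameters to get an odd polynomial $S := P'$ of degree $\mathcal{O}\bigl(\frac{\max[1,c]}{\delta}\log\frac{1}{\varepsilon}\bigr) = \widetilde{\mathcal{O}}(n^{\frac{1}{2\alpha}}/\epsilon^{\frac{1}{2\alpha}})$ (since $c = 1-\alpha < 1$ and $\log(1/\varepsilon)$ contributes only polylog factors). Being odd with $\|S\|_{[-1,1]} \le 1$, it satisfies \eq{qsvt_condition} in \thm{qsvt}. The core approximation guarantee is $\bigl|S(x) - \tfrac{\delta^c}{2} x^{-c}\bigr| \le \varepsilon$ for $x \in [\delta,1]$. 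The goal $p_i S(\sqrt{p_i})^2 \approx \tfrac{\delta^{2c}}{4} p_i^{\alpha}$ amounts to $S(\sqrt{p_i})^2 \approx \tfrac{\delta^{2c}}{4} p_i^{-c} = \bigl(\tfrac{\delta^c}{2}(\sqrt{p_i})^{-c}\bigr)^2$, which is exactly the square of what $S$ approximates, so I split the sum according to whether $\sqrt{p_i} \ge \delta$ or not and handle each piece separately.

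For indices with $\sqrt{p_i} \ge \delta$, I use the factorization $a^2 - b^2 = (a-b)(a+b)$ with $a := S(\sqrt{p_i})$ and $b := \tfrac{\delta^c}{2}(\sqrt{p_i})^{-c}$. On this range $|a| \le 1$ from the uniform bound, and $|b| \le \tfrac{\delta^c}{2} \delta^{-c} = \tfrac{1}{2}$, so $|a+b| \le \tfrac{3}{2}$, while $|a-b| \le \varepsilon$. Hence $\bigl|S(\sqrt{p_i})^2 - \tfrac{\delta^{2c}}{4} p_i^{-c}\bigr| \le \tfrac{3}{2}\varepsilon$, and weighting by $p_i$ and summing gives
\begin{align}
\sum_{\sqrt{p_i}\ge \delta} \Bigl|p_i S(\sqrt{p_i})^2 - \tfrac{\delta^{2c}}{4} p_i^{\alpha}\Bigr| \le \tfrac{3}{2}\varepsilon \sum_i p_i \le \tfrac{3}{2}\varepsilon = \tfrac{3}{2}d'\epsilon \delta^{2c} \le 2 d'\epsilon\delta^{2c}.
\end{align}

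For indices with $\sqrt{p_i} < \delta$, I use the triangle inequality and bound each term separately. Since $|S(\sqrt{p_i})| \le 1$, one has $p_i S(\sqrt{p_i})^2 \le p_i \le \delta^2$; also $\tfrac{\delta^{2c}}{4} p_i^{\alpha} \le \tfrac{\delta^{2c}}{4}\delta^{2\alpha} = \tfrac{\delta^2}{4}$ using $c+\alpha = 1$. Summing over at most $n$ such indices and using the key identity $n\delta^2 = n \cdot (d\epsilon/n)^{1/\alpha} \cdot \delta^{2-2/\alpha} = d\epsilon \cdot \delta^{2c}$ (i.e., $n\delta^{2\alpha} = d\epsilon$ times $\delta^{2c}$), I get
\begin{align}
\sum_{\sqrt{p_i}<\delta} \Bigl|p_i S(\sqrt{p_i})^2 - \tfrac{\delta^{2c}}{4} p_i^{\alpha}\Bigr| \le n\delta^2 + \tfrac{n\delta^2}{4} = \tfrac{5}{4}\,d\epsilon\delta^{2c}.
\end{align}
Adding the two pieces yields the claimed bound $(\tfrac{5}{4}d + 2d')\epsilon\delta^{2c}$.

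The proof is essentially a direct computation; there is no real obstacle beyond correctly identifying the relation $n\delta^{2\alpha} = d\epsilon$ so that the small-$p_i$ contribution automatically scales with the target $\delta^{2c}\epsilon$ rather than becoming the dominant error. I will keep the calculation short and in-line, without introducing new notation beyond $a,b$ in the factorization step.
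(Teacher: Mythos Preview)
Your proposal is correct and follows essentially the same approach as the paper's proof: both split according to whether $\sqrt{p_i}$ lies above or below $\delta$, use the triangle inequality plus $|S|\le 1$ and $p_i\le\delta^2$ on the small-$p_i$ part (combined with $n\delta^{2\alpha}=d\epsilon$), and use the factorization $a^2-b^2=(a-b)(a+b)$ on the large-$p_i$ part. Your bound $|a+b|\le\tfrac{3}{2}$ is in fact slightly sharper than the paper's $|a+b|\le 2$ (since $|b|\le\tfrac12$ rather than just $|b|\le 1$), but you relax it to $2$ anyway to match the stated constant.
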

\begin{proof}
    \lem{poly-approx-nega-power} implies that $S$ is an odd polynomial and $|S(x)| \le 1$ for all $x\in[-1,1]$, so $S$ satisfies \eq{qsvt_condition} in \thm{qsvt}.
    From \lem{poly-approx-nega-power}, $S$ satisfies that
    \begin{align}
    \label{eq:approximation_small_alpha}
        &\forall x\in \left[\delta,1\right]: \Bigl|S(x)-\frac{\delta^{c}}{2}x^{\alpha-1}\Bigr| \le d'\epsilon\delta^{2c}.
    \end{align}
    From \lem{poly-approx-nega-power}, the degree of $S$ is $\deg(S) = \mathcal{O}\left(\frac{\max[1,c]}{\delta}\log(\frac{1}{\epsilon})\right) = \widetilde{\mathcal{O}}\left(\frac{n^{\frac{1}{2\alpha}}}{\epsilon^{\frac{1}{2\alpha}}}\right)$.

  For $i$ such that $\sqrt{p_i} \le \delta$, we have
    \begin{align}
      &\sum_{\sqrt{p_i} \le \delta} \Bigl|p_iS(\sqrt{p_i})^2-\frac{\delta^{2c}}{4}p_i^{\alpha}\Bigr| \nonumber \\
      \le & \sum_{\sqrt{p_i} \le \delta} \left(\Bigl|p_iS(\sqrt{p_i})^2\Bigr| + \Bigl|\frac{\delta^{2c}}{4}p_i^{\alpha}\Bigr|\right) \nonumber\\
      \le & \Bigl(\sum_{\sqrt{p_i} \le \delta}p_i + \frac{\delta^{2c}}{4}p_i^{\alpha}\Bigr)                \nonumber\\
      \le & \sum_{\sqrt{p_i} \le \delta} \left(\delta^{2} + \frac{\delta^{2}}{4}  \right)                   \nonumber \\
      = & \sum_{\sqrt{p_i} \le \delta} \frac{5}{4}\delta^{2\alpha}\delta^{2c} 
      \le  \frac{5}{4}d\epsilon\delta^{2c},
    \end{align}
    \label{eq:little_alpha_little_delta}
  where the second inequality comes from $|S(x)|\le 1$ for all $x\in[-1,1]$, the third inequality comes from $\sqrt{p_i} \le \delta$ and $c = 1-\alpha$,
  and the last inequality comes from $n \delta^{2\alpha} = d\epsilon$.

  For $i$ such that $\sqrt{p_i} > \delta$, we have

    \begin{align}
      &\sum_{\sqrt{p_i} > \delta} \Bigl|p_iS(\sqrt{p_i})^2-\frac{\delta^{2c}}{4}p_i^{\alpha}\Bigr| \nonumber \\
      = &\sum_{\sqrt{p_i} > \delta} p_i\Bigl|S(\sqrt{p_i})-\frac{\delta^{c}}{2}(\sqrt{p_i})^{\alpha-1}\Bigr|\nonumber \\
      \cdot &\Bigl|S(\sqrt{p_i})+\frac{\delta^{c}}{2}(\sqrt{p_i})^{\alpha-1}\Bigr|
      \nonumber\\ \le & 2d'\epsilon\delta^{2c}\sum_{\sqrt{p_i} > \delta} p_i \le  2d'\epsilon\delta^{2c},
    \end{align}
    \label{eq:little_alpha_large_delta}
  where the first inequality comes from \eq{approximation_small_alpha}, $|S(x)|\le 1$ for all $x\in[-1,1]$, and $\frac{\delta^{c}}{2}(\sqrt{p_i})^{\alpha-1}\le 1$ for all $i\in[n]$.

  Combining \eq{little_alpha_large_delta} and \eq{little_alpha_little_delta}, we have
  \begin{align}
      \sum_{i=1}^n \Bigl|p_iS(\sqrt{p_i})^2-\frac{\delta^{2c}}{4}p_i^{\alpha}\Bigr| \le (\frac{5}{4}d + 2d')\epsilon \delta^{2c},
  \end{align}
  which completes the proof.
\end{proof}

As a result, we can give an algorithm for estimating $H_{\alpha}(\mathbf{p})$ to within a given additive error $\epsilon$ with high probability as follows. 
\begin{theorem}\label{thm:main-alpha-small}
  For any $\alpha \in(0,1)$, there exists an algorithm $\mathcal{A}$ such that for any $\delta \in (0,1)$, $\epsilon\in(0,1)$ and probability distribution $\mathbf{p}$ on $[n]$, $\mathcal{A}$ can estimate $H_{\alpha}(\mathbf{p})$ to within additive error $\epsilon$
  with success probability at least $1-\delta$ using $ \widetilde{\mathcal{O}}\left(\frac{n^{\frac{1}{2\alpha}}}{\epsilon^{\frac{1}{2\alpha}+1}}\right)$ calls to $U_{\mathrm{pure}}$ and $U_{\mathrm{pure}}^{\dagger}$ in \defn{pure-state-preparation}. 
\end{theorem}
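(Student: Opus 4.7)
The plan is to reduce estimation of $H_\alpha(\mathbf{p}) = \frac{1}{1-\alpha}\log P_\alpha(\mathbf{p})$ to a $\Theta((1-\alpha)\epsilon)$-multiplicative estimate of $P_\alpha(\mathbf{p})$, and to obtain this estimate by feeding the polynomial from \lem{small_alpha_approx} into the variable-time framework of \algo{main_algo_improved} and \prop{correctness_final}. A crucial simplification versus the $\alpha>1$ case is that no annealing is needed: for $\alpha\in(0,1)$, subadditivity of $x\mapsto x^\alpha$ gives $P_\alpha(\mathbf{p})=\sum_i p_i^\alpha\ge(\sum_i p_i)^\alpha=1$, so a constant lower bound on $P_\alpha$ is available for free.

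First, I will invoke \lem{small_alpha_approx} with $c=1-\alpha$ and $\delta=(\tilde d\epsilon/n)^{1/(2\alpha)}$, choosing constants $\tilde d,\tilde d'$ small enough that $\tfrac{5\tilde d}{4}+2\tilde d'\le\tfrac{1}{4}$. This yields an odd polynomial $S$ satisfying \eq{qsvt_condition} of degree $\widetilde{\mathcal{O}}\bigl(n^{1/(2\alpha)}/\epsilon^{1/(2\alpha)}\bigr)$ with $\bigl|\sum_i p_iS(\sqrt{p_i})^2-\tfrac{\delta^{2c}}{4}P_\alpha(\mathbf{p})\bigr|\le\tfrac{\epsilon}{4}\delta^{2c}$. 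Using $P_\alpha\ge 1$, the target quantity $T:=\sum_i p_iS(\sqrt{p_i})^2$ is a $\Theta(\epsilon)$-multiplicative approximation of $\tfrac{\delta^{2c}}{4}P_\alpha$ and satisfies $T\ge\delta^{2c}/8$, so a $\Theta(\epsilon)$-multiplicative estimate of $T$ transfers directly to $P_\alpha$ and hence to an additive-$\epsilon$ estimate of $H_\alpha$ through the logarithm.

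Second, I will run \algo{main_algo_improved} with $\beta:=1$, $\varphi_j:=2^{-j}$ for $j<m$, $\varphi_m:=0$, and $m=\lceil\log(1/\delta)\rceil+O(1)$ stages. For each $j<m$ I build $S_j$ by applying \lem{poly-approx-nega-power} on $[\varphi_j,1]$ with parameter $\delta:=\varphi_j$ to obtain $P^{(j)}$, and then rescaling $S_j:=(\delta/\varphi_j)^{c}P^{(j)}$; the prefactor $(\delta/\varphi_j)^{c}\le 1$ preserves $\|S_j\|_{[-1,1]}\le 1$ while ensuring $S_j$ approximates the same target $\tfrac{\delta^c}{2}x^{-c}$ that $S$ does, on $[\varphi_j,1]$, with error $(\delta/\varphi_j)^{c}\varepsilon\le\varepsilon$. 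Choosing $\varepsilon\le L\epsilon/2$ with $L=\delta^{2c}/8$ and combining via the triangle inequality with $S$ gives $|S_j-S|\le L\epsilon$ on $[\beta 2^{-j},\beta 2^{-j+2})\subseteq[\varphi_j,1]$, so \eq{S_j_condition} holds; for the final stage I simply take $S_m:=S$. By construction $\deg(S_j)=\widetilde{\mathcal{O}}(2^j)$, so $t_j=\widetilde{\mathcal{O}}(2^j)$ and $t_m=\widetilde{\mathcal{O}}(1/\delta)$.

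Applying \prop{correctness_final} (with the outer $1/\epsilon$ factor inherited from \thm{vtae}), the dyadic bound $W_j\le\varphi_{j-1}^2|Q_j|=4\varphi_j^2|Q_j|$ combined with $t_{j+1}^2=\widetilde{\mathcal{O}}(1/\varphi_j^2)$ yields $W_jt_{j+1}^2=\widetilde{\mathcal{O}}(|Q_j|)$ and hence $\sum_j W_jt_{j+1}^2=\widetilde{\mathcal{O}}(n)$; the variable-time term is therefore $\widetilde{\mathcal{O}}(\sqrt{n}/\sqrt{T})=\widetilde{\mathcal{O}}(\sqrt{n}/\delta^{1-\alpha})$. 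Since $\delta^\alpha=\sqrt{\tilde d\epsilon/n}$, the ratio $t_m\cdot\delta^{1-\alpha}/\sqrt{n}=1/(\sqrt{n}\,\delta^\alpha)=\Theta(1/\sqrt{\epsilon})\ge 1$, so $t_m$ dominates and the total complexity is $\widetilde{\mathcal{O}}(t_m/\epsilon)=\widetilde{\mathcal{O}}(n^{1/(2\alpha)}/\epsilon^{1+1/(2\alpha)})$, as claimed. The main obstacle I anticipate is the careful construction and verification of the $S_j$: we must show that the rescaling simultaneously keeps $\|S_j\|_{[-1,1]}\le 1$ as required by \thm{qsvt}, meets the tight $L\epsilon$ tolerance of \eq{S_j_condition}, and plays well with the dyadic partition so that the telescoping bound $\sum_j W_jt_{j+1}^2=\widetilde{\mathcal{O}}(n)$ actually holds, making the variable-time term non-dominant.
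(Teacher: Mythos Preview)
Your proposal is correct and follows essentially the same approach as the paper's proof: both exploit $P_\alpha(\mathbf{p})\ge 1$ to avoid annealing, build $S$ via \lem{small_alpha_approx}/\lem{poly-approx-nega-power} with threshold $\delta\asymp(\epsilon/n)^{1/(2\alpha)}$, construct the stagewise polynomials by the same rescaling $S_j=(\delta/\varphi_j)^{c}P^{(j)}$ to meet \eq{S_j_condition}, and then apply \prop{correctness_final} with the dyadic bound $\sum_j W_j t_{j+1}^2=\widetilde{\mathcal{O}}(n)$. The paper carries out exactly the verification you flag as the main obstacle (checking $\|S_j\|_{[-1,1]}\le 1$, the $L\epsilon$ tolerance via the triangle inequality through $\tfrac{\delta^c}{2}x^{-c}$, and the lower bound $L=\tfrac18\delta^{2c}$), and arrives at the same conclusion that the $t_m$ term dominates.
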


\begin{proof}
We will first construct such an algorithm $\mathcal{A}$ using \prop{correctness_final}, prove its correctness, and then compute its query complexity.

\noindent
\textbf{Construction and correctness.}
  Let $\epsilon_0 = \min(\frac{1}{2},\frac{(1-\alpha)\epsilon}{4})$, $\delta' = \left(\frac{\epsilon_0}{40n}\right)^{\frac{1}{2\alpha}}$. 
  Before constructing $S$ and $S_j$ in \prop{correctness_final}, we first define the number of stages of our variable-stopping-time quantum algorithm $m_0:=\lceil \log(\frac{1}{\delta'})\rceil+1$, and $\delta_j = 2^{-j}$ for $j = 1,\ldots, m_0-1$, $\delta_{m_0} = \delta_{m_0-1} = 2^{-m_0+1}$.

    Let $S_0$ be the polynomial $S$ in \lem{poly-approx-nega-power} with parameters $(\delta, \varepsilon, c)$ to be $\delta := \delta_{m_0},c := 1-\alpha,\varepsilon := \epsilon_{0}\delta'^{2c}$.
    
    Let $L := \frac{1}{8}\delta_{m_0}^{2c}$, and we will prove that $L$ is an lower bound of $\sum_{i=1}^n p_i S_0(\sqrt{p_i})^2$ later.

    Let $P_j$ for $j=1,\ldots, m_0$ be the odd polynomial $P'$ in \lem{poly-approx-nega-power} with parameters $(\delta, \varepsilon, c)$ to be $\delta := \delta_j,c := 1-\alpha,\varepsilon := \frac{1}{64}\delta_{m_0}^{2c}\epsilon_{0}$. From \lem{poly-approx-nega-power}, we have
    \begin{align}
    \label{eq:equation_P_j}
        \Bigl|P_j(x)-\frac{\delta_j^{c}}{2}x^{\alpha-1}\Bigr| \le \frac{1}{64}\delta_{m_0}^{2c}\epsilon_{0}\forall x\in [\delta_j,1],
    \end{align}
    and $\deg(P_j) = \mathcal{O}\left(\frac{\max [1, c]}{\delta} \log \left(\frac{1}{\varepsilon}\right)\right) = \widetilde{\mathcal{O}}(2^j)$. 

    Now we set the parameters $(S, \beta,L,m,\{S_j\mid j\in[m]\})$ of \algo{main_algo_improved} to be $S:=S_0$, $\beta := 1$, $L:=L$, $m:=m_0$, $S_j := \left(\frac{\delta_{m_0}}{\delta_j}\right)^{c}P_j$ for $j = 1,\ldots,m_0$, and then prove that these parameters satisfy the conditions in \prop{correctness_final}.
    \begin{itemize}[leftmargin=*]
    \item For $\beta$, we have $\sqrt{p_i} \le 1 =\beta$ for all $i\in[n]$.
        \item For $L$, we need to prove that is is a lower bound of $\sum_{i=1}^n p_i S_0(\sqrt{p_i})^2$.
  Note that $\delta_{m_0} = 2^{-m_0+1} = 2^{-\lceil \log(\frac{1}{\delta'})\rceil} \le \delta'\le \left(\frac{\epsilon_{0}}{40n}\right)^{\frac{1}{2\alpha}}$. Let the constants $d,d'$ in \lem{small_alpha_approx} be such that $d = \delta_{m_0}^{2\alpha}\frac{n}{\epsilon_{0}} \le \frac{1}{40}$, $d' = \frac{1}{64}$, then the parameters $(\delta = \delta_{m_0} = \left(\frac{d\epsilon_0}{n}\right)^{\frac{1}{2\alpha}}, c=1-\alpha,\varepsilon=\frac{1}{64}\epsilon_{0} \delta_{m_0}^{2c}= d'\epsilon_0\delta_{m_0}^{2c})$ of $S_0$ satisfy the conditions in \lem{small_alpha_approx}. From \lem{small_alpha_approx}, we have
  \begin{align}
    \label{eq:small_alpha_final_approx}
    \sum_{i=1}^n \Bigl|p_iS_0(\sqrt{p_i})^2-\frac{\delta_{m_0}^{2c}}{4}p_i^{\alpha}\Bigr| \le & (\frac{5}{4}d + 2d')\epsilon_{0} \delta_{m_0}^{2c} \nonumber \\ 
    \le & \frac{1}{16}\epsilon_{0} \delta_{m_0}^{2c} = \frac{1}{2}L\epsilon_{0}.
  \end{align}
From \eq{small_alpha_final_approx}, we have
    \begin{align}
        \sum_{i=1}^n p_iS_0(\sqrt{p_i})^2 &\ge  \frac{\delta_{m_0}^{2c}}{4}P_{\alpha}(\mathbf{p}) - \frac{1}{16} \epsilon_{0} \delta_{m_0}^{2c} \nonumber\\ 
        &\ge \frac{1}{4}\delta_{m_0}^{2c}-\frac{1}{16}\delta_{m_0}^{2c}\nonumber\\
        &\ge  \frac{1}{8}\delta_{m_0}^{2c} = L,
    \end{align}
    where the second inequality comes from $P^{\alpha}(\mathbf{p}) \ge 1$ for $\alpha\in(0,1)$, and $\epsilon_{0} \in (0,1)$.
    \item For $S_j$, they are odd polynomial and satisfy $|S_j(x)| \le 1$ for all $x\in[-1,1]$, which meet the requirements in \thm{qsvt}.
    Note that the parameters of $S_0$ in \lem{poly-approx-nega-power} is the same as the parameters of $S_{m_0}$, so we have $S_0 = S_{m_0}$. For any $x\in [\delta_j,1]$, $S_j$ satisfies
    \begin{align}
        &|S_j(x)-S_0(x)| \nonumber \\
        = & |S_j(x)-S_{m_0}(x)| \nonumber\\
        \le & \Bigl|\Bigl(\frac{\delta_{m_0}}{\delta_j}\Bigr)^{c}P_j(x)-\frac{\delta_{m_0}^{c}}{2}x^{\alpha-1}\Bigr| + \Bigl|P_{m_0}(x)-\frac{\delta_{m_0}^{c}}{2}x^{\alpha-1}\Bigr|\nonumber \\
        \le & \Bigl(\frac{\delta_{m_0}}{\delta_j}\Bigr)^{c}\Bigl|P_j(x)-\frac{\delta_{j}^{c}}{2}x^{\alpha-1}\Bigr| + \Bigl|P_{m_0}(x)-\frac{\delta_{m_0}^{c}}{2}x^{\alpha-1}\Bigr|\nonumber\\
        \le & \Bigl(1+\left(\frac{\delta_{m_0}}{\delta_j}\right)^{c}\Bigr)(\frac{1}{64}\delta^{2c}_{m_0}\epsilon_{0}) \le L\epsilon_{0} \label{eq:equation_S_j_2_small_alpha},
    \end{align}
    where the third inequality comes from \eq{equation_P_j} and $\delta_{m_0} < \delta_j$ for $j < m_0$.
    From \eq{equation_S_j_2_small_alpha}, we can infer that $\text{for any } x\in [ 2^{-j},1]:|S_j(x)-S_0(x)| \le L\epsilon_{0}$, which meets the requirements of \eq{S_j_condition}.
    \end{itemize}

    Therefore, the parameters we set are valid for \prop{correctness_final}, so \algo{main_algo_improved} with the same parameters and input $(\epsilon_{0},\delta)$ can estimate $\sum_{i=1}^n p_i S_0(\sqrt{p_i})^2$ to within multiplicative error $\epsilon_{0}$ within success probability at least $1-\delta$. Denote the estimate by $\tilde{p}$, we have
    \begin{align}
        &|4\delta^{-2c}_{m_0}\tilde{p}-P_{\alpha}(\mathbf{p})| \nonumber \\
        \le & |4\delta^{-2c}_{m_0}\tilde{p}-4\delta^{-2c}_{m_0}\sum_{i=1}^n p_i S_0(\sqrt{p_i})^2| \nonumber \\
        & + |4\delta^{-2c}_{m_0}\sum_{i=1}^n p_i S_0(\sqrt{p_i})^2-P_{\alpha}(\mathbf{p})|\nonumber\\
        \le & 4\delta^{-2c}_{m_0}(\sum_{i=1}^n p_i S_0(\sqrt{p_i})^2)\epsilon_{0} +2\delta^{-2c}_{m_0}L\epsilon_{0}\nonumber\\
        \le & 4\delta^{-2c}_{m_0}(\frac{\delta^{2c}_{m_0}}{4}P_{\alpha}(\mathbf{p})\epsilon_{0} + \frac{1}{2}L\epsilon_{0}^2)+\frac{1}{4}\epsilon_{0}\nonumber\\
        = &P_{\alpha}(\mathbf{p})\epsilon_{0} + \frac{1}{4}(\epsilon_{0}^2 + \epsilon_{0})\le 2P_{\alpha}(\mathbf{p})\epsilon_{0},
    \end{align}
    with success probability at least $1-\delta$, where the second inequality and the third inequality come from \eq{small_alpha_final_approx}, and the last inequality comes from $P_{\alpha} \ge 1$ for any $\alpha \in (0,1)$.  Let $\widetilde{P}_{\alpha}(\mathbf{p}) := 4\delta^{-2c}_{m_0}\tilde{p}$, and then we have $\frac{\widetilde{P}_{\alpha}(\mathbf{p})}{P_{\alpha}(\mathbf{p})}\in[1-2\epsilon_{0},1+2\epsilon_{0}]$. Let $\widetilde{H}_\alpha(\mathbf{p})=\frac{1}{1-\alpha}\log(\widetilde{P}_{\alpha}(\mathbf{p}))$, then we have
    \begin{align}
        |\widetilde{H}_\alpha(\mathbf{p})-H_\alpha(\mathbf{p})| = & \frac{1}{1-\alpha}\Bigl(\log(\widetilde{P}_{\alpha}(\mathbf{p}))-\log(P_{\alpha}(\mathbf{p}))\Bigr)\nonumber \\ 
        = & \frac{1}{1-\alpha}\Bigl|\log\Bigl(\frac{\widetilde{P}_\alpha(\mathbf{p})}{P_\alpha(\mathbf{p})}\Bigr)\Bigr|\nonumber \\ \le & \frac{4}{1-\alpha}\epsilon_{0} = \epsilon,
    \end{align}
    where the first inequality comes from $|\log(1+x)| \le 2|x|$ for all $x > -\frac{1}{2}$.
    
    \vspace{3mm}
    \noindent
    \textbf{Complexity.}
    Now we compute the query complexity of the above algorithm. 
    First, let us compute $t_j$ defined in \prop{correctness_final}. For $t_j$, we have
    \begin{align}
        t_j = & 2^j\log(\frac{m_0}{\epsilon L}) + \sum_{k = 1}^{j}\deg(S_k) \nonumber \\
        = & \widetilde{\mathcal{O}}\left(2^j + \sum_{k = 1}^{j} 2^k\right)= \widetilde{\mathcal{O}}\left(2^j\right),
    \end{align}
    where the second equation comes from $\deg(S_j) =
    \deg(P_j) = \widetilde{\mathcal{O}}\left(2^j\right)$. 

    Let $\varphi_j = 2^{-j}$ for $j = 1,\ldots,m_0-1$ and $\varphi_{m_0}=0$ following the definition in \prop{correctness_final}.
    From \prop{correctness_final}, the query complexity of the above algorithm is 
    \begin{align}
    &\widetilde{\mathcal{O}}\biggl(\frac{1}{\epsilon}\biggl(t_{m_0} + \sqrt{\epsilon} \sum_{j=1}^{m_0} t_j \nonumber \\ 
    & \qquad \quad+ \frac{\sqrt{\sum_{j = 1}^{m_0}\sum_{i:\sqrt{p_i}\in[\varphi_{j},\varphi_{j-1})}p_i t_{j+1}^2}}{\sqrt{\sum_{i = 1}^np_iS(\sqrt{p_i})^2}}\biggr)\biggr) \nonumber\\
    =&\widetilde{\mathcal{O}}\left(\frac{1}{\epsilon}\left(\frac{1}{\delta_{m_0}} + \frac{\sqrt{n + n}}{\delta_{m_0}^{c}}\right)\right) \nonumber\\
    =&\widetilde{\mathcal{O}}\left(\frac{1}{\epsilon}\left(\frac{n^{\frac{1}{2\alpha}}}{\epsilon^{\frac{1}{2\alpha}}} + \frac{\sqrt{n}}{\left(\frac{\epsilon}{n}\right)^{(1-\alpha)/2\alpha}}\right)\right) \nonumber\\
    =&\widetilde{\mathcal{O}}\left(\frac{n^{\frac{1}{2\alpha}}}{\epsilon^{\frac{1}{2\alpha} + 1}}+\frac{n^{\frac{1}{2\alpha}}}{\epsilon^{\frac{1}{2\alpha}+\frac{1}{2}}}\right) 
    = \widetilde{\mathcal{O}}\left(\frac{n^{\frac{1}{2\alpha}}}{\epsilon^{\frac{1}{2\alpha}+1}}\right), 
\end{align}
where the first equation can be derived in a similar way to \eq{averge_query_complexity}.
\end{proof}

\subsection{Lower bound}\label{sec:small-alpha-lower}
\noindent
The Hellinger distance between two discrete probability distributions $p$ and $q$ is defined as 
$\mathrm{d_{H}}(\mathbf{p},\mathbf{q}):=\sqrt{\sum_{i=1}^{n}(\sqrt{p_{i}}-\sqrt{q_{i}})^{2}/2}.
$
In \cite{belovs2019quantum}, they give a lower bound for the query complexity of distinguishing two distributions as follows. 
\begin{lemma}[{\cite[Claim 5]{belovs2019quantum}}]\label{lem:Hellinger}
Quantum query complexity of distinguishing probability distributions $\mathbf{p}$ and $\mathbf{q}$ with pure-state preparation oracle in \defn{pure-state-preparation} is $\Theta\bigl(1/\mathrm{d_H}(\mathbf{p},\mathbf{q})\bigr)$.
\end{lemma}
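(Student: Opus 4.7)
The plan is to prove matching upper and lower bounds of $\Theta(1/\mathrm{d_H}(\mathbf{p},\mathbf{q}))$ separately. Because the distributions $\mathbf{p},\mathbf{q}$ are known, for the upper bound I would construct an explicit preparation unitary $U_{\mathbf{q}}^{*}$ for $\mathbf{q}$ and, given the unknown oracle $U$ (either some $U_{\mathbf{p}}$ or some $U_{\mathbf{q}}$), consider the composite $V := (U_{\mathbf{q}}^{*})^{\dagger} U$. When $U = U_{\mathbf{q}}$ we have $V|\mathbf{0}\rangle = |\mathbf{0}\rangle$, while when $U = U_{\mathbf{p}}$ the overlap is $\langle \mathbf{0}|V|\mathbf{0}\rangle = \langle \psi_{\mathbf{q}}|\psi_{\mathbf{p}}\rangle = 1-\mathrm{d_H}^{2}(\mathbf{p},\mathbf{q})$, so the component of $V|\mathbf{0}\rangle$ orthogonal to $|\mathbf{0}\rangle$ has norm $\mathrm{d_H}(\mathbf{p},\mathbf{q})\sqrt{2-\mathrm{d_H}^{2}(\mathbf{p},\mathbf{q})}$. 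Flagging projection onto $|\mathbf{0}\rangle^{\perp}$ as the ``good'' outcome and invoking fixed-point amplitude amplification (\thm{amplitude_amplification}) with threshold $\delta = \Theta(\mathrm{d_H}(\mathbf{p},\mathbf{q}))$ lifts this amplitude to a constant using $\mathcal{O}(1/\mathrm{d_H}(\mathbf{p},\mathbf{q}))$ calls to $V$ and $V^{\dagger}$; a single projective measurement then separates the two hypotheses with constant bias.

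For the lower bound I would run a standard hybrid argument against an adversarial completion of $U_{\mathbf{p}}$ and $U_{\mathbf{q}}$ on $|\mathbf{0}\rangle^{\perp}$. Observe that $\||\psi_{\mathbf{p}}\rangle - |\psi_{\mathbf{q}}\rangle\| = \sqrt{2}\,\mathrm{d_H}(\mathbf{p},\mathbf{q})$. I would extend both oracles so that they agree on some $(n-2)$-dimensional subspace $W' \subseteq |\mathbf{0}\rangle^{\perp}$, mapping $W'$ identically onto the common codimension-$2$ subspace $W := \mathrm{span}\{|\psi_{\mathbf{p}}\rangle,|\psi_{\mathbf{q}}\rangle\}^{\perp}$, and send the remaining direction $|e\rangle \in |\mathbf{0}\rangle^{\perp} \cap (W')^{\perp}$ to the unique unit vectors $|\psi_{\mathbf{p}}^{\perp}\rangle, |\psi_{\mathbf{q}}^{\perp}\rangle$ lying in $\mathrm{span}\{|\psi_{\mathbf{p}}\rangle,|\psi_{\mathbf{q}}\rangle\}$ and perpendicular to $|\psi_{\mathbf{p}}\rangle$ and $|\psi_{\mathbf{q}}\rangle$ respectively. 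A short two-dimensional calculation, using the orthogonality $\langle \psi_{\mathbf{p}} - \psi_{\mathbf{q}}| \psi_{\mathbf{p}}^{\perp} - \psi_{\mathbf{q}}^{\perp}\rangle = 0$ inside that plane to kill the cross-term, then yields
\[
\|(U_{\mathbf{p}} - U_{\mathbf{q}})|\phi\rangle\|^{2} = 2\,\mathrm{d_H}^{2}(\mathbf{p},\mathbf{q})\bigl(|\langle \mathbf{0}|\phi\rangle|^{2} + |\langle e|\phi\rangle|^{2}\bigr) \le 2\,\mathrm{d_H}^{2}(\mathbf{p},\mathbf{q})
\]
uniformly in the unit vector $|\phi\rangle$. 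Telescoping over $T$ queries gives $\||\Phi_{\mathbf{p}}\rangle - |\Phi_{\mathbf{q}}\rangle\| \le \sqrt{2}\,T\,\mathrm{d_H}(\mathbf{p},\mathbf{q})$, so distinguishing with constant advantage forces $T = \Omega(1/\mathrm{d_H}(\mathbf{p},\mathbf{q}))$ by Fuchs--van de Graaf.

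The main obstacle I anticipate is justifying the adversarial completion in the lower bound: one must check both that the partial isometries prescribed above extend to honest unitaries respecting the input model \defn{pure-state-preparation}, and that this particular completion is genuinely worst-case for the algorithm. Both points collapse to the observation that $|\psi_{\mathbf{p}}\rangle^{\perp}$ and $|\psi_{\mathbf{q}}\rangle^{\perp}$ share an $(n-2)$-dimensional common subspace and differ only on the single 2D plane containing the four privileged directions, so the principal-angle structure is essentially one-dimensional and the per-query displacement is genuinely controlled by $\mathrm{d_H}(\mathbf{p},\mathbf{q})$ alone.
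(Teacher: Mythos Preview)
The paper does not prove this lemma; it is imported verbatim as Claim~5 of Belovs~\cite{belovs2019quantum} and used as a black box in the proof of \thm{lower}. There is therefore no in-paper argument to compare your proposal against.

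For what it is worth, your sketch is essentially the standard proof and is correct. The upper bound via fixed-point amplitude amplification on $(U_{\mathbf{q}}^{*})^{\dagger}U|\mathbf{0}\rangle$ is exactly the quantum state-discrimination routine one expects; note that in the $U=U_{\mathbf{q}}$ case the ``good'' amplitude is exactly zero, so the initial state is a fixed point of the Grover iterate and amplification leaves it untouched, which is what you want. The lower bound via the hybrid argument with adversarial completions agreeing on a codimension-$2$ subspace is also the right idea, and your $2$-dimensional calculation with the orthogonal companions $|\psi_{\mathbf{p}}^{\perp}\rangle,|\psi_{\mathbf{q}}^{\perp}\rangle$ is correct once the signs are chosen so that both pairs are related by the same rotation in the plane $\mathrm{span}\{|\psi_{\mathbf{p}}\rangle,|\psi_{\mathbf{q}}\rangle\}$; this is precisely what makes the cross-term vanish. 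This matches the argument in~\cite{belovs2019quantum}.
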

Then we can the give the following lower bound for estimating $H_{\alpha}(\mathbf{p})$ with pure-state preparation oracle $U_{\mathrm{pure}}$ and $U_{\mathrm{pure}}^{\dagger}$. 

\begin{theorem}\label{thm:lower}
    For any constant $\alpha\in (0,1)$, $n \ge 1+2^{1/(1-\alpha)}$, and $\epsilon\in(0,\frac{1}{2})$, any algorithm that can estimate $H_{\alpha}(\mathbf{p})$ to within additive error $\epsilon$ needs at least $\Omega\left(\frac{n^{1/2\alpha-1/2}}{\epsilon^{1/2\alpha}}\right)$ calls to $U_{\mathrm{pure}}$ and $U_{\mathrm{pure}}^{\dagger}$ in \defn{pure-state-preparation}.
\end{theorem}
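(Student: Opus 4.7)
The plan is a standard reduction to distinguishing. I would exhibit two distributions $\mathbf{p},\mathbf{q} \in \Delta^n$ whose $\alpha$-R\'enyi entropies differ by more than $2\epsilon$ but whose Hellinger distance is $O(\epsilon^{1/2\alpha} n^{1/2-1/2\alpha})$; then \lem{Hellinger} forces any algorithm distinguishing $\mathbf{p}$ from $\mathbf{q}$ (and in particular any $\epsilon$-additive estimator of $H_\alpha$) to use $\Omega(n^{1/2\alpha-1/2}/\epsilon^{1/2\alpha})$ queries to $U_{\mathrm{pure}}$ and $U_{\mathrm{pure}}^\dagger$.

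For the construction, I would take $\mathbf{p}$ to be a point mass on a single element, i.e.\ $p_1 = 1$ and $p_i = 0$ for $i > 1$, so that $P_\alpha(\mathbf{p}) = 1$ and $H_\alpha(\mathbf{p}) = 0$. For $\mathbf{q}$, I would slightly spread mass onto all the other coordinates: set $q_1 = 1-\delta$ and $q_i = \delta/(n-1)$ for $i \ge 2$, where $\delta = C\, \epsilon^{1/\alpha} / n^{(1-\alpha)/\alpha}$ for an absolute constant $C = C(\alpha)$ chosen below. The hypothesis $n \ge 1 + 2^{1/(1-\alpha)}$ together with $\epsilon < 1/2$ guarantees $\delta < 1$, so $\mathbf{q}$ is a valid probability distribution.

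I would then estimate both target quantities. Direct expansion gives
\begin{align*}
P_\alpha(\mathbf{q}) = (1-\delta)^\alpha + (n-1)^{1-\alpha}\delta^\alpha = 1 - \alpha\delta + (n-1)^{1-\alpha}\delta^\alpha + O(\delta^2).
\end{align*}
With $\delta^\alpha = C^\alpha \epsilon/n^{1-\alpha}$, the middle term is $\Theta(C^\alpha \epsilon)$, whereas $\alpha\delta = O(\epsilon^{1/\alpha}/n^{(1-\alpha)/\alpha})$ is of lower order (since $\alpha < 1$ implies $\epsilon^{1/\alpha} \ll \epsilon$ and $n^{1-\alpha}\le n^{(1-\alpha)/\alpha}$). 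Hence $P_\alpha(\mathbf{q}) = 1 + \Theta(C^\alpha \epsilon)$, and by the Taylor expansion of $\log$ at $1$,
\begin{align*}
H_\alpha(\mathbf{q}) = \frac{\log P_\alpha(\mathbf{q})}{1-\alpha} = \Theta\!\left(\frac{C^\alpha \epsilon}{1-\alpha}\right),
\end{align*}
so $C$ can be fixed (depending only on $\alpha$) so that $H_\alpha(\mathbf{q}) \ge 2\epsilon > H_\alpha(\mathbf{p})$. For the Hellinger distance, since $\mathbf{p}$ is a point mass,
\begin{align*}
d_H^2(\mathbf{p},\mathbf{q}) = \tfrac{1}{2}\bigl((1-\sqrt{1-\delta})^2 + (n-1)\cdot \tfrac{\delta}{n-1}\bigr) = \tfrac{1}{2}\bigl(O(\delta^2) + \delta\bigr) = \Theta(\delta),
\end{align*}
giving $d_H(\mathbf{p},\mathbf{q}) = \Theta(\sqrt{\delta}) = \Theta\bigl(\epsilon^{1/2\alpha}/n^{(1-\alpha)/2\alpha}\bigr) = \Theta\bigl(\epsilon^{1/2\alpha}\, n^{1/2 - 1/2\alpha}\bigr)$.

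Finally, any algorithm estimating $H_\alpha$ to additive error $\epsilon$ distinguishes $\mathbf{p}$ from $\mathbf{q}$ (whose entropies differ by at least $2\epsilon$), so by \lem{Hellinger} it makes $\Omega(1/d_H(\mathbf{p},\mathbf{q})) = \Omega(n^{1/2\alpha-1/2}/\epsilon^{1/2\alpha})$ queries, proving the theorem. The main obstacle is to verify rigorously that the $-\alpha\delta$ term in $P_\alpha(\mathbf{q}) - 1$ is strictly dominated by $(n-1)^{1-\alpha}\delta^\alpha$ so that the logarithmic expansion yields a \emph{positive} lower bound on $H_\alpha(\mathbf{q})$, and to choose the constant $C$ so that this expansion beats $2\epsilon$ under the stated range of $n$ and $\epsilon$; both reduce to elementary inequalities once one exploits $\delta^{1-\alpha}/n^{1-\alpha} = O((\epsilon/n)^{(1-\alpha)/\alpha}) = o(1)$ in the regime allowed by the hypotheses.
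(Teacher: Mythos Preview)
Your proposal is correct and follows essentially the same approach as the paper: the paper also takes the point mass versus the $(1-\delta,\frac{\delta}{n-1},\ldots,\frac{\delta}{n-1})$ pair, sets $\delta = (4\epsilon/(n-1)^{1-\alpha})^{1/\alpha}$ explicitly, computes $d_H = \Theta(\sqrt{\delta})$, and invokes \lem{Hellinger}. The only cosmetic difference is that the paper bounds $H_\alpha(\mathbf{q})-H_\alpha(\mathbf{p})\ge 2\epsilon$ by direct elementary inequalities (using $(1-\delta)^\alpha\ge 1-\delta$ and $\log(1+x)\ge x$) rather than a Taylor expansion with an unspecified constant $C$.
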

\begin{proof}
For any $\epsilon\in (0,\frac{1}{2})$, $n > 1+2^{1/(1-\alpha)}$ and $\alpha \in (0,1)$, let $\delta = \left(\frac{4\epsilon}{(n-1)^{1-\alpha}}\right)^{\frac{1}{\alpha}} < 1$. 

Consider $\mathbf{p}=\left(1-\delta, \frac{\delta}{n-1}, \ldots, \frac{\delta}{n-1}\right)$ and $\mathbf{q}=(1,0, \ldots, 0)$. The Hellinger distance of $\mathbf{p}$ and $\mathbf{q}$ is 
\begin{align}
    \mathrm{d_{H}}(\mathbf{p}, \mathbf{q}) = & \sqrt{\frac{1}{2}\biggl((\sqrt{1-\delta}-1)^2+ (n-1)\biggl(\sqrt{\frac{\delta}{n-1}}\biggr)^2 \biggr)} \nonumber \\
    = & \sqrt{\frac{1}{2}\left(\Theta(\delta^2) + \delta\right)}
    = \Theta(\sqrt{\delta})
\end{align}
as $\delta \to 0$, where the second equation comes from $\sqrt{1-x} = 1-\Theta(x)$ as $x\to 0$. By Lemma \ref{lem:Hellinger}, we need $\Omega\left(\frac{1}{\sqrt{\delta}}\right)$ calls to $U_{\mathrm{pure}}$ and $U_{\mathrm{pure}}^{\dagger}$ to distinguish $\mathbf{p}$ and $\mathbf{q}$. 

Then we have
\begin{align}
    &|H_{\alpha}(\mathbf{q})-H_{\alpha}(\mathbf{p})| \nonumber \\
  = & \left|\frac{1}{1-\alpha}\log((1-\delta)^{\alpha} + \delta^{\alpha}(n-1)^{1-\alpha})-0\right| \nonumber\\
  \ge & \left|\frac{1}{1-\alpha}\log(1-\delta + \delta^{\alpha}(n-1)^{1-\alpha})\right| \nonumber\\
  = & \left| \frac{1}{1-\alpha} \log \left( 1 -\delta + 4\epsilon\right)\right|\nonumber\\
  \ge & \left| \frac{1}{1-\alpha} \log \left(1+ 2\epsilon\right)\right|\nonumber\\
  \ge & \frac{2\epsilon}{(1-\alpha)} \ge 2\epsilon,
\end{align}
where the first inequality is because $(1-x)^\alpha \ge 1-x$ as for $x\in(0,1)$ for any $\alpha\in(0,1)$, the second inequality is because $\delta \le \frac{4\epsilon}{(n-1)^{1-\alpha}}\le 2\epsilon$, and the third inequality is because $\log (1+x) \ge x$ for $x\in(0,1)$.

If we can estimate R{\'e}nyi entropy of $\mathbf{p}$ and $\mathbf{q}$ to within additive error $\epsilon$, we can distinguish distributions $\mathbf{p}$ and $\mathbf{q}$, which needs $\Omega(1/\sqrt{\delta})$ queries as proven above. Therefore, it requires $\Omega\left(\frac{1}{\sqrt{\delta}}\right) = \Omega\left(\frac{n^{1/2\alpha-1/2}}{\epsilon^{1/2\alpha}}\right)$ queries to $U_{\mathrm{pure}}$ and $U_{\mathrm{pure}}^{\dagger}$. 
\end{proof}

\textcolor{black}{We note that Acharya et al.~\cite{acharya2016estimating,acharya2019measuring} used the same distribution to prove lower bound of R{\'e}nyi entropy estimation in classical sampling model and quantum sampling model in \defn{q-sample-def}. This is because classically one need $\Theta(\frac{1}{d_{\mathrm{H}}(\mathbf{p},\mathbf{q})^2})$ samples to distinguish $\mathbf{p}$ and $\mathbf{q}$, so the hard instances in the quantum query model and the sampling model are the same. }

\subsection{More discussions about $\epsilon$ dependence}\label{sec:eps-dependence}
\noindent \textbf{$\epsilon$ dependence of estimating $H_{\alpha}(\mathbf{p})$ for $\alpha \in (0,1)$ in \cite{li2019entropy}.\quad}
Note that Belovs~\cite{belovs2019quantum} proved that the lower bound in \lem{Hellinger} of distinguishing probability distributions also holds with the oracle in \defn{discrete-quantum-query}. As a result, our lower bound $\Omega\left(\frac{n^{1/2\alpha-1/2}}{\epsilon^{1/2\alpha}}\right)$ in~\thm{lower} also holds with this oracle. However, a contradiction can be observed between the $\epsilon$ dependency of this lower bound and that of the upper bound in~\cite{li2019entropy} which uses $\widetilde{\mathcal{O}}\left(\frac{n^{1/\alpha-1/2}}{\epsilon^2}\right)$ calls to the oracles in \defn{discrete-quantum-query} and outputs an estimate of $H_{\alpha}(\mathbf{p})$ for $\mathbf{p}\in\Delta^n$ to within additive error $\epsilon$.

We suspect that there is an issue with Eq.~(V.46) in the journal version of~\cite{li2019entropy}. It follows the same proof as in Lemma 2, but in the proof of Lemma 2, the Taylor approximation of $\left(\sin \left(\left(\theta_i+\frac{l}{2^m}\right) \pi\right)\right)^{2(\alpha-1)}$ in Eq.~(V.8)   is not precise  for $\alpha < 1$ when $\theta_i + \frac{l}{2^m}$ is close to $0$ since $x^{2(\alpha-1)}$ diverges at $0$. Here we give a corrected analysis of the bias of the $\alpha$-R{\'e}nyi entropy estimator in~\cite{li2019entropy} when $\alpha \in (0,\frac{1}{2})$. Following the notation in~\cite{li2019entropy}, for $\alpha \in (0,\frac{1}{2})$ and each $i\in S_{j+1}$, in order to bound the Taylor approximation error, we need to treat $l\in(2^{m-1}\theta_i,2^m\theta_i)$ specially. Here we take $\theta_i = 2^j/2^m$ for simplicity, but the following equations hold for general $\theta_i$ when $i\in S_{j+1}$:
\begin{align}
    &p_i \mathbb{E}\left[\left|\tilde{p}_i^{\alpha-1}-p_i^{\alpha-1}\right|\right] \nonumber\\
    = &\mathcal{O}\biggl(\Bigl(\frac{2^j}{2^m}\pi\Bigr)^2\Bigl(\sum_{l= -(2^m-2^{j}),l\neq 0}^{2^{j-1}}+\sum_{l= 2^{j-1}}^{2^{j}}\Bigr)\biggr.\nonumber \\
    &\cdot\biggl.\left(\frac{1}{l^2}\left|\bigl(\sin\bigl(\bigl(\theta_i-\frac{l}{2^m}\bigr) \pi\bigr)\bigr)^{2(\alpha-1)}-(\sin (\theta_i \pi))^{2(\alpha-1)}\right|\right)\biggr) \nonumber\\
    = &\mathcal{O}\biggl(\Bigl(\frac{2^j}{2^m}\pi\Bigr)^2 \sum_{l=  -(2^m-2^{j}),l\neq 0}^{2^{j-1}}\frac{1}{l^2}\frac{|l|}{2^m}\Bigl(\frac{2^j}{2^m} \pi\Bigr)^{2 \alpha-3} \nonumber\\
    &\  +  \Bigl(\frac{2^j}{2^m}\pi\Bigr)^2 \sum_{l= 2^{j-1}}^{2^{j}-1}2^{-2j}\bigl(\sin\bigl(\bigl(\theta_i-\frac{l}{2^m}\bigr) \pi\bigr)\bigr)^{2\alpha-2}\biggr)\nonumber\\
    =&\mathcal{O}\biggl(\frac{m}{2^{2\alpha m}}2^{(2\alpha-1)j} + \sum_{r = 1}^{2^{j-1}}\frac{1}{2^{2m}}\Bigl(\frac{r}{2^m}\Bigr)^{2\alpha-2}\biggr)\nonumber\\
    =&\mathcal{O}\biggl(\frac{m}{2^{2\alpha m}}2^{(2\alpha-1)j} +\frac{m}{2^{2\alpha m}}\biggr),\label{eq:new_bias}
\end{align}
where the second equation comes from
\begin{align}
&\left|\left(\sin(\left(\theta_i-\frac{l}{2^m}\right)\pi)\right)^{2(\alpha-1)}-(\sin (\theta_i \pi))^{2(\alpha-1)}\right| \nonumber \\
\le & c\frac{|l|}{2^m}\left(\theta_i \pi\right)^{2 \alpha-3}
\end{align}
for $\frac{l}{2^m \theta_i} \le \frac{1}{2}$, and we replace $2^j-l$ with $r$ in the third equation. Note that the first term of \eq{new_bias} is the same as equation Eq.~(V.46) in the journal version of~\cite{li2019entropy}, but it is smaller than the second term, so we only need to set $m = \lceil \frac{1}{2\alpha}\log(\frac{\epsilon}{n}\log(\frac{\epsilon}{n}))\rceil$ so that
\begin{align}
\sum_{i=1}^{n}p_i \mathbb{E}\left[\left|\tilde{p}_i^{\alpha-1}-p_i^{\alpha-1}\right|\right] = \mathcal{O}\left(\frac{nm}{2^{2\alpha m}}\right)
\end{align}
is bounded by $\epsilon$. Therefore, the overall complexity of the algorithm in~\cite{li2019entropy} used to estimate $\alpha$-R{\'e}nyi entropy when $\alpha\in(0,\frac{1}{2})$ is 
\begin{align}
\widetilde{\mathcal{O}}\left(\frac{\sqrt{n^{\frac{1}{\alpha}-1}}}{\epsilon}\left(\frac{n}{\epsilon}\right)^{\frac{1}{2\alpha}}\right) = \widetilde{\mathcal{O}}\left(\frac{n^{\frac{1}{\alpha}-\frac{1}{2}}}{\epsilon^{1+\frac{1}{2\alpha}}}\right),
\end{align}
which has the same dependence on $\epsilon$ as that in  $\widetilde{\mathcal{O}}(\frac{n^{\frac{1}{2\alpha}}}{\epsilon^{1+\frac{1}{2\alpha}}})$ in our algorithm (and worse dependence in $n$ than that in our algorithm).
\\\\
\noindent \textbf{$\epsilon$ dependence of estimating $H_{\alpha}(\mathbf{p})$ for $\alpha \in (0,1)$ classically.\quad}
We also find that there might be an issue with the $\epsilon$ dependency of the classical upper bound $\mathcal{O}\left(\frac{n^{1 / \alpha}}{\epsilon^{1 / \alpha} \log n}\right)$ on estimating R{\'e}nyi entropy when $\alpha < 1$ in~\cite{acharya2016estimating}. Specifically, we suspect that the last two $o(1)$ terms in Eq.~(15) and Eq.~(18) of the arXiv version of~\cite{acharya2016estimating} are omitted, but according to Lemma 8, these two terms cannot be omitted unless they are $o(\delta)$. This might increase the order of $\epsilon$ in the current classical upper bound.

\textcolor{black}{Jiao et al.~\cite{jiao2015minimax} also gave a minimax rate-optimal estimator for $\alpha$-power sum $P_{\alpha}$ when $\alpha < 1$ in classical sampling model, since $P_{\alpha}(\mathbf{p}) \ge 1$ this is also an estimator for $\alpha$-R{\'e}nyi entropy. The sample complexity of their estimator is $\mathcal{O}\Bigl(\frac{n^{\frac{1}{\alpha}}}{\log n \epsilon^{\frac{1}{\alpha}}}\Bigr)$ for $\alpha\in(0,\frac{1}{2}]$ and $\mathcal{O}\Bigl(\frac{n^{\frac{1}{\alpha}}}{\log n \epsilon^{\frac{1}{\alpha}}}+\frac{n^{2-2\alpha}}{\epsilon^2}\Bigr)$ for $\alpha\in(\frac{1}{2},1)$. The query complexity of our algorithm in \thm{main} for $\alpha \in (0,1)$ is  $\widetilde{\mathcal{O}}\Bigl(\frac{n^{\frac{1}{2\alpha}}}{\epsilon^{\frac{1}{2\alpha}+1}}\Bigr)$ which is better with respect to both $n$ and $\epsilon$. }


\section{Applications}\label{sec:applications}
\subsection{Extension to quantum entropies}
\noindent
For the diagonal case of purified quantum query-access $U_p$ in \defn{purified-query-access}, we use the block-encoding in \eq{block-2} and denote it by
\begin{align}
    A=\widetilde{\Pi} U \Pi =\sum_{i=1}^{n} \sqrt{p_{i}}\left|\phi_{i}\right\rangle\langle \mathbf{0}|\otimes| i\rangle\langle \mathbf{0}|\otimes| i\rangle\langle i|.
\end{align}
The only difference this new oracle brings is that the state we obtain may have some garbage states added, so we need some ancilla registers to store them. For example, following the process in \sec{main_algo_standard} with purified quantum query-access oracle, we can get a quantum state $|\Psi_p\rangle$ such that
\begin{align}
    |\Psi_{p}\rangle =  &\sum_{i=1}^{n} \sqrt{p_{i}} S\left(\sqrt{p_{i}}\right)|i\rangle_A|i\rangle_B|+\rangle_Q|\phi^{(i)}_{\mathrm{garbage}}\rangle|1\rangle_F \nonumber \\
    &+ |\psi_{\mathrm{garbage}}\rangle|0\rangle_F,
\end{align}
where $|\phi^{(i)}_{\mathrm{garbage}}\rangle = |\phi_i\rangle|\phi_i\rangle$ is brought by the new oracle. Therefore, we can still use the amplitude estimate algorithm to estimate the amplitude of $|1\rangle_F$, which gives us an estimate of $\sum_{i=1}^n p_i S(\sqrt{p_i})^2$. The framework in \sec{annealing} and \sec{main_algo_2} also works well with purified quantum query-access oracle for the same reason. 

For the non-diagonal case of purified quantum query-access $U_{\rho}$ in \defn{purified-query-access}, there are two ways to encode information of $\rho$ by a unitary operator. The first way is to use the \textcolor{black}{projected unitary encoding} in \eq{block-3} \textcolor{black}{proposed by \cite{gilyen2019distributional}}
\begin{align}
  \label{eq:block-encoding-rho-1}
  \widetilde{\Pi}U\Pi=\sum_{i=1}^n \sqrt{\frac{p_i}{n}}\left|\phi_i^{\prime}\right\rangle\langle \mathbf{0}|\otimes| \mathbf{0}\rangle\langle \mathbf{0}|\otimes| \mathbf{0}\rangle\left\langle\psi_i\right|,
\end{align}
and the second is to use the block-encoding in \eq{block-4}
\begin{align}
  \label{eq:block-encoding-rho-2}
  (\langle\mathbf{0}|_{A,B}\otimes I_C)U(|\mathbf{0}\rangle_{A,B}\otimes I_C) &= \sum_{i=1}^n p_i|\psi_i\rangle\langle \psi_i|_C \nonumber\\&= \rho.
\end{align}
The second unitary block-encodes $\rho$ while the first unitary encodes the eigenvalues of $\sqrt{\rho/n}$. Algorithms using different encoding have different query complexities, and we can choose the encoding with a better query complexity. 

We prove the following theorems which give an algorithm to estimate the quantum R{\'e}nyi entropy of density operators 
\begin{align}
    H_{\alpha}(\rho) = \frac{1}{1-\alpha}\log(\Tr(\rho^{\alpha})),
\end{align}
with the purified quantum query-access oracle in \defn{purified-query-access}. 

\begin{corollary}
\label{cor:main-alpha-large-purified}
  For any $\alpha > 1$, there exists an algorithm $\mathcal{A}$ such that for any $\delta \in (0,1)$, $\epsilon\in(0,1)$ and density operator $\rho \in \mathbb{C}^{n\times n}$ , $\mathcal{A}$ can estimate $H_{\alpha}(\rho)$ to within additive error $\epsilon$
  with success probability at least $1-\delta$ using ${\widetilde{\mathcal{O}}\Bigl(\min\Bigl(\frac{n^{\frac{3}{2}-\frac{1}{2\alpha}}}{\epsilon} + \frac{n}{\epsilon^{1+\frac{1}{2\alpha}}}, \frac{n}{\epsilon^{\frac{1}{\alpha}+1}}\Bigr)\Bigr)}$ calls to $U_{\rho}$ and $U_{\rho}^{\dagger}$ in \defn{purified-query-access}.
\end{corollary}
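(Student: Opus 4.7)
The plan is to prove \cor{main-alpha-large-purified} by constructing two separate quantum algorithms, one achieving each of the two bounds inside the outer $\min$, and then simply returning the cheaper of the two for any given input. Both algorithms reuse the annealing + QSVT + variable-time-amplitude-estimation template already built in \sec{alpha-large}, and the only new ingredient is the choice of projected/block encoding of $\rho$ coming from the purified oracle $U_\rho$: either \eq{block-encoding-rho-1}, whose singular values are $\sqrt{p_i/n}$ for the eigenvalues $p_i$ of $\rho$, or \eq{block-encoding-rho-2}, whose block is $\rho$ itself with singular values $p_i$.

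For the first bound $\widetilde{\mathcal{O}}(n^{3/2-1/(2\alpha)}/\epsilon + n/\epsilon^{1+1/(2\alpha)})$ I would replay the proof of \thm{main-alpha-large} essentially verbatim with the encoding of \eq{block-encoding-rho-1} substituted for the pure-state-preparation encoding of \eq{block-1}. Conceptually this treats $\rho$ as a classical ``distribution'' with entries $q_i = p_i/n$ on $n$ indices (summing to $1/n$, not $1$); rescaling the polynomial $S$ of \lem{polynomial_large_alpha} to have $\beta = \sqrt{p^*/n}$ and threshold $\nu$ shrunk by $\sqrt{n}$ multiplies both the polynomial degree and the inverse-amplitude factor that appears in \eq{case-dependent-large-alpha} by an extra $\sqrt{n}$. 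Plugging into the worst-case $\Tr(\rho^\alpha)\ge n^{1-\alpha}$ then turns the classical $\widetilde{\mathcal{O}}(n^{1-1/(2\alpha)}/\epsilon+\sqrt{n}/\epsilon^{1+1/(2\alpha)})$ into exactly the first claimed bound. The annealing of \prop{annealing} carries over once the classical recursive inequality \eq{recursive_eqn} is replaced by its density-matrix analogue, which follows from monotonicity of Schatten $p$-norms:
\begin{equation*}
\Tr(\rho^{\alpha_2})^{\alpha_1/\alpha_2}\le \Tr(\rho^{\alpha_1})\le n^{1-\alpha_1/\alpha_2}\Tr(\rho^{\alpha_2})^{\alpha_1/\alpha_2}\qquad(0<\alpha_1<\alpha_2).
\end{equation*}

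For the second bound $\widetilde{\mathcal{O}}(n/\epsilon^{1+1/\alpha})$ I would instead use the block-encoding of $\rho$ from \eq{block-encoding-rho-2}, apply QSVT directly to the singular values $p_i$ of $\rho$ using a polynomial constructed via \lem{scaled_poly_approx_with_small_value_at_zero} for a suitable power of $x$, and then apply $P(\rho)$ to register $B$ of the purification $|\psi_\rho\rangle$ so that the success-subspace amplitude squared equals $\sum_i p_i P(p_i)^2$, which approximates (up to a harmless scaling constant) $\Tr(\rho^\alpha)$. Standard amplitude estimation then extracts $\Tr(\rho^\alpha)$ to multiplicative error $\epsilon$, and wrapping the whole procedure inside the annealing scheme of \prop{annealing} as in \thm{main-alpha-large} removes the need to know $\Tr(\rho^\alpha)$ in advance. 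The advantage over Algorithm~1 is that now the singular values live in $[0,1]$ rather than in $[0,1/\sqrt{n}]$, which shrinks the required polynomial degree; a careful balance of this gain against the amplitude-estimation cost (which is paid on the raw probability $\sim\Tr(\rho^\alpha)$) gives the second bound. Returning whichever of the two algorithms has smaller worst-case cost yields the stated $\min$.

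The main obstacle will be the polynomial-approximation bookkeeping for Algorithm~2: I need to choose the threshold $\nu$ and error $\eta$ in \lem{scaled_poly_approx_with_small_value_at_zero} so that the tail contribution $\sum_{p_i<\nu^2} p_i^{\alpha}\le n\nu^{2\alpha}$ is dominated by $\epsilon\Tr(\rho^\alpha)$ while the approximation error on $[\nu,1]$ contributes at most $\eta\sum_i p_i = \eta$, and then to verify that the resulting degree, combined with amplitude estimation on a probability $\sim \Tr(\rho^\alpha)$ and the annealing overhead, indeed telescopes to $\widetilde{\mathcal{O}}(n/\epsilon^{1+1/\alpha})$ in the worst case $\Tr(\rho^\alpha)=n^{1-\alpha}$. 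A secondary nuisance is that \eq{block-encoding-rho-2} is not a standard pure-state-preparation encoding, so I must recheck that \algo{main_algo} and \algo{main_algo_improved}, originally written for $U_{\mathrm{pure}}$, still give the correct success-subspace amplitude when driven from $|\psi_\rho\rangle$ rather than from $U_{\mathrm{pure}}|\mathbf{0}\rangle$; fortunately the unknown garbage register added by purified access acts as the identity on all the projectors we measure, so the amplitude identities in \lem{main_algo} go through unchanged.
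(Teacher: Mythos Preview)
Your two-algorithm plan and the choice of encodings match the paper exactly: Algorithm~1 via \eq{block-encoding-rho-1} is handled just as the paper does, rescaling the parameters $(\nu,\beta)$ of the $S_j$ by $1/\sqrt{n}$ and picking up one extra factor of $\sqrt{n}$ in each term of the bound from \thm{main-alpha-large}. For Algorithm~2 the paper also uses \eq{block-encoding-rho-2}, but there is a real gap in your plan: ``standard amplitude estimation'' on a probability ``$\sim\Tr(\rho^\alpha)$'' will not reach $\widetilde{\mathcal{O}}(n/\epsilon^{1+1/\alpha})$.

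Concretely, with the $\rho$-block-encoding the singular values are $p_i$, and approximating $S(x)\approx 2^{-(\alpha+1)/2}(x/p^*)^{(\alpha-1)/2}$ on $[\nu,p^*]$ via \lem{scaled_poly_approx_with_small_value_at_zero} forces $\nu=\Theta\bigl(p^*(\epsilon/n)^{1/\alpha}\bigr)$ (so that the tail $(p^*)^{1-\alpha}\sum_{p_i<\nu}p_i^{\alpha}\le (p^*)^{1-\alpha}n\nu^{\alpha}\le\epsilon p^*$), hence $\deg S=\widetilde{\mathcal{O}}\bigl((n/\epsilon)^{1/\alpha}/p^*\bigr)$. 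After annealing the success probability is $\Theta(p^*)$, not $\Tr(\rho^\alpha)$, so \lem{main_algo} costs
\[
\widetilde{\mathcal{O}}\Bigl(\frac{\deg S}{\epsilon\sqrt{p^*}}\Bigr)
=\widetilde{\mathcal{O}}\Bigl(\frac{(n/\epsilon)^{1/\alpha}}{\epsilon\,(p^*)^{3/2}}\Bigr)
=\widetilde{\mathcal{O}}\Bigl(\frac{n^{3/2-1/(2\alpha)}}{\epsilon^{1+1/\alpha}}\Bigr)
\]
in the worst case $p^*=n^{1/\alpha-1}$. This is dominated by your Algorithm~1 for every $\epsilon\in(0,1)$, so the $\min$ would never select it and the second term of the corollary would be lost. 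The paper obtains the $n/\epsilon^{1+1/\alpha}$ bound by invoking the $r=n$ case of \cor{low-rank-large-alpha}, which reruns the full variable-time machinery of \lem{large_alpha_given_P} and \algo{main_algo_improved} with parameters $(\nu',c',\beta')=(\nu^2,c/2,\beta^2)$; the additive $t_{m_0}=\widetilde{\mathcal{O}}(1/\nu'_{m_0})$ term coming out of \prop{correctness_final} is what delivers $n/\epsilon^{1+1/\alpha}$. So Algorithm~2 must also use VTAE, not just \algo{main_algo}.
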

\begin{proof}
We shall present two algorithms using different block-encodings. Taking the algorithm with smaller query complexity gives the claimed statement.

\textbf{Using the encoding in \eq{block-encoding-rho-1}.}
  Let $p_i$ be the eigenvalues of $\rho$. We use the \textcolor{black}{projected unitary encoding} in \eq{block-encoding-rho-1} which encodes the eigenvalues of $\sqrt{\rho/n}$  and follow the same process in \thm{main-alpha-large} to estimate $\Tr(\rho^\alpha)$ by QSVT and VTAE. The only difference is that we need to replace the polynomials $S_j(x)$ in \lem{large_alpha_given_P} with $S_j'(x)$ defined below. Note that $S_j$ for all $i= 0,\ldots,m_0$ in \lem{large_alpha_given_P} is constructed using \lem{scaled_poly_approx_with_small_value_at_zero}, so let $(c,\nu_j,\beta_j,\eta)$ be the parameters of $S_j$ in \lem{scaled_poly_approx_with_small_value_at_zero}. Let $S_j'$ be the polynomial constructed in \lem{scaled_poly_approx_with_small_value_at_zero} with the parameters $(c,\beta,\nu,\eta)$ to be $c := c$, $\nu:=\frac{\nu_j}{\sqrt{n}}$, $\beta:=\frac{\beta_j}{\sqrt{n}}$, $\eta:=\eta$. Then we can infer that $S_j'(\frac{\sqrt{p_i}}{\sqrt{n}})$ has the same behavior as $S_j(\sqrt{p_i})$ for all $p_i$ with $\sqrt{n}$ times larger degree, so $ \sum_{i=1}^n \frac{p_i}{n}S_0'(\sqrt{\frac{p_i}{n}})^2$
  is also an estimate of $P_{\alpha}(\mathbf{p})/n$ to within multiplicative error $\epsilon$. Therefore, we can obtain an estimate of $H_{\alpha}(\mathbf{p})$ to within additive error $\epsilon$ by rescaling the multiplicative error $\epsilon$ to $c\epsilon$ for some constant $c$. Following the same proof in \lem{large_alpha_given_P} and \thm{main-alpha-large} with $\nu_{m_0}' =  \nu_{m_0}/\sqrt{n}$ and $\sum_{i=1}^n \frac{p_i}{n}S_0'(\sqrt{\frac{p_i}{n}})^2 = \Theta(\frac{1}{n} \sum_{i=1}^n p_i S_0(\sqrt{p_i})^2)$, the query complexity becomes 
  \begin{align}
      \widetilde{\mathcal{O}}\left(\frac{n^{1+\frac{1}{2}-\frac{1}{2\alpha}}}{\epsilon} + \frac{\sqrt{n}\sqrt{n}}{\epsilon^{1+\frac{1}{2\alpha}}}\right) =\widetilde{\mathcal{O}}\left(\frac{n^{\frac{3}{2}-\frac{1}{2\alpha}}}{\epsilon} + \frac{n}{\epsilon^{1+\frac{1}{2\alpha}}}\right) .
  \end{align}

  \textbf{Using the block-encoding of $\rho$ in \eq{block-encoding-rho-2}.} This is a special case of \cor{low-rank-large-alpha} for $r = n$, so its query complexity is $\widetilde{\mathcal{O}}\left(\frac{n}{\epsilon^{\frac{1}{\alpha}+1}}\right)$.
\end{proof}

\begin{corollary}\label{cor:main-alpha-small-purified}
  For any $\alpha \in(0,1)$, there exists an algorithm $\mathcal{A}$ such that for any $\delta \in (0,1)$, $\epsilon\in(0,1)$ and density operator $\rho \in \mathbb{C}^{n\times n}$, $\mathcal{A}$ can estimate $H_{\alpha}(\rho)$ to within additive error $\epsilon$
  with success probability at least $1-\delta$ using $ \widetilde{\mathcal{O}}\left(\frac{n^{\frac{1}{2\alpha}+\frac{1}{2}}}{\epsilon^{\frac{1}{2\alpha}+1}}\right)$ calls to $U_{\rho}$ and $U_{\rho}^{\dagger}$ in \defn{purified-query-access}. 
\end{corollary}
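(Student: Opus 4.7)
The plan is to mirror the proof of \cor{main-alpha-large-purified} but starting from \thm{main-alpha-small} in the regime $\alpha\in(0,1)$. I would use the projected unitary encoding in \eq{block-3}, whose nonzero singular values are $\sqrt{p_i/n}$ where $(p_i)$ are the eigenvalues of $\rho$. Since $H_\alpha(\rho)=H_\alpha(\mathbf{p})$ when $\mathbf{p}$ is the eigenvalue distribution, it suffices to estimate $\Tr(\rho^\alpha)$ to within multiplicative error $\mathcal{O}(\epsilon)$ and then rescale the logarithm. The only genuine change from the classical algorithm is that the singular values live in $[0,1/\sqrt{n}]$ instead of $[0,1]$, so the approximating polynomials must be shrunk accordingly and a $\sqrt{n}$ overhead enters the degree.

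Concretely, I would reuse the scaffolding of \thm{main-alpha-small}, replacing each $P_j$ in the proof by a polynomial $P_j'$ obtained from \lem{poly-approx-nega-power} with parameters $\delta:=\delta_j/\sqrt{n}$, $c:=1-\alpha$, and the same $\varepsilon$ as used for $P_j$. By the lemma, $P_j'$ has degree $\widetilde{\mathcal{O}}(\sqrt{n}\cdot 2^j)$, so $t_j=\widetilde{\mathcal{O}}(\sqrt{n}\cdot 2^j)$ in the variable-stopping-time algorithm, and $P_j'(\sqrt{p_i/n})$ approximates $(\delta_j/\sqrt{n})^{1-\alpha}/2\cdot(p_i/n)^{(\alpha-1)/2}$ whenever $p_i\ge\delta_j^2$. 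This feeds into the analog of \lem{small_alpha_approx}: the sum $\sum_i (p_i/n)[S_0'(\sqrt{p_i/n})]^2$ approximates $(\delta_{m_0}/\sqrt{n})^{2(1-\alpha)}/(4n^{\alpha})\cdot\Tr(\rho^\alpha)$ up to a small additive error. The ``small eigenvalue'' contribution is controlled exactly as before because the threshold condition reduces to $n\delta_{m_0}^{2\alpha}\lesssim\epsilon$, which gives $\delta_{m_0}=\Theta((\epsilon/n)^{1/(2\alpha)})$ as in the classical case; and using $\Tr(\rho^\alpha)\ge 1$ for $\alpha\in(0,1)$ provides a lower bound $L=\Theta(\delta_{m_0}^{2(1-\alpha)}/n)$ for amplitude estimation.

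Having verified the parameters fit \prop{correctness_final}, the complexity calculation proceeds in parallel with \thm{main-alpha-small}. With $\beta:=1/\sqrt{n}$ and $\varphi_j=2^{-j}/\sqrt{n}$, one gets $T_{\max}=t_{m_0}=\widetilde{\mathcal{O}}(\sqrt{n}/\delta_{m_0})=\widetilde{\mathcal{O}}(n^{1/2+1/(2\alpha)}/\epsilon^{1/(2\alpha)})$, and the telescoping bound analogous to \eq{averge_query_complexity} yields $T_{\mathrm{avg}}^2/L=\widetilde{\mathcal{O}}(n)$. Plugging into the query bound of \prop{correctness_final} gives
\begin{align*}
\widetilde{\mathcal{O}}\Bigl(\tfrac{1}{\epsilon}\bigl(t_{m_0}+T_{\mathrm{avg}}/\sqrt{L}\bigr)\Bigr)=\widetilde{\mathcal{O}}\Bigl(\tfrac{n^{1/(2\alpha)+1/2}}{\epsilon^{1/(2\alpha)+1}}+\tfrac{\sqrt{n}\cdot n^{1/(2\alpha)}}{\epsilon\cdot\epsilon^{(1-\alpha)/(2\alpha)}}\Bigr)=\widetilde{\mathcal{O}}\Bigl(\tfrac{n^{1/(2\alpha)+1/2}}{\epsilon^{1/(2\alpha)+1}}\Bigr),
\end{align*}
where the first term dominates. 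The conversion to an additive estimate of $H_\alpha(\rho)$ then proceeds via $|\log(1+x)|\le 2|x|$ with the rescaling $\epsilon_0:=\Theta((1-\alpha)\epsilon)$, identically to the final step of \thm{main-alpha-small}.

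The main obstacle I expect is tracking the rescaling carefully in two places: (i) in the error decomposition of the analog of \lem{small_alpha_approx}, where the ``small'' and ``large'' singular value contributions must each remain $\mathcal{O}(\epsilon L)$ after both the $1/\sqrt{n}$ shrinking of the argument and the implicit $1/n^{\alpha}$ scaling from replacing $P_\alpha(\mathbf{p})\ge 1$ by $\sum_i(p_i/n)^\alpha\ge n^{-\alpha}$; and (ii) verifying the telescoping identity $T_{\mathrm{avg}}^2/L=\widetilde{\mathcal{O}}(n)$ under the new partition $\varphi_j=2^{-j}/\sqrt{n}$, where the product $|Q_j|\varphi_{j-1}^2\cdot t_{j+1}^2$ must still collapse to $\mathcal{O}(|Q_j|)$ after the new degree bound $t_{j+1}=\widetilde{\mathcal{O}}(2^{j+1}\sqrt{n})$ is plugged in. Both reduce to routine bookkeeping once the classical proof is in hand.
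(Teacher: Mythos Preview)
Your proposal is correct and takes essentially the same approach as the paper's proof: use the encoding of \eq{block-3} with singular values $\sqrt{p_i/n}$, rescale the polynomials of \thm{main-alpha-small} at the cost of a $\sqrt{n}$ factor in degree, and rerun the VTAE complexity analysis. One minor slip to fix: the telescoping yields $T_{\mathrm{avg}}^2=\widetilde{\mathcal{O}}(n)$ (not $T_{\mathrm{avg}}^2/L$), which together with $L=\Theta(\delta_{m_0}^{2(1-\alpha)}/n)$ gives $T_{\mathrm{avg}}/\sqrt{L}=\widetilde{\mathcal{O}}(n/\delta_{m_0}^{1-\alpha})$ and matches the second term you wrote.
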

\begin{proof}
    Let $p_i$ be the eigenvalues of $\rho$. The proof is essentially the same as that of \cor{main-alpha-large-purified}. We can construct $S_j'(x)$ with $\sqrt{n}$ times larger degree than $S_j$ in \thm{main-alpha-small} such that $S_j'(\sqrt{\frac{p_i}{n}})$ has the same behavior as $S_j(\sqrt{p_i})$ for all $p_i$. With the block encoding in \eq{block-encoding-rho-1}, we can follow the process in \thm{main-alpha-small} to give an algorithm estimating $H_{\alpha}(\rho)$ to within additive error $\epsilon$ with success probability at least $1-\delta$ using $ \widetilde{\mathcal{O}}\left(\frac{n^{\frac{1}{2\alpha}+\frac{1}{2}}}{\epsilon^{\frac{1}{2\alpha}+1}}\right)$ calls to $U_{\rho}$ and $U_{\rho}^{\dagger}$.

\end{proof}

\subsection{Low-rank cases}\label{sec:low-rank}
\noindent
For low-rank quantum distributions (density matrices) or classical distributions with at most $r$ elements with positive probability, we can adapt our algorithm to obtain better query complexity upper bound. 

\noindent
\textbf{Quantum distributions.}
If the rank of the density operator $\rho\in \mathbb{C}^{n\times n}$ is guaranteed to be $ r = o(n)$, we can apply our framework to estimate the R{\'e}nyi entropy of $\rho$ with $\poly(r) = o(n)$ calls to $U_{\rho}$ and $U_{\rho}^{\dagger}$. 

We will use the block-encoding in \eq{block-4}, which constructing a unitary operator $U$ such that
\begin{align}
  (\langle\mathbf{0}|_{A,B}\otimes I_C)U(|\mathbf{0}\rangle_{A,B}\otimes I_C) &= \sum_{i=1}^n p_i|\psi_i\rangle\langle \psi_i|_C \nonumber\\&= \rho,
\end{align}
with one call to $U_\rho$ and $U_\rho^{\dagger}$ respectively. 

For any polynomial $S$ satisfying \eq{qsvt_condition} in \thm{qsvt}, we can apply the singular value transformed unitary of $U$ to $\sum_{i=1}^n\sqrt{p_i}|\mathbf{0}\rangle_{A,B}|\psi_i\rangle_C|\phi_i\rangle_D$ which outputs \begin{align}
    \sum_{i=1}^n\sqrt{p_i}S(p_i)|\mathbf{0}\rangle_{A,B}|\psi_i\rangle_C|\phi_i\rangle_D + |\psi_{\perp}\rangle,
\end{align}
where $\|(\langle\mathbf{0}|_{A,B}\otimes I_{C,D})|\psi_{\perp}\rangle\| = 0$. 

Note that this process is similar to our application of QSVT in \sec{main_algo_standard} except that we have $S(p_i)$ now rather than $S(\sqrt{p_i})$ in \sec{main_algo_standard}.  Therefore, we can use the techniques in our framework to estimate R{\'e}nyi entropy with some minor changes to the transformation polynomials. 

\begin{corollary}
  \label{cor:low-rank-small-alpha}
For any $\alpha \in(0,1)$, there exists an algorithm $\mathcal{A}$ such that for any $\delta \in (0,1)$, $\epsilon\in(0,1)$, and rank-$r$ density operator $\rho\in \mathbb{C}^{n\times n}$, $\mathcal{A}$ can estimate $H_{\alpha}(\rho)$ to within additive error $\epsilon$
  with success probability at least $1-\delta$ using $\widetilde{\mathcal{O}}\left(\frac{r^{\frac{1}{\alpha}}}{\epsilon^{\frac{1}{\alpha}+1}}\right)$ calls to $U_{\rho}$ and $U_{\rho}^{\dagger}$ in \defn{purified-query-access}.
\end{corollary}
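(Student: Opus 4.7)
The plan is to apply our quantum algorithm framework using the block-encoding of $\rho$ in~\eq{block-encoding-rho-2}, which (unlike the encoding of $\sqrt{\rho/n}$ used in \cor{main-alpha-small-purified}) makes the singular values equal to the eigenvalues $p_i$ themselves. Preparing the purification $\sum_i\sqrt{p_i}\ket{\psi_i}\ket{\phi_i}$ with one call to $U_\rho$ and then applying the singular value transformed circuit for a polynomial $S$ satisfying~\eq{qsvt_condition} produces $\sum_i\sqrt{p_i}\,S(p_i)\ket{\psi_i}\ket{\phi_i}+\ket{\psi_\perp}$ in the appropriate block, so amplitude (and later VTAE) estimation returns $\sum_i p_i S(p_i)^2$. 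To make this proportional to $P_\alpha(\rho)=\Tr(\rho^\alpha)$, we want $S(p_i)^2\approx p_i^{\alpha-1}$, i.e.\ $S(x)\approx x^{(\alpha-1)/2}$; I would therefore invoke \lem{poly-approx-nega-power} with $c:=(1-\alpha)/2$ (rather than $c:=1-\alpha$ as in \lem{small_alpha_approx}), obtaining an odd polynomial of degree $\widetilde{\mathcal{O}}(1/\delta)$ approximating $\tfrac{\delta^{c}}{2}x^{-c}$ on $[\delta,1]$.

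The crucial new calculation is the choice of the cutoff $\delta$. For $p_i\le\delta$ the approximation contributes at most $p_i+\tfrac{\delta^{2c}}{4}p_i^{\alpha}\le p_i+\tfrac{\delta^{2c+\alpha}}{4}$ per index, and since $2c+\alpha=1$ this simplifies to $p_i+\delta/4$. Exploiting that $\rho$ has at most $r$ nonzero eigenvalues, the total small-eigenvalue error is $\mathcal{O}(r\delta)$, which we must force to be $\mathcal{O}(\epsilon\,\delta^{2c})$; this gives $r\delta^\alpha=\mathcal{O}(\epsilon)$, so $\delta=\Theta((\epsilon/r)^{1/\alpha})$ and $\deg(S)=\widetilde{\mathcal{O}}((r/\epsilon)^{1/\alpha})$. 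The lower bound $L$ needed by \algo{main_algo_improved} is obtained from Jensen's inequality: since $x^\alpha$ is concave and only $r$ terms contribute, $P_\alpha(\rho)\ge r^{1-\alpha}$, hence $\sum_i p_i S(p_i)^2\gtrsim\delta^{2c}r^{1-\alpha}$. The rest of the infrastructure --- stage polynomials $S_j$ of degree $\widetilde{\mathcal{O}}(2^j)$ approximating $S_0$ on $[2^{-j},1]$, plus the final conversion from multiplicative error on $P_\alpha$ to additive error on $H_\alpha$ via $|\log(1+x)|\le 2|x|$ --- carries over directly from the proof of \thm{main-alpha-small}.

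The main obstacle will be showing that the VTAE average-time contribution does not dominate. Applying \prop{correctness_final} with the new parameters, the maximum stage complexity is $t_{m_0}=\widetilde{\mathcal{O}}(1/\delta_{m_0})=\widetilde{\mathcal{O}}((r/\epsilon)^{1/\alpha})$, so $t_{m_0}/\epsilon$ already gives the claimed bound $\widetilde{\mathcal{O}}(r^{1/\alpha}/\epsilon^{1/\alpha+1})$. The worry is the term $\sqrt{\sum_j\sum_{p_i\in[\varphi_j,\varphi_{j-1})}p_i\,t_{j+1}^2}\,/\sqrt{L}$: in the worst case all $r$ nonzero eigenvalues are concentrated in the smallest interval, giving numerator $\widetilde{\mathcal{O}}(\sqrt{r/\delta_{m_0}})$, and after dividing by $\sqrt{L}=\Theta(\delta_{m_0}^{(1-\alpha)/2}r^{(1-\alpha)/2})$ the $r$-exponent becomes $(\alpha^2-\alpha+2)/(2\alpha)$, which is strictly less than $1/\alpha=2/(2\alpha)$ because $\alpha^2-\alpha<0$ on $(0,1)$; a parallel comparison shows the $\epsilon$-exponent is also strictly smaller. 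Hence $t_{m_0}/\epsilon$ dominates, and the overall complexity matches the claimed $\widetilde{\mathcal{O}}(r^{1/\alpha}/\epsilon^{1/\alpha+1})$.
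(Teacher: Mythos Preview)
Your approach is the same as the paper's: use the block-encoding of $\rho$ so that the singular values are the eigenvalues $p_i$ themselves, take the approximation polynomial from \lem{poly-approx-nega-power} with exponent $c'=(1-\alpha)/2$ and cutoff $\delta=\Theta((\epsilon/r)^{1/\alpha})$, and then run the machinery of \thm{main-alpha-small} with $n$ replaced by $r$. The paper phrases this as ``replace $(\delta,c)$ by $(\delta^2,c/2)$ so that $S_j'(p_i)=S_j(\sqrt{p_i})$'', which is exactly your choice.

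There is, however, a genuine error in your lower bound $L$. You write ``since $x^\alpha$ is concave and only $r$ terms contribute, $P_\alpha(\rho)\ge r^{1-\alpha}$''. Jensen for the concave function $x^\alpha$ goes the other way: $\tfrac{1}{r}\sum p_i^\alpha\le(\tfrac{1}{r}\sum p_i)^\alpha=r^{-\alpha}$, so in fact $P_\alpha(\rho)\le r^{1-\alpha}$. The correct lower bound is simply $P_\alpha(\rho)\ge 1$ (subadditivity of $x\mapsto x^\alpha$ on $[0,1]$ for $\alpha<1$), which is what \thm{main-alpha-small} already uses. This matters because $L$ is fed into \algo{main_algo_improved} as a parameter (it determines the approximation accuracy of the stage polynomials and the separation precision in \lem{separate_singular_value}), not merely used in the analysis: if $L$ exceeds the true $p_{\mathrm{succ}}$, the algorithm's error guarantees break.

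Fortunately the fix is painless. Take $L=\Theta(\delta_{m_0}^{1-\alpha})$ from $P_\alpha(\rho)\ge 1$. Redoing your VTAE comparison with this smaller $L$, the numerator $\sqrt{r/\delta_{m_0}}$ divided by $\sqrt{L}=\delta_{m_0}^{(1-\alpha)/2}$ gives $\sqrt{r}\,\delta_{m_0}^{-(1-\alpha/2)}$, which after substituting $\delta_{m_0}=(\epsilon/r)^{1/\alpha}$ has $r$-exponent \emph{exactly} $1/\alpha$ (not strictly less, as your miscomputed $L$ suggested) and $\epsilon$-exponent $1/\alpha-1/2$. Dividing by $\epsilon$ yields $r^{1/\alpha}/\epsilon^{1/\alpha+1/2}$, which is still dominated by $t_{m_0}/\epsilon=r^{1/\alpha}/\epsilon^{1/\alpha+1}$ since $\epsilon<1$. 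So the claimed bound $\widetilde{\mathcal{O}}(r^{1/\alpha}/\epsilon^{1/\alpha+1})$ survives unchanged.
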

\begin{proof}
For any $0<\alpha<1$, we need to change all $n$ with $r$ and the parameters $(\delta,c)$ of $S_j$ for $j=0,\ldots,m_0$ in \thm{main-alpha-small} to $\delta' = \delta^2$, $c'= \frac{c}{2}$ so that the new polynomial $S_j'(p_i)$ has the same behavior as $S_j(\sqrt{p_i})$ for all $p_i$. Then following the proof of \thm{main-alpha-small}, we obtain an upper bound on the quantum query complexity of estimating $H_{\alpha}(\rho)$ to within additive error $\epsilon$ with purified quantum query-access to a rank-$r$ density operator $\rho$ as follows.
\begin{align}
  &\widetilde{\mathcal{O}}\left(\frac{1}{\epsilon}\left(\frac{1}{\delta_{m_0}'} + \frac{\sqrt{r + r}}{\delta_{m_0}'^{c'}}\right)\right) \nonumber \\
  = & \widetilde{\mathcal{O}}\left(\frac{1}{\epsilon}\left(\frac{r{\frac{1}{\alpha}}}{\epsilon^{\frac{1}{\alpha}}} + \frac{\sqrt{r}}{\left(\frac{\epsilon}{r}\right)^{\frac{1}{2}(1-\alpha)/\alpha}}\right)\right) \nonumber \\ 
  = & \widetilde{\mathcal{O}}\left(\frac{r^{\frac{1}{\alpha}}}{\epsilon^{\frac{1}{\alpha} + 1}}+\frac{r^{\frac{1}{2\alpha}}}{\epsilon^{\frac{1}{\alpha}+\frac{1}{2}}}\right) 
  = \widetilde{\mathcal{O}}\left(\frac{r^{\frac{1}{\alpha}}}{\epsilon^{\frac{1}{\alpha}+1}}\right).\label{eq:low-rank-small-alpha}
\end{align}
\end{proof}

\begin{corollary}
  \label{cor:low-rank-large-alpha}
For any $\alpha > 1$, there exists an algorithm $\mathcal{A}$ such that for any $\delta \in (0,1)$, $\epsilon\in(0,1)$, and rank-$r$ density operator $\rho\in \mathbb{C}^{n\times n}$, $\mathcal{A}$ can estimate $H_{\alpha}(\rho)$ to within additive error $\epsilon$
  with success probability at least $1-\delta$ using $\widetilde{\mathcal{O}}\left(\frac{r}{\epsilon^{1+\frac{1}{\alpha}}} \right)$ calls to $U_{\rho}$ and $U_{\rho}^{\dagger}$ in \defn{purified-query-access}.
\end{corollary}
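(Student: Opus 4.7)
The plan is to mirror the proof of \cor{low-rank-small-alpha}, but adapted to the large-$\alpha$ case. The key is to use the block-encoding of $\rho$ in \eq{block-4} (which encodes $\rho$ itself, so that QSVT transforms the eigenvalues $p_i$ directly) rather than the projected unitary encoding of $\sqrt{\rho/n}$ in \eq{block-encoding-rho-1}. This removes the $\sqrt{n}$ overhead and effectively replaces $n$ by the rank $r$ in the analysis.

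Concretely, I would rewrite the proof of \thm{main-alpha-large} with the following modifications. Each polynomial $S_j$ arising from \lem{scaled_poly_approx_with_small_value_at_zero} with parameters $(c, \beta, \nu, \eta) = (\alpha-1, \sqrt{p^*}, \nu_j, \eta)$, used so that $S_j(\sqrt{p_i})$ approximates $2^{-\alpha}(p^*)^{(1-\alpha)/2} p_i^{(\alpha-1)/2}$, is replaced by a new polynomial $S_j'$ constructed from the same lemma with parameters $(c/2, p^*, \nu_j^2, \eta)$, so that $S_j'(p_i)$ reproduces the same behavior. The threshold $\nu_{m_0}'$ is then re-chosen using the rank assumption: by \lem{scaled_poly_approx_with_small_value_at_zero}, $|S_0'(x)| \leq 2 f(x)$ on $[0, \nu_{m_0}']$, so the tail contribution is $\sum_{p_i \leq \nu_{m_0}'} p_i S_0'(p_i)^2 \lesssim (p^*)^{1-\alpha} \cdot r \cdot \nu_{m_0}'^{\alpha}$ (the $r$ replacing the $n$ of the original bound). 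Enforcing this to be at most a constant multiple of $\epsilon L$ with $L = \Theta(p^*)$ gives $\nu_{m_0}' = \Theta(p^* \epsilon^{1/\alpha}/r^{1/\alpha})$, hence $\deg(S_{m_0}') = \widetilde{\mathcal{O}}(r^{1/\alpha}/(p^* \epsilon^{1/\alpha}))$.

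Feeding these modified polynomials into \algo{main_algo_improved} and applying the annealing framework \prop{annealing} exactly as in \thm{main-alpha-large}, \prop{correctness_final} gives a query complexity governed by $T_{\max}/\epsilon + T_{\mathrm{avg}}/(\epsilon \sqrt{L})$. The $T_{\max}$-term yields $\deg(S_{m_0}')/\epsilon = \widetilde{\mathcal{O}}(r^{1/\alpha}/(p^* \epsilon^{1+1/\alpha}))$. Taking the worst case $P_\alpha(\rho) = r^{1-\alpha}$ (attained by uniform eigenvalues), we have $p^* = \Theta(r^{(1-\alpha)/\alpha})$, which reduces this term to $\widetilde{\mathcal{O}}(r/\epsilon^{1+1/\alpha})$, the claimed bound.

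The main obstacle will be controlling the $T_{\mathrm{avg}}$ contribution from VTAE so that it remains subdominant. This requires a careful case split for each interval $I_j = [2^{-j} p^*, 2^{-j+1} p^*)$: when $I_j$ could contain up to $r$ eigenvalues, one uses the rank bound $\sum_{i \in I_j} p_i \leq r \cdot 2^{-j+1} p^*$, while otherwise one uses the total-mass bound $\sum_i p_i \leq 1$; taking the minimum in each regime and combining with $t_j \sim 2^j/p^*$, the sum $\sum_j (\sum_{i \in I_j} p_i) t_{j+1}^2$ reduces (after dividing by $L$ and taking a square root) to the same order as the $T_{\max}$-term, completing the proof.
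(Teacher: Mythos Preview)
Your overall plan matches the paper's proof exactly: use the block-encoding of $\rho$ from \eq{block-4}, rebuild each $S_j$ via \lem{scaled_poly_approx_with_small_value_at_zero} with parameters $(c/2,\,p^*,\,\nu_j^2,\,\eta)$ so that $S_j'(p_i)$ plays the role of $S_j(\sqrt{p_i})$, replace $n$ by $r$ in the tail estimate to get $\nu_{m_0}' = \Theta\bigl((\epsilon/r)^{1/\alpha} p^*\bigr)$, and run annealing plus VTAE. Your $T_{\max}$-term analysis, giving $r^{1/\alpha}/(p^*\epsilon^{1+1/\alpha})$ and hence $r/\epsilon^{1+1/\alpha}$ in the worst case, coincides with the paper's first term.

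The gap is your final paragraph: the case split you outline does \emph{not} reduce the $T_{\mathrm{avg}}$-contribution to the order of $T_{\max}$. With $\varphi_j' = 2^{-j}p^*$ and $t_j' = \widetilde{\mathcal{O}}(2^j/p^*)$, the stage-$j$ contribution is at most $\min\bigl(r\cdot 2^{-j+1}p^*,\,1\bigr)\cdot(2^{j+1}/p^*)^2$; summing the mass-bound regime ($2^j < rp^*$) gives $\widetilde{\mathcal{O}}(r^2)$, and summing the rank-bound regime gives $\widetilde{\mathcal{O}}(r\cdot 2^{m_0'}/p^*) = \widetilde{\mathcal{O}}(r/\nu_{m_0}')$. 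At the worst case $P_\alpha = r^{1-\alpha}$ (uniform spectrum, $p^* = r^{1/\alpha-1}$) this is $T_{\mathrm{avg}}^2 = \widetilde{\mathcal{O}}(r^2/\epsilon^{1/\alpha})$, hence $T_{\mathrm{avg}}/(\sqrt{L}\,\epsilon) = \widetilde{\mathcal{O}}\bigl(r^{3/2-1/(2\alpha)}/\epsilon^{1+1/(2\alpha)}\bigr)$, whose $r$-exponent exceeds $1$ for every $\alpha>1$ (e.g.\ $r^{5/4}$ when $\alpha=2$). The structural reason is that in the pure-state setting the singular values are $\sqrt{p_i}$, so the VTAE weight satisfies $p_i\cdot t_j^2 \sim p_i\cdot(1/\sqrt{p_i})^2 = O(1)$; once the singular values become $p_i$ themselves, one gets $p_i\cdot t_j^2 \sim 1/p_i$, and no bookkeeping on $|I_j|$ versus total mass restores that cancellation. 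The paper's own proof is equally brief here --- it writes $\sqrt{r+r}/\sqrt{L}$ by direct analogy with \eq{averge_query_complexity} without redoing the computation --- so you have correctly located the step that needs justification, but the argument you sketch does not supply it.
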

\begin{proof}
For any $\alpha > 1$, we need to change all $n$ to $r$ and the parameters $(\nu,c,\beta)$ of $S_j$ for $j=0,\ldots,m_0$ in \lem{large_alpha_given_P} to $\nu' = \nu^2$, $c' = \frac{c}{2}$, $\beta' = \beta^2$ so that the new polynomial $S_j'(p_i)$ has the same behavior as $2^{\frac{c}{2}}S_j(\sqrt{p_i})$ for all $p_i$. The constant $2^{\frac{c}{2}}$ can be omitted in complexity analysis. Following the same proof in \lem{large_alpha_given_P} and \thm{main-alpha-large}, we can infer that the upper bound is 
\begin{align}
  &\widetilde{\mathcal{O}}\left(\frac{1}{\epsilon}\left(\frac{1}{\nu_{m_0}'} + \frac{\sqrt{r + r}}{\sqrt{L}}\right)\right) \nonumber \\
  = & \widetilde{\mathcal{O}}\left(\frac{1}{\epsilon}\left(\frac{r^{\frac{1}{\alpha}}}{p^*\epsilon^{\frac{1}{\alpha}}} + \frac{\sqrt{r}}{\sqrt{p^*}}\right)\right) \nonumber \\
  = & \widetilde{\mathcal{O}}\left(\frac{r^{\frac{1}{\alpha}}}{(P_{\alpha}(\mathbf{p}))^{\frac{1}{\alpha}}\epsilon^{1+\frac{1}{\alpha}}}+\frac{\sqrt{r}}{(P_{\alpha}(\mathbf{p}))^{\frac{1}{2\alpha}}\epsilon}\right),
\end{align}
which becomes $\widetilde{\mathcal{O}}\left(\frac{r}{\epsilon^{1+\frac{1}{\alpha}}} + \frac{r^{1-\frac{1}{2\alpha}}}{\epsilon}\right) = \widetilde{\mathcal{O}}\left(\frac{r}{\epsilon^{1+\frac{1}{\alpha}}} \right)$ in the worst case. 
\end{proof}

\vspace{3mm}
\noindent
\textbf{Classical distributions.}
For classical distributions, an analogy of low rank density operators is probability distributions $\mathbf{p}$ on $[n]$ such that there are at most $r$ elements $i$ whose probability $p_i > 0$. 
For such probability distributions, if we know $r$ in advance, we can directly obtain the upper bound in \thm{main-alpha-large} and \thm{main-alpha-small} replacing $n$ with $r$, since the proofs of these two theorem still hold if we replace all $n$ with $r$.

If we do not know $r$, we can also use the algorithm in the following corollary to estimate $H_{\alpha}(\mathbf{p})$ for any $\alpha > 1$.

\begin{corollary}
\label{cor:large-alpha-not-r}
  For any $\alpha > 1$, there exists an algorithm $\mathcal{A}$ such that for any $\delta \in (0,1)$, $\epsilon\in(0,1)$ and probability distribution $\mathbf{p}$ on $[n]$ with at most $r$ elements having $p_i>0$, $\mathcal{A}$ can estimate $H_{\alpha}(\mathbf{p})$ to within additive error $\epsilon$
  with success probability at least $1-\delta$ using $\widetilde{\mathcal{O}}\left(\frac{r^{1-\frac{1}{2\alpha}}}{\epsilon} + \frac{\sqrt{r}}{\epsilon^{1+\frac{1}{2\alpha-2}}}\right)$ calls to $U_{\mathrm{pure}}$ and $U_{\mathrm{pure}}^{\dagger}$ in \defn{pure-state-preparation}.
\end{corollary}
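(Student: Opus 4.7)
The plan is to follow the proof of \thm{main-alpha-large} almost verbatim, but to replace the $n$-dependent bound on the tail sum $\sum_{\sqrt{p_i}\le\nu}p_i^\alpha$ used in \lem{polynomial_large_alpha} (see \eq{large_alpha_little_delta}) by the support-size-free bound
\[
\sum_{\sqrt{p_i}\le\nu}p_i^\alpha\;\le\;\nu^{2(\alpha-1)}\sum_{\sqrt{p_i}\le\nu}p_i\;\le\;\nu^{2(\alpha-1)},
\]
valid for every probability distribution when $\alpha>1$. Propagating this through the error budget of \lem{polynomial_large_alpha} yields a new threshold $\nu=\Theta\bigl((\epsilon(p^*)^\alpha)^{1/(2(\alpha-1))}\bigr)$ and a polynomial $S$ of degree $\widetilde{\mathcal{O}}\bigl(\epsilon^{-1/(2(\alpha-1))}(p^*)^{-\alpha/(2(\alpha-1))}\bigr)$ that references neither $n$ nor $r$.

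Next, I would plug this modified polynomial into \algo{main_algo_improved} and rerun the variable-stopping-time analysis of \lem{large_alpha_given_P} and \prop{correctness_final}. The weighted-sum term in \eq{averge_query_complexity} automatically collapses onto the support of $\mathbf{p}$, because only indices with $p_i>0$ can populate any bin $Q_j$; consequently $\sum_j|Q_j|\le r$ even though $r$ is never passed to the algorithm. In the worst case $p^*=\Theta(r^{-(\alpha-1)/\alpha})$ (uniform on a size-$r$ support), this gives $1/\nu_{m_0}=\Theta(\sqrt{r}/\epsilon^{1/(2\alpha-2)})$ and $\sqrt{r/L}=\Theta(r^{1-1/(2\alpha)})$, so the per-stage cost becomes $\widetilde{\mathcal{O}}\bigl(r^{1-1/(2\alpha)}/\epsilon+\sqrt{r}/\epsilon^{1+1/(2\alpha-2)}\bigr)$, matching the claim.

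To remove the rough-estimate requirement I would invoke \prop{annealing} along the chain $\alpha_k=\alpha(1+1/\ln n)^{k-l}$ exactly as in \thm{main-alpha-large}: the first stage $\alpha_1\le 1+1/\ln n$ satisfies $P_{\alpha_1}(\mathbf{p})\in[1/e,1]$, so the trivial constant estimator already meets the annealing condition at zero query cost, and the final additive-to-multiplicative conversion via $|\log(1+x)|\le 2|x|$ transfers unchanged. The hardest part of the proof will be the intermediate stages whose $\alpha_k-1$ is close to $1/\ln n$: when the modified algorithm is run with constant $\epsilon_k=\tfrac14$, the prefactor $4^{1/(2(\alpha_k-1))}$ can otherwise blow up super-polylogarithmically. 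I expect to handle these by widening the `output-a-constant' regime of the annealing (loosening the constant $c$ in \prop{annealing} to absorb the larger ratio $\max P_{\alpha_k}/\min P_{\alpha_k}$) so that actual queries are issued only once $\alpha_k-1$ is above a $1/\mathrm{polylog}(n)$ threshold, beyond which $4^{1/(2(\alpha_k-1))}$ is itself polylogarithmic and absorbed into $\widetilde{\mathcal{O}}(\cdot)$.
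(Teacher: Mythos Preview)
Your core modification---replacing the $n$-dependent tail bound in \eq{large_alpha_little_delta} by $\sum_{\sqrt{p_i}\le\nu}p_i^\alpha\le\nu^{2(\alpha-1)}$ and choosing $\nu=\Theta\bigl((\epsilon P)^{1/(2\alpha-2)}\bigr)$---is exactly what the paper does, and the downstream VTAE analysis (with $\sum_j|Q_j|\le r$ automatically) is the same. So on the main point your proposal and the paper's proof coincide.

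You are right to flag the intermediate annealing stages; the paper simply asserts that ``following the proof in \thm{main-alpha-large}'' the rough estimate comes at $\mathcal{O}(\ln\alpha)$ overhead, without revisiting how the new $\epsilon$-exponent $1+\tfrac{1}{2\alpha_k-2}$ behaves along the chain $\alpha_k\downarrow 1$. However, your proposed fix does not close the gap. For $4^{1/(2(\alpha_k-1))}$ to be polylogarithmic you need $\alpha_k-1=\Omega(1/\log\log n)$, not merely $1/\mathrm{polylog}(n)$; but if you widen the ``output a constant'' regime to cover all $\alpha_k$ with $\alpha_k-1\le 1/\log\log n$, the ratio $\max_{\mathbf q}P_{\alpha_k}(\mathbf q)/\min_{\mathbf q}P_{\alpha_k}(\mathbf q)=r^{\alpha_k-1}$ can be as large as $r^{1/\log\log n}\le n^{1/\log\log n}$, which is not bounded by any constant (or even any polylog) and so violates the hypothesis \eq{annealing_condition} of \prop{annealing}. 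In other words, the two requirements---$4^{1/(2(\alpha_k-1))}$ polylog and $r^{\alpha_k-1}$ bounded---pull the threshold in opposite directions, and there is no single choice of $\alpha_k-1$ that satisfies both. You will need a different argument here: for instance, running the \emph{unmodified} estimator of \lem{large_alpha_given_P} (whose $\epsilon$-exponent stays bounded) for the intermediate stages while still exploiting $\sum_j|Q_j|\le r$ in \eq{averge_query_complexity}, or redesigning the annealing schedule so that $\epsilon_k$ itself approaches $1$ as $\alpha_k\to 1$.
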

\begin{proof}
Changing the parameter $\nu$ to $\nu = \Theta((\epsilon P)^{\frac{1}{2\alpha-2}})$ in \lem{polynomial_large_alpha}, we can replace \eq{large_alpha_little_delta} with 
\begin{align}
  & \sum_{\sqrt{p_i} \le \nu} |p_iS(\sqrt{p_i})^2-2^{-2\alpha}\beta^{-2\alpha+2}p_i^{\alpha}| \nonumber \\
  \le & \sum_{\sqrt{p_i} \le \nu}|p_iS(\sqrt{p_i})^2| + |2^{-2\alpha}\beta^{-2\alpha+2}p_i^{\alpha}| \nonumber \\
  \le & \sum_{\sqrt{p_i} \le \nu} p_i2^{-2\alpha+2}\beta^{-2\alpha+2}p_i^{\alpha-1} + 2^{-2\alpha}\beta^{-2\alpha+2}p_i^{\alpha} \nonumber   \\
  \le & \sum_{\sqrt{p_i} \le \nu} 2^{-2\alpha}\beta^{-2\alpha+2}5p_i^{\alpha} \nonumber  \\
  \le & 2^{-2\alpha}\beta^{-2\alpha+2}5(\frac{1}{\nu^2}\nu^{2\alpha}) \nonumber \\
  \le & 2^{-2\alpha}\beta^{-2\alpha+2}5P\epsilon \nonumber \\
  =& \Theta((p^*)^{1-\alpha}(p^*)^{\alpha}\epsilon) = \Theta(p^*\epsilon),
\end{align}
  so \lem{polynomial_large_alpha} still holds with this $\nu$. Then the algorithm in \lem{large_alpha_given_P} setting $\nu_0$ therein to be $\Theta((\epsilon P)^{\frac{1}{2\alpha-2}})$ can also estimate $P_{\alpha}(\mathbf{p})$ within multiplicative error $\mathcal{O}(\epsilon)$ given a rough bound of $P_{\alpha}(\mathbf{p})$, and the query complexity becomes
  \begin{align}
    &\widetilde{\mathcal{O}}\biggl(\frac{1}{\epsilon}\biggl(t_{m_0} + \sqrt{\epsilon} \sum_{j=1}^{m_0} t_j \nonumber \\ 
    &\qquad\quad + \frac{\sqrt{\sum_{j = 1}^{m_0}\sum_{i:\sqrt{p_i}\in[\varphi_{j},\varphi_{j-1})}p_i t_{j+1}^2}}{\sqrt{\sum_{i = 1}^np_iS_0(\sqrt{p_i})^2}}\biggr)\biggr) \nonumber\\
    =&\widetilde{\mathcal{O}}\left(\frac{1}{\epsilon}\left(\frac{1}{\nu_{m_0}} + \frac{\sqrt{r+r}}{\sqrt{L}}\right)\right) \nonumber\\
    =&\widetilde{\mathcal{O}}\left(\frac{1}{\epsilon}\left(\frac{1}{\epsilon^{\frac{1}{2\alpha-2}}P_{\alpha}(\mathbf{p})^{\frac{1}{2\alpha-2}}} + \frac{\sqrt{r}}{\sqrt{p^*}}\right)\right) \nonumber\\
    =&\widetilde{\mathcal{O}}\left(\frac{1}{\epsilon^{\frac{1}{2\alpha-2}}P_{\alpha}(\mathbf{p})^{\frac{1}{2\alpha-2}}}+\frac{\sqrt{r}}{(P_{\alpha}(\mathbf{p}))^{\frac{1}{2\alpha}}\epsilon}\right), 
\end{align}
where the $\nu_{m_0}$ in the second line is bounded by \eq{nu-m-0}.
Then following the proof in \thm{main-alpha-large}, we can remove the requirement for $P,a,b$ with an $\mathcal{O}(\ln(\alpha))$ overhead in query complexity which can be absorbed into the $\widetilde{\mathcal{O}}$ notation. In the worst case when $P_{\alpha}(\mathbf{p}) = r^{1-\alpha}$, the query complexity becomes $\widetilde{\mathcal{O}}\left(\frac{r^{1-\frac{1}{2\alpha}}}{\epsilon} + \frac{\sqrt{r}}{\epsilon^{1+\frac{1}{2\alpha-2}}}\right)$. 
\end{proof}

\subsection{Quantum R{\'e}nyi divergence}
\noindent
For any rank-$r$ density operators $\rho,\sigma$, M{\"u}ller-Lennert et al.~\cite{muller2013quantum} defined a generalization of the $\alpha$-R{\'e}nyi divergence:
\begin{align}
D_\alpha(\rho \| \sigma):= \begin{cases}\frac{1}{\alpha-1} \log \left( \operatorname{Tr}\left[\left(\sigma^{\frac{1-\alpha}{2 \alpha}} \rho \sigma^{\frac{1-\alpha}{2 \alpha}}\right)^\alpha\right]\right) \\ \qquad\qquad\qquad\qquad\qquad \text { if } \mathrm{Tr}(\rho\sigma)\neq 0 \\ \infty \qquad\qquad\quad\qquad\qquad \text { else }\end{cases}
\end{align}for $\alpha \in(0, 1)$ and prove that it has some good properties. This quantum R{\'e}nyi entropy is also a generalization of fidelity since $D_{\frac{1}{2}}(\rho\|\sigma) = -2\log(F(\rho,\sigma))$. 

Given oracles $U_{\sigma},U_{\rho}$ to prepare purification of the mixed states $\sigma$ and $\rho$, the techniques used to estimate $\mathrm{Tr}(\rho^{\alpha})$ in low-rank cases can be directly applied to estimate $\mathrm{Tr}(\sigma^{\beta}\rho\sigma^{\beta})$ for any  $\beta > 0$. In fact, we can implement a unitary $\tilde{U}$ preparing the purification of $\sigma^{\beta}\rho\sigma^{\beta}$ with $U_{\rho}$ and $U_{\sigma}$ within $\delta^\beta + \epsilon$ additive error in spectral norm using $\mathcal{O}\left(\frac{1}{\delta}\log(\frac{1}{\epsilon})\right)$ calls to $U_{\sigma}$ and two calls to $U_{\rho}$ by choosing the transformation polynomial of $U_\sigma$ to be the one in \lem{scaled_poly_approx_with_small_value_at_zero} with parameters $c:=\beta,\beta:=1,\nu:=\delta,\eta:=\epsilon$. Using amplitude amplification, we can then implement a unitary $U'$ preparing the purification of $\frac{\sigma^{\beta}\rho\sigma^{\beta}}{\mathrm{Tr}(\sigma^{\beta}\rho\sigma^{\beta})}$ within $\epsilon$ error using $\mathcal{O}(\frac{1}{\sqrt{\mathrm{Tr}(\sigma^{\beta}\rho\sigma^{\beta})}}\log(\frac{1}{\epsilon}))$ calls to $\tilde{U}$. 

Therefore, we can apply our results to estimating quantum R{\'e}nyi divergence in the following two steps: 
\begin{enumerate}
    \item Construct a unitary $U’$ which is a block-encoding of $A = \frac{\sigma^{\frac{1-\alpha}{2 \alpha}} \rho \sigma^{\frac{1-\alpha}{2 \alpha}}}{\mathrm{Tr}(\sigma^{\frac{1-\alpha}{2 \alpha}} \rho \sigma^{\frac{1-\alpha}{2 \alpha}})}$  by $U_{\rho}$ and singular value transformed $U_{\sigma}$.
    
    \item Estimate $D_\alpha(\rho \| \sigma)=\frac{1}{\alpha-1} \log((\mathrm{Tr}(\sigma^{\frac{1-\alpha}{2 \alpha}} \rho \sigma^{\frac{1-\alpha}{2 \alpha}}))^{\alpha}\operatorname{Tr}(A^{\alpha}))$ with $U’$ using our techniques to estimate quantum $\alpha$-R{\'e}nyi entropy in \sec{low-rank}.
\end{enumerate}

\begin{corollary}
\textcolor{black}{For any $\epsilon\in (0,1),\alpha \in (0,1)$ and two density operators $\rho,\sigma$ with rank at most $r$, there is an algorithm $\mathcal{A}$ using }
\begin{align}
\widetilde{\mathcal{O}}\left(\frac{1}{\sqrt{\mathrm{Tr}(\sigma^{\beta}\rho\sigma^{\beta})}}\frac{r^{\frac{1}{\alpha}}}{\epsilon^{1+\frac{1}{\alpha}}}\right)
\end{align}
calls to $U_{\rho}$ in \defn{purified-query-access} and 
\begin{align}
\widetilde{\mathcal{O}}\left(\frac{1}{(\mathrm{Tr}((\sigma^{\beta}\rho\sigma^{\beta})^{\alpha}))^{\frac{1}{\alpha\beta}}}\frac{1}{\sqrt{\mathrm{Tr}(\sigma^{\beta}\rho\sigma^{\beta})}}\frac{r^{\frac{1}{\alpha}+\frac{1}{\alpha\beta}}}{\epsilon^{1+\frac{1}{\alpha}+\frac{1}{\alpha\beta}}}\right)
\end{align}
calls to $U_{\sigma}$ in \defn{purified-query-access} to estimate $D_{\alpha}(\rho\|\sigma)$ to error $\epsilon$ with high probability, where $\beta = (1-\alpha)/2\alpha$. 
\end{corollary}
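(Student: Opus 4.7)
The plan is to follow the two-stage recipe sketched between \cor{low-rank-large-alpha} and this corollary and to fill in the parameter choices. First I would build a block encoding of the normalized operator $A := \sigma^{\beta}\rho\sigma^{\beta}/\Tr(\sigma^{\beta}\rho\sigma^{\beta})$. Applying QSVT to $U_\sigma$ with the bounded polynomial approximation of $x^{\beta}$ on $[\delta,1]$ supplied by \lem{scaled_poly_approx_with_small_value_at_zero} yields a block encoding of $\sigma^{\beta}$ within additive error $\delta^{\beta}+\epsilon_1$ in spectral norm using $\widetilde{\mathcal{O}}(1/\delta)$ calls to $U_\sigma$. Sandwiching this with one call each to $U_\rho$ and $U_\rho^\dagger$, in the style of \eq{block-4}, gives a unitary $\tilde U$ that prepares a purification of an approximation to $\sigma^{\beta}\rho\sigma^{\beta}$. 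Fixed-point amplitude amplification (\thm{amplitude_amplification}) then normalizes, producing the desired $U'$ which prepares a purification of $A$ using $\mathcal{O}(1/\sqrt{\Tr(\sigma^{\beta}\rho\sigma^{\beta})})$ calls to $\tilde U$.

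Second, I would treat $U'$ as a purified quantum query-access oracle for the rank-$r$ state $A$ and invoke \cor{low-rank-small-alpha} to estimate $\Tr(A^{\alpha})$ to suitable multiplicative error using $\widetilde{\mathcal{O}}(r^{1/\alpha}/\epsilon^{1/\alpha+1})$ queries to $U'$. In parallel, standard amplitude estimation on $\tilde U$ returns $\Tr(\sigma^{\beta}\rho\sigma^{\beta})$ at a cost that is subdominant. Inserting both estimates into
\begin{align}
D_{\alpha}(\rho\|\sigma) = \frac{1}{\alpha-1}\log\bigl((\Tr(\sigma^{\beta}\rho\sigma^{\beta}))^{\alpha}\cdot\Tr(A^{\alpha})\bigr),
\end{align}
and converting multiplicative error on the power sum into additive error on the logarithm exactly as at the end of \thm{main-alpha-small} delivers the final estimate of $D_{\alpha}(\rho\|\sigma)$ within additive error $\epsilon$.

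The hard part will be pinning down $\delta$. Because $\alpha\in(0,1)$ makes $x^{\alpha}$ non-Lipschitz at the origin, a spectral-norm perturbation of size $\delta^{\beta}$ in $\sigma^{\beta}\rho\sigma^{\beta}$ can shift $\Tr((\sigma^{\beta}\rho\sigma^{\beta})^{\alpha})$ by $\mathcal{O}(r\,\delta^{\alpha\beta})$ via $|x^{\alpha}-y^{\alpha}|\le|x-y|^{\alpha}$ combined with the rank-$r$ bound on the spectrum of $\sigma^{\beta}\rho\sigma^{\beta}$. Demanding that this be a small multiplicative error on $\Tr((\sigma^{\beta}\rho\sigma^{\beta})^{\alpha})$ forces
\begin{align}
\frac{1}{\delta} = \widetilde{\mathcal{O}}\Bigl(\bigl(r/(\epsilon\,\Tr((\sigma^{\beta}\rho\sigma^{\beta})^{\alpha}))\bigr)^{1/(\alpha\beta)}\Bigr),
\end{align}
which is precisely the extra per-$\tilde U$ overhead in $U_\sigma$ queries appearing in the claimed bound. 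Multiplying the three per-stage factors — amplification $1/\sqrt{\Tr(\sigma^{\beta}\rho\sigma^{\beta})}$, low-rank R{\'e}nyi estimation $r^{1/\alpha}/\epsilon^{1+1/\alpha}$, and the $1/\delta$ overhead that multiplies only the $\sigma$ count — then reproduces both stated complexities, while the $U_\rho$ count is unaffected because each $\tilde U$ contains only $\mathcal{O}(1)$ calls to $U_\rho$.
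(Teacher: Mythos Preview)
Your proposal is correct and follows essentially the same approach as the paper: build $\tilde U$ from a QSVT approximation of $\sigma^{\beta}$ sandwiched with $U_\rho$, amplify to get $U'$, then invoke \cor{low-rank-small-alpha} and choose $\delta$ so that the induced error in $\Tr((\sigma^{\beta}\rho\sigma^{\beta})^{\alpha})$ is multiplicatively $O(\epsilon)$. The only difference is that the paper cites \lem{power-trace-diff} (from \cite{wang2022new}) for the perturbation bound $|\Tr(A^{\alpha})-\Tr(B^{\alpha})|\le 5r\|A-B\|^{\alpha}$, which is the rigorous matrix version of your scalar inequality $|x^{\alpha}-y^{\alpha}|\le|x-y|^{\alpha}$ plus rank bound; otherwise the parameter choices and final query counts coincide.
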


\begin{proof}
  The error analysis is similar to that in Section 4.2 of \cite{wang2022new}. 
  According to \eq{low-rank-small-alpha}, we need $\widetilde{\mathcal{O}}\left(\frac{r^{\frac{1}{\alpha}}}{\epsilon^{1+\frac{1}{\alpha}}}\right)$ calls to $U'$ to obtain an estimate of $\operatorname{Tr}(A^{\alpha})$ within $\epsilon$ multiplicative error. With this estimate, we can then calculate an estimate of $D_\alpha(\rho \| \sigma)$ within $\epsilon$ additive error. 
  
  Each call to $U'$ uses $\tilde{U}$ $\mathcal{\mathcal{O}}
  \left(\frac{1}{\sqrt{\mathrm{Tr}(\sigma^{\beta}\rho\sigma^{\beta})}}\right)$ times since it amplifies $\sigma^{\beta}\rho\sigma^{\beta}$ to $\sigma^{\beta}\rho\sigma^{\beta}/\mathrm{Tr}(\sigma^{\beta}\rho\sigma^{\beta})$. 
  
  Each call to $\tilde{U}$ uses two calls to $U_{\rho}$, so the query number of $U_{\rho}$ is 
  \begin{align}
      \widetilde{\mathcal{O}}\left(\frac{1}{\sqrt{\mathrm{Tr}(\sigma^{\beta}\rho\sigma^{\beta})}}\frac{r^{\frac{1}{\alpha}}}{\epsilon^{1+\frac{1}{\alpha}}}\right).
  \end{align}

  To compute the number of queries to $U_\sigma$, we need to analyze the error induced by $\tilde{U}$. Using $\mathcal{O}\left(\frac{1}{\delta}\log(\frac{1}{\epsilon_1})\right)$ calls to $U_{\sigma}$, the error of $\tilde{U}$ in spectral norm can be bounded by $\delta^{\beta}+\epsilon_1$. Using the following \lem{power-trace-diff}, we can bound the additive error of our estimate of $\mathrm{Tr}((\sigma^{\beta}\rho\sigma^{\beta})^{\alpha})$ induced by $\tilde{U}$ by 
  \begin{align}
      \mathcal{O}\left(r\left(\frac{\delta^\beta+\epsilon_1}{\mathrm{Tr}(\sigma^{\beta}\rho\sigma^{\beta})}\right)^{\alpha}\mathrm{Tr}(\sigma^{\beta}\rho\sigma^{\beta})^{\alpha}\right) = \mathcal{O}\left(r(\delta^{\beta}+\epsilon_1)^{\alpha}\right).
  \end{align}
  \begin{lemma}[{\cite[Lemma 4.6]{wang2022new}}]
    \label{lem:power-trace-diff}
    Suppose that $A$ and $B$ are two positive semidefinite operators of rank $\leq r$, and $0<\alpha<1$. Then
  \begin{align}
  \left|\operatorname{tr}\left(A^\alpha\right)-\operatorname{tr}\left(B^\alpha\right)\right| \leq 5 r\|A-B\|^\alpha,
  \end{align}
  where $\|A-B\|$ is the spectral norm of $A-B$. 
  \end{lemma}
  
  Therefore, the final error induced by error of $\tilde{U}$ is 
  \begin{align}
      \mathcal{O}\left(\frac{r(\delta^{\beta}+\epsilon_1)^{\alpha}}{\mathrm{Tr}((\sigma^{\beta}\rho\sigma^{\beta})^{\alpha})}\right),
  \end{align}
  so we need to set $\delta := \left(\frac{\epsilon}{r}\right)^{\frac{1}{\alpha\beta}}(\mathrm{Tr}((\sigma^{\beta}\rho\sigma^{\beta})^{\alpha}))^{\frac{1}{\alpha\beta}}$ and $\epsilon_1:=\left(\frac{\epsilon}{r}\right)^{\frac{1}{\alpha}}(\mathrm{Tr}((\sigma^{\beta}\rho\sigma^{\beta})^{\alpha}))^{\frac{1}{\alpha}}$ so that the final error is bounded by $\epsilon$. 
  
  The query number of $U_\sigma$ in the final algorithm is 
  \begin{align}
      & \mathcal{O}\left(\frac{1}{\delta}\log(\frac{1}{\epsilon_1})\frac{1}{\sqrt{\mathrm{Tr}(\sigma^{\beta}\rho\sigma^{\beta})}}\frac{r^{\frac{1}{\alpha}}}{\epsilon^{1+\frac{1}{\alpha}}}\right) \nonumber \\
      = & \widetilde{\mathcal{O}}\left(\frac{1}{(\mathrm{Tr}((\sigma^{\beta}\rho\sigma^{\beta})^{\alpha}))^{\frac{1}{\alpha\beta}}}\frac{1}{\sqrt{\mathrm{Tr}(\sigma^{\beta}\rho\sigma^{\beta})}}\frac{r^{\frac{1}{\alpha}+\frac{1}{\alpha\beta}}}{\epsilon^{1+\frac{1}{\alpha}+\frac{1}{\alpha\beta}}}\right).
  \end{align}
\end{proof}

\textcolor{black}{This application is inspired by \cite{wang2022new} in which the author has also studied the estimation of quantum R{\'e}nyi divergence, but they consider the problem of estimating $\operatorname{Tr}\left(\sigma^{\frac{1-\alpha}{2 \alpha}} \rho \sigma^{\frac{1-\alpha}{2 \alpha}}\right)^\alpha = \textrm{exp}((\alpha-1)D_{\alpha}(\rho\|\sigma))$ within certain additive error and we use different polynomial approximations and also techniques beyond QSVT. }


\section*{Acknowledgements}
TL was supported by a startup fund from Peking University, and the Advanced Institute of Information Technology, Peking University.


\appendix
\section*{Proof of \prop{correctness_final}}
\label{append:correctness_final}
\begin{proof}
In the following proof, we use $\Lambda_k$ to denote a $m$-bit 01-string $\Lambda_k:=0^{k-1}10^{m-k}$ and when we say $|\phi\rangle$ is an approximation of $|\psi\rangle$ up to error $\mathcal{O}(L\epsilon)$ or $|\phi\rangle = |\psi\rangle + \mathcal{O}(L\epsilon)$, we mean $\||\psi\rangle-|\phi\rangle\| = \mathcal{O}(L\epsilon)$.

Let $\widetilde{\mathcal{A}}$ be \algo{improved_main_algo_sub} and $\mathcal{A} = \mathcal{A}_m\cdot \ldots \cdot \mathcal{A}_1$ be the unitary operation $\mathcal{A}$ in \algo{improved_main_algo_sub}. 

We first calculate the output of 
\begin{align}
    \widetilde{\mathcal{A}}|\mathbf{0}\rangle = \mathcal{A}(\sum_{i=1}^n\sqrt{p_i}|0\rangle_F|\mathbf{0}\rangle_C|\mathbf{0}\rangle_A|i\rangle_B|+\rangle_Q|\mathbf{0}\rangle_{P,I}). 
\end{align}

Since $\mathcal{A}$ is a linear operator, we only need to calculate 
\begin{align}
    \mathcal{A}(|0\rangle_F|\mathbf{0}\rangle_C|\mathbf{0}\rangle_A|i\rangle_B|+\rangle_Q|\mathbf{0}\rangle_{P,I}). 
\end{align}

We now describe the state in different stages of $\mathcal{A}$ when it is initialized to $|0\rangle_F|\mathbf{0}\rangle_C|\mathbf{0}\rangle_A|i\rangle_B|+\rangle_Q|\mathbf{0}\rangle_{P,I}$ before $\mathcal{A}$. 

Let $j\in \{1,\ldots,m\}$ be such that $\sqrt{p_i} \in [\varphi_j,\varphi_{j-1})$. We divide the $m$ stages of $\mathcal{A}$ into three parts: $\mathcal{A}_1$ to $\mathcal{A}_{j-1}$, $\mathcal{A}_j$ to $\mathcal{A}_{j+1}$, and $\mathcal{A}_{j+2}$ to $\mathcal{A}_{m}$ if exists. 

\vspace{3mm}
\noindent
\textbf{State after $\mathcal{A}_{k}$ for $k = 1,\ldots,j-1$.} The performance of $\mathcal{A}_{k}$ for all $k = 1,\ldots,j-1$ on $|0\rangle_F|\mathbf{0}\rangle_C|\mathbf{0}\rangle_A|i\rangle_B|+\rangle_Q$ is similar. Since $\mathcal{A}_k$ only change the first $k$ registers of $I$ and $P$, the state before $\mathcal{A}_k$ can be written as 
\begin{align}
\label{eq:first_component_k}
    &\alpha_0^{(i,k)}|0\rangle_F|\mathbf{0}\rangle_C|\mathbf{0}\rangle_A|i\rangle_B|+\rangle_Q|\gamma_1\rangle_{P_1,I_1}\nonumber \\
    &\cdots|\gamma_{k-1}\rangle_{P_{k-1},I_{k-1}}|\mathbf{0}\rangle_{P_k,I_k,\ldots,P_m,I_m}+ \alpha_1^{(i,k)}|\psi_{\mathrm{stopped}}\rangle,
\end{align}
where $\|(|\mathbf{0}\rangle\langle\mathbf{0}|_C\otimes I)|\psi_{\mathrm{stopped}}\rangle\| = 0 $ and $|\gamma\rangle_l$ is the state $|\gamma\rangle_{P,I}$ in \lem{separate_singular_value} produced by $W(\varphi_l,L\epsilon/m)$ in $\mathcal{A}_l$. State in registers $F,A,Q$ is $|0\rangle_F|\mathbf{0}\rangle_A|+\rangle_Q$ when state in register $C$ is $|\mathbf{0}\rangle$ since the step 3 of $\mathcal{A}_l$ is conditional on $C_l$ being $|1\rangle$ for all $l = 1,\ldots,k-1$ and only these operations can change the state in registers $F,A,Q$. 

Note that the last $m-k+1$ qubits of $C$ register must be $|0\rangle$, since $\mathcal{A}_l$ for $l = 1,\ldots,k-1$ do not change them. Then, we can infer that $|\psi_{\mathrm{stopped}}\rangle$ has no overlap with $|\mathbf{0}\rangle_{C_1,\ldots,C_{k-1}}$ since $\|(|\mathbf{0}\rangle\langle\mathbf{0}|_C\otimes I)|\psi_{\mathrm{stopped}}\rangle\| = 0 $. Therefore, $\mathcal{A}_k$ will not change $|\psi_{\mathrm{stopped}}\rangle$, and we only need to consider the first component in \eq{first_component_k}. 

After step 1 of $\mathcal{A}_k$, the the component in \eq{first_component_k} becomes
\begin{align}
    & \alpha_0^{(i,k)}|0\rangle_F|\mathbf{0}\rangle_C|\mathbf{0}\rangle_A|i\rangle_B|+\rangle_Q|\gamma_1\rangle_{P_1,I_1}\nonumber \\
    & \cdots|\gamma_{k-1}\rangle_{P_{k-1},I_{k-1}}|0\rangle_{P_k}(|\mathbf{0}\rangle|i\rangle)_{I_k}|\mathbf{0}\rangle_{P_k,I_k,\ldots,P_m,I_m}.  
\end{align}

After step 2 of $\mathcal{A}_k$, the first component in \eq{first_component_k} becomes 
\begin{align}
\label{eq:first_component_k+1}
    &\alpha_0^{(i,k)}\beta_0^{(i,k)}|0\rangle_F|\mathbf{0}\rangle_C|\mathbf{0}\rangle_A|i\rangle_B|+\rangle_Q|\gamma_1\rangle_{P_1,I_1}\nonumber \\
    &\cdots|\gamma_{k-1}\rangle_{P_{k-1},I_{k-1}}|\gamma_k\rangle_{P_k,I_k}|\mathbf{0}\rangle_{P_{k+1},I_{k+1},\ldots,P_m,I_m}\\&+ \alpha_0^{(i,k)}\beta_1^{(i,k)}|0\rangle_F|\Lambda_k\rangle_C|\mathbf{0}\rangle_A|i\rangle_B|+\rangle_Q|\gamma_1\rangle_{P_1,I_1}\nonumber \\
    &\cdots|\gamma_{k-1}\rangle_{P_{k-1},I_{k-1}}|+\rangle_{P_k}(|\mathbf{0}\rangle|i\rangle)_{I_k}|\mathbf{0}\rangle_{P_{k+1},I_{k+1},\ldots,P_m,I_m}, \notag
\end{align}
where $\Lambda_k$ is a $m$-bit 01-string $\Lambda_k:=0^{k-1}10^{m-k}$. 

Since the corresponding singular value of $(|\mathbf{0}\rangle|i\rangle)_{I_k}$ is $\sqrt{p_i}$ and $\sqrt{p_i}<\varphi_{j-1}\le\varphi_{k}$, from \lem{separate_singular_value}, we can infer that $|\beta_1^{(i,k)}|\le\frac{L\epsilon}{m}$ and $|\beta_0^{(i,k)}| = \sqrt{1-|\beta_1^{(i,k)}|^2}\ge \sqrt{1-\left(\frac{L\epsilon}{m}\right)^2}$.

After the step 3 of $\mathcal{A}_k$, the state in \eq{first_component_k+1} does not change. 

Notice that the component in \eq{first_component_k+1} is the component in \eq{first_component_k} for $k+1$. Since the $\alpha_0^{(1)} = 1$, we can infer that the amplitude of \eq{first_component_k+1} satisfies that $|\alpha_0^{(i,k)}\beta_0^{(i,k)}|\ge \sqrt{(1-\left(\frac{L\epsilon}{m}\right)^2)^k} \ge \sqrt{1-(L\epsilon)^2}$ by induction on $k$. 

Therefore, when $k = j-1$, we can infer that after $\mathcal{A}_{j-1}$, the state becomes
\begin{align}
    \label{eq:state_before_j}
    |0\rangle_F|\mathbf{0}\rangle_C|\mathbf{0}\rangle_A|i\rangle_B|+\rangle_Q|\gamma_1\rangle_{P_1,I_1}\nonumber \\ \cdots|\gamma_{j-1}\rangle_{P_{j-1},I_{j-1}}|\mathbf{0}\rangle_{P_{j},I_j,\ldots,P_m,I_m}
\end{align}
up to error $\mathcal{O}(L\epsilon)$.  

Before continuing to the next part, we first consider a special case when $j = m$. In $\mathcal{A}_m$, step 1 and step 2 will map the state in \eq{state_before_j} to 
\begin{align}
    &|0\rangle_F|\Lambda_m\rangle_C|\mathbf{0}\rangle_A|i\rangle_B|+\rangle_Q|\gamma_1\rangle_{P_1,I_1}\nonumber \\ & \cdots|\gamma_{m-1}\rangle_{P_{m-1},I_{m-1}}|0\rangle_{P_m}(|\mathbf{0}\rangle|i\rangle)_{I_m}. 
\end{align}

After step 3 of $\mathcal{A}_m$, the state in \eq{state_before_j} will be 
\begin{align}
    &S_m(\sqrt{p_i})|1\rangle_F|\Lambda_m\rangle_C|i\rangle_A|i\rangle_B|+\rangle_Q|\gamma_1\rangle_{P_1,I_1}\nonumber \\ & \cdots|\gamma_{m-1}\rangle_{P_{m-1},I_{m-1}}|0\rangle_{P_m}(|\mathbf{0}\rangle|i\rangle)_{I_m} \\
    &+ \sqrt{1-S_m(\sqrt{p_i})^2}|0\rangle_F|\Lambda_m\rangle_C|\psi_\mathrm{garbage}^{(m)}\rangle_{A,B,Q,P,I}. 
\end{align}

Since $\sqrt{p_i}\in[\varphi_m,\varphi_{m-2})$ and $S_m$ satisfies the condition in \eq{S_j_condition}, we can infer that
\begin{equation}
\label{eq:final_m}
    \begin{aligned}
    &(|1\rangle\langle1|_F\otimes I)\mathcal{A}(|0\rangle_F|\mathbf{0}\rangle_C|\mathbf{0}\rangle_A|i\rangle_B|+\rangle_Q|\mathbf{0}\rangle_{P,I}) \\
    =& S_m(\sqrt{p_i})|1\rangle_F|\Lambda_m\rangle_C|i\rangle_A|i\rangle_B|+\rangle_Q|\gamma_1\rangle_{P_1,I_1}\nonumber \\ & \cdots|\gamma_{m-1}\rangle_{P_{m-1},I_{m-1}}|0\rangle_{P_m}(|\mathbf{0}\rangle|i\rangle)_{I_m} \\
    \approx& S(\sqrt{p_i})|1\rangle_F|\Lambda_m\rangle_C|i\rangle_A|i\rangle_B|+\rangle_Q|\gamma_1\rangle_{P_1,I_1}\nonumber \\ & \cdots|\gamma_{m-1}\rangle_{P_{m-1},I_{m-1}}|0\rangle_{P_m}(|\mathbf{0}\rangle|i\rangle)_{I_m}\\
    =:&S(\sqrt{p_i})|\Phi_m\rangle,
\end{aligned}
\end{equation}
where the approximation error of the second equation is $\mathcal{O}(L\epsilon)$. 

In the following part, we assume $j  < m$. 

\vspace{3mm}
\noindent
\textbf{State after $\mathcal{A}_j$. }
Since $\mathcal{A}_j$ is linear, we only consider applying $\mathcal{A}_j$ to 
\begin{align}
    &|0\rangle_F|\mathbf{0}\rangle_C|\mathbf{0}\rangle_A|i\rangle_B|+\rangle_Q|\gamma_1\rangle_{P_1,I_1}\nonumber \\ & \cdots|\gamma_{j}\rangle_{P_{j-1},I_{j-1}}|\mathbf{0}\rangle_{P_{j},I_j,\ldots,P_m,I_m},
\end{align}
and the result is also an $\mathcal{O}(L\epsilon)$-approximation of the state after $\mathcal{A}_j$.

After step 1 and step 2 of $\mathcal{A}_j$, we will have
\begin{align}
    \label{eq:first_component_j}
    &\beta_0^{(i,j)}|0\rangle_F|\mathbf{0}\rangle_C|\mathbf{0}\rangle_A|i\rangle_B|+\rangle_Q|\gamma_1\rangle_{P_1,I_1}\nonumber \\ & \cdots|\gamma_{j-1}\rangle_{P_{j-1},I_{j-1}}|\gamma_j\rangle_{P_j,I_j}|\mathbf{0}\rangle_{P_{j+1},I_{j+1},\ldots,P_m,I_m}\\
    \label{eq:second_component_j}
    & + \beta_1^{(i,j)}|0\rangle_F|\Lambda_j\rangle_C|\mathbf{0}\rangle_A|i\rangle_B|+\rangle_Q|\gamma_1\rangle_{P_1,I_1}\nonumber \\ & \cdots|\gamma_{j-1}\rangle_{P_{j-1},I_{j-1}}|+\rangle_{P_j}(|\mathbf{0}\rangle|i\rangle)_{I_j}|\mathbf{0}\rangle_{P_{j+1},I_{j+1},\ldots,P_m,I_m}. 
\end{align}

Since step 3 of $\mathcal{A}_j$ is conditional on register $C_j$ being $|0\rangle$, so the component in \eq{first_component_j} does not change, and the component in \eq{second_component_j} becomes
\begin{align}
    &\beta_1^{(i,j)}S_j(\sqrt{p_i})|1\rangle_F|\Lambda_j\rangle_C|i\rangle_A|i\rangle_B|+\rangle_Q|\gamma_1\rangle_{P_1,I_1}\nonumber \\ & \cdots|\gamma_{j-1}\rangle_{P_{j-1},I_{j-1}}|+\rangle_{P_j}(|\mathbf{0}\rangle|i\rangle)_{I_j}|\mathbf{0}\rangle_{P_{j+1},I_{j+1},\ldots,P_m,I_m}\notag\\
    &+\beta^{(i,j)}_1\sqrt{1-S_j(\sqrt{p_i})^2}|0\rangle_F|\Lambda_j\rangle_C|\psi_\mathrm{garbage}^{(i,j)}\rangle_{A,B,Q,P,I}\label{eq:first_component_in_j}
\end{align}

Since $S_j$ satisfies the condition in \eq{S_j_condition} and $\sqrt{p_i}\in[\varphi_j,\varphi_{j-2})$, we can infer that the component in \eq{first_component_in_j} is an approximation of 
\begin{align}
\hspace{-3mm}&\beta_1^{(i,j)}S(\sqrt{p_i})|1\rangle_F|\Lambda_j\rangle_C|i\rangle_A|i\rangle_B|+\rangle_Q|\gamma_1\rangle_{P_1,I_1}\nonumber \\ & \cdots|\gamma_{j-1}\rangle_{P_{j-1},I_{j-1}}|+\rangle_{P_j}(|\mathbf{0}\rangle|i\rangle)_{I_j}|\mathbf{0}\rangle_{P_{j+1},I_{j+1},\ldots,P_m,I_m}
\end{align}
up to error $\mathcal{O}(L\epsilon)$. 

Therefore, the state after $\mathcal{A}_j$ is 
\begin{align}
    &\beta_0^{(i,j)}|0\rangle_F|\mathbf{0}\rangle_C|\mathbf{0}\rangle_A|i\rangle_B|+\rangle_Q|\gamma_1\rangle_{P_1,I_1}\nonumber \\ & \cdots|\gamma_j\rangle_{P_j,I_j}|\mathbf{0}\rangle_{P_{j+1},I_{j+1},\ldots,P_m,I_m}\notag\\
    +&\beta_1^{(i,j)}S(\sqrt{p_i})|1\rangle_F|\Lambda_j\rangle_C|i\rangle_A|i\rangle_B|+\rangle_Q|\gamma_1\rangle_{P_1,I_1}\nonumber \\ & \cdots|+\rangle_{P_j}(|\mathbf{0}\rangle|i\rangle)_{I_j}|\mathbf{0}\rangle_{P_{j+1},I_{j+1},\ldots,P_m,I_m}\notag\\
    +&\beta^{(i,j)}_1\sqrt{1-S_j(\sqrt{p_i})^2}|0\rangle_F|\Lambda_j\rangle_C|\psi_\mathrm{garbage}^{(i,j)}\rangle_{A,B,Q,P,I} \label{eq:first_component_after_j}
\end{align}
up to error $\mathcal{O}(L\epsilon)$. 

\vspace{3mm}
\noindent
\textbf{State after $\mathcal{A}_{j+1}$. }
Since the first two steps of $\mathcal{A}_{j+1}$ are conditional on the first $j$ qubits of register $C$ being $|\mathbf{0}\rangle$ and the third step is conditional on the $(j+1)$-th qubit of register $C$ being $|1\rangle$, $\mathcal{A}_{j+1}$ will only change the component in \eq{first_component_after_j}. Therefore, we only consider the result applying $\mathcal{A}_{j+1}$ to the component \eq{first_component_after_j}.  

After step 1 and step 2 of $\mathcal{A}_{j+1}$, the component in \eq{first_component_after_j} becomes
\begin{align}
    &\beta_0^{(i,j)}\beta_0^{(i,j+1)}|0\rangle_F|\mathbf{0}\rangle_C|\mathbf{0}\rangle_A|i\rangle_B|+\rangle_Q|\gamma_1\rangle_{P_1,I_1}\nonumber \\ & \cdots|\gamma_{j}\rangle_{P_{j},I_{j}}|\gamma_{j+1}\rangle_{P_{j+1},I_{j+1}}|\mathbf{0}\rangle_{P_{j+2},I_{j+2},\ldots,P_m,I_m}\\
    +&\beta_0^{(i,j)}\beta_1^{(i,j+1)}|0\rangle_F|\Lambda_{j+1}\rangle_C|\mathbf{0}\rangle_A|i\rangle_B|+\rangle_Q|\gamma_1\rangle_{P_1,I_1}\nonumber \\ & \cdots|\gamma_{j}\rangle_{P_{j},I_{j}}|+\rangle_{P_{j+1}}(|\mathbf{0}\rangle|i\rangle)_{I_{j+1}}|\mathbf{0}\rangle_{P_{j+2},I_{j+2},\ldots,P_m,I_m}. 
\end{align}

Since $\sqrt{p_i}\ge \varphi_{j} = 2\varphi_{j+1}$, from \lem{separate_singular_value}, we have $|\beta^{(i,j+1)}_0| \le L\epsilon$. Then, we can infer that the component in \eq{first_component_after_j} after step 1 and step 2 of $\mathcal{A}_{j+1}$ is an approximation of 
\begin{align}
    &\beta_0^{(i,j)}|0\rangle_F|\Lambda_{j+1}\rangle_C|\mathbf{0}\rangle_A|i\rangle_B|+\rangle_Q|\gamma_1\rangle_{P_1,I_1}\nonumber \\ & \cdots|\gamma_{j}\rangle_{P_{j},I_{j}}|+\rangle_{P_{j+1}}(|\mathbf{0}\rangle|i\rangle)_{I_{j+1}}|\mathbf{0}\rangle_{P_{j+2},I_{j+2},\ldots,P_m,I_m}
\end{align}
up to error $\mathcal{O}(L\epsilon)$. 

Then, after step 3 of $\mathcal{A}_{j+1}$, the component in \eq{first_component_after_j} becomes
\begin{align}
    &\beta_0^{(i,j)}S_{j+1}(\sqrt{p_i})|1\rangle_F|\Lambda_{j+1}\rangle_C|i\rangle_A|i\rangle_B|+\rangle_Q|\gamma_1\rangle_{P_1,I_1}\nonumber \\ & \cdots|\gamma_{j}\rangle_{P_{j},I_{j}}|+\rangle_{P_{j+1}}(|\mathbf{0}\rangle|i\rangle)_{I_{j+1}}|\mathbf{0}\rangle_{P_{j+2},I_{j+2},\ldots,P_m,I_m}\notag\\
    &+\beta^{(i,j)}_0\sqrt{1-S_{j+1}(\sqrt{p_i})^2}|0\rangle_F|\Lambda_{j+1}\rangle_C|\psi_\mathrm{garbage}^{(i,j+1)}\rangle_{A,B,Q,P,I}
    \label{eq:first_component_after_j+1}
\end{align}
up to error $\mathcal{O}(L\epsilon)$. 

Since $S_{j+1}$ satisfies the condition in \eq{S_j_condition} and $\sqrt{p_i}\in[\varphi_{j+1},\varphi_{j-1})$, we can infer that the component in \eq{first_component_after_j+1} is an approximation of 
\begin{align}
\hspace{-3mm}&\beta_0^{(i,j)}S(\sqrt{p_i})|1\rangle_F|\Lambda_{j+1}\rangle_C|i\rangle_A|i\rangle_B|+\rangle_Q|\gamma_1\rangle_{P_1,I_1}\nonumber \\ & \cdots|\gamma_{j}\rangle_{P_{j},I_{j}}|+\rangle_{P_{j+1}}(|\mathbf{0}\rangle|i\rangle)_{I_{j+1}}|\mathbf{0}\rangle_{P_{j+2},I_{j+2},\ldots,P_m,I_m}
\end{align}
up to error $\mathcal{O}(L\epsilon)$. 

In conclusion, the state after $\mathcal{A}_{j+1}$ is 
\begin{align}
    &\beta_0^{(i,j)}S(\sqrt{p_i})|1\rangle_F|\Lambda_{j+1}\rangle_C|i\rangle_A|i\rangle_B|+\rangle_Q|\gamma_1\rangle_{P_1,I_1}\nonumber \\ & \cdots|\gamma_{j}\rangle_{P_{j},I_{j}}|+\rangle_{P_{j+1}}(|\mathbf{0}\rangle|i\rangle)_{I_{j+1}}|\mathbf{0}\rangle_{P_{j+2},I_{j+2},\ldots,P_m,I_m}\notag \\
    +&\beta_1^{(i,j)}S(\sqrt{p_i})|1\rangle_F|\Lambda_j\rangle_C|i\rangle_A|i\rangle_B|+\rangle_Q|\gamma_1\rangle_{P_1,I_1}\nonumber \\ & \cdots|+\rangle_{P_j}(|\mathbf{0}\rangle|i\rangle)_{I_j}|\mathbf{0}\rangle_{P_{j+1},I_{j+1},\ldots,P_m,I_m}\notag \\
    +&\beta^{(i,j)}_0\sqrt{1-S(\sqrt{p_i})^2}|0\rangle_F|\Lambda_{j+1}\rangle_C|\psi_\mathrm{garbage}^{(i,j+1)}\rangle_{A,B,Q,P,I}\notag\\
    +&\beta^{(i,j)}_1\sqrt{1-S(\sqrt{p_i})^2}|0\rangle_F|\Lambda_j\rangle_C|\psi_\mathrm{garbage}^{(i,j)}\rangle_{A,B,Q,P,I}     \label{eq:state_after_j+1}
\end{align}
up to error $\mathcal{O}(L\epsilon)$. 

\vspace{3mm}
\noindent
\textbf{State after $\mathcal{A}$. }
Since the state in \eq{state_after_j+1} has no overlap with $|\mathbf{0}\rangle_{C_1,\ldots,C_{j+1}}$, $\mathcal{A}_k$ for all $k = j+2,\ldots,m$ do not change it. 

Therefore, we can infer that state after $\mathcal{A}$, $\mathcal{A}(|0\rangle_F|\mathbf{0}\rangle_C|\mathbf{0}\rangle_A|i\rangle_B|+\rangle_Q|\mathbf{0}\rangle_{P,I})$, satisfies 
\begin{align}
    &(|1\rangle\langle1|_F\otimes I)\mathcal{A}(|0\rangle_F|\mathbf{0}\rangle_C|\mathbf{0}\rangle_A|i\rangle_B|+\rangle_Q|\mathbf{0}\rangle_{P,I})\notag\\
    \approx &\beta_0^{(i,j)}S(\sqrt{p_i})|1\rangle_F|\Lambda_{j+1}\rangle_C|i\rangle_A|i\rangle_B|+\rangle_Q|\gamma_1\rangle_{P_1,I_1}\nonumber \\ & \cdots|\gamma_{j}\rangle_{P_{j},I_{j}}|+\rangle_{P_{j+1}}(|\mathbf{0}\rangle|i\rangle)_{I_{j+1}}|\mathbf{0}\rangle_{P_{j+2},I_{j+2},\ldots,P_m,I_m}\notag\\
    &+\beta_1^{(i,j)}S(\sqrt{p_i})|1\rangle_F|\Lambda_j\rangle_C|i\rangle_A|i\rangle_B|+\rangle_Q|\gamma_1\rangle_{P_1,I_1}\nonumber \\ & \cdots|+\rangle_{P_j}(|\mathbf{0}\rangle|i\rangle)_{I_j}|\mathbf{0}\rangle_{P_{j+1},I_{j+1},\ldots,P_m,I_m}\notag\\
     =: &S(\sqrt{p_i})|1\rangle_F|\Phi_i\rangle, \label{eq:A_on_singular_vector}
\end{align}
where the approximation error of the first equation is $\mathcal{O}(L\epsilon)$ and $|\Phi_i\rangle$ is a normalized state. 

\vspace{3mm}
\noindent
\textbf{Correctness. }
From \eq{A_on_singular_vector} and \eq{final_m}, we can infer that 
\begin{align}
    &(|1\rangle\langle1|_F\otimes I)\mathcal{A}(\sum_{i=1}^n\sqrt{p_i}|0\rangle_F|\mathbf{0}\rangle_C|\mathbf{0}\rangle_A|i\rangle_B|+\rangle_Q|\mathbf{0}\rangle_{P,I}) \nonumber \\
    \approx & \sum_{i = 1}^n\sqrt{p_i}S(\sqrt{p_i})|1\rangle_F|\Phi_i\rangle,
\end{align}
where the approximation error is $\mathcal{O}(L\epsilon)$. 

Using variable-time amplitude estimation, we can estimate 
\begin{align}
& \|(|1\rangle\langle1|_F\otimes I)\mathcal{A}(|0\rangle_F|\mathbf{0}\rangle_C|\mathbf{0}\rangle_A|\psi_p\rangle_B|+\rangle_Q|\mathbf{0}\rangle_{P,I})\|^2 \nonumber \\ = & \sum_{i = 1}^n p_i S(\sqrt{p_i})^2 + \mathcal{O}(L\epsilon)\nonumber \\
= & \sum_{i=1}^n p_i S(\sqrt{p_i})^2(1 + \mathcal{O}(\epsilon))
\end{align}
within multiplicative error $\epsilon$ with success probability at least $1-\delta$.

Since $L \le \sum_{i = 1}^n p_i S(\sqrt{p_i})^2 $, we have $\sum_{i = 1}^n p_i S(\sqrt{p_i})^2 + \mathcal{O}(L\epsilon) = \sum_{i=1}^n p_i S(\sqrt{p_i})^2(1 + \mathcal{O}(\epsilon))$.
Therefore, the output of \algo{main_algo_improved} is an estimate of $\sum_{i = 1}^n p_i S(\sqrt{p_i})^2 $ within multiplicative error $\mathcal{O}(\epsilon)$. 

By rescaling $\epsilon$ to $c\epsilon$ with a small constant $c$, we can estimate $\sum_{i = 1}^n p_i S(\sqrt{p_i})^2 $ to within multiplicative error $\epsilon$ with the same query complexity. 

\vspace{1mm}
\noindent
\textbf{Complexity. }
Our \algo{main_algo_improved} is a direct use of variable-time amplitude estimation in \thm{vtae}, so in order to get its query complexity, we only need to calculate $t_j,p_{\mathrm{stop}=t_j},T_{\mathrm{avg}}$, and $p_{\mathrm{succ}}$ of $\widetilde{\mathcal{A}}$. 
We will calculate these parameters for $\mathcal{A}$ for simplicity since $\widetilde{\mathcal{A}}$ only use $U_p$ one more time than $\mathcal{A}$. 

\vspace{1mm}
\noindent
\textbf{Calculate $p_{\mathrm{succ}}$. }
In the previous paragraph, we have proved that $p_{\textrm{succ}} = \sum_{i=1}^n p_i S(\sqrt{p_i})^2(1 + \mathcal{O}(\epsilon))$. 

\vspace{3mm}
\noindent
\textbf{Calculate $t_j$. }
The query complexity of $\mathcal{A}_j$ for $j < m$ is the sum of query complexity of $W(\varphi_j,L\epsilon/m)$ and $U_{S_j}^{(SV)}$, which is $\mathcal{O}\left(\frac{1}{\varphi_j}\log(\frac{m}{\epsilon L}) + \deg(S_j)\right)$ for $j < m$ while the query complexity of $\mathcal{A}_m$ is $\mathcal{O}(\deg(S_m))$.

Then the sum of the query complexity of the first $j$ stages of $\mathcal{A}$ for $j < m$ is 
\begin{align}
    t_j &= \mathcal{O}\biggl(\sum_{k = 1}^{j}\biggl(\frac{1}{\varphi_k}\log(\frac{m}{\epsilon L}) + \deg(S_k)\biggr)\biggr) \nonumber\\
    &= \mathcal{O}\biggl(\sum_{k = 1}^{j}\biggl(\frac{2^{k}}{\beta}\log(\frac{m}{\epsilon L}) + \deg(S_k)\biggr)\biggr)\nonumber\\
    &= \mathcal{O}\biggl(\frac{2^{j}}{\beta}\log(\frac{m}{\epsilon L}) + \sum_{k = 1}^{j}\deg(S_k)\biggr),
\end{align}
and $t_m = \mathcal{O}\left(\frac{2^{m-1}}{\beta}\log(\frac{m}{\epsilon L}) + \sum_{k=1}^{m}\deg(S_k)\right)$.

\vspace{1mm}
\noindent
\textbf{Calculate $p_{\mathrm{stop}=t_j}$. }
Note that 
\begin{align}
    p_{\mathrm{stop}=t_j} = & \|(|\Lambda_j\rangle\langle \Lambda_j|_C\otimes I)\mathcal{A}\nonumber \\
    &\cdot(\sum_{i=1}^n \sqrt{p_i}|0\rangle_F|\mathbf{0}\rangle_C|\mathbf{0}\rangle_A|i\rangle_B|+\rangle_Q|\mathbf{0}\rangle_{P,I})\|^2 \nonumber\\
    = & \sum_{i=1}^n p_i\|(|\Lambda_j\rangle\langle \Lambda_j|_C\otimes I)\mathcal{A}\nonumber \\
    &\cdot(|0\rangle_F|\mathbf{0}\rangle_C|\mathbf{0}\rangle_A|i\rangle_B|+\rangle_Q|\mathbf{0}\rangle_{P,I})\|^2
\end{align}

To simplify the writting of formulas, we define $Q_j= \{i: \sqrt{p_i}\in[\varphi_j,\varphi_{j-1})\}$ for $j=1,\ldots,m$, and $Q_0 = \emptyset$ in the following proof.

Then from \eq{state_after_j+1}, we can infer that for $i$ such that $\sqrt{p_i} \not\in [\varphi_j,\varphi_{j-2})$, they contribute at most $\mathcal{O}((L\epsilon)^2)$ to $p_{\mathrm{stop}=t_j}$, so we only need to consider contribute of $i$ such that $\sqrt{p_i} \in [\varphi_j,\varphi_{j-2})$. Therefore, we can infer that $p_{\mathrm{stop}=t_j}$ equals
\begin{align}
\label{eq:p_stop_j}
    \sum_{i\in Q_j} p_i |\beta_1^{(i,j)}|^2 + \sum_{i\in Q_{j-1}} p_i |\beta_0^{(i,j-1)}|^2+\mathcal{O}(L\epsilon),
\end{align}
where $\beta_0^{(i,0)} = \beta_0^{(i,m)} := 0$ and $\beta_1^{(i,m)} := 1$.

\vspace{1mm}
\noindent
\textbf{Calculate $T_{\mathrm{avg}}$. }
Let $t_{m+1} =t_m$. Then we have 
\begin{align}
    & T_{\mathrm{avg}}^2 \nonumber \\
    = & \sum_{j=1}^m p_{\mathrm{stop}=t_j} t_j^2\nonumber\\
    \le & \sum_{j=1}^m\Bigl(\sum_{i\in Q_j} p_i |\beta_1^{(i,j)}|^2 + \sum_{i\in Q_{j-1}} p_i |\beta_0^{(i,j-1)}|^2 + \mathcal{O}(L\epsilon )\Bigr)t_j^2 \nonumber\\
    \le & \sum_{j=1}^m\Bigl(\sum_{i\in Q_j} p_i |\beta_1^{(i,j)}|^2t_{j+1}^2+ \sum_{i\in Q_{j-1}} p_i |\beta_0^{(i,j-1)}|^2t_j^2\Bigr)\nonumber \\
    & + \mathcal{O}(L\epsilon \sum_{j=1}^m t_j^2)\nonumber\\
    = & \sum_{j =1}^m\sum_{i\in Q_j} p_i (|\beta_1^{(i,j)}|^2 + |\beta_0^{(i,j)}|^2)t_{j+1}^2+ \mathcal{O}(L\epsilon \sum_{j=1}^m t_j^2)\nonumber\\
    = & \sum_{j =1}^m\sum_{i\in Q_j} p_i t_{j+1}^2+ \mathcal{O}(L\epsilon \sum_{j=1}^m t_j^2).
\end{align}

From \thm{vtae}, we can infer the query complexity of \algo{main_algo_improved} is 
\begin{align}
    &\widetilde{\mathcal{O}}\left(t_m+\frac{T_{\mathrm{avg}}}{\sqrt{p_{\mathrm{succ}}}}\right)\nonumber\\
    =& \widetilde{\mathcal{O}}\left(t_m + \frac{\sqrt{\sum_{j =1}^m\sum_{i\in Q_j} p_i t_{j+1}^2}+\sqrt{L\epsilon} \sum_{j=1}^m t_j}{\sqrt{\sum_{i=1}^n p_i S(\sqrt{p_i})^2}}\right)\nonumber\\
    =& \widetilde{\mathcal{O}}\left(t_m + \sqrt{\epsilon} \sum_{j=1}^m t_j + \frac{\sqrt{\sum_{j =1}^m\sum_{i\in Q_j} p_i t_{j+1}^2}}{\sqrt{\sum_{i=1}^n p_i S(\sqrt{p_i})^2}}\right),
\end{align}
where the second inequality comes from $L\le \sum_{i=1}^n p_i S(\sqrt{p_i})^2$. 
\end{proof}

\section*{Other proofs}
\label{append:other_proof}
\subsection{Proof of \lem{rough_amplitude_estimation}}
\begin{proof}
Let $\Pi = |\mathbf{0}\rangle\langle \mathbf{0}|_{\mathcal{H}_F}$, $\widetilde{\Pi} = |1\rangle\langle1|_{\mathcal{H}_F}\otimes I_{\mathcal{H}_W}$, then $\widetilde{\Pi}\mathcal{A}\Pi$ has only one singular value $\sqrt{p_{\mathrm{succ}}}$. Then using \lem{separate_singular_value}, we can determine whether $\sqrt{p_{\mathrm{succ}}}$ is larger than $2\varphi$ or smaller than $\varphi$ for a given $\varphi\in(0,1)$ with success probability at least $1-\delta$ using $\log(\frac{1}{\delta})\frac{1}{\varphi}$ calls to $\mathcal{A}$ and $\mathcal{A}^{\dagger}$. 

Then setting $\varphi = 1,\frac{1}{2},\ldots,\frac{1}{2^{\lceil \log(\frac{1}{L}) \rceil}}$ sequentially, we can determine whether $\sqrt{p_{\mathrm{succ}}}\ge 2\varphi$ with success probability $1-\frac{\delta}{\log(\frac{1}{L})}$ using $\mathcal{O}\left(\log(\frac{\log(\frac{1}{L})}{\delta})\frac{1}{\varphi}\right)$ calls to $\mathcal{A}$ and $\mathcal{A}^{\dagger}$, if so, stop and output $2\varphi$.

Then with success probability at least $1-\delta$, the algorithm will stop at $\varphi = \frac{1}{2^{\left\lceil \log(\frac{1}{\sqrt{p_{\mathrm{succ}}}})\right\rceil+1}} = \Theta(\sqrt{p_{\mathrm{succ}}})$, so the output is in $[\frac{1}{2}\sqrt{p_{\mathrm{succ}}}, 2\sqrt{p_{\mathrm{succ}}}]$, and the total calls to $\mathcal{A}$ and $\mathcal{A}^{\dagger}$ is
\begin{align}
    \mathcal{O}\left(\log(\frac{\log(\frac{1}{L})}{\delta})\log(\frac{1}{\sqrt{p_{\mathrm{succ}}}})\frac{1}{\sqrt{p_{\mathrm{succ}}}}\right). 
\end{align}
\end{proof}
\subsection{Proof of \lem{bound_of_annealing}}
\begin{proof}
    Let $\gamma_n := (xg(x))^{-1}(\frac{1}{a}g(\frac{1}{n}))$. 
    For any probability distribution $\mathbf{p} = (p_i)_{i=1}^n$, let $p_M = \max_{i\in[n]} p_i$ and $i_M = \mathrm{argmax}_{i\in[n]} p_i$.  
    
    Since $xg(x)$ is a convex function, we have
    \begin{align}
        \frac{\sqrt{\sum_{i=1}^n p_i g(p_i)^2}}{\sum_{i=1}^n p_i g(p_i)} 
        & \le \frac{\sqrt{g(p_M)}}{\sqrt{\sum_{i=1}^n p_i g(p_i)}} \notag\\
        & =  \frac{\sqrt{g(p_M)}}{\sqrt{p_M g(p_M) + \sum_{i\neq i_M} p_i g(p_i)}} \notag\\
        & \le \frac{\sqrt{g(p_M)}}{\sqrt{p_M g(p_M) + (n-1) \frac{1-p_M}{n-1}g(\frac{1-p_M}{n-1})}}\notag \\
        & = \frac{1}{\sqrt{p_M + (1-p_M) \frac{g(\frac{1-p_M}{n-1})}{g(p_M)}}}, \label{eq:p_M}
    \end{align}
    where the second inequality comes from $xg(x)$ is a convex function on $[0,1]$ and Jensen's inequality. 
    
    Since $g(x)$ and $(xg(x))^{-1}$ are monotonically functions on $[0,1]$ and $g(0) = (xg(x))^{-1}(0) = 0$, we have $\lim_{x\to 0} g(x) =\lim_{x\to 0} (xg(x))^{-1} =  0$. Thus we have $\lim_{n \to \infty } \gamma_n = \lim_{n \to \infty }(xg(x))^{-1}(\frac{1}{a}g(\frac{1}{n})) = 0 $. 
    
    If $p_M \ge \gamma_n$, from \eq{p_M}, we have $\frac{\sqrt{\sum_{i=1}^n p_i g(p_i)^2}}{\sum_{i=1}^n p_i g(p_i)} \le \frac{1}{\sqrt{\gamma_n}}$. 
    
    If $p_M < \gamma_n$, we have 
    \begin{align}
        &\lim_{n \to \infty}\left(\frac{1}{\gamma_n}\left( p_M +  (1-p_M) \frac{g(\frac{1-p_M}{n-1})}{g(p_M)} \right)\right)\nonumber \\
        > & \lim_{n \to \infty} (1-\gamma_n)\frac{g(\frac{1-\gamma_n}{n-1})}{\gamma_n g(\gamma_n)} 
        = \lim_{n \to \infty} \frac{g(\frac{1}{n})}{g(\gamma_n)\gamma_n}
        = 2,
    \end{align}
    where the first equation comes from convex function is continuous and $\lim_{n \to \infty} \gamma_n = 0$, and the second equation comes from $\gamma_n g(\gamma_n) = \frac{1}{a}g(\frac{1}{n})$. Then we can infer that for sufficiently large $n$, 
    \begin{align}
    \label{eq:gamma_n_O}
        \frac{\sqrt{\sum_{i=1}^n p_i g(p_i)^2}}{\sum_{i=1}^n p_i g(p_i)} \le \frac{1}{\sqrt{p_M + (1-p_M) \frac{g(\frac{1-p_M}{n-1})}{g(p_M)}}} < \frac{1}{\sqrt{\gamma_n}}.
    \end{align}
    
    Therefore, we can infer that $\frac{\sqrt{\sum_{i=1}^n p_i g(p_i)^2}}{\sum_{i=1}^n p_i g(p_i)} = \mathcal{O}\left(\frac{1}{\sqrt{\gamma_n}}\right)$ as $n\to \infty$.
    
    If we choose $p_1^{(n)} = \gamma_n$ and $p_i^{(n)} = \frac{1-\gamma_n}{n-1}$ for $i = 2,\ldots,n$, we have 
    \begin{align}
        \label{eq:gamma_n_Omega}
        &\lim_{n \to \infty}\left(\sqrt{\gamma_n}\frac{\sqrt{\sum_{i=1}^n p_i^{(n)} g(p_i^{(n)})^2}}{\sum_{i=1}^n p_i^{(n)} g(p_i^{(n)})}\right) \nonumber \\
        = & \lim_{n \to \infty} \left( \sqrt{\gamma_n}\frac{\sqrt{\gamma_n g(\gamma_n)^2 + (1-\gamma_n)g(\frac{1-\gamma_n}{n-1})^2}}{\gamma_n g(\gamma_n) + (1-\gamma_n)g(\frac{1-\gamma_n}{n-1})}\right) \nonumber\\
        = & \lim_{n \to \infty}\left(\sqrt{\gamma_n}\frac{\sqrt{\gamma_n g(\gamma_n)^2 + g(\frac{1}{n})^2}}{\gamma_n g(\gamma_n) + g(\frac{1}{n})}\right) \nonumber\\
        = & \lim_{n \to \infty}\left(\sqrt{\gamma_n} \frac{\sqrt{\frac{1}{4\gamma_n} g(\frac{1}{n})^2+ g(\frac{1}{n})^2}}{\frac{3}{2}g(\frac{1}{n})}\right)\nonumber\\
        = & \frac{1}{3},
    \end{align}
    where the fourth equation comes from $ \lim_{n \to \infty}\gamma_n = 0$.
    
    From \eq{gamma_n_O} and \eq{gamma_n_Omega}, we can infer that $\max_{\mathbf{p} = (p_i)_{i=1}^n}\frac{\sqrt{\sum_{i=1}^n p_i g(p_i)^2}}{\sum_{i=1}^n p_i g(p_i)} = \Theta\left(\frac{1}{\sqrt{\gamma_n}}\right)$ as $n\to\infty$, which completes the proof. 
\end{proof}

\subsection{Proof of \lem{separate_singular_value}}
\begin{proof}
We first apply Hadamard gate $H$ to register $P$ and obtain
\begin{align}
    |0\rangle_C|+\rangle_P|\psi_i\rangle_I. 
\end{align}

Setting the parameters $(\delta',\epsilon',t)$ in \lem{rectangle_approximation} to $\delta' := \frac{1}{2}\varphi,t := \frac{3}{2}\varphi, \epsilon' := \frac{\epsilon^2}{2}$, we can construct an even polynomial $S:=P'$ in \lem{rectangle_approximation} with $\deg(S) = O\left(\log(\frac{1}{\epsilon})/\varphi\right)$ such that 
\begin{equation}
    \label{eq:property_rectangle}
    \begin{aligned}
    \forall x \in [-1,-2\varphi]\cup[2\varphi,1]:  &S(x)\in[0,\frac{\epsilon^2}{2}] \text{, and}\\
    \forall x \in[-\varphi,\varphi]: &S(x)\in[1-\frac{\epsilon^2}{2},1].
\end{aligned}
\end{equation}

Apply $S^{(SV)}(\widetilde{\Pi}U{\Pi})$ to register $I$ using register $P$ as ancilla register. Then we will have the state $|\Phi\rangle$ in register $(P,I)$ such that 
\begin{align}
    (\langle +|_P \otimes \Pi)|\Phi\rangle 
    = & S^{(SV)}(\widetilde{\Pi}U{\Pi})|\psi_i\rangle_I \nonumber \\
    = & (\sum_{i = 1}^{d}S(\sigma_i)|\psi_i\rangle \langle \psi_i|)|\psi_i\rangle_I = S(\sigma_i)|\psi_i\rangle_I,
\end{align}
where the first equation comes from $\Pi|\psi_i\rangle = |\psi_i\rangle$.
Therefore, we can infer that the state $|\Phi\rangle$ satisfies
\begin{align}
    |\Phi\rangle = S(\sigma_i)|+\rangle_P|\psi_i\rangle_I + \sqrt{1-S(\sigma_i)^2}|\gamma\rangle_{P,I},
\end{align}
where $(\langle +|_P\otimes \Pi)|\gamma\rangle_{P,C} = 0$, since $\Pi$ is an orthogonal projection. 

Next, we apply $\text{C}_{|+\rangle\langle+|\otimes\Pi}\text{NOT} = |+\rangle\langle+|\otimes\Pi\otimes X + (I-|+\rangle\langle+|\otimes\Pi)\otimes I$ to register $(P,I,C)$ and obtain
\begin{align}
    &\text{C}_{|+\rangle\langle+|\otimes\Pi}\text{NOT}|\Phi\rangle_{P,I}|0\rangle_C\nonumber\\
    =& (|+\rangle\langle+|\otimes\Pi\otimes X)S(\sigma_i)|+\rangle_P|\psi_i\rangle_I|0\rangle_C + \nonumber \\
    &(I-|+\rangle\langle+|\otimes\Pi)\sqrt{1-S(\sigma_i)^2}|\gamma\rangle_{P,I}|\gamma\rangle_C\nonumber\\
    =& S(\sigma_i)|+\rangle_P|\psi_i\rangle_I|1\rangle_C + \sqrt{1-S(\sigma_i)^2}|\gamma\rangle_{P,I}|0\rangle_C,
\end{align}
where the first equation comes from $(\langle +|_P\otimes \Pi)|\gamma\rangle_{P,C} = 0$ and $|+\rangle\langle+|\otimes\Pi$ is an orthogonal projector. 

From \eq{property_rectangle}, we can infer that
\begin{align}
    \forall \sigma_i \in [0,\varphi]:  \sqrt{1-S(\sigma_i)^2} &\le \sqrt{1-\left(1-\frac{\epsilon^2}{2}\right)^2} \nonumber \\
    &\le \sqrt{1-(1-\epsilon^2)} = \epsilon \\
    \forall \sigma_i \in [2\varphi,1]:  S(\sigma_i) &\le \frac{\epsilon^2}{2}\le \epsilon,
\end{align}
which completes the proof. 
\end{proof}

\bibliographystyle{IEEEtran}
\bibliography{Bib-svt-entropy}

\vspace{11pt}

\begin{IEEEbiographynophoto}{Xinzhao Wang} received the B.S. degree in computer science from Peking University, China, in 2022. He is currently pursuing a Ph.D. degree in computer science at Peking University. His research interest lies in quantum computing and quantum information.
\end{IEEEbiographynophoto}

\begin{IEEEbiographynophoto}{Shengyu Zhang} is a Distinguished Scientist at Tencent, and Director of Tencent Quantum Lab. He received his Ph.D. in computer science at Princeton University in 2006. He then worked in California Institute of Technology and The Chinese University of Hong Kong before joining Tencent in 2018. His research interest lies in quantum computing theory, algorithm designing, computational complexity, foundation of machine learning, and AI for sciences. 
\end{IEEEbiographynophoto}

\begin{IEEEbiographynophoto}{Tongyang Li} received the B.E. degree in computer science and the B.S. degree in mathematics from Tsinghua University, China, in 2015. He received the M.S. and Ph.D. degree in computer science from University of Maryland, USA, in 2018 and 2020, respectively. 

From 2020 to 2021, he was a Postdoctoral Associate at the Center for Theoretical Physics, Massachusetts Institute of Technology, USA. Since 2021, he has been an Assistant Professor at Center on Frontiers of Computing Studies, School of Computer Science, Peking University, China. His research focuses on quantum algorithms, including topics such as quantum algorithms for machine learning and optimization, quantum query complexity, quantum simulation, quantum walks, etc. 

Dr. Tongyang Li was a recipient of the IBM Ph.D. Fellowship, the NSF QISE-NET Triplet Award, the Lanczos Fellowship, and Outstanding Reviewer Awards for ICML 2020 and ICML 2022.
\end{IEEEbiographynophoto}

\vfill

\end{document}